\definecolor{DarkRed}{rgb}{0.5,0.1,0.1}
\definecolor{DarkBlue}{rgb}{0.1,0.1,0.5}
\definecolor{ForestGreen}{rgb}{0.1333,0.5451,0.1333}
\definecolor{Red}{rgb}{0.9,0,0}
\def\BState{\State\hskip-\ALG@thistlm}
\newtheorem{theorem}{Theorem}
\newtheorem{lemma}{Lemma}[section]
\newtheorem{proposition}[lemma]{Proposition}
\newtheorem{claim}[lemma]{Claim}
\newtheorem{definition}{Definition}
\newtheorem{problem}{Problem}
\newtheorem{remark}[lemma]{Remark}
\newtheorem*{claim*}{Claim}
\newtheorem*{proposition*}{Proposition}
\newtheorem*{lemma*}{Lemma}
\newtheorem*{problem*}{Problem}
\newtheorem{mdresult}{Result}
\newenvironment{Theorem}{\begin{mdframed}[backgroundcolor=lightgray!40,topline=false,rightline=false,leftline=false,bottomline=false,innertopmargin=2pt]\begin{mdresult}}{\end{mdresult}\end{mdframed}}
\newtheorem{result}[mdresult]{Result}
\newtheorem{mdinvariant}{Invariant}
\renewcommand{\qed}{\nobreak \ifvmode \relax \else
      \ifdim\lastskip<1.5em \hskip-\lastskip
      \hskip1.5em plus0em minus0.5em \fi \nobreak
      \vrule height0.75em width0.5em depth0.25em\fi}
\newcommand*\samethanks[1][\value{footnote}]{\footnotemark[#1]}
\newcommand{\toShrink}{-.20cm}
\newcommand{\toShrinkEnu}{-.2cm}
\newcommand{\jstar}{\ensuremath{{j^{\star}}}}
\newcommand{\Ot}{\ensuremath{\widetilde{O}}}
\newcommand{\eps}{\ensuremath{\varepsilon}}
\newcommand{\Paren}[1]{\Big(#1\Big)}
\newcommand{\Bracket}[1]{\Big[#1\Big]}
\newcommand{\bracket}[1]{\left[#1\right]}
\newcommand{\paren}[1]{\ensuremath{\left(#1\right)}\xspace}
\newcommand{\card}[1]{\left\vert{#1}\right\vert}
\newcommand{\norm}[1]{\ensuremath{\|#1\|}}
\newcommand{\prob}[1]{\Pr\paren{#1}}
\newcommand{\expect}[1]{\Exp\bracket{#1}}
\newcommand{\var}[1]{\textnormal{Var}\bracket{#1}}
\newcommand{\cov}[1]{\textnormal{Cov}\bracket{#1}}
\newcommand{\set}[1]{\ensuremath{\left\{ #1 \right\}}}
\newcommand{\poly}{\mbox{\rm poly}}
\newcommand{\polylog}{\mbox{\rm  polylog}}
\newcommand{\alg}{\ensuremath{\mathcal{A}}\xspace}
\DeclareMathOperator*{\Exp}{\ensuremath{{\mathbb{E}}}}
\DeclareMathOperator*{\Prob}{\ensuremath{\textnormal{Pr}}}
\renewcommand{\Pr}{\Prob}
\newcommand{\Ex}{\Exp}
\newcommand{\etal}{{\it et al.\,}}
\newenvironment{tbox}{\begin{tcolorbox}[
		enlarge top by=5pt,
		enlarge bottom by=5pt,
		 breakable,
		 boxsep=0pt,
                  left=4pt,
                  right=4pt,
                  top=10pt,
                  arc=0pt,
                  boxrule=1pt,toprule=1pt,
                  colback=white
                  ]
	}
{\end{tcolorbox}}
\newcommand{\Yes}{\ensuremath{\textnormal{\textsf{Yes}}}\xspace}
\newcommand{\No}{\ensuremath{\textnormal{\textsf{No}}}\xspace}
\newcommand{\prot}{\ensuremath{\pi}}
\newcommand{\bC}{\ensuremath{\bm{C}}}
\newcommand{\CC}[3]{\ensuremath{\textnormal{\textsf{CC}}_{#2}^{#3}(#1)}\xspace}
\title{Sublinear Algorithms for $(\Delta+1)$ Vertex Coloring}
\author{Sepehr Assadi\thanks{Department of Computer and Information Science, University of Pennsylvania. Supported in part by the National Science Foundation grant CCF-1617851. Email: {{\small {\tt \{sassadi,chenyu2,sanjeev\}@cis.upenn.edu.}}}}
\and Yu Chen\samethanks
\and Sanjeev Khanna\samethanks
}
\date{}
\begin{document}
\maketitle

\thispagestyle{empty}
\begin{abstract}
	
	Any graph with maximum degree $\Delta$ admits a proper vertex coloring with $\Delta+1$ colors that can be found via a simple sequential greedy algorithm in linear time and space. But can one find such a coloring 
	via a \emph{sublinear} algorithm? 

	\smallskip

	We answer this fundamental question in the affirmative for several canonical classes of sublinear algorithms including graph streaming, sublinear time, and massively parallel computation (MPC) algorithms. In particular, 
	we design:

	\begin{itemize}
		\item A \emph{single-pass} semi-streaming algorithm in dynamic streams using $\Ot(n)$ space. The only known semi-streaming algorithm prior to our work was a folklore $O(\log{n})$-pass algorithm obtained
		by simulating classical distributed algorithms in the streaming model. 
		
		\item A sublinear-time algorithm in the standard query model that allows neighbor queries and pair queries using $\Ot(n\sqrt{n})$ time. We further show that any algorithm that outputs a valid coloring with
		sufficiently large constant probability requires $\Omega(n\sqrt{n})$ time. No non-trivial sublinear time algorithms were known prior to our work. 
		
		\item A parallel algorithm in the massively parallel computation (MPC) model using $\Ot(n)$ memory per machine and $O(1)$ MPC rounds. Our number of rounds significantly improves upon the recent $O(\log\log{\Delta}\cdot\log^*{(n)})$-round
		algorithm of Parter [ICALP 2018]. 
	\end{itemize}
	
	At the core of our results is a remarkably simple meta-algorithm for the $(\Delta+1)$ coloring problem: Sample $O(\log{n})$ colors for each vertex independently and uniformly at random from the $\Delta+1$ colors; find a proper coloring of the graph 
	using only the sampled colors. As our main result, we prove that the sampled set of colors with high probability contains a proper coloring of the input graph. The sublinear algorithms are then obtained by designing efficient algorithms 
	for finding a proper coloring of the graph using the sampled colors in each model. 
	
	\smallskip

	All our upper bound results for $(\Delta+1)$ coloring are either optimal or close to best possible in each model studied. We also establish new lower bounds that rule out the possibility of achieving similar results in these 
	models for the closely related problems of maximal independent set and maximal matching. 
	Collectively, our results highlight a sharp contrast between the complexity of $(\Delta+1)$ coloring vs 
	 maximal independent set and maximal matching in various models of sublinear computation even though all three problems are solvable by a simple greedy algorithm in the classical setting. 
\end{abstract}
\setcounter{page}{0}
\clearpage

\thispagestyle{empty}
\setcounter{tocdepth}{3}
\tableofcontents
\clearpage
\setcounter{page}{1}

\section{Introduction}\label{sec:intro}

Graph coloring is a central problem in graph theory and has numerous applications in diverse areas of computer science. A proper $c$-coloring of a graph
$G(V,E)$ assigns a color to every vertex from the palette of colors $\set{1,\ldots,c}$ such that no edge is monochromatic, i.e., has the same color on both endpoints. 
An important and well-studied case of graph coloring problems is the $(\Delta+1)$ coloring problem where $\Delta$ is the maximum degree of the graph. Not only does every graph admit
a $(\Delta+1)$ coloring, remarkably, \emph{any} partial coloring of vertices of a graph can be extended to a proper $(\Delta+1)$ coloring of all vertices: simply pick uncolored vertices in any order and assign a color to a vertex not yet assigned to any of its neighbors;  since the max-degree is $\Delta$, such a color always exists. 

In this paper, we study the $(\Delta+1)$ coloring problem in the context of processing \emph{massive} graphs. 
The aforementioned property of $(\Delta+1)$ coloring problem immediately implies a simple (sequential) greedy algorithm for this problem 
in linear time and space. However, when processing massive graphs, even this algorithm can be computationally prohibitive. This is due to various 
limitations arising in processing massive graphs such as requiring to process the graph in a streaming fashion on a single machine or in parallel across multiple machines due to storage limitations, or 
simply not having enough time for reading the whole input. A natural question is then: 
\begin{quote}
	\emph{Can we design \textbf{\emph{sublinear algorithms}} for $(\Delta+1)$ coloring problem in modern models of computation for processing massive graphs?}
\end{quote}

We answer this fundamental question in the affirmative for several canonical classes of sublinear algorithms including (dynamic) graph streaming algorithms, sublinear time/query algorithms, and massively parallel computation (MPC) algorithms. 
We also prove new lower bounds to contrast the complexity of the $(\Delta+1)$ coloring problem in these models with two other closely related Locally Checkable Labeling (LCL) problems (see~\cite{NaorS93}), namely, the
maximal independent set and the maximal matching\footnote{Another closely related LCL problem is the $(2\Delta-1)$ \emph{edge} coloring problem. 
However, as the output in the edge-coloring problem is linear in the input size, one cannot hope to achieve non-trivial algorithms for this problem in models such as
streaming or sublinear time algorithms, and hence we ignore this problem in this paper.}.

\subsection{Our Contributions}\label{sec:result}

We present new sublinear algorithms for the $(\Delta+1)$ coloring problem which are either the first non-trivial ones  
or significantly improve the state-of-the-art. At the core of these algorithms is a  simple meta-algorithm for this problem that we design in this paper; the sublinear algorithms are then obtained by efficiently
implementing this meta-algorithm in each model separately. 

\smallskip
\emph{A Meta-Algorithm for $(\Delta+1)$ Coloring.} The main approach behind our meta-algorithm is to ``sparsify'' the $(\Delta+1)$ coloring problem to a list-coloring problem with lists/palletes of size $O(\log{n})$ for every vertex. This may sound counterintuitive: while every graph admits a $(\Delta+1)$ coloring that can be found via a simple algorithm, there is no guarantee that it 
also admits a list-coloring with $O(\log{n})$-size lists, let alone one that can be found via a sublinear algorithm. The following key structural result that we prove in this paper, however paves the path for this sparsification. 

\begin{Theorem}[\textbf{Palette-Sparsification Theorem}]\label{thm:intro-meta}
	Let $G(V,E)$ be an $n$-vertex graph with maximum degree $\Delta$. Suppose for any vertex $v \in V$, we sample $O(\log{n})$ colors $L(v)$ from $\set{1,\ldots,\Delta+1}$ independently and uniformly at random. 
	Then with high probability there exists a proper $(\Delta+1)$ coloring of $G$ in which the color for every vertex $v$ is chosen from $L(v)$.
\end{Theorem}

In Result~\ref{thm:intro-meta}, as well as throughout the paper, ``with high probability'' means with probability $1-1/\poly{(n)}$ for some large polynomial in $n$.

Result~\ref{thm:intro-meta} can be seen as a ``sparsification'' result for $(\Delta+1)$ coloring: after sampling $O(\log{n})$ colors for each vertex randomly, the total number of monochromatic edges is only $O(n \cdot \log^{2}{(n)})$ with high probability
(see Section~\ref{sec:sublinear} for a simple proof); at the same time, by computing a proper coloring of $G$ using only these $O(n \cdot \log^{2}{(n)})$ edges---which is promised to exist by Result~\ref{thm:intro-meta}---%
we obtain a $(\Delta+1)$ coloring of $G$. As such, Result~\ref{thm:intro-meta} provides a way of sparsifying the graph into only $\Ot(n)$ edges, while still allowing for recovery of a $(\Delta+1)$ coloring of the original graph. 
This sparsification serves as the central tool in our sublinear algorithms for the $(\Delta+1)$ coloring problem. 

We shall remark that, as stated, Result~\ref{thm:intro-meta} only promise the existence of a coloring (which can be found in exponential time), 
but in fact we show that there is a fast and simple procedure to find the corresponding coloring and that this will also be used by our algorithms in each model.
We also note that the bound of $O(\log{n})$ colors in Result~\ref{thm:intro-meta} is asymptotically optimal (see Proposition~\ref{prop:logn-optimal}).

\smallskip
{\emph{Streaming Algorithms.}} Our Result~\ref{thm:intro-meta} can be used to design a \emph{single-pass} semi-streaming algorithm for the $(\Delta+1)$ coloring problem in the most general setting of graph streams, namely, \emph{dynamic} streams that allow both insertions and deletions of edges (see Section~\ref{sec:streaming} for details). 

\begin{result}\label{thm:intro-streaming}
	There exists a randomized single-pass dynamic streaming algorithm for the $(\Delta+1)$ coloring problem using $\Ot(n)$ space. 	
\end{result}

To our knowledge, the only previous semi-streaming algorithm for $(\Delta+1)$ coloring was the folklore $O(\log{n})$-pass streaming simulation of the classical $O(\log{n})$-round distributed/parallel (PRAM) algorithms for this problem (see, e.g.~\cite{Luby88}). No $o(n^2)$ space single-pass streaming algorithm was known for this problem even in insertion-only streams. 
This state-of-affairs was in fact similar to the case of the closely related maximal matching problem: 
the best known semi-streaming algorithm for this problem on dynamic streams uses $\Theta(\log{n})$ passes~\cite{LattanziMSV11,AhnGM12Linear} 
and it is provably impossible to solve this problem using $o(n^2)$-space in a single pass over a dynamic
stream~\cite{AssadiKLY16} (although this problem is trivial in insertion-only streams). Considering this one 
might have guessed a similar lower bound also holds for the $(\Delta+1)$ coloring problem.  We further prove a lower bound of $\Omega(n^2)$-space on the space complexity of single-pass streaming algorithms for computing a maximal independent set 
even in insertion-only streams (see Theorem~\ref{thm:lower-MIS}). 
Result~\ref{thm:intro-streaming} is in sharp contrast to these results, and shows that $(\Delta+1)$ coloring is provably simpler than both problems in the streaming setting.

\smallskip
\emph{Sublinear Time Algorithms.} There exists a straightforward greedy algorithm that computes a $(\Delta+1)$ coloring of any given graph in linear time, i.e., $O(m+n)$ time. 
Perhaps surprisingly, we show that one can improve upon the running time of this textbook algorithm by using Result~\ref{thm:intro-meta}. 

\begin{result}\label{thm:intro-subtime}
	There exists a randomized $\Ot(n\sqrt{n})$ time algorithm for the $(\Delta+1)$ coloring problem. Furthermore, any algorithm for this problem requires $\Omega(n\sqrt{n})$ time. 
\end{result}

When designing sublinear (in $m$) time algorithms, specifying the exact data model is important as the algorithm cannot even read the entire input once. 
In Result~\ref{thm:intro-subtime}, we assume the standard query model for sublinear time algorithms on general graphs (see, e.g., Chapter 10 of Goldreich's 
book~\cite{Goldreich17}) which allow for two types of queries $(i)$ what is the $i$-th neighbor of a given vertex $v$, and $(ii)$ whether a given pair of vertices $(u,v)$ are neighbor to each other or not (see Section~\ref{sec:sub-time} for details). 
To our knowledge, this is the first sublinear time algorithm for the $(\Delta+1)$ coloring problem. We also note that an important feature of our algorithm in Result~\ref{thm:intro-subtime} is that it is \emph{non-adaptive}, i.e., it chooses
 all the queries to the graph beforehand and thus queries are done in parallel. 

In yet another contrast to the $(\Delta+1)$ coloring problem, we show that the problem of computing a maximal matching requires $\Omega(n^2)$ queries to the graph 
and hence $\Omega(n^2)$ time  (see Theorem~\ref{thm:lower-matching}). 

\smallskip
\emph{Massively Parallel Computation Algorithms.} Another application of our Result~\ref{thm:intro-meta} is a \emph{constant-round} algorithm for the $(\Delta+1)$ coloring problem in the MPC model, which is a common
abstraction of MapReduce-style computation frameworks (see Section~\ref{sec:mpc} for formal definition). 

\begin{result}\label{thm:intro-mpc}
	There exists a randomized MPC algorithm for the $(\Delta+1)$ coloring problem in at most two MPC rounds on machines with memory $\Ot(n)$. 
\end{result}

Two recent papers considered graph coloring problems in the MPC model. 
Harvey~\etal~\cite{HarveyLL18} designed algorithms that use $n^{1+\Omega(1)}$ memory per machine and find a $(\Delta+o(\Delta))$ coloring of a given graph---an algorithmically easier problem than $(\Delta+1)$ coloring---in $O(1)$ MPC 
rounds. Furthermore, Parter~\cite{Parter18} designed an MPC algorithm that uses $O(n)$ memory per machine and finds a $(\Delta+1)$ coloring in $O(\log\log{\Delta}\cdot\log^{*}\!{(n)})$ rounds\footnote{The algorithm of Parter~\cite{Parter18} is stated 
in the Congested-Clique model, but using the well-known connections between this model and the MPC model, see, e.g.~\cite{BehnezhadDH18,GhaffariGMR18}, this algorithm immediately extends to the MPC model.}. 
Our Result~\ref{thm:intro-mpc} improves these results significantly: both the number of used colors as well as per machine memory compared to~\cite{HarveyLL18}, and round-complexity (with at most $\polylog{(n)}$-factor more per-machine memory) compared to~\cite{Parter18}.

Maximal matching and maximal independent set problems have also been studied previously in the MPC model~\cite{LattanziMSV11,GhaffariGMR18,Konrad18}. Currently, the best known algorithms for these problem
with $\Ot(n)$ memory per machine require $O(\log{n})$ rounds in case of maximal matching~\cite{LattanziMSV11} and $O(\log\log{n})$ rounds in case of maximal independent set~\cite{GhaffariGMR18,Konrad18}. 
For the related problems of $O(1)$-approximating the maximum matching and the minimum vertex cover, a recent set of results achieve $O(\log\log{n})$-round MPC algorithms
with $\Ot(n)$ memory per machine~\cite{Assadi17,AssadiBBMS18,GhaffariGMR18,CzumajLMMOS18} (these results however do not extend to the maximal matching problem). 
Our Result~\ref{thm:intro-mpc} hence is the first example that gives a constant round MPC algorithm for one of 
the ``classic four local distributed graph problems'' (i.e., maximal independent set, maximal matching, $(\Delta+1)$ vertex coloring, and $(2\Delta-1)$ edge coloring; see, e.g.~\cite{PanconesiR01,BarenboimE11,FischerGK17}), 
even when the memory per machine is as small as $\Ot(n)$. 

\paragraph{Optimality of Our Sublinear Algorithms:} Space-complexity of our streaming algorithm in Result~\ref{thm:intro-streaming} and round-complexity of our MPC algorithm in Result~\ref{thm:intro-mpc} are 
clearly optimal (to within poly-log factors and constant factors, respectively). We further prove that query and time complexity 
of our sublinear time algorithm in Result~\ref{thm:intro-subtime} are also optimal up to poly-log factors (see Theorem~\ref{thm:lower-color}). 

\paragraph{Perspective: Beyond Greedy Algorithms.}
Many graph problems 
admit simple greedy algorithms. Starting with Luby's celebrated distributed/parallel algorithm for the maximal independent set problem~\cite{Luby85},
there have been numerous attempts in adapting these greedy algorithms to different models of computation including the models considered in this paper (see, e.g.~\cite{LattanziMSV11,KumarMVV13,IndykMMM14,HarveyLL18,NguyenO08,AlonRVX12}). 
Typically these adaptations require multiple passes/rounds of computation, and this is for the fundamental reason that most greedy algorithms are inherently sequential: they require 
accessing the input graph in an \emph{adaptive} manner based on decisions made thus far, which, although limited, still results in requiring \emph{multiple passes/rounds} over the input.

Our work on $(\Delta+1)$ coloring bypasses this limitation of greedy algorithms by utilizing a completely different approach, namely, a non-adaptive sparsification of the input graph (Result~\ref{thm:intro-meta}) that in turns lends itself to space, time, and 
communication-efficient algorithms in a variety of different models. As such, our results can be seen as an evidence that directly adapting greedy algorithms for graph problems may not necessarily be the best choice in these models.
We believe that this viewpoint is an important (non-technical) contribution of our paper as it may pave the way for obtaining more efficient algorithms for other fundamental graph 
problems in these models. 

\subsection{Our Techniques}\label{sec:techniques}

 The main technical ingredient of our paper is Result~\ref{thm:intro-meta}. For intuition, consider two extreme cases: when
the underlying graph is very dense, say is a clique on $\Delta+1$ vertices, 
and when the underlying graph is relatively sparse, say every vertex (except for one) have degree at most $\Delta/2$. Result~\ref{thm:intro-meta} is easy to prove for either case albeit by using 
entirely different arguments. For the former case, consider the bipartite graph consisting of vertices in $V$ on one side and set of colors $\set{1,\ldots,\Delta+1}$ on the other side, where each vertex $v$ in the $V$-side is connected to vertices in 
$L(v)$ in the color-side. Using standard results from random graphs theory, one can argue that this graph with high probability has a perfect matching, thus implying the list-coloring of $G$. For the latter case, 
consider the following simple greedy algorithm: iteratively sample a color for every vertex from the set $\set{1,\ldots,\Delta+1}$ and assign the color to the vertex if it is not chosen by any of its neighbors so far. 
It is well-known that this algorithm only requires $O(\log{n})$ rounds when number of colors is a constant factor larger than the degree (see, e.g.,~\cite{SchneiderW10}). As such, the set of colors sampled in the list $L(v)$ for vertices $v \in V$ is enough to 
``simulate'' this algorithm in this case. 

To prove Result~\ref{thm:intro-meta} in general, we need to interpolate between these two extreme cases. 
To do so, we exploit a graph decomposition result of~\cite{HarrisSS16} (see also~\cite{ChangLP18}) for the $(\Delta+1)$ coloring problem, that allows
for decomposing a graph into ``sparse'' and ``dense'' components. The proof for coloring the sparse components then more or less follows by simulating standard distributed algorithms in~\cite{SchneiderW10,ElkinPS15} as discussed above. 
The main part however is to prove the result for dense components which requires a global and non-greedy argument. Note that in general, we can always 
reduce the problem of finding a $(\Delta+1)$ coloring to an instance of the assignment problem on the bipartite graph $V \times \set{1,\ldots,\Delta+1}$ discussed above. The difference is that we need to allow some vertices
in $\set{1,\ldots,\Delta+1}$ to be assigned to more than one vertex in $V$ when $\card{V} > \Delta+1$ (as opposed to the case of cliques above that only required finding a perfect matching). 
We show that if the original graph is ``sufficiently close'' to being a clique, then with high probability, such an assignment exists in this bipartite graph and use this to prove the existence of the desired list-coloring of $G$. 

Result~\ref{thm:intro-meta} implies the sublinear algorithms we design in each model with a simple caveat: The list-coloring problem that needs to be solved in the sparsified graph is in general NP-hard and hence using this result directly
does not allow for a polynomial time implementation of our algorithms. We thus combine Result~\ref{thm:intro-meta} with additional ideas specific to each model to turn these algorithms into polynomial time (and in fact even sublinear time) algorithms. 

\subsection{Recent Related Work}\label{sec:recent}

Independently and concurrently to our work, two other papers also considered the vertex coloring problem in settings related to this paper. 
Firstly, Parter and Su~\cite{ParterS18}, improving upon the previous algorithm of Parter~\cite{Parter18}, 
gave an $O(\log^{*}\!{(\Delta)})$ round congested-clique algorithm for $(\Delta+1)$ coloring; this result also immediately implies an MPC algorithm for $(\Delta+1)$ coloring in $O(\log^{*}\!{(\Delta)})$ rounds and $O(n)$ memory per-machine. 
Moreover, Bera and Ghosh~\cite{BeraG18} also studied the graph coloring problem in the streaming model and gave a single-pass streaming algorithm that for any parameter $\eps \in (0,1)$, outputs a $(1+\eps)\Delta$ coloring of the input graph
using $\Ot(n/\eps)$ space. Note that for the $(\Delta+1)$ coloring problem, this algorithm requires $\Omega(n\Delta)$ space which is equal to the input size. 

Subsequent to our work, Chang~\etal~\cite{ChangFGUZ18} further 
studied the $(\Delta+1)$ coloring problem and among other results, gave an $O(\sqrt{\log\log{n}})$ round MPC algorithm for this problem on machines with memory as small as $n^{\Omega(1)}$.

\newcommand{\Vdense}{\ensuremath{V^{\textnormal{dense}}}\xspace}

\newcommand{\Vsparse}{\ensuremath{V^{\textnormal{sparse}}}\xspace}
\section{Preliminaries}\label{sec:prelim}

\paragraph{Notation.} For any $t \geq 1$, we define $[t] := \set{1,\ldots,t}$. For a graph $G(V,E)$, we use $V(G) := V$ to denote the vertices, $E(G) :=E$ to denote the edges, and $\deg{(v)}$ to denote the 
degree of $v \in V$. 

\subsection{The Harris-Schneider-Su (HSS) Network Decomposition}\label{sec:HSS-decomposition}

In the proof of our Result~\ref{thm:intro-meta}, we use a network decomposition result of Harris, Schneider and Su for designing distributed algorithms for graph coloring in the LOCAL model~\cite{HarrisSS16}. 
We emphasize that we use of this decomposition in a quite different way than the ones in distributed settings~\cite{HarrisSS16,ChangLP18}.

The Harris-Schneider-Su network decomposition, henceforth HSS-decomposition, partitions a graph $G(V,E)$ into \emph{sparse} and \emph{dense} regimes, measured with respect to a parameter $\eps \in [0,1)$. 

\begin{definition}[$\eps$-friend edges]\label{def:eps-friend}
	For any $\eps \in [0,1)$, we say that an edge $(u,v)$ in a graph $G$ is an \emph{$\eps$-friend} edge iff $\card{N(u) \cap N(v)} \geq (1-\eps) \Delta$. Let $F_{\eps} \subseteq E(G)$ denote the set of 
	all $\eps$-friend edges. 
\end{definition}

\begin{definition}[$\eps$-dense vertices]\label{def:eps-dense}
	For any $\eps \in [0,1)$ we say that a vertex $v$ in a graph $G$ is \emph{$\eps$-dense} iff degree of $v$ in $F_{\eps}$ is at least $(1-\eps)\Delta$.
	We use $\Vdense_{\eps}$ to denote the set of all $\eps$-dense vertices. 
\end{definition}

Consider the graph $H_{\eps} := H(\Vdense_{\eps},F_{\eps})$ as the subgraph of $G$ on the set of $\eps$-dense vertices and containing only the $\eps$-friend edges. Let 
$C_1,\ldots,C_k$ be the connected components of $H_{\eps}$. HSS-decomposition partitions the vertices of the graph into $\Vdense_\eps$ and $V \setminus \Vdense_\eps$, where $\Vdense_\eps$ is partitioned into $C_1,\ldots,C_k$ with the following properties given
by Lemma~\ref{lem:HSS-decomposition} and Proposition~\ref{prop:sparse-vertices}.

\begin{lemma}[\!\!\cite{HarrisSS16}]\label{lem:HSS-decomposition}
	Any connected component $C_i$ of $H_{\eps}$ has size at most $(1+3\eps)\Delta$. Moreover, any vertex $v \in C_i$ has at most:
	\begin{enumerate}
	\item $\eps\Delta$ neighbors (in $G$) in $\Vdense_{\eps} \setminus C_i$, i.e., $\card{N_G(v) \cap (\Vdense_{\eps} \setminus C_i)} \leq \eps \Delta$. 
	\item $3\eps\Delta$ non-neighbors (in $G$) in $C_i$, i.e., $\card{C_i \setminus N_G(v)} \leq 3\eps\Delta$.
	\end{enumerate}
\end{lemma}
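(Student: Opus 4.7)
My plan is to prove the three assertions in the order dictated by logical dependence rather than by the order stated: property (1) is immediate from the definitions, property (2) carries the technical weight, and the size bound on $|C_i|$ follows as an immediate corollary of property (2). Throughout, fix $v \in C_i$.

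For property (1), any $w \in N_G(v) \cap \Vdense_\eps$ lying outside $C_i$ cannot be connected to $v$ in $H_\eps$; in particular the edge $(v,w)$ cannot be a friend edge, since otherwise $w$ would land in the same component of $H_\eps$ as $v$. Hence every such $w$ is a non-friend neighbor of $v$ in $G$. Since $v$ is $\eps$-dense, its degree in $F_\eps$ is at least $(1-\eps)\Delta$, so $v$ has at most $\eps\Delta$ non-friend neighbors in $G$, giving $|N_G(v) \cap (\Vdense_\eps \setminus C_i)| \leq \eps\Delta$.

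For property (2), set $A := C_i \setminus N_G(v)$ (note $v \in A$). My approach is a double-counting argument on pairs $(u,w)$ with $u \in A \setminus \{v\}$, $w \in N_G(v)$, and $(u,w) \in F_\eps$. The lower bound should come from a sub-claim that every $u \in A$ shares at least $(1-O(\eps))\Delta$ common neighbors with $v$ in $G$; combined with the fact that $u$ has at least $(1-\eps)\Delta$ friend-neighbors, all lying inside $N_G(u)$, a brief inclusion-exclusion inside $N_G(u)$ gives at least $(1-O(\eps))\Delta$ friend-neighbors of $u$ in $N_G(v)$, producing $\Omega(|A|\cdot\Delta)$ pairs. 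For the upper bound, I bound the contribution of each $w \in N_G(v)$: if $w$ is a friend-neighbor of $v$, the friend-edge property gives $|N_G(w) \setminus N_G(v)| \leq \eps\Delta$, so $w$ can be paired with at most $\eps\Delta$ vertices $u \notin N_G(v)$; and there are at most $\eps\Delta$ non-friend neighbors $w$ of $v$, each contributing at most $\Delta$ trivially. Summing gives at most $O(\eps\Delta^2)$ pairs, and balancing against the lower bound yields $|A| \leq 3\eps\Delta$. The size bound is then immediate: $|C_i| = |C_i \cap N_G(v)| + |C_i \setminus N_G(v)| \leq \Delta + 3\eps\Delta = (1+3\eps)\Delta$.

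The main obstacle is the sub-claim used inside property (2): that any two vertices $u,v \in C_i$ share at least $(1-O(\eps))\Delta$ common $G$-neighbors. For a directly friend-adjacent pair this is immediate, but two vertices of $C_i$ may lie at larger $H_\eps$-distance, and a naive inclusion-exclusion along a friend path degrades linearly with this distance. The right workaround, in the spirit of the decomposition framework of~\cite{HarrisSS16}, is to exploit the $\eps$-density of \emph{every} vertex of $C_i$ rather than a single path: each vertex of $C_i$ is individually close (in the friend-edge sense) to a large fraction of $N_G(v)$, and this local closeness propagates to a uniform global overlap bound across $C_i$. The slack between roughly $2\eps\Delta$ and the stated $3\eps\Delta$ in the bound on $|A|$ absorbs the accumulated losses in these inclusion-exclusion estimates and the presence of $v$ itself in $A$.
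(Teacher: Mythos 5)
Since the paper cites this lemma from~\cite{HarrisSS16} without reproducing a proof, I am assessing your argument on its own merits. Your property~(1) is correct and essentially immediate, and your derivation of the size bound from property~(2), namely $|C_i| = |C_i \cap N_G(v)| + |C_i \setminus N_G(v)| \leq \Delta + 3\eps\Delta$, is also fine. The issue is entirely in property~(2), which carries the lemma's content, and there your proposal has a genuine gap. The upper bound side of your double-count is correct, but the lower bound rests on an unproven \emph{overlap sub-claim}: that every $u \in C_i$ satisfies $|N_G(u) \cap N_G(v)| \geq (1-O(\eps))\Delta$. You correctly observe that chaining the friend-edge property along an $H_\eps$-path degrades linearly with path length and so cannot by itself yield a distance-independent bound. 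But your proposed fix --- ``each vertex of $C_i$ is individually close (in the friend-edge sense) to a large fraction of $N_G(v)$, and this local closeness propagates to a uniform global overlap bound'' --- is precisely the sub-claim restated, with no mechanism given for the propagation. As written, the argument is circular at its load-bearing step.

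The missing ingredient is a \emph{repair} step that upgrades the degrading chain bound back to a fixed constant. Concretely, induct on $H_\eps$-distance from $v$ with invariant $|N_G(u) \cap N_G(v)| \geq (1-2\eps)\Delta$. The naive chain through a closer $H_\eps$-neighbor $u'$ only yields the weaker $(1-3\eps)\Delta$; but since $u$ is $\eps$-dense and $|N_G(u)\setminus N_G(v)| \leq 3\eps\Delta$, at least $(1-4\eps)\Delta$ of $u$'s $\eps$-friend-neighbors lie in $N_G(v)$, while at least $(1-\eps)\Delta$ of $v$'s $\eps$-friend-neighbors also lie in $N_G(v)$; inclusion-exclusion inside $N_G(v)$ then produces a \emph{common} $\eps$-friend-neighbor $w$ of $u$ and $v$ (which need not be $\eps$-dense), and a single application of the friend-edge property through $w$ gives $|N_G(u) \cap N_G(v)| \geq (1-2\eps)\Delta$, restoring the invariant independently of path length. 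This ``repair via a common friend-neighbor'' is the structural fact your sketch is missing. With the overlap bound in hand, your counting closes and the constants come out below the stated $3\eps\Delta$ for $\eps$ below a small constant, which is the only regime in which the lemma is useful anyway.
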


Define \emph{$\eps$-sparse} vertices as $\Vsparse_\eps := V \setminus \Vdense_{\eps}$, i.e., the vertices which are not $\eps$-dense. The main
property of $\eps$-sparse vertices we are interested in is as follows.
\begin{proposition}\label{prop:sparse-vertices}
	Let $v$ be any $\eps$-sparse vertex in $G$. Then, the total number of edges spanning the neighborhood of $v$ is at most $(1-\eps^2)\cdot{{\Delta}\choose{2}}$, that
	is $\card{\set{(u,w) \mid u \in N_G(v) \wedge w \in N_G(v)}} \leq (1-\eps^2)\cdot{{\Delta}\choose{2}}$. 
\end{proposition}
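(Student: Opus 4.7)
The plan is a double-counting argument. Set $d := \deg_G(v)$ and $N := N_G(v)$, and partition $N$ into the set of $\eps$-friend neighbors $A := \{u \in N : |N(u) \cap N(v)| \geq (1-\eps)\Delta\}$ and its complement $B := N \setminus A$. The hypothesis that $v$ is $\eps$-sparse (i.e., not $\eps$-dense, per Definition~\ref{def:eps-dense}) is precisely the statement that the degree of $v$ in $F_\eps$ is strictly less than $(1-\eps)\Delta$, and this degree is exactly $|A|$; hence $|A| < (1-\eps)\Delta$. By the definition of $B$, every $u \in B$ satisfies $|N(u) \cap N(v)| < (1-\eps)\Delta$.

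Next, I will use the key identity $2\,|E(G[N])| = \sum_{u \in N}|N(u) \cap N|$, obtained by summing, over $u \in N$, the degree of $u$ in the induced subgraph $G[N]$. For $u \in A$, since $v \in N(u)$ but $v \notin N$, we have $|N(u) \cap N| \leq |N(u)| - 1 \leq \Delta - 1$. For $u \in B$, the definition of $B$ directly gives $|N(u) \cap N| < (1-\eps)\Delta$. Summing these two bounds yields
\[ 2\,|E(G[N])| \;\leq\; |A|(\Delta - 1) \;+\; |B|\,(1-\eps)\Delta. \]

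The final step is an affine optimization. Substituting $|B| = d - |A|$, the bound becomes $|A|(\eps\Delta - 1) + (1-\eps)\Delta\cdot d$, which under the constraints $|A| < (1-\eps)\Delta$ and $d \leq \Delta$ is maximized at the boundary; plugging in should yield $2(1-\eps^2)\binom{\Delta}{2}$. The main (minor) obstacle is arithmetic: a naive substitution of $|A| = (1-\eps)\Delta$ and $d = \Delta$ actually exceeds the target by roughly $\eps(1-\eps)\Delta$, so it is important to leverage the strict (integer) inequality $|A| \leq \lceil(1-\eps)\Delta\rceil - 1$ to get the extra $\Theta(\eps\Delta)$ slack. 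An alternative route, which sidesteps integrality bookkeeping, is to case-split on $d$: if $d \leq (1-\eps)\Delta$ then the trivial bound $|E(G[N])| \leq \binom{d}{2} \leq \binom{(1-\eps)\Delta}{2} \leq (1-\eps^2)\binom{\Delta}{2}$ already suffices after a short calculation, while if $d > (1-\eps)\Delta$ then $|B| > d - (1-\eps)\Delta$ provides the needed slack through a symmetric count of non-edges in $G[N]$ (each $u \in B$ contributes more than $d - 1 - (1-\eps)\Delta$ non-neighbors inside $N$).
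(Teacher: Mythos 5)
Your high-level strategy --- partition $N(v)$ into $\eps$-friend neighbors $A$ and the rest $B$, use the sparsity hypothesis to get $|A| < (1-\eps)\Delta$, and convert the bound on common neighborhoods of $B$-vertices into a bound on the edge count --- is the same as the paper's. The paper instead pads $\deg(v)$ up to exactly $\Delta$ with dummy leaves (which preserves the edge count inside $N(v)$) and then counts \emph{non-edges} directly: at least $\eps\Delta$ non-friends $u\in N(v)$ each have fewer than $(1-\eps)\Delta$ neighbors inside $N(v)$, hence each contributes roughly $\eps\Delta$ non-adjacent partners inside $N(v)$, giving roughly $\eps^2\Delta^2/2$ non-edges. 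Your ``alternative route'' (case-split on $d$ and count non-edges from $B$) is essentially this same argument; padding and setting $d=\Delta$ is just a clean way of handling the first case automatically.

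The genuine flaw in your write-up is the claim that the strict integer inequality $|A|\leq\lceil(1-\eps)\Delta\rceil-1$ buys you ``extra $\Theta(\eps\Delta)$ slack.'' It does not: moving from $|A|<(1-\eps)\Delta$ to $|A|\leq\lceil(1-\eps)\Delta\rceil-1$ reduces $|A|$ by at most $1$ (and often by less), and in the affine expression $|A|(\eps\Delta-1)+d(1-\eps)\Delta$ this changes the bound by at most $O(\eps\Delta)$, whereas you correctly computed that the naive substitution overshoots the target $2(1-\eps^2)\binom{\Delta}{2}$ by $\eps(1-\eps)\Delta$. Those are the same order, so integrality alone will not close the gap uniformly; you would need a much sharper accounting than a single $-1$.

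It is worth being aware that the lingering $\pm O(\eps\Delta)$ slack is not specific to your derivation: the paper's own non-edge count also drops a $-1$ (each non-friend $u$ actually has only $\geq \eps\Delta - 1$ non-neighbors in $N(v)\setminus\{u\}$, since $u$ itself sits in $N(v)\setminus N(u)$), and the passage from $\binom{\Delta}{2}-\tfrac12\eps^2\Delta^2$ to $(1-\eps^2)\binom{\Delta}{2}$ exactly consumes that extra slack. If you chase the constants, both the paper's argument and your second route establish the bound only up to an additive $O(\eps\Delta)$ term; this is harmless for the paper (which uses the proposition only for an $\Omega(\eps^2\Delta^2)$ lower bound on non-edges with generous constants), but the displayed inequality is not tight at small $\Delta$. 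So: your alternative route is fine for the paper's purposes and matches the paper's approach, but your first proposed fix does not work as claimed and should be dropped.
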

\begin{proof}
	If $\deg(v)$ is less than $\Delta$, to prove the proposition, we add some dummy vertices which are only connected to $v$ so that $\deg(v)$ become exactly $\Delta$, By doing so, the number of edges spanning the neighborhood of $v$ would not change.
	As $v$ is an $\eps$-sparse vertex, it means that at least $\eps \cdot \Delta$ of its neighbors have at most $(1-\eps) \cdot \Delta$ neighbors in common with $v$. This means that any of those vertices is \emph{not} connected to 
	at least $\eps \cdot \Delta$ other vertices in $N(v)$. As such, the total number of edges spanning the neighborhood of $v$ is at most, 
	${{\Delta}\choose{2}} - \frac{1}{2} \cdot \eps^2 \cdot \Delta^2 \leq (1-\eps^2) \cdot {{\Delta}\choose{2}}$.
\end{proof}

To conclude, HSS-decomposition partitions the vertices of the graph into $\Vsparse_\eps \cup \Vdense_\eps$, where $\Vdense_\eps$ is additionally partitioned into the collection $C_1,\ldots,C_k$ with the properties given
by Lemma~\ref{lem:HSS-decomposition} and Proposition~\ref{prop:sparse-vertices}. 

\subsection{A Simple Extension of the HSS-Decomposition}

It would be more convenient for us to work with a slightly different variant of the HSS-decomposition that we introduce here. 

\begin{lemma}[Extended HSS-Decomposition]\label{lem:extended-HSS-decomposition}
	For any parameter $\eps \in [0,1)$, any graph $G(V,E)$ can be decomposed into a collection of vertices $V:= \Vsparse_\star \cup C_1 \cup \ldots \cup C_k$ such that: 
	\begin{enumerate}[leftmargin=15pt]
		\item\label{HSS-p1} $\Vsparse_{2\eps} \subseteq \Vsparse_\star \subseteq \Vsparse_{\eps}$, i.e., any vertex in $\Vsparse_\star$ is at least $(2\eps)$-sparse and at most $\eps$-sparse. 
		\item\label{HSS-p2} For any $i \in [k]$, $C_i$ has the following properties (we refer to $C_i$ as an almost-clique): 
		\begin{enumerate}
			\item\label{HSS-p2a}\label{p1} $(1-\eps)\Delta \leq \card{C_i} \leq (1+6\eps)\Delta$. 
			\item\label{HSS-p2b} Any $v \in C_i$ has at most $7\eps\Delta$ neighbors outside of $C_i$.
			\item\label{HSS-p2c} Any $v \in C_i$ has at most $6\eps\Delta$ non-neighbors inside of $C_i$. 
		\end{enumerate}
	\end{enumerate}
\end{lemma}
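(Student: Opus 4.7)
The plan is to apply the HSS-decomposition (Lemma~\ref{lem:HSS-decomposition}) at parameter $2\eps$ (rather than $\eps$) to partition $\Vdense_{2\eps}$ into connected components $C'_1,\ldots,C'_{k'}$ of $H_{2\eps}$, and then to split these components based on whether they contain any $\eps$-dense vertex. Call $C'_i$ \emph{good} if it contains at least one $\eps$-dense vertex, and \emph{bad} otherwise. I will take the $C_i$'s of the statement to be precisely the good components, and set $\Vsparse_\star := \Vsparse_{2\eps} \cup \bigcup_{\text{bad } C'_j} C'_j$. The inclusion $\Vsparse_{2\eps} \subseteq \Vsparse_\star$ is immediate from the definition, and the inclusion $\Vsparse_\star \subseteq \Vsparse_\eps$ follows since $\Vsparse_{2\eps} \subseteq \Vsparse_\eps$ (as $F_\eps \subseteq F_{2\eps}$ makes $\eps$-density the stronger condition) and the bad components contain no $\eps$-dense vertex by construction. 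Moreover, the size upper bound $|C_i| \leq (1+6\eps)\Delta$ in part~(\ref{HSS-p2a}) and the non-neighbor bound in part~(\ref{HSS-p2c}) are direct consequences of Lemma~\ref{lem:HSS-decomposition} instantiated at parameter $2\eps$.

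The heart of the argument will be the size lower bound $|C_i| \geq (1-\eps)\Delta$, and for this I plan to prove the following key claim: \emph{if $v$ is $\eps$-dense and $u$ is an $\eps$-friend of $v$, then $u$ is $(2\eps)$-dense}. Given this claim, for an $\eps$-dense vertex $v$ in a good component $C_i$, each of $v$'s at least $(1-\eps)\Delta$ $\eps$-friends is both $(2\eps)$-dense (by the claim) and a $(2\eps)$-friend of $v$ (since $F_\eps \subseteq F_{2\eps}$), hence belongs to the same $H_{2\eps}$-component $C_i$; consequently $|C_i| \geq (1-\eps)\Delta + 1$. The claim itself I will establish via two applications of inclusion-exclusion, both inside $N(v)$ (which has size at most $\Delta$). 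First, the set $F_v$ of $\eps$-friends of $v$ and the set $N(u) \cap N(v)$ are subsets of $N(v)$, each of size at least $(1-\eps)\Delta$, so $|F_v \cap N(u)| \geq (1-2\eps)\Delta$. Second, for any $w \in F_v \cap N(u)$, inclusion-exclusion on $N(w) \cap N(v)$ and $N(u) \cap N(v)$ inside $N(v)$ gives $|N(u) \cap N(w)| \geq (1-2\eps)\Delta$, so $w$ is a $(2\eps)$-friend of $u$. Together these witness at least $(1-2\eps)\Delta$ $(2\eps)$-friends of $u$, making $u$ itself $(2\eps)$-dense.

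With the size lower bound in hand, part~(\ref{HSS-p2b}) follows by a simple counting argument: each $v \in C_i$ has at most $\Delta$ total neighbors and at least $|C_i| - 6\eps\Delta$ of them inside $C_i$ by part~(\ref{HSS-p2c}), leaving at most $\Delta - |C_i| + 6\eps\Delta \leq 7\eps\Delta$ neighbors outside $C_i$. The main obstacle is the key claim above, and it is precisely to accommodate the two-fold slack required by its double inclusion-exclusion argument that the underlying HSS-decomposition is taken at parameter $2\eps$ rather than $\eps$: this is exactly what allows every $\eps$-dense vertex to ``pull'' all of its $\eps$-friends into its own $H_{2\eps}$-component and thereby certify that the component has size at least $(1-\eps)\Delta$.
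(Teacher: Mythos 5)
Your proof is correct and follows essentially the same route as the paper: apply the HSS-decomposition at parameter $2\eps$, keep the components containing an $\eps$-dense vertex, and lower-bound their size via the observation that every $\eps$-friend of an $\eps$-dense vertex is itself $(2\eps)$-dense, proved by the same double inclusion-exclusion inside $N(v)$. Your explicit isolation of that observation as a "key claim" is a clean restatement of the paper's argument about the set $S_v$ of $\eps$-friend neighbors, not a new idea.
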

\noindent
Two main differences between Lemma~\ref{lem:HSS-decomposition} and the original HSS-decomposition are: $(i)$ size of each $C_i$ is now lower bounded (HSS-decomposition does not lower bound the size of $C_i$), and $(ii)$ the number of all neighbors of any vertex outside $C_i$ is now bounded (not only neighbors to other dense vertices as in the original HSS-decomposition). 

\begin{proof}[Proof of Lemma~\ref{lem:extended-HSS-decomposition}]
    Consider the HSS-Decomposition with parameter $2\eps$. By Lemma \ref{lem:HSS-decomposition}, $G$ can be decomposed into $2\eps$-sparse vertices $\Vsparse_{2\eps}$ and 
    components $C_1,\ldots,C_\ell$ with $2\eps$-dense vertices. Let $C_1,C_2,\dots,C_k$ be the components among these that contain at least one $\eps$-dense vertex. 
    
    We define $\Vsparse_\star$ as the set of vertices in $\Vsparse_{2\eps} \cup (\Vdense_{2\eps} \setminus C_1 \cup \ldots \cup C_k)$, i.e., all vertices that are not in the $k$ connected components defined above. Clearly,
    $\Vsparse_{2\eps} \subseteq \Vsparse_\star$. On the other hand, $\Vsparse_\star$ does not contain any $\eps$-dense vertices (as we removed $C_1,\ldots,C_k$), and hence $\Vsparse_\star \subseteq \Vsparse_{2\eps}$.
    This proves Property~(\ref{HSS-p1}). We now prove Property~(\ref{HSS-p2}).
    
    Fix any $i \in [k]$ and let $C_i$ be any connected component that contains a $\eps$-dense vertex. Firstly, since $C_i$ is a connected component of a HSS-decomposition with parameter $2\eps$, 
    by Lemma~\ref{lem:HSS-decomposition}, any vertex in $C_i$ has at most $6\eps\Delta$ non-neighbors inside $C_i$. This proves Property~(\ref{HSS-p2c}). 
    
    Now let $v$ be any $\eps$-dense vertex in $C_i$. As $v$ is $\eps$-dense, by Definition~\ref{def:eps-dense}, $v$ has at least $(1-\eps)\Delta$ $\eps$-friend neighbors.
    Let $S_v$ be the set of these vertices. By Definition~\ref{def:eps-friend}, any of these vertices have at least $(1-\eps)\Delta$ shared neighbors with $v$. As the maximum degree of any vertex is $\Delta$, this implies
    that any two vertices $u,w \in S_v$ have at least $(1-2\eps)\Delta$ common neighbors with each other. Furthermore, since $S_v$ has at least $(1-\eps)\Delta$ vertices, each vertex in $S_v$ has at least $(1-2\eps)\Delta$ neighbors in $S_v$. Thus  all vertices in $S_v$ are $(2\eps)$-dense. 
    Moreover, as all vertices in $S_v$ are connected to $v$ by an $\eps$-friend edge (and hence also a $2\eps$-friend edge), vertices in $S_v$ all appear in the same connected component $C_i$ with
    the vertex $v$.  This implies that $\card{C_i} \geq \card{S_v} \geq (1-\eps)\Delta$. Moreover, by Property~(\ref{HSS-p2c}) we already proved, any vertex has at most $6\eps\Delta$ non-neighbors in $C_i$ and hence
    $\card{C_i} \leq (1+6\eps)\Delta$. This proved Property~(\ref{HSS-p2a}).
    
    Finally, the above argument, together with the lower bound on size of $C_i$, also implies that each vertex $u \in C_i$ is connected to at least $(1-7\eps)\Delta$ vertices inside $C_i$. As such, $u$ can only
    have $7\eps\Delta$ neighbors outside $C_i$ proving Property~(\ref{HSS-p2b}).   
\end{proof}

\newcommand{\bd}{\ensuremath{\bar{d}}}
\renewcommand{\bC}{\ensuremath{\overline{C}}}
\newcommand{\hC}{\ensuremath{\widehat{C}}}

\newcommand{\Vr}{\ensuremath{V^{\textnormal{\textsf{rem}}}}}
\newcommand{\Gr}{\ensuremath{G^{\textnormal{\textsf{rem}}}}}

\newcommand{\drem}[1]{\ensuremath{d^{\textnormal{\textsf{rem}}}(#1)}}

\newcommand{\GC}{\ensuremath{\textnormal{\textsf{GreedyColor}}}\xspace}

\newcommand{\col}{\ensuremath{\mathcal{C}}}

\section{The Palette-Sparsification Theorem}\label{sec:color-sampling}

We prove our Result~\ref{thm:intro-meta} in this section; see Appendix~\ref{app:sparsification} for further remarks on optimality of the bounds in this result, as well as (im)possibility of extending this result to $c$-coloring for 
values of $c$ strictly smaller than $\Delta+1$. 

\begin{theorem}[\textbf{Palette-Sparsification Theorem}]\label{thm:color-sampling}
    Let $G(V,E)$ be any $n$-vertex graph and $\Delta$ be the maximum degree in $G$. Suppose for each vertex $v \in V$,
    we independently pick a set $L(v)$ of colors of size $\Theta(\log{n})$ uniformly at random from $[\Delta+1]$. Then 
    with high probability there exists a proper coloring $\col: V \rightarrow [\Delta+1]$ of $G$ such that for all vertices $v \in V$, $\col(v) \in L(v)$. 
\end{theorem}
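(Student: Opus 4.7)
The plan is to apply the extended HSS-decomposition (Lemma~\ref{lem:extended-HSS-decomposition}) with a small absolute constant $\eps$ (say $\eps = 10^{-3}$) to partition $V$ into a sparse part $\Vsparse_\star$ and a collection of almost-cliques $C_1,\ldots,C_k$, and then construct the desired coloring in two stages: first color the sparse vertices, and then independently extend the coloring into each almost-clique $C_i$ conditioned on the colors already chosen for its external neighbors.

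For the sparse stage, Proposition~\ref{prop:sparse-vertices} guarantees that every $v \in \Vsparse_\star$ has at least $\eps^{2}\binom{\Delta}{2}$ non-edges inside its neighborhood, and a simple double-counting then shows that under any coloring of $N_G(v)$ by $[\Delta+1]$, at least $\Omega(\eps^{2}\Delta)$ colors remain free at $v$. This constant-fraction ``slack'' is precisely the regime handled by the Schneider--Wattenhofer / Elkin--Pettie--Su style random-trial distributed algorithms: in $O(\log n)$ synchronous phases, each still-uncolored vertex picks a uniformly random color from its list and commits if no conflicting neighbor has picked or previously committed the same color. A standard Chernoff plus union-bound calculation, combined with $|L(v)| = \Theta(\log n)$ and a sufficiently large hidden constant, shows that all of $\Vsparse_\star$ is properly colored with high probability.

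For the dense stage, fix an almost-clique $C_i$. By Lemma~\ref{lem:extended-HSS-decomposition}, each $v \in C_i$ has at most $7\eps\Delta$ external neighbors, so after the sparse stage $v$'s residual list $L'(v) \subseteq L(v)$ of still-available colors satisfies $|L'(v)| \geq (1-O(\eps))|L(v)|$ w.h.p. Since $(1-\eps)\Delta \leq |C_i| \leq (1+6\eps)\Delta$ and each vertex misses at most $6\eps\Delta$ internal neighbors, any proper coloring of $G[C_i]$ from $[\Delta+1]$ is essentially a near-perfect assignment between $C_i$ and the palette, with only $O(\eps\Delta)$ colors reused and only across pairs of non-adjacent vertices. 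I would therefore cast the problem as a matching problem on the random bipartite graph $B_i$ whose edges are $\{(v,c) : v \in C_i,\ c \in L'(v)\}$: first set aside a buffer of at most $O(\eps\Delta)$ vertices of $C_i$ (those with the most internal non-neighbors, together with the excess over $\Delta+1$) to absorb the doubled-up colors, and then find a perfect matching of the remaining vertices to distinct colors. Hall's condition on this sub-instance has to hold w.h.p.: because each $L(v)$ is a uniformly random $\Theta(\log n)$-subset of $[\Delta+1]$ and $|C_i|$ is comparable to $\Delta+1$, the probability that a fixed $S$ violates Hall is roughly $(|S|/(\Delta+1))^{\Theta(|S|\log n)}$, and a union bound over the at most $2^{|C_i|}$ choices of $S$ is crushed by the $\Theta(\log n)$ exponent. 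Finally, the buffer vertices are colored via a second Hall-type argument using the residual palette and their abundant internal non-neighbors, which leave many colors reusable for them.

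The hardest part will be the Hall-condition analysis in the dense stage: one must simultaneously control the tight interval $|C_i| \approx \Delta+1$, the $O(\eps\Delta)$ missing internal edges that require carefully organized color reuse, and the $O(\eps\Delta)$ external neighbors that shrink the lists, all while keeping the palette size at only $\Theta(\log n)$. In particular, the lower bound $|C_i| \geq (1-\eps)\Delta$ guaranteed by Lemma~\ref{lem:extended-HSS-decomposition} (absent from the original HSS-decomposition) is exactly what makes the matching-based approach viable, and the $\Theta(\log n)$ palette size should appear tightly in the union bound, matching the asymptotic optimality claimed after Theorem~\ref{thm:color-sampling}.
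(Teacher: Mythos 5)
Your decomposition, sparse-stage argument, and the overall framing of the dense stage as a Hall-type matching problem in a random bipartite graph closely match the paper (which also splits $L(v)$ into three independent batches $L_1 \cup L_2 \cup L_3$, one per stage, to avoid the conditioning issues your description glosses over). The genuine gap is the order and mechanism of the color reuse inside each almost-clique. You perform the perfect matching of $C_i \setminus \textnormal{buffer}$ first and then repair each buffer vertex by hitting the color already assigned to one of its non-neighbors; this step can fail with probability $1-o(1)$. Take $|C_i| = \Delta + 2$ with $\bar{C}_i$ a perfect matching: every vertex has exactly one non-neighbor, $\bar{d}=1$, every degree in $G[C_i]$ is exactly $\Delta$, there are no external edges, and this is a legal almost-clique under Lemma~\ref{lem:extended-HSS-decomposition}. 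Exactly one color must be reused, necessarily on some non-edge $(u,w)$; after your first stage $w$ holds a fixed color $c(w)$, and your buffer vertex $u$ needs $c(w) \in L(u)$, which holds only with probability $\Theta(\log n/\Delta) = o(1)$ once $\Delta = n^{\Omega(1)}$. Enlarging the buffer does not help, because \emph{no} vertex has ``abundant'' internal non-neighbors in this instance.

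What the paper does, and what is missing from your dense stage, is to lock in the reuse pairs \emph{before} committing the perfect matching, via a \emph{global} argument over all non-edges of $\bar{C}_i$: the ``colorful matching'' of Definition~\ref{def:colorful} and Lemma~\ref{lem:colorful} is a set of non-edges $(u,w)$, each equipped with a distinct common sampled color $c_{u,w}\in L_2(u)\cap L_2(w)$. A single non-edge finds a common sampled color only with probability $\Theta(\log^2 n/\Delta)$, which alone would not suffice; the key, Lemma~\ref{lem:colorvar}, is a second-moment argument that aggregates over all $\Theta(\bar{d}\Delta)$ non-edges at once, yielding a colorful matching of size $\Omega(\bar{d}/\eps)$ with high probability. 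Since the degree bound forces every vertex of $C_i$ to have at least $|C_i|-(\Delta+1)$ non-neighbors inside $C_i$, one always has $\bar{d}\geq |C_i|-(\Delta+1)$, so this matching both absorbs the excess over $\Delta+1$ and produces exactly the slack $\Theta(\bar{d})$ that the final Hall argument (Lemma~\ref{lem:lamatching}, applied with $\delta=\Theta(\bar{d}/N)$) requires for the remaining vertices of $\hat{C}_i$. The aggregation over all non-edges of $\bar{C}_i$, rather than per-buffer-vertex repair after the fact, is the central technical ingredient you would need to supply.
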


Let us start by fixing the parameters used in the proof of Theorem~\ref{thm:color-sampling}. Let $\eps > 0$ be a sufficiently small constant, say, $\eps := 1/5000$ and $\alpha > 0$ be a sufficiently large integer, say $\alpha = 5000$\footnote{In the 
interest of simplifying the exposition of the proof, we made no attempt in optimizing the constants. The proof of the theorem can be made to work with much smaller constants than the ones used here.}. 
In Theorem~\ref{thm:color-sampling}, we make each vertex sample $\paren{\alpha\cdot\log{n}/\eps^2}$ colors in $L(v)$. 
We assume that $(\Delta+1) > \alpha\cdot\log{n}/\eps^2$; otherwise Theorem~\ref{thm:color-sampling} is trivial as we sampled all $\Delta+1$ colors for each vertex and every graph admits a $(\Delta+1)$ coloring. 
For the purpose of the analysis, we assume that the set $L(v)$ of each vertex is union of three sets $L_1(v) \cup L_2(v) \cup L_3(v)$, named \emph{batch} one, two, and three, respectively, where
each $L_i(v)$ for $i \in [3]$ is created by picking each color in $[\Delta+1]$ independently and with probability $p := \Paren{\frac{\alpha \cdot \log{n}}{3\eps^2 \cdot (\Delta+1)}}$. 
While this distribution is not identical to the one in Theorem~\ref{thm:color-sampling}, it is easy to see that proving the theorem for this distribution also implies the original result as in this new distribution, with high probability, no vertex
samples more than $O(\log{n})$ colors. 

We use the extended HSS-decomposition with parameter $\eps$ (Lemma~\ref{lem:extended-HSS-decomposition}): 
graph $G$ is decomposed into $V := \Vsparse_\star \cup C_1 \cup \ldots \cup C_k$ where each $C_i$ for $i \in [k]$ is an almost-clique. 

We prove Theorem~\ref{thm:color-sampling} in three parts. 
In the first part, we argue that by only using the colors in the first batch $L_1(\cdot)$, we can color all the vertices in $\Vsparse_{\star}$. This part is mostly standard and more or less follows
from the results in~\cite{ElkinPS15,HarrisSS16,ChangLP18} by simulating a distributed local algorithm using only the colors in the first batch. 
We hence concentrate bulk of our effort in proving the next two parts which are the key components of the proof. We first show that using 
only the colors in the second batch, we can color a relatively large fraction of vertices in each almost-clique $C_i$ \emph{at a rate of two vertices per color} (assuming the number of non-edges in the almost-clique is not too small). This allows 
us to ``save'' extra colors for coloring the remainder of the almost-cliques, which we do in the last part. We note that unlike the coloring of the first part which is based on a ``local'' coloring scheme (in which we determine the color of each vertex based
on colors assigned to each of its neighbors similar to the greedy algorithm), the coloring of the second and third part is done in a ``global'' manner in which the color of a vertex is determined based on some global properties of the graph not only
the local neighborhood of a vertex. 

\smallskip

\emph{Partial Coloring Function.} Define a function $\col: V \rightarrow [\Delta+1] \cup \set{\perp}$ that assigns one of the colors in $[\Delta+1]$ plus the \emph{null color} $\perp$ to the vertices, such that no two neighboring vertices have the same color 
from $[\Delta+1]$ (but they may both have the null color $\perp$). We refer to $\col$ as a \emph{partial coloring} function and refer to vertices that are colored by $\col$ in $[\Delta+1]$ as having a \emph{valid color}. Furthermore, we say that a valid color $c$ is \emph{available} to a vertex $v \in V$ in the partial coloring $\col$, iff $\col$ does not assign $c$ to any neighbor of $v$. 
The set of available colors for $v$ is denoted by $A_\col(v)$. 

It is immediate that if $\col$ does not assign a null color to any vertex, then the resulting coloring is a proper $(\Delta+1)$-coloring of the graph. We start with a partial coloring function $\col$ which assigns a null color to all vertices initially and modify this coloring in each part to remove all null colors. 

\subsection{Part One: Coloring Sparse Vertices.} 
Recall the definition of sparse vertices $\Vsparse_\star$ in the extended HSS-decomposition from Section~\ref{sec:prelim}. In the first part of the proof, we show that we can color all sparse vertices using only the colors in the first batch. 

\begin{lemma}\label{lem:sparse-color}
    With high probability, there exists a partial coloring function $\col_1:V \rightarrow [\Delta+1] \cup \set{\perp}$ such that for all vertices $v \in \Vsparse_\star$, $\col_1(v) \in L_1(v)$.  
\end{lemma}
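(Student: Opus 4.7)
The plan is to simulate a randomized local distributed coloring algorithm in the spirit of~\cite{ElkinPS15,HarrisSS16,ChangLP18}, using the colors in $L_1(v)$ as the source of random color attempts for each sparse vertex $v$. Since $L_1(v)$ is obtained by including each color in $[\Delta+1]$ independently with probability $p = \Theta(\log n/(\eps^2\Delta))$, a Chernoff bound gives $|L_1(v)| = \Theta(\log n/\eps^2)$ with high probability, so we may group the sampled colors into $T = \Theta(\log n/\eps^2)$ independent ``rounds,'' each providing (a coupling to) one uniform color attempt from $[\Delta+1]$. Starting with $\col_1(v) = \perp$ for all $v$, I would process the sparse vertices over $T$ rounds: in each round, every uncolored $v\in\Vsparse_\star$ tentatively tries a fresh color $c_v\in L_1(v)$; if $c_v$ is not occupied by any already-colored neighbor and no uncolored neighbor simultaneously attempts $c_v$, then $v$ is permanently assigned $c_v$, and otherwise $v$ remains uncolored and proceeds to the next round.

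The heart of the analysis is to prove a per-round success probability of $\Omega(\eps^2)$ for any uncolored $v\in\Vsparse_\star\subseteq \Vsparse_\eps$. This relies on the local sparsity property from Proposition~\ref{prop:sparse-vertices}: the neighborhood $N(v)$ of $v$ contains at least $\eps^2\binom{\Delta}{2}$ non-edges, which acts as ``slack'' in any partial coloring around $v$. Concretely, no matter how $N(v)$ is partially colored, the non-edges force many non-adjacent neighbors of $v$ to be able to share colors, so at least $\Omega(\eps^2\Delta)$ colors of $[\Delta+1]$ are always available to $v$; hence the probability that the random attempt $c_v$ lands on an available color is $\Omega(\eps^2)$, and by a standard union bound the probability that no neighbor attempts the same color is at least a constant. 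Given a per-round success probability $\Omega(\eps^2)$, over $T = \Theta(\log n/\eps^2)$ rounds the probability that $v$ remains uncolored is at most $(1-\Omega(\eps^2))^T \leq 1/\poly(n)$, and a final union bound over the at most $n$ sparse vertices yields the lemma.

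\textbf{Main obstacle.} The delicate step is establishing the $\Omega(\eps^2)$ per-round success probability \emph{conditioned on the history} of the process: adversarial evolution of the partial coloring in $N(v)$ across rounds could conceivably drain the pool of available colors for $v$. I would address this with a worst-case structural bound that, for \emph{any} proper partial coloring of $N(v)$, at most $(1-\Omega(\eps^2))\Delta$ distinct colors can appear in $N(v)$---this is precisely what the $\eps^2\binom{\Delta}{2}$ non-edges in $N(v)$ enforce---so the slack never disappears regardless of the adversary's choices. A secondary, essentially bookkeeping, subtlety is the coupling between the fixed random set $L_1(v)$ and the idealized ``one fresh uniform color per round'' process, which follows from the symmetry of the sampling and the fact that $|L_1(v)| \geq T$ with high probability.
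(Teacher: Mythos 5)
Your plan—group the colors of $L_1(v)$ into $T = \Theta(\log n/\eps^2)$ rounds, run a distributed-style greedy coloring, and show a per-round success probability of $\Omega(\eps^2)$—has the right shape and matches what the paper does. However, the ``worst-case structural bound'' you put at the heart of the analysis is false.

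You claim that for \emph{any} proper partial coloring of $N(v)$, at most $(1-\Omega(\eps^2))\Delta$ distinct colors can appear in $N(v)$, so that $\Omega(\eps^2\Delta)$ colors are ``always'' available to $v$. Non-edges inside $N(v)$ do not force repeated colors—they only \emph{permit} them. Take the extreme case where $N(v)$ is an independent set of $\Delta$ vertices (so the non-edge count is $\binom{\Delta}{2}$, far exceeding $\eps^2\binom{\Delta}{2}$): a proper partial coloring may assign these $\Delta$ vertices $\Delta$ \emph{distinct} colors, leaving at most one color for $v$. Thus the slack you need is not a worst-case fact about $N(v)$; it must be manufactured probabilistically.

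This is exactly what the paper's first step (Lemma~\ref{lem:one-shot}, via the \textsf{OneShotColoring} procedure of~\cite{ElkinPS15,HarrisSS16,ChangLP18}) does: each vertex uses only its \emph{first} sampled color $c_1(\cdot)$ once, and one argues that a constant fraction of the $\geq \eps^2\binom{\Delta}{2}$ non-edges $(u,w)$ in $N(v)$ are ``good'' in the sense that $u$ and $w$ draw the same color and no nearby vertex collides with it. Each good non-edge saves a color, and a martingale/concentration argument gives, with probability $1-\exp(-\Omega(\eps^2\Delta))$, a $\Omega(\eps^2\Delta)$ \emph{excess} in the form $\card{A_{\col_1}(v)} \geq \drem{v} + \Omega(\eps^2\Delta)$, where $\drem{v}$ is $v$'s degree among still-uncolored sparse vertices. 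The remaining rounds then work as in your proposal: each uncolored neighbor of $v$ blocks at most one color per round, so $|C|\le \drem{v}$ and the excess $\Omega(\eps^2\Delta)$ never gets eaten, giving the $\Omega(\eps^2)$ per-round success. Your proposal would be salvageable if you replaced the deterministic ``slack never disappears'' claim by this initial probabilistic step that \emph{creates} the slack (and noted that the required concentration needs $\Delta = \Omega(\log n/\eps^2)$, which the paper can assume since otherwise the theorem is trivial).
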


We prove this lemma by showing that one can simulate a natural greedy algorithm (similar but not identical to the algorithm of~\cite{ElkinPS15}) for coloring sparse vertices using only the colors in the first batch.  
The first step is to use the \emph{first} color in $L_1(v)$, chosen uniformly at random from $[\Delta+1]$, 
for all vertices $v \in \Vsparse_\star$ to color a large fraction of vertices in $\Vsparse_\star$; the main property of this coloring is that after this step any uncolored 
$\eps$-sparse vertex has $\Omega(\eps^2\Delta)$ ``excess'' colors compared to the number of edges it has to other \emph{uncolored} $\eps$-sparse vertices. 
This step follows from the proof of the \textsf{OneShotColoring} procedure in~\cite{ElkinPS15,HarrisSS16,ChangLP18} and we simply present a proof sketch for intuition.  
We then use the remaining colors in $L_1(\cdot)$ for each uncolored vertex and color them greedily, using the fact that the number of available colors is sufficiently larger than the number of 
neighbors of each uncolored vertex in every step. This part is also similar to the algorithm in~\cite{ElkinPS15} (see also~\cite{HarrisSS16,ChangLP18}) 
but uses a different argument as here we \emph{cannot} sample the colors for each vertex \emph{adaptively} (as the colors in
$L_1(\cdot)$ are  sampled ``at the beginning'' of the greedy algorithm).

As the proof of this lemma closely follows the previous work in~\cite{ElkinPS15,HarrisSS16,ChangLP18} with only some minor modifications, we postpone its proof to Appendix~\ref{app:sparse}.

\subsection{Part Two: Initial Coloring of Almost Cliques.}
Recall that by Lemma~\ref{lem:sparse-color}, after the first part, we have a partial coloring $\col_1: V \rightarrow [\Delta+1] \cup \set{\perp}$ that assigns a valid color to all sparse vertices. 
We now design a partial coloring $\col_2 : V \rightarrow [\Delta+1] \cup \set{\perp}$ 
where for all $v \in \Vsparse_\star$, $\col_2(v) := \col_1(v)$ and $\col_2(v) = \perp$ for remaining vertices initially but some additional vertices  would also be assigned a valid color by the end of this part using the second batch. 

Fix the almost-cliques $C_1,\ldots,C_k$. Define $\bC_i$ as the \emph{complement-graph} of $C_i$ on the same set of vertices as $C_i$ by switching edges and non-edges in $C_i$. 
Note that any two neighboring vertices in $\bC_i$ \emph{can be colored the same} (in $G$). We exploit this fact 
in the following definition. 

\begin{definition}[Colorful Matching]\label{def:colorful}
    We say that a matching $M_i$ in the complement-graph $\bC_i$ of an almost-clique is a \emph{colorful matching} with respect to the partial coloring $\col_2$ iff: 
    \begin{enumerate}
        \item For any $(u,v) \in M_i$ there is a color $c_{u,v}$ s.t $c_{u,v} \in L_2(u) \cap L_2(v)$ and $c_{u,v} \in A_{\col_2}(u) \cap A_{\col_2}(v)$.
        \item For any pairs of edges $(u_1,v_1),(u_2,v_2) \in M_i$, $c_{u_1,v_1} \neq c_{u_2,v_2}$.
    \end{enumerate}
\end{definition}

By finding a colorful matching in a complement-graph $\bC_i$, we can ``save'' on the the colors needed for coloring $C_i$ as we can color vertices of the matching at a rate of two vertices per color. 

We now iterate over complement-graphs $\bC_1,\ldots,\bC_k$ one by one, and show that there exists a sufficiently large colorful matching in each complement-graph, \emph{even after} we update the coloring $\col_2$ for vertices matched by 
the colorful matchings in previous 
complement-graphs. 

\begin{lemma}\label{lem:colorful}
    Fix any complement-graph $\bC$ and let $\col_2 : V \rightarrow [\Delta+1] \cup \set{\perp}$ be any partial coloring in which $c(v) = \perp$ for all $v \in \bC$. Suppose \emph{average degree} of $\bC$ is $\bd$. Then, there exists a colorful
    matching of size at least $\paren{\bd/(320\eps)}$ in $\bC$ with high probability (over the randomness of $L_2$).
\end{lemma}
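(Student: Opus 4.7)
The plan is to recast the problem as finding a matching in a $3$-uniform hypergraph $\mathcal{H}$ with vertex set $V(\bC) \cup [\Delta+1]$ whose hyperedges are the \emph{witness triples} $(u,v,c)$ satisfying $(u,v) \in E(\bC)$, $c \in A_{\col_2}(u) \cap A_{\col_2}(v)$, and $c \in L_2(u) \cap L_2(v)$. Any matching in $\mathcal{H}$ corresponds exactly to a colorful matching in $\bC$, so it suffices to show that $\mathcal{H}$ admits a matching of size $\bd/(320\eps)$ with high probability. I will assume throughout that $\bd \geq 320\eps$, since otherwise the target size is below $1$ and the empty matching already satisfies the lemma.

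First I would lower-bound $\mathbb{E}[|\mathcal{H}|]$. By Lemma~\ref{lem:extended-HSS-decomposition} each endpoint of a $\bC$-edge is forbidden at most $7\eps\Delta$ colors, so every $\bC$-edge has at least $(1-14\eps)\Delta$ colors simultaneously available to both endpoints. Since each such available color is a witness with probability $p^2 = \Theta(\log^2 n/(\eps^4 \Delta^2))$, a direct computation gives $\mathbb{E}[|\mathcal{H}|] \geq |E(\bC)| \cdot (1-14\eps)\Delta \cdot p^2 = \Omega(\bd \log^2 n / \eps^4)$, a factor $\log^2 n/\eps^3$ larger than the target matching size.

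Next I would establish concentration of $|\mathcal{H}|$ via the lower-tail Janson inequality. Two witness events $((u,v),c)$ and $((u',v'),c')$ are dependent exactly when $c = c'$ and $\{u,v\} \cap \{u',v'\} \ne \emptyset$; combining this with the fact that $\bC$ has maximum degree at most $6\eps\Delta$ bounds Janson's quantity $\Delta^\star$ by $O(\bd \log^3 n/\eps^5)$, yielding a failure probability at most $\exp(-\Omega(\bd \log n / \eps^3)) = 1/\poly(n)$ under our assumption on $\bd$. In parallel I would control the maximum degrees in $\mathcal{H}$ via Chernoff-style tail bounds plus a union bound: for any vertex $v \in V(\bC)$, conditioning on $L_2(v)$ writes its degree as a sum of independent binomials with expectation $O(\log^2 n/\eps^3)$, giving the same bound w.h.p.; for any color $c$ the degree equals $|E(\bC[V_c])|$ where $V_c = \{v : c \in L_2(v) \cap A_{\col_2}(v)\}$, which is a bilinear form in independent Bernoullis and concentrates around its expectation $O(\bd \log^2 n/(\eps^4 \Delta))$ up to an additive $O(\log n)$.

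Finally, the standard greedy construction returns a matching in $\mathcal{H}$ of size at least $|\mathcal{H}|/(2 D_v^{\max} + D_c^{\max})$, since picking a hyperedge $\{u,v,c\}$ blocks at most $D_v^{\max}$ hyperedges through each of $u$ and $v$ and at most $D_c^{\max}$ through $c$. A short case analysis closes the proof: when $D_v^{\max}$ dominates (as happens for $\bd \lesssim \eps\Delta$) the bound becomes $\Omega(\bd/\eps)$, while when $D_c^{\max}$ dominates the bound becomes $\Omega(\Delta) \geq \bd/(320\eps)$ using $\bd \leq 6\eps\Delta$. The hardest step will be the tight concentration of $D_c$: it is a bilinear form rather than a sum of independent indicators, and the naive deterministic bound $D_c \leq |V_c|^2/2 = O(\log^2 n/\eps^4)$ is lossy by precisely a $1/\eps$ factor and would only certify a matching of size $\Omega(\bd)$, short of the required $\bd/(320\eps)$. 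A refined bilinear concentration argument (in the spirit of Kim--Vu) is therefore the technical crux that upgrades the matching guarantee from $\Omega(\bd)$ to $\Omega(\bd/\eps)$.
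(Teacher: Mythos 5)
Your approach is genuinely different from the paper's and, with one adjustment, it does work. The paper runs a direct greedy over colors, tracks the potential $a_D(F)$ (total available colors summed over remaining non-edges), bounds the per-step drop in $a_D(F)$ by a pigeonhole on the structure of $\bC$, and invokes a Chebyshev-style second-moment lemma (Lemma~\ref{lem:colorvar}) to guarantee progress whenever the potential is large. You instead reformulate the problem as a $3$-uniform hypergraph matching and use the pipeline Janson (for $|\mathcal{H}|$) $\to$ Chernoff (for degrees) $\to$ greedy. Both are legitimate; the paper's route avoids any hard concentration of the color-degree, while yours makes the combinatorial structure more transparent and is arguably closer to how one ``should'' think of colorful matchings.

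Where I think you are off is in the assessment of the difficulty of the $D_c$ step. You do not need a Kim--Vu-type bilinear concentration at all, nor the additive-$O(\log n)$ form you posit. Write $T^{(c)}_v := \card{\{u : u\sim_{\bC} v,\ c\in L_2(u),\ c\in A_{\col_2}(u)\}}$. Since $\deg_{\bC}(v)\le 6\eps\Delta$ by Property~(\ref{HSS-p2c}) of Lemma~\ref{lem:extended-HSS-decomposition}, $T^{(c)}_v$ is a sum of at most $6\eps\Delta$ i.i.d.\ Bernoulli$(p)$'s with mean $O(\log n/\eps)$, so Chernoff plus a union bound over $(v,c)$ gives $\max_{v,c} T^{(c)}_v = O(\log n/\eps)$ with high probability. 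Likewise $\card{V_c} = O(\log n/\eps^2)$ uniformly over $c$. Then the crude inequality
\[
D(c)\;=\;\card{E(\bC[V_c])}\;\le\;\tfrac12\,\card{V_c}\cdot\max_{v}T^{(c)}_v\;=\;O\!\left(\frac{\log^2 n}{\eps^3}\right)
\]
already matches $D_v^{\max}=O(\log^2 n/\eps^3)$, which is all the greedy bound requires. In other words, the ``lossy factor of $1/\eps$'' in the bound $D(c)\le\binom{\card{V_c}}{2}$ disappears not because of any refined concentration of a bilinear form, but because the $\bC$-degree of every vertex is at most $6\eps\Delta$, so the local count $T^{(c)}_v$ is a factor $\eps$ smaller than $\card{V_c}$. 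Once this is in place your case analysis collapses to a single case, and the Janson lower bound on $\card{\mathcal{H}}$ (which indeed goes through once $\bd\ge 320\eps$) certifies a matching of size $\Omega(\bd/\eps)$, which is what the lemma asserts up to the explicit constant.
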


We start by some definitions. For $(u,v)\in \bar{C}$, a color is {\em available} to this edge if the color is available to both $u$ and $v$ under $\col_2$. 
For a set of colors $D$, let $a_D(e)$ be the number of available colors for an edge $e$ in $D$. For a set of edges $F$, we define $a_D(F) := \sum_{e\in F}a_D(e)$. 
We say that an edge $e=(u,v)$ sampled an available color in $L_2$ iff there exists an available color $c$ for $e$ in $L_2(u) \cap L_2(v)$. 
Lemma~\ref{lem:colorful} relies on the following lemma. 

\begin{lemma}\label{lem:colorvar}
    Let $\bar{C'}$ be a subgraph of $\bar{C}$ and $F = E(\bar{C'})$ be its edge-set. Let $D$ be any set of colors such that $a_D(F)\ge 120\epsilon^2\Delta^2$. If for each vertex in $\bar{C'}$, we sample each color in $D$ with probability $p=\frac{20\log{n}}{\epsilon\Delta}$, then with high probability, there is an edge $(u,v)$ in $F$ that samples an available color. 
\end{lemma}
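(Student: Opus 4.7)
The plan is to apply Janson's inequality to the collection of ``good'' indicator events and show that the probability \emph{no} edge samples an available color is $n^{-\Omega(1)}$. For each edge $e=(u,v)\in F$ and each color $c \in D$ that is available to $e$ under $\col_2$, define $Z_{e,c}$ to be the indicator of the event that both $u$ and $v$ sample $c$. Since the per-vertex samplings of each color in $D$ are mutually independent Bernoulli$(p)$ trials, $\Exp[Z_{e,c}] = p^2$, and therefore
\[
   \mu \;:=\; \Exp\Bigl[\sum_{(e,c)} Z_{e,c}\Bigr] \;=\; p^2 \cdot a_D(F) \;\geq\; \frac{(20\log n)^2}{\eps^2 \Delta^2} \cdot 120\,\eps^2 \Delta^2 \;=\; 48000 \log^2 n,
\]
using the hypotheses $a_D(F) \geq 120\eps^2 \Delta^2$ and $p = 20\log n/(\eps\Delta)$. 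The event to rule out is $\sum_{(e,c)} Z_{e,c} = 0$.

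The next step is to analyze the dependency graph of the $Z_{e,c}$. Because distinct $(v,c')$ samplings are independent, two indicators $Z_{e,c}$ and $Z_{e',c'}$ are mutually independent unless $c = c'$ and $e, e'$ share at least one endpoint. Here the crucial structural ingredient is Property (HSS-p2c) of Lemma~\ref{lem:extended-HSS-decomposition}: every vertex of an almost-clique $C$ has at most $6\eps\Delta$ non-neighbors inside $C$, so the maximum degree of $\bar{C}$ (and hence of $\bar{C'} \subseteq \bar{C}$) is at most $6\eps\Delta$. Writing $F_c \subseteq F$ for the set of edges to which $c$ is available and $d_c(v)$ for the number of edges of $F_c$ incident to $v$, the number of unordered pairs of distinct edges in $F_c$ sharing a vertex is at most $\sum_v \binom{d_c(v)}{2} \leq 3\eps\Delta \cdot |F_c|$. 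Each such dependent pair has joint probability exactly $p^3$ (the three distinct incident vertices must all sample $c$), so the Janson dependency term satisfies
\[
   \Delta^{\!*} \;:=\; \sum_{\{(e,c),(e',c')\}\text{ dependent}} \Pr[Z_{e,c}=Z_{e',c'}=1] \;\leq\; 3\eps\Delta \cdot p^3 \cdot a_D(F) \;=\; 3\eps\Delta \, p \cdot \mu \;=\; 60 (\log n) \cdot \mu.
\]

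Finally, Janson's inequality gives $\Pr[\sum Z_{e,c} = 0] \leq \exp\bigl(-\mu^2/(2(\mu + \Delta^{\!*}))\bigr)$. Since $\Delta^{\!*}/\mu = O(\log n) \gg 1$, we get $\mu^2/(2(\mu + \Delta^{\!*})) \geq \mu/(2 \cdot 61\log n) \geq 48000 \log^2 n / (122 \log n) > 100\log n$, and so the failure probability is bounded by $n^{-100}$, i.e.\ high probability in the sense of the paper.

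The hard part is controlling $\Delta^{\!*}$: if the maximum degree of $\bar{C'}$ were on the order of $\Delta$ (rather than $\eps\Delta$), we would obtain only $\Delta^{\!*}/\mu = \Theta(\log n/\eps)$ but the arithmetic would be far tighter, and a naive second-moment (Chebyshev) bound would give merely $\Pr[\text{fail}] = O(1/\log n)$. Thus the almost-clique structure guaranteed by the extended HSS-decomposition is essential, both for producing the needed low dependency bound and for making Janson's exponential tail usable. With this structural input in hand, no further random-graph machinery is required.
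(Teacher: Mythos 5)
Your proof is correct, and it takes a genuinely different route from the paper's. The paper works in two stages: first a second-moment (Chebyshev) argument at the smaller sampling rate $q = 1/(\eps\Delta)$ shows that some edge samples an available color with a fixed constant probability, and then it boosts to high probability by coupling the rate-$p$ sampling to $\Theta(\log n)$ independent repetitions of the rate-$q$ experiment. You instead apply Janson's inequality directly at the full rate $p$ to the indicator family $\{Z_{e,c}\}$, where $Z_{e,c}$ is an up-event in the product space of vertex--color samplings. Both arguments hinge on exactly the same structural input (via Property~(\ref{HSS-p2c}) of Lemma~\ref{lem:extended-HSS-decomposition}, the max degree of $\bC'$ is $O(\eps\Delta)$, which keeps the dependency/covariance contribution from edge pairs sharing a vertex under control), so the HSS-decomposition is doing the same work in both. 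What your route buys is a cleaner one-shot argument: you avoid the repetition/coupling step, which the paper handles somewhat informally (the claim that rate-$p$ sampling ``can be seen as performing this experiment independently $O(\log n)$ times'' requires a stochastic-domination observation that is glossed over). What it costs is invoking Janson rather than just Chebyshev.

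Two small numerical remarks that do not affect the conclusion. First, $\sum_v \binom{d_c(v)}{2} \le 3\eps\Delta \sum_v d_c(v) = 6\eps\Delta\,|F_c|$, since $\sum_v d_c(v) = 2|F_c|$; you dropped this factor of $2$, so $\Delta^{\!*} \le 120(\log n)\mu$ rather than $60(\log n)\mu$, but the final exponent is still $\Omega(\log n)$ with plenty of room. Second, you should note $p < 1$: this follows from the paper's standing assumption that $\Delta+1 > \alpha\log n/\eps^2$ with $\alpha = 5000$, giving $p = 20\log n/(\eps\Delta) < \eps/250$. Neither point changes the correctness of the argument.
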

\begin{proof}
    Let $q := \frac{1}{\eps\Delta}$. We argue that if each vertex samples each color in $D$ with probability $q$, then with a constant probability, there is
     an edge $(u,v)$ in $F$ that samples an available color. We then argue that sampling with rate $p$ can be seen
    as performing this experiment independently $O(\log{n})$ times and obtain the final high probability bound.
    
    For an edge $e=(u,v)$, let $X_{e}$ be an indicator random variable which is $1$ iff $e$ sampled an available color (in the experiment with probability $q$). Since $q^2 \cdot a_D(e) \le \frac{\Delta}{\epsilon^2\Delta^2} < 1$, we have, 
    \begin{align*}
    	\Ex\bracket{X_e} &= 1-\Pr\bracket{X_e = 0} = 1- \paren{1-q^2}^{a_D(e)} \\
	&\geq 1 - \exp\paren{-q^2 \cdot a_D(e)} \geq \frac{q^2 \cdot a_D(e)}{2}. \tag{as $1-x \leq e^{-x} \leq 1-x/2$ for $x < 1$}
    \end{align*} 
    
   Define $X = \sum_{e\in F} X_e$. We prove  $\prob{X>0} \geq \frac{1}{15}$ which implies that with probability at least $1/15$, an edge in $F$ samples an available color.

    Firstly, notice that $\expect{X}=\sum_{e\in F} \expect{X_e}\ge \frac{a_D(F) \cdot q^2}{2} \geq 120\eps^2\cdot\Delta^2 \cdot \frac{1}{2\eps^2\Delta^2} = 60$. 
    We prove that the variance of $X$ is not much larger than its expectation, and use Chebyshev's inequality to prove the bound on $\Pr\paren{X > 0}$. 
    By definition, $\var{X} = \sum_{e \in F} \var{X_e} + \sum_{e \neq e' \in F} \cov{X_e,X_{e'}}$. Since each $X_e \in \set{0,1}$, we have
    $\var{X_e} \leq \expect{X_e}$, hence it only remains to bound the covariance terms.

    For any pair of edges $e, e'$ in $F$, if they do not share a common endpoint, then the variables $X_e$ and $X_{e'}$ are independent (hence their covariance is $0$), 
    but if they share a common endpoint, their covariance would be non-zero. However, in this case, $\cov{X_e,X_{e'}} \leq \expect{X_e\cdot X_{e'}}\le 1-\paren{1-q^3}^{a_D(e)} \le 1 - \exp\paren{-2q^3 \cdot a_D(e)} \le 2 a_D(e) \cdot q^3$. 
    By Property~(\ref{HSS-p2c}) of Lemma~\ref{lem:extended-HSS-decomposition}, each vertex in $\bar{C}$ has at most $6\eps\Delta$ neighbors (as edges in $\bar{C}$ are non-edges in the almost-clique $C$).
    As such, there are at most $12\epsilon\Delta$ edges that share a common endpoint with an edge $e$. Let $S(e)$ denote the set of edges in $F$ that share an endpoint with $e$. We have, 
    \begin{align*}
        \var{X} &\le \sum_{e\in F} \expect{X_e} + \sum_{e\in F} \sum_{e' \in S(e)} 2a_D(e)q^3 \\
        &\le \expect{X} + \sum_{e\in F} a_D(e) \cdot 24\epsilon\Delta q^3 \le 50\expect{X}. 
    \end{align*}
    The last equation is because $q=\frac{1}{\epsilon\Delta}$. By Chebyshev's inequality: $\Pr\bracket{X=0} \leq \frac{\var{X}}{\expect{X}^2} \le \frac{5}{6}$.

    So if we sample each color with probability $q$, there is an edge $(u,v)$ that samples an available color with probability at least $\frac{1}{6}$. By sampling the colors 
    at rate $p = 20\log{n} \cdot q$, we can repeat this trial at least $10\log n$ times and obtain that with $1-(1/6)^{10\log{n}} \geq 1-1/n^{3}$, 
    there is an edge that sampled an available color. 
\end{proof}

\noindent
We are now ready to prove Lemma~\ref{lem:colorful}. 

\begin{proof}[Proof of Lemma \ref{lem:colorful}]
    We construct the colorful matching in the following manner. 
   We iterate over the colors (in an arbitrary order) and for each color $c$, we find the vertices which sampled this color in $L_2(\cdot)$ (this choice is independent across colors by the sampling process that defines $L_2(\cdot)$). 
    If $c$ is available for some edge $(u,v)$ in $\bar{C}$, we add $(u,v)$ with color $c_{u,v}=c$ to the matching, delete this edge from the graph, and move on to the next color. Clearly the resulting matching will be colorful (as in Definition~\ref{def:colorful}). 
    It thus remains to lower bound the size of this matching. 
    
    Let $\bar{C'}$ be $\bar{C}$ initially and $F$ be its edge-set, i.e., $F = E(\bar{C'})$. $D$ is also initially the set of all colors in $[\Delta+1]$. 
    Let $N$ be the number of vertices in $\bar{C}$. 
    Consider the value of $a_D(F)$ throughout the process. When we are dealing with a color $c$, if we cannot find an edge $(u,v) \in \bar{C'}$ where $c$ is available for
    $(u,v)$, we delete the color $c$ from $D$. In this case, $a_D(F)$ will decrease by $a_c(F) \leq \card{F}$. 
    Otherwise, we add $(u,v)$ with color $c$ to our colorful matching, delete 
    $c$ from $D$, and delete $u$ and $v$ from $\bar{C'}$. In this case, $a_D(F)$ will decrease by at
    most $a_c(F)+12\eps\Delta^2 \le \bar{d} \Delta + 12\eps \Delta^2 \le 18 \eps \Delta^2$ since each vertex in $\bar{C'}$ has at most $6\eps \Delta$ neighbors (by Property~(\ref{HSS-p2c}) of extended HSS-decomposition in Lemma~\ref{lem:extended-HSS-decomposition}) and $a_c(F)$ 
    is at most $\card{F} = \bar{d}N/2 \leq \bar{d} \cdot (1+6\eps)\Delta/2 \leq \bar{d}\Delta$ as in the extended
    HSS-decomposition, $N = \card{\bar{C}} \leq (1+6\eps)\Delta$ (by Property~(\ref{HSS-p2a}) of Lemma~\ref{lem:extended-HSS-decomposition}). By Lemma \ref{lem:colorvar} (as the process of sampling colors in $L_2(\cdot)$ is identical to the lemma statement but sampling colors with higher probability), with high probability, $a_D(F)$ will 
    decrease by at most $120\eps\Delta^2$ before we add a new 
    edge to the colorful matching. So each time when we add a new edge into the colorful matching, $a_D(F)$ decreases by at most $138\eps\Delta^2$ with high probability. 
    We now lower bound the value of $a_D(F)$ which allows us to bound the 
    number of times an edge is added to the colorful matching. 
    
    Let $e = (u,v)$ be an edge in $F$. Both $u$ and $v$ belong to the almost-clique $C$ in the extended HSS-decomposition and hence
    by Lemma~\ref{lem:extended-HSS-decomposition}, each have at most $7\eps\Delta$ neighbors outside $C$. This suggests that even if $\col_2$ has assigned
    a color to all vertices except for $C$, there are at least $(1-14\eps)\Delta$ available colors for the edge $e$, i.e., $a_D(e) \geq (1-14\eps)\Delta$. 
    Moreover, by Lemma~\ref{lem:extended-HSS-decomposition}, we also have that the number of vertices in the almost-clique $C$ and hence also in
    $\bar{C}$ is $N \geq (1-\eps)\Delta$. As such,
    \begin{align*}
        a_D(F)   &= \sum_{e\in F} a_D(e) \ge \card{F}(1-14\epsilon)\Delta \\
        &= (1-14\epsilon)\cdot \frac{\bar{d} N}{2}\cdot \Delta \ge \frac{1}{2}(1-14\epsilon)(1-\epsilon)\bar{d} \Delta \\
        &\ge 0.45 \bar{d} \Delta^2,
    \end{align*}
    by the choice of $\eps$. 
    Consequently, before $a_D(F)$ becomes smaller than $120\eps\Delta^2$ (and we could no longer apply Lemma~\ref{lem:colorvar}), 
    we would have added at least $\frac{0.45\bar{d}\Delta^2}{138\epsilon \Delta^2} \ge \frac{\bar{d}}{320\epsilon}$ edges to the colorful matching with high probability, finalizing the proof. 
\end{proof}

The coloring $\col_2$ is then computed as follows. We iterate over almost-cliques $C_1,\ldots,C_k$ and for each one, we pick a colorful matching of size $\frac{\bar{d}}{320\eps} \geq 4\bar{d}$ (by our choice of $\eps$); by Lemma~\ref{lem:colorful}, we
find this matching with high probability. We only pick $4\bar{d}$ edges from this colorful matching and for each edge $(u,v)$ in these $4\bar{d}$ edges, we set $\col_2(u) = \col_2(v) = c_{u,v}$. By Definition~\ref{def:colorful}, this is a
valid coloring. We then move to the next almost-clique (and use Lemma~\ref{lem:colorful} with the updated $\col_2$).

\subsection{Part Three: Final Coloring of Almost-Cliques.} We now finalize the coloring of almost-cliques and obtain a coloring $\col_3: V \rightarrow [\Delta+1] \cup \set{\perp}$ that assigns a valid color to all vertices. 
Initially, $\col_3(v) = \col_2(v)$ for all $v \in V$. We then iterate over almost-cliques $C_1,\ldots,C_k$ and update $\col_3$ to assign a valid color to all vertices of the current almost-clique. 
For any $C_i$, define $\hC_i$ as the vertices that are not yet assigned a valid color in $\col_3$. 

\begin{definition}[Palette-Graph]\label{def:palette-graph}
    For any almost-clique $C_i$ in $G$ and a partial coloring $\col_3$, we define a \emph{palette-graph} $H_i$ of the almost-clique with respect to $\col_3$ as follows: 
    \begin{itemize}
        \item $H_i$ is a bipartite graph between the uncolored vertices in $C_i$ (i.e., $\hC_i$) and colors $[\Delta+1]$.
        \item There exists an edge between $u \in \hC_i$ and $c \in [\Delta+1]$ iff the color $c$ is available to vertex $u$ according to $\col_3$ (i.e., $c \in A_{\col_3}(u)$)
         and moreover $c \in L_3(u)$. 
    \end{itemize}
\end{definition}

\noindent
Suppose we are able to find a matching in the palette-graph of an almost-clique $C_i$ that matches all vertices in $\hC_i$. Let $c_v$ be the matched pair of $v \in \hC_i$. 
We set $\col_3(v) = c_v$ and correctly color all vertices in this almost-clique, and then continue to the next almost-clique. 
The following lemma proves that with high probability, we can find such a matching in every almost-clique. 

\begin{lemma}\label{lem:palette-graph-matching}
    Let $C_i$ be any almost-clique in $G$ and $\col_3$ be the partial coloring obtained after processing almost-cliques $C_{1},\ldots,C_{i-1}$. With high probability (over the 
    randomness of the third batch), there exists 
    a matching in the palette-graph of $C_i$ that matches all vertices in $\hC_i$. 
\end{lemma}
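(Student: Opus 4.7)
My plan is to verify Hall's matching condition for the bipartite palette-graph $H_i$: a matching saturating $\hC_i$ exists iff $|N_{H_i}(S)| \geq |S|$ for every $S \subseteq \hC_i$. I will prove this via a union bound over ``bad'' pairs $(S, T)$, where $T \subseteq [\Delta+1]$ is a set of $\Delta + 2 - |S|$ colors none of which is adjacent to any vertex of $S$ in $H_i$. As a first step, I will establish two parameter estimates. For every $u \in \hC_i$, the forbidden set $|\overline{A}_{\col_3}(u)|$ is at most $\delta := 7\eps\Delta + 4\bd_i$: the $7\eps\Delta$ term counts colors used by external neighbors of $u$ (Property~(\ref{HSS-p2b})), while $4\bd_i$ bounds the distinct colors used on the $8\bd_i$ colorful-matching vertices inside $C_i$, since each such pair contributes a single shared color. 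On the other side, a degree-counting argument inside $C_i$ yields $|C_i| \leq \Delta + 1 + \bd_i$, and hence $|\hC_i| = |C_i| - 8\bd_i \leq \Delta + 1 - 7\bd_i$; in particular, the color side of $H_i$ exceeds the vertex side by at least $7\bd_i$, which will be our critical slack.

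With these in hand, fix $(S, T)$ with $|S| = s$ and $|T| = t = \Delta + 2 - s$. By independence of the $L_3(\cdot)$ samples across vertices and colors, the probability that $S$ samples no color in $T$ equals $\prod_{u \in S}(1-p)^{|T \cap A_{\col_3}(u)|} \leq (1-p)^{s(t-\delta)}$, which is meaningful whenever $t > \delta$. Combining with $\binom{|\hC_i|}{s}\binom{\Delta+1}{t} \leq (e(\Delta+1)/s)^s (e(\Delta+1)/t)^t$ and using $s\cdot t \geq \min(s,t)\cdot \Delta/2$, a routine calculation shows that in the regime $t \geq 2\delta$ the exponential factor $\exp(-\Omega(\min(s,t)\log n/\eps^2))$ dominates the binomial entropy, since our choice of $\alpha$ makes $p\Delta = \Omega(\log n/\eps^2)$ far larger than $\log(e\Delta) \leq \log n$. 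Summing over all $s$ in this ``easy range'' contributes at most $1/\poly(n)$ to the failure probability.

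The main obstacle is the narrow hard range $s \in (\Delta + 2 - 2\delta,\, |\hC_i|]$, where $t < 2\delta$ and the generic bound $(1-p)^{s(t-\delta)}$ is vacuous. Since we also have $s \le \Delta + 1 - 7\bd_i$, this range is nonempty only when $\bd_i = O(\eps \Delta)$, i.e., when $C_i$ is very close to a true clique; I will exploit this extra structure. First, a color $c$ placed on a colorful-matching pair $(v_1, v_2) \in M$ is available to $u \in \hC_i$ only when $u$ is a non-neighbor of both $v_1$ and $v_2$, which by Property~(\ref{HSS-p2c}) happens for at most $6\eps\Delta$ vertices $u$; but there are only $4\bd_i = O(\eps\Delta)$ such ``fragile'' colors in the first place. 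Second, for any other color $c$ (used only by external vertices, or not used by $\col_3$ at all), $k_{c, S} := |\{u \in S : c \in A_{\col_3}(u)\}|$ is very close to $s$, because $c$ is forbidden to a vertex $u$ only through one of $u$'s $\leq 7\eps\Delta$ external neighbors. Since the events $\{c \notin N_{H_i}(S)\}$ are mutually independent across colors $c$ (each depends only on the $c$-coordinates of the $L_3(u)$'s), I plan to conclude by a Chernoff concentration bound on the number of bad colors, combined with the deterministic slack $(\Delta + 1) - |\hC_i| \ge 7\bd_i$ and the combinatorial bound on ``inaccessible'' colors implied by greedy extendability of $\col_2$ to a proper $(\Delta+1)$-coloring, to obtain Hall's condition w.h.p.\ in the hard case. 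Summing over both ranges gives total failure probability $1/\poly(n)$, and K\"onig--Hall then yields the required matching.
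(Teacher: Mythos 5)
Your overall strategy (Hall's condition plus a union bound over pairs $(S,T)$, split into an ``easy'' and ``hard'' range) is the same shape as the paper's; the paper also proves a union bound for a random bipartite graph, in Lemma~\ref{lem:lamatching}. The easy range of your argument is essentially fine. The hard range, however, has a genuine gap, and the culprit is a parameter mismatch you set up at the very start.

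You bound the per-vertex forbidden set by $|\overline{A}_{\col_3}(u)| \le \delta := 7\eps\Delta + 4\bd_i$ for \emph{every} $u \in \hC_i$, using the worst-case Property~(\ref{HSS-p2b}). This $\delta$ is dominated by the term $7\eps\Delta$, which is a fixed constant fraction of $\Delta$ and is \emph{not} controlled by $\bd_i$. Your slack, on the other hand, is $(\Delta+1)-|\hC_i| \ge 7\bd_i$, which \emph{is} proportional to $\bd_i$. When $\bd_i$ is small (say $\bd_i = O(1)$, a near-clique with a near-matching removed), the slack is $O(1)$ while $\delta = \Theta(\eps\Delta)$. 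In the hard range where $t < 2\delta$, the deterministic count of ``inaccessible'' colors (those with $k_{c,S}=0$) can in principle be as large as $\delta$, and your Chernoff/expectation argument cannot beat this: the expected number of bad colors is $\gtrsim \delta - 4\bd_i \approx 7\eps\Delta$, which overwhelms the slack of $7\bd_i$. The parenthetical remark that ``the hard range is nonempty only when $\bd_i = O(\eps\Delta)$'' is also vacuous, because Property~(\ref{HSS-p2c}) forces $\bd_i \le 6\eps\Delta$ unconditionally.

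The paper avoids this trap by never using the worst-case $7\eps\Delta$ estimate when counting edges in the palette graph. Instead (see the proof of Lemma~\ref{lem:palette-graph-matching}) it works with the \emph{average} number of available colors: for $v \in \hC_i$ with $\bd_v$ non-neighbors inside $C_i$, the number of available colors is $\ge n_i - \bd_v - 4\bd_i$ (the external-neighbor count is tied to $\Delta - (n_i - 1 - \bd_v)$, not to $7\eps\Delta$). Averaging over $v$ and using $\sum_v \bd_v \le \bd_i n_i$ gives average degree $\ge N - 1.2\bd_i = (1-\delta_{\mathrm{pap}})N$ with $\delta_{\mathrm{pap}} = 1.3\bd_i/N$ --- a quantity proportional to $\bd_i$, which sits strictly inside your slack $7\bd_i$. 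This is the structural fact that makes Hall's condition provable; your $\delta$ loses it. The paper then couples the (nonuniform) per-color degrees $k_{c,S}$ with a worst-case concentrated profile via Claim~\ref{lem:sumvar}, which is the second ingredient you would need to turn your ``plan'' into a proof; the proposed appeal to ``greedy extendability of $\col_2$'' does not give Hall's condition for the deterministic palette graph $H_i$, because the extension only guarantees one available color per vertex, not distinct representatives for every subset $S$. As written, the hard-range paragraph is a sketch of an argument that does not close.

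A second, minor issue: your claim that the forbidden set contributes at most $4\bd_i$ from the colorful matching should be checked against the fact that the paper's $\col_2$ uses exactly $4\bd_i$ colors for the matching (since only $4\bd_i$ edges are retained from the colorful matching), so this part is consistent with the paper; but the symbol $\bd_i$ in your write-up should be identified with the paper's $\bd$, the average degree of $\bC_i$, to avoid ambiguity.
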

\noindent
Proof of this lemma is based on the following result on existence of large matchings in certain random graphs 
that we prove in this paper. 

\begin{lemma} \label{lem:lamatching}
    Suppose $N\le n$ and $0 \le \delta \le \frac{1}{12}$. Let $H=(V_1,V_2)$ be a bipartite graph such that: 
    \begin{enumerate}
        \item \label{lam1} $\card{V_1} \le (1-3\delta)N$ and $\card{V_2} \le 2N$;
        \item \label{lam2} each vertex in $V_1$ has degree at least $\frac{2N}{3}$ and at most $N$;
        \item \label{lam3} the average degree of vertices in $V_1$ is at least $(1-\delta)N$. 
    \end{enumerate}
    A subgraph of $H$ obtained by sampling each edge with probability $p=\frac{90 \log n}{N}$ has a matching of size $|V_1|$ with high probability.
\end{lemma}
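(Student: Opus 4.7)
The plan is to apply Hall's marriage theorem and union-bound over candidate Hall violators. A matching saturating $V_1$ exists in the sampled graph $H_p$ if and only if there is no pair $(S,T)$ with $S \subseteq V_1$, $T \subseteq V_2$, and $|S|+|T|=|V_2|+1$ such that every edge of $H$ between $S$ and $T$ is unsampled. For a fixed such pair, this probability is at most $\exp(-p \cdot |E_H(S,T)|)$, so the task reduces to lower-bounding $|E_H(S,T)|$ in terms of $k := |S|$ and then summing.

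The key combinatorial estimate is $|E_H(S,T)| \geq \sum_{v \in S}\deg_H(v) - k(k-1)$, which holds because $|V_2 \setminus T| = k-1$. I would combine this with two lower bounds on the degree sum: the minimum-degree hypothesis~(\ref{lam2}) yields $\sum_{v\in S}\deg_H(v) \geq 2Nk/3$, while the average-degree hypothesis~(\ref{lam3}) together with $\deg_H(v)\leq N$ yields $\sum_{v\in S}\deg_H(v) \geq (1-\delta)N|V_1| - N(|V_1|-k) = Nk - \delta N |V_1|$.

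The case analysis on $k$ is then as follows. For $k \leq N/3$, the min-degree bound gives $|E_H(S,T)| \geq kN/3$, and hence the per-pair failure probability is at most $\exp(-30 k \log n)$. This dominates the union count $\binom{|V_1|}{k}\binom{|V_2|}{k-1} \leq e^{O(k\log(N/k))}$, and summing over $k$ yields total failure probability $n^{-\Omega(1)}$. For $N/3 < k \leq |V_1|$, using the average-degree bound together with $k \leq |V_1| \leq (1-3\delta)N$ gives $N-k+1 \geq 3\delta N$, so $|E_H(S,T)| \geq \delta N(3k - |V_1|) \geq 3\delta^2 N^2$ in the worst case (attained near $k = N/3$). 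Thus $p \cdot |E_H(S,T)| = \Omega(\delta^2 N \log n)$, which crushes even the crude union bound $2^{|V_1|+|V_2|} \leq 2^{3N}$ once $n$ is large enough given $\delta \leq 1/12$.

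The hard part will be smoothly covering the transition around $k \approx N/3$ between the two degree-based estimates and tracking constants carefully. The slack from $\delta \leq 1/12$ (which forces $|V_1| \leq 3N/4$, well away from the min-degree threshold $2N/3$) and the generous constant $90$ in $p$ together ensure the two regimes overlap without leaving a gap and deliver the required $n^{-\Omega(1)}$ failure probability. A separate short argument would handle the corner case $N = O(\log n)$, in which $p$ effectively equals $1$ and one applies Hall's theorem directly to $H$ using the same two degree estimates to rule out violators of every size.
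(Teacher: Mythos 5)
Your Case~$k \le N/3$ argument matches the paper's exactly: use the minimum-degree hypothesis to get $|E_H(S,T)| \ge kN/3$, apply the crude $\exp(-p\cdot|E_H(S,T)|)$ bound, and union bound over the $\binom{|V_1|}{k}\binom{|V_2|}{k-1}$ candidate violators. That part is sound.

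The gap is in your Case~$k > N/3$. Your edge-count estimate $|E_H(S,T)| \ge \delta N(3k - |V_1|) \ge 3\delta^2 N^2$ is vacuous when $\delta = 0$, and weak when $\delta$ is merely small; both regimes are allowed (the lemma only assumes $0 \le \delta \le 1/12$), and in the paper's application to Lemma~\ref{lem:palette-graph-matching} one has $\delta = \Theta(\bar d/N)$ where $\bar d$ can be as small as $0$ (when the almost-clique $C_i$ is a clique). Concretely, take $\delta = 0$, $|V_1| = N$, $|V_2| = 2N$, every vertex of $V_1$ with degree exactly $N$, and $k = |S_1| = N$. Your bound gives $|E_H(S,T)| \ge k(N-k+1) = N$, so per-pair failure probability about $n^{-90}$, while the number of witness sets $T \subseteq V_2$ of size $|V_2|-k+1 = N+1$ is $\binom{2N}{N+1} \approx 4^N$. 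For $N = \omega(\log n)$ the union bound does not close, so the ``crude union bound $2^{3N}$'' is \emph{not} crushed — and this is exactly where the claim is being applied. (Your separate corner-case remark about $N = O(\log n)$ does not rescue this: there the union bound is fine, but the problematic regime is $N$ large with $\delta$ small.)

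What the paper does differently for $k > N/3$ is the crux: fix $S_1$, and instead of union-bounding over all $T \subseteq V_2$, look at the random variable $X$ counting non-neighbors of $S_1$ in $H'$, with $\Pr[X_i = 1] = (1-p)^{k_i}$ where $k_i$ is the number of edges from $S_1$ to $v_i \in V_2$. The majorization step (Claim~\ref{lem:sumvar}) shows that, under the constraint $\sum k_i \ge k|V_1|$, the stochastically worst degree profile is the ``flat'' one where $|V_1|$ color-vertices each receive $k$ edges from $S_1$ and the rest receive $0$. This turns the union bound over $T$ into a union bound over subsets of $[|V_1|]$ of size $|V_1|+1-k$, which costs only $\binom{|V_1|}{|V_1|+1-k} \le n^{|V_1|+1-k}$. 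Since $(1-p)^{k(|V_1|+1-k)} \le n^{-30(|V_1|+1-k)}$ when $k \ge N/3$, the per-$S_1$ failure probability is $n^{-\Omega(|V_1|+1-k)}$, and this is then matched against the count of $S_1$-sets $\binom{|V_1|}{k} = \binom{|V_1|}{|V_1|-k} \le n^{|V_1|-k}$. Note that the $\delta$-parameter plays no role in the probability estimate itself; it is used only to guarantee $|V_1| \le (1-3\delta)N$ so that the bound $\sum_{v \in S_1}\deg_H(v) \ge k|V_1|$ holds, which seeds the majorization. So the missing idea is the reduction, via stochastic domination of sums of geometrically-parameterized Bernoullis, from an exponentially-large family of $T$-witnesses in $V_2$ to a polynomially-controlled family indexed by $|V_1|+1-k$-subsets of $[|V_1|]$.
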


The proof of Lemma~\ref{lem:lamatching} is quiet technical and hence we postpone it to Section~\ref{sec:lamatching} to keep the flow of the current argument. 
We now use this result to prove Lemma~\ref{lem:palette-graph-matching}.

\begin{proof}[Proof of Lemma~\ref{lem:palette-graph-matching}]
    Define $H_i$ as the bipartite graph with the same vertex set as the palette-graph of $C_i$ such that there is an edge between a vertex $v$ and a color $c$ iff $c$ is 
    available to $v$ (edges in $H_i$ are superset of the ones in palette-graph as an edge $(v,c)$ can appear in $H_i$ even if $c \notin L_3(v)$). 
    By this definition, the palette-graph of $C_i$ is a subgraph of $H_i$ chosen by picking each edge independently 
    with probability $\frac{\alpha\log n}{3\eps^2\Delta} \geq \frac{100\log{n}}{\Delta}$ (by the choice of $L_3$). 

    We apply Lemma~\ref{lem:lamatching} to a properly chosen subgraph $\bar{H}$ of $H_i$ with the same vertex-set to prove this lemma. 
    Let $n_i$ be the number of vertices in $C_i$. By definition of coloring $\col_2$ (after the proof of Lemma~\ref{lem:colorful}), $\hC_i$ has $n_i-2\cdot 4\bar{d} = n_i-8\bar{d}$ vertices. For any vertex $v \in \hC_i$, if $v$ has more than $n_i-4\bar{d}$ 
    available colors (i.e., neighbors in $H_i$), then we pick $n_i-4\bar{d}$ available colors for $v$ arbitrarily and 
    only connect $v$ to those color-vertices in $\bar{H}$; otherwise, $v$ has the same neighbors in $\bar{H}$ and $H_i$. 
    
    We prove that $\bar{H}$ satisfies all three properties in Lemma \ref{lem:lamatching}. Let $N=n_i-4\bar{d}$ and $\delta = \frac{1.3\bar{d}}{N}$, $V_1= \hC_i$ and $V_2 := [\Delta+1]$, and thus $\card{V_1}=\card{\hC_i} = (n_i-8\bar{d}) < (1-3\delta)N$. This 
    proves the first part of Property~(\ref{lam1}) of Lemma~\ref{lem:lamatching}. Moreover, as $C_i$ is an almost-clique, $n_i \geq (1-\eps)\Delta$ by Property~(\ref{HSS-p2a}) and $\bar{d} \leq 6\eps\Delta$ by 
    Property~(\ref{HSS-p2c}) of Lemma~\ref{lem:extended-HSS-decomposition}, and hence $2N=2(n_i-4\bar{d})\ge 2((1-\epsilon)\Delta - 4 \cdot 6\epsilon\Delta) > \Delta+1=\card{V_2}$, proving the second part 
    of Property~(\ref{lam2}) as well. 
    
     Furthermore, each vertex in $\hC_i$ has degree at most $n_i-4\bar{d}=N$. Also any vertex in $\hC_i$ has degree at most $7\eps \Delta$ outside the almost-clique in $G$ by Property (\ref{HSS-p2b}) of Lemma \ref{lem:extended-HSS-decomposition}, so any vertex in $\hC_i$ has at least $(1-7\eps)\Delta-4\bar{d}>\frac{2N}{3}$ available colors (even if $\col_3$ has assigned colors to all vertices outside $C_i$ and all colors used by the colorful matching are also unavailable). As in $\bar{H}$ we connect every vertex to the available color-vertices, $\bar{H}$ satisfies Property~(\ref{lam2}) in Lemma \ref{lem:lamatching}.
    
    Consider a vertex $v \in \hC_i$. Let $\bar{d}_v$ be the number of non-neighbors 
    of $v$ inside $C_i$ and hence $v$ has at most $\Delta-(n_i-\bar{d}_v-1)$ neighbors outside $C_i$. As such, $v$ has at least $n_i-\bar{d}_v-4\bar{d}$ neighbors inside of $H_i$ ($4\bar{d}$ is the number of colors used by the colorful matching), hence also
    at least $n_i-\bar{d}_v-4\bar{d}$ neighbors inside of $\bar{H}$. So $\bar{H}$ has at least $\card{\hC_i}(n_i-4\bar{d})-\bar{d}n_i \ge \card{\hC_i}(n_i-5.2\bar{d})$ edges (by the fact that $\card{\hC_i} = n_i-8\bar{d} \ge n_i-48\eps \Delta$ and the choice of $\eps$). 
    Hence, the average degree in $\bar{H}$ is at least $n_i-5.2\bar{d} = N - 1.2\bar{d} \geq (1-\delta)N$, which implies $\bar{H}$ satisfies Property~(\ref{lam3}) in Lemma \ref{lem:lamatching}. 
    
   To conclude, $\bar{H}$ satisfies the properties of Lemma~\ref{lem:lamatching}. Since the palette-graph of $C_i$ contains a subgraph of $\bar{H}$ obtained by sampling each edge of $\bar{H}$ with probability $\frac{100\log n}{\Delta} \ge \frac{90 \log n}{N}$ (as argued above), the palette-graph contains a matching of size $\card{V_1} = \card{\hC_i}$ with high probability. 
\end{proof}

\subsection{Wrap-Up: Proof of Theorem~\ref{thm:color-sampling}.} The existence of list-coloring under the sampling process of $L_1,L_2$ and $L_3$ follows
from Lemmas~\ref{lem:sparse-color},~\ref{lem:colorful} and~\ref{lem:palette-graph-matching}. Note that this sampling process is not exactly the same as sampling $O(\log{n})$ colors uniformly at random as in Theorem~\ref{thm:color-sampling}. 
However, in this process, with high probability, we do not sample more than $O(\log{n})$ colors for each vertex and hence 
conditioning on sampling $O(\log{n})$ colors (as in Theorem~\ref{thm:color-sampling}) only changes the probability of success by a negligible factor, hence implying Theorem~\ref{thm:color-sampling}. 

\subsection{Proof of Lemma~\ref{lem:lamatching}: Large Matchings in (Almost) Random Graphs}\label{sec:lamatching}

We first need the following auxiliary claim. The proof is standard and appears in Appendix~\ref{app:sumvar}. 

\begin{claim} \label{lem:sumvar}
    Suppose $0<a<1$ is a constant. Consider two random variables $X:=\sum_{i=1}^{n}X_i$ and $Y:=\sum_{i=1}^{n}Y_i$ where for all $i \in [n]$, $X_i$ and $Y_i$ are independent indicator random variables and $\prob{X_i=1}=a^{k_i}$ and $\prob{Y_i=1}=a^{\ell_i}$. Suppose $k_i$ and $\ell_i$ are non-negative integers that are indexed in decreasing order and have the following two properties:
    \begin{itemize}
        \item for any $j \le n$, $\sum_{i=1}^j k_i \le \sum_{i=1}^j \ell_i$
        \item $\sum_{i=1}^n k_i = \sum_{i=1}^n \ell_i$
    \end{itemize}
    then for any integer $M$, $\prob{X\ge M}\le \prob{Y\ge M}$.
\end{claim}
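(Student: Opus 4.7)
The plan is to prove Claim~\ref{lem:sumvar} via a standard exchange argument, driven by the observation that the two hypotheses on $(k_i)$ and $(\ell_i)$ are precisely the statement that the sorted integer sequence $(\ell_i)$ \emph{majorizes} $(k_i)$ in the Hardy--Littlewood--P\'olya sense. By the classical characterization of integer majorization, $(\ell_i)$ can be reached from $(k_i)$ (up to a re-labeling, which does not affect the distribution of the sum) by a finite sequence of elementary ``spreading'' transfers, each of which picks two indices $i, j$ with $1 \leq k_i \leq k_j$ and replaces $(k_i, k_j)$ by $(k_i - 1, k_j + 1)$. It therefore suffices to show that any one such elementary transfer can only (weakly) increase $\Pr[X \geq M]$ for every integer $M$, and then chain the transfers together.

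Fix an elementary transfer on indices $i, j$ with $k_i \leq k_j$, and write $p_i := a^{k_i}, p_j := a^{k_j}$ before and $p_i' := a^{k_i - 1} = p_i/a$, $p_j' := a^{k_j + 1} = a \cdot p_j$ after. The crux of the argument is the product invariance
\[
p_i' \cdot p_j' \;=\; (p_i/a)(a \cdot p_j) \;=\; p_i \cdot p_j,
\]
combined with the sum comparison
\[
p_i' + p_j' - (p_i + p_j) \;=\; (1 - a)\paren{p_i/a - p_j} \;\geq\; 0,
\]
where the inequality uses $p_i \geq p_j$ (since $k_i \leq k_j$ and $0 < a < 1$) together with $1/a \geq 1$. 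Expanding the distribution of the pair sum $X_i + X_j \in \{0, 1, 2\}$ in terms of $p_i, p_j$ (and likewise after the transfer), product invariance yields that $\Pr[X_i + X_j = 2]$ is unchanged, while the weak increase of $p_i + p_j$ yields that $\Pr[X_i + X_j \geq 1]$ weakly increases. In other words, the post-transfer pair sum stochastically dominates the pre-transfer one.

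The remaining step is to lift this pairwise stochastic dominance to the full sum: since the variables $\{X_k : k \notin \{i, j\}\}$ are independent of $(X_i, X_j)$ and unchanged by the transfer, conditioning on $S := \sum_{k \notin \{i, j\}} X_k = s$ reduces the inequality $\Pr[X \geq M] \leq \Pr[X' \geq M]$ to the pairwise stochastic dominance at the value $M - s$, which we have already established. Iterating along the chain of transfers from $(k_i)$ to $(\ell_i)$ then yields the desired bound $\Pr[X \geq M] \leq \Pr[Y \geq M]$.

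The main (minor) obstacle is the integer-majorization reduction itself, specifically the verification that whenever the current sequence is not yet $(\ell_i)$ one can always find a valid transfer with $k_i \geq 1$. This follows by locating the leftmost index $j$ where the partial sum of $(k_i)$ strictly lags that of $(\ell_i)$ and transferring mass from some index $i > j$ with $k_i \geq 1$ (which must exist since the total sums agree). Everything else reduces to the one-line product invariance $p_i' p_j' = p_i p_j$.
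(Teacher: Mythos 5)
Your proof is correct and takes essentially the same route as the paper's: both proceed by chaining elementary unit transfers (the paper constructs the chain explicitly by moving mass from the first over-target to the first under-target coordinate, whereas you invoke the Hardy--Littlewood--P\'olya spreading-transfer characterization of integer majorization), and both close the key two-variable comparison with the identical pair of observations — the product $p_ip_j$ is invariant (handling $\Pr[X_i+X_j\ge 2]$) while $p_i+p_j$ weakly increases (handling $\Pr[X_i+X_j\ge 1]$) — lifted to the full sum by conditioning on the unchanged coordinates. The only cosmetic difference is that you cite the majorization decomposition as a known fact, while the paper derives its transfer sequence from scratch.
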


We now use this claim to prove Lemma~\ref{lem:lamatching} restated below. 

\begin{lemma*}[Restatement of Lemma~\ref{lem:lamatching}]
    Suppose $N\le n$ and $0 \le \delta \le \frac{1}{12}$. Let $H=(V_1,V_2)$ be a bipartite graph such that: 
    \begin{enumerate}
        \item \label{lam1a} $\card{V_1} \le (1-3\delta)N$ and $\card{V_2} \le 2N$;
        \item \label{lam2a} each vertex in $V_1$ has degree at least $\frac{2N}{3}$ and at most $N$;
        \item \label{lam3a} the average degree of vertices in $V_1$ is at least $(1-\delta)N$. 
    \end{enumerate}
    If $H'$ is a subgraph of $H$ obtained by sampling each edge with probability $p=\frac{90 \log n}{N}$, then $H'$ has a matching of size $|V_1|$ with high probability.
\end{lemma*}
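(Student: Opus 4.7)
My approach is to apply Hall's marriage theorem to $H'$: it suffices to show that with high probability every $S \subseteq V_1$ satisfies $|N_{H'}(S)| \geq |S|$. Fix $S$ of size $s$ and set $d_S(v) := |N_H(v) \cap S|$ for $v \in V_2$, so the indicator $X_v := \mathbf{1}[v \notin N_{H'}(S)]$ has $\Pr[X_v = 1] = (1-p)^{d_S(v)}$. Applying Claim~\ref{lem:sumvar} with $a := 1-p$ after sorting $V_2$ in decreasing order of $d_S(\cdot)$, the sum $X = \sum_v X_v$ is stochastically dominated by the sum in which the degree mass $E := e_H(S, V_2)$ is maximally concentrated, i.e., $K := \lfloor E/s \rfloor$ vertices have $d_S(\cdot) = s$ and the rest have $d_S(\cdot) = 0$. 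Under this worst case, $|N_{H'}(S)|$ is distributed as $\mathrm{Bin}(K, q)$ with $q := 1 - (1-p)^s$, and so $\Pr[|N_{H'}(S)| < s] \leq \Pr[\mathrm{Bin}(K,q) < s]$.

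Next I lower bound $K$ and $q$. From the minimum-degree condition, $E \geq (2N/3)s$, hence $K \geq 2N/3$; from the average- and maximum-degree conditions combined, $E \geq (1-\delta)N|V_1| - (|V_1|-s)N = Ns - \delta N|V_1|$, hence $K \geq N - \delta N|V_1|/s$. For $q$, I use $q \geq \min(ps/2, 1/2)$ always, and $1 - q \leq e^{-ps}$; the latter is $n^{-\Omega(1)}$ whenever $ps = \Omega(\log n)$, i.e., whenever $s = \Omega(N/\log n)$.

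I then carry out the union bound in three regimes of $s$. When $s \leq 30 \log n$, a Chernoff bound on each individual vertex shows that $\deg_{H'}(v) \geq 30 \log n \geq s$ for every $v \in V_1$ with probability $1 - n^{-\Omega(1)}$, so $|N_{H'}(S)| \geq s$ is implied already by a single-vertex bound. When $30 \log n \leq s \leq N/(90 \log n)$, the bounds above yield $Kq \geq 30 s \log n$, and Chernoff gives $\Pr[\mathrm{Bin}(K,q) < s] \leq n^{-\Omega(s)}$, which dominates the $\binom{|V_1|}{s} \leq n^{O(s)}$ choices of $S$. When $s \geq N/(90\log n)$, I rewrite $\Pr[\mathrm{Bin}(K,q) < s] = \Pr[\mathrm{Bin}(K, 1-q) \geq K-s+1]$, and using the sharp decay $1-q \leq n^{-\Omega(1)}$, I upper bound this by the first-moment estimate $\binom{K}{K-s+1}(1-q)^{K-s+1}$.

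The principal obstacle lies in this final regime when $s$ is close to $|V_1|$: here the slack $K - s$ is only $\Theta(\delta N)$ (combining the average-degree bound on $K$ with $|V_1| \leq (1-3\delta)N$), so I must lean heavily on the very rapid decay of $(1-p)^s$. The key is that the constant $90$ in $p = 90\log n / N$ has been chosen large enough that $(1-q)^{\Omega(\delta N)} \leq n^{-\Omega(\delta N)}$ comfortably beats the $\binom{|V_1|}{s} \leq 2^{|V_1|} \leq 2^N$ union-bound cost whenever $\delta$ is bounded away from $0$; to close the argument for arbitrarily small $\delta$, I can additionally exploit that $(1-3\delta)N - s$ being small forces $\binom{|V_1|}{s}$ to be much less than $2^N$, which patches the union bound over all $(s,\delta)$ regimes.
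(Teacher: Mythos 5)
Your approach is essentially the paper's: verify Hall's condition for every $S \subseteq V_1$, stochastically dominate the count of non-neighbors of $S$ in $H'$ by a binomial via Claim~\ref{lem:sumvar}, and close with a union bound. The paper splits into two cases ($k \le n/3$ handled by a direct union over candidate blocking sets $S_2 \subseteq V_2$, and $k \ge n/3$ via Claim~\ref{lem:sumvar}), while you use three regimes; your tiny regime $s \le 30\log n$ via per-vertex degree concentration is a clean shortcut, and your edge-count estimates $E \ge (2N/3)s$ and $E \ge Ns - \delta N |V_1|$ are the paper's.

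There is, however, a concrete error in the third regime. You assert that $1-q \le n^{-\Omega(1)}$ whenever $ps = \Omega(\log n)$, ``i.e.\ whenever $s = \Omega(N/\log n)$.'' Since $p = 90\log n/N$ and hence $ps = 90s\log n/N$, the condition $ps = \Omega(\log n)$ requires $s = \Omega(N)$, not $\Omega(N/\log n)$. At the lower boundary of your third regime, $s = N/(90\log n)$ gives $ps = 1$ and only $1-q \le e^{-1}$, a constant, so the ``sharp decay'' you lean on is absent exactly where you invoke it. The bound can still be salvaged there: $K - s + 1 \ge 2N/3 - o(N) = \Theta(N)$, so $(1-q)^{K-s+1} \le e^{-\Theta(N)}$, while both $\binom{K}{K-s+1} = \binom{K}{s-1} \le (eK/(s-1))^{s-1}$ and $\binom{|V_1|}{s}$ are only $e^{o(N)}$ when $s = \Theta(N/\log n)$. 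But this is a different argument from the one you wrote down, and as stated your derivation rests on a false inequality over a non-trivial subrange of $s$.

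Two smaller points. First, your dominating distribution with $K = \lfloor E/s \rfloor$ vertices of degree exactly $s$ and the rest degree $0$ has total degree $Ks < E = \sum_v k_v$, so the hypothesis $\sum k_i = \sum \ell_i$ of Claim~\ref{lem:sumvar} is not met; you need the same preprocessing the paper does (replace each $k_i$ by some $k'_i \le k_i$ with $\sum k'_i = Ks$, and use coupling to show $\Pr[X>M]\le\Pr[X'>M]$) before the claim applies. Second, the closing remark about patching the union bound for arbitrarily small $\delta$ is only asserted; it does go through---with $K - s + 1 \approx 2\delta N$ one has $\binom{K}{K-s+1} \le (e/2\delta)^{O(\delta N)}$ while $(1-q)^{K-s+1} \le n^{-\Omega(\delta N)}$, and $\delta \ge 1/N$ forces $\ln(1/\delta) = O(\log n)$---but this needs to be written out, since it is precisely the regime where the slack is tightest.
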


\begin{proof}[Proof of Lemma~\ref{lem:lamatching}]
    By Hall's marriage theorem, we only need to prove that with high probability, for any $S_1 \subseteq V_1$, the size of neighbor set ${N(S_1)}$ of $S_1$ is at least $|S_1|$ in $H'$, i.e., $\card{N(S_1)} \geq \card{S_1}$.
    Fix a sets $S_1$ and let $k:= \card{S_1}$; we prove this for the set $S_1$.

    If $k\le \frac{n}{3}$, since each vertex has degree at least $\frac{2N}{3}$ in $H$, total number of edges from $S_1$ to $S_2$ is at least $\frac{2kN}{3}$. On the other hand, if we fix another set $S_2 \subseteq V_2$ with $|S_2|=|V_2|+1-k$, there are at most $k^2$ edges between $S_1$ and $V_2 \setminus S_2$ due to the fact that both $S_1$ and $V_2 \setminus S_2$ have at most $k$ vertices. As such, the number of edges between $S_1$ and $S_2$ in $H$ is at least $\frac{2kN}{3}-k^2 \ge \frac{2kN}{3}-\frac{kN}{3} \ge \frac{kN}{3}$. The probability that there is no edge between $S_1$ and $S_2$ in $H'$ is at most $(1-p)^{\frac{kN}{3}} \le n^{-30k}$. Taking the union bound over every subset $S_2$ of size $|V_2|+1-k$ (there are at most ${{\card{V_2}}\choose{k}} \le {{2N}\choose{k}} \le N^{2k} \le n^{2k}$ of such sets), the probability that $|N(S_1)|\ge k$ is at least $1-\frac{1}{n^{20k}}$.

    If $k \ge \frac{n}{3}$, the number of edges incident on $V_1 \setminus S_1$ is at most $(\card{V_1}-k)N$. Hence, the number of edges incident on $S_1$ is at least $\card{V_1}((1-\delta)N) - (\card{V_1}-k)N=kN - \delta |V_1| N$ (the total number of edges in the graph is at least $\card{V_1}(1-\delta)N$). Additionally, since $k \ge \frac{n}{3} \ge \frac{N}{3}$, we have $\delta \card{V_1} N \le 3\delta k \card{V_1}$. The number of edges incident on $S_1$ is at least $k(N-3\delta \card{V_1}) \ge k(N-3\delta\card{N}) = kN(1-3\delta))$. By the fact that $\card{V_1}\le (1-3\delta)N$, the number of edges incident on $S_1$ is at least $k\card{V_1}$.
  
    For each vertex $v_i\in V_2$, let $k_i$ be the number of edges from $S_1$ to $v_i$ in $H$ and $X_i$ be the indicator random variable which is $1$ iff $v_i$ is \emph{not} a neighbor of $S_1$ in $H'$. The number of non-neighbors of $S_1$ in $H'$ is $X=X_1+X_2+\dots X_{\card{V_2}}$ where $\prob{X_i=1}=(1-p)^{k_i}$. If $X\le\card{V_2}-k$, then the number of neighbors of $S_1$ in $H'$ is at least $k$ and we are done. As such, we need to bound $\prob{X>\card{V_2}-k}$.

    To prove that $\prob{X>\card{V_2}-k}$ is low, we will first define another random variable $Y$, and use Claim \ref{lem:sumvar} to prove that $\prob{X>\card{V_2}-k}\le \prob{Y>\card{V_2}-k}$, and then prove that $\prob{Y>\card{V_2}-k}$ is low. 
    First, we pick $\card{V_2}$ non-negative integers $k'_1 \le k_1, k'_2 \le k_2, \dots, k'_{\card{V_2}}\le k_{\card{V_2}}$, so that $\sum_{i=1}^{\card{V_2}} k'_i = k \card{V_1}$ (we can do so because the sum of $k_i$ is the number of edges in $H$ incident on $S_1$, which is larger than $k\card{V_1}$). Define a random variable $X'=X'_1+X'_2+\dots+X'_{\card{V_2}}$ where $X'_i$ are independent indicator random variables such that $\prob{X'_i=1}=(1-p)^{k'_i}$. As $\prob{X_i = 1} \leq \prob{X'_i=1}$ for all $i$, by a coupling argument, we have $\prob{X>\card{V_2}-k}\le \prob{X'>\card{V_2}-k}$. 

    Now we will define a random variable $Y$ such that $X'$ and $Y$ satify the condition of Claim \ref{lem:sumvar}. Without loss of generation, suppose $k'_i$ is in decreasing order. Define $\ell_1,\ell_2,\dots,\ell_{\card{V_2}}$ as follows: if $1 \le i \le \card{V_1}$, then $\ell_i=k$, otherwise $\ell_i=0$. Let $Y=Y_1+Y_2+\dots+Y_{\card{V_2}}$ where $Y_i$ are independent indicator random variables such that $\prob{Y_i=1}=(1-p)^{\ell_i}$. We have $\sum_{i=1}^{\card{V_2}} \ell_i=k\card{V_1}=\sum_{i=1}^{\card{V_2}}k'_i$. So $X'$ and $Y$ satisfy the second condition of Claim \ref{lem:sumvar}. Furthermore, for any $j$ with $1 \le j \le \card{V_2}$, if $j>\card{V_1}$, then $\sum_{i=1}^j \ell_i = k\card{V_1} \ge \sum_{i=1}^j k'_i$; otherwise if $j \le \card{V_1}$, then $\sum_{i=1}^j \ell_i = k \cdot j \ge \sum_{i=1}^j k'_i$ because $k'_i\le k_i \le k$ for any $i$. So $X'$ and $Y$ satisfy the first condition of Claim \ref{lem:sumvar}. By applying Claim~\ref{lem:sumvar}, we have $\prob{X'>\card{V_2}-k} \le \prob{Y>\card{V_2}-k}$.
    
    Now we will prove that $\prob{Y>\card{V_2}-k}$ is low. Note that for any $i>\card{V_1}$, $Y_i$ is always $1$, so $Y=\sum_{i=1}^{\card{V_1}} Y_i + \card{V_2} - \card{V_1}$. Hence we need to prove that $\prob{\sum_{i=1}^{\card{V_1}} Y_i >\card{V_1}-k}$ is low.
    For any subset $S_2$ of $\{Y_i \mid i \le |V_1|\}$ with size $|V_1|+1-k$, the probability that they are all $1$ is $(1-p)^{k(|V_1|+1-k)} \le n^{-30(|V_1|+1-k)}$. By taking a union bound over all choices of $S_2$ (the number of such sets is at most ${{\card{V_1}}\choose{\card{V_1}+1-k}} \le n^{\card{V_1}+1-k}$), $\prob{Y > \card{V_2}-k} = \prob{\sum_{i=1}^{\card{V_1}} Y_i > \card{V_1}-k} \le n^{-20(|V_1|+1-k)}$.

    Finally, for any $k$, by taking a union bound on all possible choices for $S_1$ with size $k$, the probability that $N(S_1)<k$ is at most $n^{-15}$ since the number of such $S_1$ is at most ${\card{V_1}}\choose{k}$, which is at most $n^k$ and at most 
    $n^{\card{V_1}-k}$. A union bound on $n$ values of $k$ now finalizes the proof, as with probability at least $1-1/n^{14}$, no set $S_1$ has $\card{S_1} > \card{N(S_1)}$.
\end{proof}

\renewcommand{\CC}{\ensuremath{\chi}}

\newcommand{\CMA}{\ensuremath{\textnormal{\textsf{ColoringAlgorithm}}}\xspace}

\newcommand{\Ec}{\ensuremath{E_{\textnormal{\textsf{conflict}}}}}
\newcommand{\Gc}{\ensuremath{G_{\textnormal{\textsf{conflict}}}}}

\section{Sublinear Algorithms for $(\Delta+1)$ Coloring}\label{sec:sublinear}

We now use our palette-sparsification theorem to design sublinear algorithms for $(\Delta+1)$ coloring in different models of computation. We start by introducing a ``meta-algorithm'' for $(\Delta+1)$ coloring, called \CMA, and in the subsequent sections show how to implement
it in each model. 

\paragraph{The Meta-Algorithm.} The algorithm is as follows: 

\begin{tbox}
$\CMA(G,\Delta)$: A meta-algorithm for finding a $(\Delta+1)$-coloring in a graph $G(V,E)$ with maximum degree $\Delta$.

\begin{enumerate}[leftmargin=15pt]
\item Sample $\Theta(\log{n})$ colors $L(v)$ uniformly at random for each vertex $v \in V$ (as in Theorem~\ref{thm:color-sampling}). 
\item Define, for each color $c \in [\Delta+1]$, a set $\CC_c \subseteq V$ where $v \in \CC_c$ iff $c \in L(v)$. 
\item Define $\Ec$ as the set of all edges $(u,v)$ where both $u,v \in \CC_c$ for some $c \in [\Delta+1]$. 
\item {\underline{Construct}} the \emph{conflict graph} $\Gc(V,\Ec)$. 
\item \underline{Find} a proper list-coloring of $\Gc(V,\Ec)$ with $L(v)$ being the color list of vertex $v \in V$.  
\end{enumerate} 
\end{tbox}

 We refer to $\CMA$ as a ``meta-algorithm'' since constructing the conflict graph as well as finding its list-coloring are \emph{unspecified} steps in \CMA. 
To implement this meta-algorithm in different models, we need to come up with an efficient way of performing these two tasks which are model-specific and are hence not fixed in \CMA. 
The following lemma establishes the main properties of $\CMA$. 

\begin{lemma}\label{lem:meta-algorithm}
	Let $G(V,E)$ be a graph with maximum degree $\Delta$. In $\CMA(G,\Delta)$, with high probability: 
	\begin{enumerate}
		\item The output is a valid $(\Delta+1)$ coloring of the graph $G$. 
		\item For any $c \in [\Delta+1]$, size of $\CC_c$ is $O(n\log{n}/\Delta)$.
		\item The maximum degree in graph $\Gc$ is $O(\log^2{n})$. 
	\end{enumerate}
\end{lemma}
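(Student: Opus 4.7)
The plan is to establish the three parts essentially independently: part~(1) reduces directly to Theorem~\ref{thm:color-sampling}, while parts~(2) and~(3) are standard Chernoff-plus-union-bound arguments on the random sampling of the lists $L(v)$.

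For part~(1), I would first observe that $\Gc$ is designed precisely so that any proper list-coloring of $\Gc$ with lists $L(\cdot)$ is automatically a proper $(\Delta+1)$ coloring of $G$. Indeed, if some edge $(u,v) \in E(G)$ were ever colored monochromatically with a color $c$, then $c \in L(u) \cap L(v)$ would force $u,v \in \CC_c$ by definition of $\CC_c$, so $(u,v) \in \Ec = E(\Gc)$, contradicting properness of the list-coloring of $\Gc$. The existence, with high probability, of such a list-coloring is exactly the content of Theorem~\ref{thm:color-sampling}, so part~(1) follows.

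For parts~(2) and~(3), it is cleaner to work in the independent-inclusion formulation used inside the proof of Theorem~\ref{thm:color-sampling}: each color appears in $L(v)$ independently with probability $p = \Theta(\log n / \Delta)$. For part~(2), $\card{\CC_c} = \sum_{v \in V} \mathbf{1}\{c \in L(v)\}$ is a sum of $n$ independent Bernoullis of mean $p$, so $\Exp\bracket{\card{\CC_c}} = \Theta(np) = \Theta(n \log n / \Delta)$; a Chernoff bound then gives $\card{\CC_c} = O(n \log n / \Delta)$ with probability $1 - 1/\poly(n)$, and a union bound over all $\Delta + 1$ colors establishes the claim uniformly. For part~(3), I would fix a vertex $v$ and condition on its list $L(v)$. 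A neighbor $u$ of $v$ in $G$ contributes to $\deg_{\Gc}(v)$ iff $L(u) \cap L(v) \neq \emptyset$, an event that, conditional on $L(v)$, has probability at most $\card{L(v)} \cdot p = O(\log^2 n / \Delta)$ by a union bound and that is mutually independent across $v$'s (at most $\Delta$) neighbors because the lists $L(u)$ are sampled independently. Hence the conditional expectation of $\deg_{\Gc}(v)$ is $O(\log^2 n)$, and another Chernoff bound followed by a final union bound over $v \in V$ gives the claim.

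The only subtle point I anticipate is reconciling the two sampling distributions: Theorem~\ref{thm:color-sampling} is stated for sampling an exact set of $\Theta(\log n)$ colors uniformly at random, while the calculations above are easiest under independent inclusion. As already observed at the start of Section~\ref{sec:color-sampling}, the two distributions agree up to the negligible-probability event that some vertex samples substantially more than $\Theta(\log n)$ colors, so this translation should cause no real difficulty.
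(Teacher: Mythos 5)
Your proposal is correct and follows essentially the same approach as the paper: part~(1) reduces to Theorem~\ref{thm:color-sampling} via the observation that $\Ec$ captures exactly the potential monochromatic edges of any list-coloring, and parts~(2), (3) are Chernoff-plus-union-bound over the random lists, with the distribution-mismatch issue handled exactly as the paper does. The only small difference is in part~(3), where the paper bounds $\deg_{\Gc}(v)$ from above by the overcount $\sum_{u\in N(v)}\sum_{i}X_{u,i}$ and invokes a Chernoff bound for negatively correlated variables, whereas you condition on $L(v)$ and use the genuinely independent per-neighbor indicators $\mathbf{1}\{L(u)\cap L(v)\neq\emptyset\}$---a minor simplification that sidesteps the negative-correlation argument.
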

\begin{proof}
	We show that each part holds with high probability. Taking a union bound over the three parts finalizes the proof. 
	\begin{enumerate}
	\item We apply Theorem~\ref{thm:color-sampling} to the sets $L(v)$ chosen for each $v \in V$, obtaining that with high probability, $G$ can be list-colored with $L(v)$ being the list of vertex $v$. Now notice 
	that since $\Ec$ contains all possible monochromatic edges that arise in any list-coloring of $G$ with lists $L(\cdot)$, any proper list-coloring of $G$ is a proper list-coloring of $\Gc$ and vice versa. As such, we know that
	$\Gc$ contains a proper list-coloring and this list-coloring is also a feasible $(\Delta+1)$ coloring of the graph $G$. 
	
	\item  Fix any color $c \in [\Delta+1]$. Let $K$ be the  number of colors sampled by each vertex. The probability that any specific vertex $v$ chooses $c$ in $L(v)$ is $K/(\Delta+1)$. As such, the expected
	number of vertices in $\CC_c$ is $n\cdot K/(\Delta+1)$. As $K = \Theta(\log{n})$ and the choice of $L(\cdot)$ is independent across all vertices, by Chernoff bound, the total number of vertices in $\CC_c$ is with high probability 
	$2n \cdot K/\Delta = O(n\log{n}/\Delta)$. Taking a union bound on all $\Delta+1$ classes, finalizes the proof of this part. 
	
	\item Fix any vertex $v \in V$ and again let $K$ be the number of colors sampled by each vertex. 
	We fix these colors, say, $c_1,\ldots,c_K$. For any neighbor of $v$, say $u \in N(v)$ and $i \in [K]$, let $X_{u,i}$ be an indicator random variable which is one iff $c_i \in L(u)$. 
	Let $X := \sum_{u \in N(v)} \sum_{i=1}^{K} X_{u,i}$ and thus $\Ex\bracket{X} = \Delta \cdot K \cdot K/(\Delta+1) \leq K^2$. Note that $X$ is an upper bound on degree of $v$ in $\Gc$. As the $X_{u,i}$'s are negatively correlated, by Chernoff bound, we 
	have that $X \leq O(K^2) = O(\log^{2}{n})$ with high probability. Taking a union bound on all $n$ vertices finalizes the proof of this part. 
	\end{enumerate}
\noindent	
	Lemma~\ref{lem:meta-algorithm} now follows from a union bound over the three events above. 
\end{proof}

In the remainder of this section, we use Lemma~\ref{lem:meta-algorithm} to design sublinear algorithms for $(\Delta+1)$ coloring in graph streaming, sublinear time, and massively parallel computation (MPC) models.

\newcommand{\pf}{\ensuremath{\textnormal{\textsf{DetectPotentialNeighbor}}}\xspace}
\newcommand{\pPF}{\ensuremath{p}}
\newcommand{\KPF}{\ensuremath{K}}

\subsection{A Single-Pass Streaming Algorithm}\label{sec:streaming}

We first give an application of our palette-sparsification theorem in designing a dynamic streaming algorithm for the $(\Delta+1)$ coloring problem. In the dynamic streaming model, 
the input graph is presented as an arbitrary sequence of edge insertions and deletions and the goal is to analyze properties of the resulting graph using memory that is sublinear in the input size, which is proportional to the number
of edges in the graph. We are particularly interested in algorithms that use $O(n \cdot \polylog{(n)})$ space, referred to as \emph{semi-streaming} algorithms~\cite{FKMSZ05}. 

\begin{theorem}\label{thm:streaming}
    There exists a randomized {single-pass} semi-streaming algorithm that given a graph $G$ with maximum degree $\Delta$ presented in a dynamic stream, with high probability finds a $(\Delta+1)$ coloring of $G$ with 
   using $\Ot(n)$ space and polynomial time\footnote{Here $\Delta$ is the maximum degree of the graph at the end of the stream and we assume no upper bound on degree of vertices throughout the stream, which can be as large
    as $\Omega(n)$ even when $\Delta$ is much smaller.}. 
\end{theorem}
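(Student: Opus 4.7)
The plan is to realize the meta-algorithm $\CMA$ of Lemma~\ref{lem:meta-algorithm} within a single dynamic pass using $\Ot(n)$ space. The three concrete subtasks are (i) sampling the lists $L(\cdot)$ without knowing $\Delta$ in advance, (ii) recovering all edges of the conflict graph $\Gc$ from the dynamic stream, and (iii) finding a list-coloring of $\Gc$ in polynomial time. Recovery of $\Gc$ alone suffices for the output, because by the construction of $\Gc$ any list-coloring of $\Gc$ respecting $L(\cdot)$ is automatically a proper $(\Delta+1)$-coloring of $G$, and by Lemma~\ref{lem:meta-algorithm} we have $|E(\Gc)| = O(n \log^2 n)$ with high probability, comfortably within the semi-streaming budget.

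For tasks (i) and (ii), I would maintain exact per-vertex degree counters throughout the stream (costing $O(n \log n)$ space), so that $\Delta$ is known exactly at the end, and in parallel run $O(\log n)$ copies of the algorithm, one per guess $\hat{\Delta} \in \{2^{0}, 2^{1}, \ldots, 2^{\lceil \log_2 n \rceil}\}$. Inside the copy with guess $\hat{\Delta}$, the list $L(v)$ is defined lazily via a public random hash function, e.g.\ $c \in L(v)$ iff $h_{\hat{\Delta}}(v,c) \leq \Theta(\log n/\hat{\Delta})$, so $L(v) \subseteq [\hat{\Delta}+1]$ has $\Theta(\log n)$ entries in expectation. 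At stream's end I select the smallest $\hat{\Delta} \geq \Delta$, which satisfies $\hat{\Delta} < 2\Delta$, and a mild constant oversampling ensures $|L(v) \cap [\Delta+1]| = \Theta(\log n)$ with high probability, so Theorem~\ref{thm:color-sampling} applies to the restricted lists. To collect $\Gc$, for each vertex $v$ I maintain an $\ell_0$-sparse-recovery linear sketch of size $\polylog(n)$ over the edge-incidence vector of $v$, with each stream update filtered through the mark $\mathbf{1}[L(u) \cap L(v) \neq \emptyset]$. The crucial observation is that this mark depends only on the public hash randomness and not on the stream order, so it acts as a fixed linear filter; the sketch is therefore a genuine linear function of the dynamic edge vector and insertion/deletion pairs cancel cleanly. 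Decoding at the end recovers all of $E(\Gc)$, the total space being $\Ot(n)$.

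The main obstacle is task (iii): since list-coloring is $\mathsf{NP}$-hard in general, the existence guarantee of Theorem~\ref{thm:color-sampling} alone does not immediately yield a polynomial-time algorithm, even once $\Gc$ is in memory. I would turn the existence proof into a constructive procedure by running its three stages directly: the sparse-vertex greedy coloring using $L_1(\cdot)$ (Lemma~\ref{lem:sparse-color}), which is already algorithmic; the colorful-matching construction inside each almost-clique using $L_2(\cdot)$ (Lemma~\ref{lem:colorful}), which reduces to a color-by-color greedy sweep; and the palette-graph completion using $L_3(\cdot)$ (Lemma~\ref{lem:palette-graph-matching}), which reduces to bipartite maximum matching. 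Each stage runs in polynomial time once the extended HSS-decomposition of $G$ (Lemma~\ref{lem:extended-HSS-decomposition}) is available. That decomposition requires estimates of the common-neighbourhood sizes $|N_G(u) \cap N_G(v)|$ on edges $(u,v) \in E$; these can be captured during the stream by augmenting each vertex's state with an $\Ot(1)$-size sketch of its neighbourhood (e.g.\ a small battery of $\ell_0$-samplers over its adjacency vector) supporting approximate intersection-size queries at query time, adding only $\Ot(n)$ overall. Ensuring that this auxiliary structure is (a) recoverable from a dynamic stream within the budget and (b) accurate enough that the slack absorbed by the constants in Lemma~\ref{lem:extended-HSS-decomposition} still makes the subsequent coloring procedure go through is the technically most delicate step, and where I expect most of the proof's care to be spent.
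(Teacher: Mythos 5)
Your overall framework is right and closely tracks the paper's proof: realize $\CMA$ in one dynamic pass via $\ell_0$-sparse-recovery sketches to recover $\Gc$, then make the list-coloring step polynomial-time by first materializing the extended HSS-decomposition from sketches built during the stream, and then running the three constructive phases of the proof of Theorem~\ref{thm:color-sampling}. Your $\Delta$-guessing trick (degree counters plus $O(\log n)$ parallel copies) is a sensible way to fill a detail the paper elides, and your observation that the membership filter $\mathbf{1}[L(u)\cap L(v)\neq\emptyset]$ is a fixed linear filter determined by public randomness is exactly the right point.

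The genuine gap is in how you propose to recover the decomposition, which you yourself flag as the delicate step but then leave unresolved. Two concrete issues. First, ``approximate common-neighborhood size'' queries alone do not give you the decomposition: you still have to (a) decide which vertices are (approximately) $\eps$-dense, and (b) actually compute the connected components of the ``dense/friend'' graph to obtain the almost-cliques $C_1,\dots,C_k$. Neither follows from a pairwise intersection oracle, since you cannot afford to enumerate all pairs, and you do not have the edges of $H$ in memory. The paper handles (a) by uniform edge sampling of $\Ot(n/\delta^2)$ edges and thresholding the sampled degree after filtering by the friend test (Lemma~\ref{lem:detect-eps-dense}), and handles (b) by forming a subsampled graph $H_s$ from $\Ot(n/\delta^2)$ randomly sampled edges that pass the friend test and invoking a cut-sampling connectivity result (Lemma~\ref{lem:cut-connect}, due to Alon~et~al.) to argue that $H_s$ retains the same large connected components as $H$ with high probability (Lemmas~\ref{lem:Hs-connect} and~\ref{lem:stream-decomposition}); both steps require nontrivial arguments that are absent from your sketch. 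Second, for the intersection oracle itself, a ``battery of $\ell_0$-samplers per vertex'' with $\Ot(1)$ space gives you $\Ot(1)$ independent random neighbors per vertex, which is not enough: two independent $\Ot(1)$-size samples from sets of size $\approx\Delta$ almost never intersect even when the sets share $(1-\eps)\Delta$ elements. You need \emph{coordinated} sampling. The paper's clean way to do this is to sample a fixed reference set $S$ of roughly $n\log n/(\delta^2\Delta)$ vertices up front, fully recover the neighborhoods of $S$ via $\ell_0$-sparse-recovery ($\Ot(\Delta)$ space per vertex in $S$, $\Ot(n/\delta^2)$ total), and estimate $\card{N(u)\cap N(v)}$ by counting $\card{N(u)\cap N(v)\cap S}$ (Lemma~\ref{lem:detect-potential-friend}). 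If you rewrite your per-vertex samplers to be keyed by a common hash on the far endpoint you effectively recover this scheme, but as written the proposal does not specify this and the independent-sampler version fails.
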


We prove Theorem~\ref{thm:streaming} by implementing $\CMA$ in dynamic streams. Recall that implementing $\CMA$ requires us to specify $(i)$ how we construct the conflict-graph, and $(ii)$ how we find a list-coloring in this conflict graph using the lists 
$L(\cdot)$. Throughout the proof, we condition on the high probability event in Lemma~\ref{lem:meta-algorithm}. We first show how to construct the conflict-graph. To do so, we rely on the by now standard primitive of $\ell_0$-samplers for sampling elements
in dynamic streams (see,~e.g.~\cite{FrahlingIS2008,JowhariST2011,KapralovNPWWY17}) captured in the following proposition. 

\begin{proposition}[cf.~\cite{JowhariST2011,McGregorTVV15}]\label{prop:ell-0-samplers}
	There exists an streaming algorithm that given a subset $P \subseteq V \times V$ of pairs of vertices and an integer $k \geq 1$ at the beginning of a dynamic stream, outputs with high probability a set $S$ of $k$ edges 
	from the edges in $P$ that appear in the final graph (it outputs all edges if their number is smaller than $k$). The set $S$ of edges can be either chosen uniformly at random with replacement or without replacement.
	The space of algorithm is $O(k \cdot \log^{3}{n})$. 
\end{proposition}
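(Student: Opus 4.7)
The plan is to reduce this proposition to the well-known $\ell_0$-sampler primitive for turnstile streams (as in Jowhari--Saglam--Tardos and follow-ups). Recall that an $\ell_0$-sampler maintains a sketch of a vector $x \in \mathbb{Z}^N$ presented in a turnstile stream (coordinate-wise increments and decrements) and, at the end of the stream, outputs a coordinate $i$ with $x_i \neq 0$ chosen nearly uniformly at random from the support of $x$; this can be done with $O(\log^3 N)$ bits of space and $1-1/\poly(N)$ success probability. The idea is to view the stream of edge insertions/deletions restricted to $P$ as a turnstile update stream on a vector indexed by $P$, and use a bank of such samplers to draw the $k$ required edges.

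Concretely, at the beginning of the stream the algorithm is told $P$ and $k$, and it initializes $\Theta(k \log n)$ independent copies of an $\ell_0$-sampler over the coordinate set $P$. On each stream update for an edge $e$, if $e \in P$, the algorithm forwards the update to every sampler (incrementing the $e$-coordinate on insertion and decrementing on deletion); otherwise the update is ignored. Since $|P| \le n^2$, each sampler uses $O(\log^3 n)$ bits, giving $O(k \log^3 n)$ bits overall after folding the $\log n$-factor boost into the cube. At the end of the stream, the nonzero coordinates of the induced vector are exactly the edges of $P$ present in the final graph, so each sampler returns a (nearly) uniform such edge. This handles the with-replacement case directly: query $k$ of the samplers and output their results; a union bound over the $k$ samplers gives high-probability correctness.

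For the without-replacement case, the natural first step is the Birthday-style argument: if one oversamples by a modest $\Theta(\log n)$-factor using with-replacement samplers and discards duplicates, then either we end up with $k$ distinct samples or the support of edges in $P \cap E(G)$ is itself of size $< k$, in which case we need to recover them all. To cover the second case robustly, the plan is to combine the $\ell_0$-samplers with a sparse-recovery/support-recovery primitive: if the number of surviving edges is at most, say, $2k$, a $O(k)$-sparse recovery sketch in space $O(k \polylog n)$ lets us enumerate the entire support and simply return it; otherwise, the oversampled collection contains $k$ distinct samples with high probability, and returning a uniformly chosen $k$-subset of them yields the correct without-replacement distribution.

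The main technical obstacle is arguing that the without-replacement distribution produced in this way is genuinely uniform (not merely ``nearly'' uniform as each $\ell_0$-sampler outputs). This can be handled by using \emph{perfect} $\ell_0$-samplers, whose output distribution is exactly uniform on the support conditioned on not failing, and then conditioning on the no-failure event, which occurs with probability $1 - 1/\poly(n)$. With this in place, the distribution of the $k$-subset returned in either branch (oversample-and-deduplicate or full support recovery) is exactly uniform over $k$-subsets of $P \cap E(G)$, and the total space stays at $O(k \log^3 n)$, matching the stated bound.
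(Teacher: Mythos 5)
The paper does not actually prove Proposition~\ref{prop:ell-0-samplers}; it states it with a ``cf.'' citation to~\cite{JowhariST2011,McGregorTVV15} and uses it as a black box. So there is no in-paper proof to compare against, and the right question is whether your sketch is a valid reconstruction of what those references give.

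Your reduction to the $\ell_0$-sampler primitive is the correct starting point and the same spirit as the cited works, but as written it has two gaps. First, the space accounting is internally inconsistent: you posit $\Theta(k\log n)$ independent samplers, each using $O(\log^3 n)$ bits, and then assert a total of $O(k\log^3 n)$ by ``folding the $\log n$-factor into the cube.'' That product is $O(k\log^4 n)$. To actually hit $O(k\log^3 n)$ you need to be explicit about the tradeoff: either use $k$ samplers each with $1-1/\poly(n)$ success probability at $O(\log^3 n)$ bits apiece (which already suffices for with-replacement), or use $\Theta(k\log n)$ constant-success samplers at $O(\log^2 n)$ bits apiece; you cannot take the expensive version of both factors. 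Second, the without-replacement argument is more delicate than sketched. Oversampling by a $\Theta(\log n)$ factor and deduplicating does not obviously yield $k$ distinct samples with high probability when the support size $m$ is only slightly larger than $k$ (the coupon-collector cost scales like $m\log\frac{m}{m-k}$, which blows up as $m\to k$), and your fallback via $O(k)$-sparse recovery only covers $m\le 2k$, leaving the regime $2k< m = O(k\log n)$ unaddressed; in that regime you need either a larger sparse-recovery threshold or more samples, and the space budget has to be re-checked. The cleaner route, which is essentially what~\cite{McGregorTVV15} does, is to subsample the universe $P$ at geometric rates and run an $O(k\polylog n)$-sparse recovery sketch at each level, so that at the appropriate level the surviving edges form a uniformly random subset of $P\cap E(G)$ of size $\Theta(k)$, from which the first $k$ in a random ordering gives an exact without-replacement sample. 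Your appeal to perfect $\ell_0$-samplers to get exact rather than near-uniformity is the right instinct but doesn't by itself repair the coverage gap above.
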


Using Proposition~\ref{prop:ell-0-samplers}, we show how to construct the conflict graph in the  streaming model. 

\begin{lemma}\label{lem:streaming-construct}
    $\Gc(V,\Ec)$ can be constructed in $\Ot(n)$ space and polynomial time in dynamic streams with high probability. 
\end{lemma}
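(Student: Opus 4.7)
The natural plan is to observe that steps 1--3 of $\CMA$ only involve the lists $L(v)$, not the edges of $G$, and so can be executed in preprocessing before the stream starts. The only task left to the streaming algorithm is to recover the edge set $\Ec$ from a dynamic stream in $\Ot(n)$ space, for which I will invoke Proposition~\ref{prop:ell-0-samplers} once per color class.

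Concretely, during preprocessing the algorithm samples $L(v)$ for every $v \in V$, computes $\CC_c := \set{v : c \in L(v)}$ for every $c \in [\Delta+1]$, and instantiates $\Delta+1$ independent $\ell_0$-samplers: the $c$-th one is given pair-set $P_c := \CC_c \times \CC_c$ and a parameter $k_c$ to be fixed below. All samplers are then fed the stream in parallel, and at the end the algorithm outputs the union of the returned edges. Correctness is immediate: $(u,v) \in \Ec$ iff $u,v \in \CC_c$ for some $c$, so the union of the edges in $E \cap P_c$ over all $c$ equals $\Ec$, and by Proposition~\ref{prop:ell-0-samplers} each sampler returns \emph{every} edge in $E \cap P_c$ as long as $k_c \geq |E \cap P_c|$. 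The task thus reduces to choosing $k_c$ as a high-probability upper bound on $|E \cap P_c|$.

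For the space analysis, Lemma~\ref{lem:meta-algorithm} already gives $|\CC_c| = O(n\log n/\Delta)$ w.h.p. Moreover, for any fixed vertex $v$ the number of its neighbors contained in $\CC_c$ is a sum of at most $\Delta$ independent indicators, each of probability $\Theta(\log n/\Delta)$, so a Chernoff bound plus a union bound over all $v$ and $c$ yields $|E \cap P_c| = O(n \log^2 n/\Delta)$ uniformly in $c$ w.h.p. Setting $k_c = \Theta(n\log^2 n/\Delta)$, the $c$-th sampler uses $O(k_c \log^3 n)$ bits by Proposition~\ref{prop:ell-0-samplers}, and summing across the $\Delta+1$ color classes gives a total of $O(n \log^5 n) = \Ot(n)$, as required.

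The main (and quite minor) subtlety I anticipate is that Proposition~\ref{prop:ell-0-samplers} requires $k_c$ to be committed to at the start of the stream, while I only have a high-probability bound on $|E \cap P_c|$; this is handled by conditioning on the event that $|E \cap P_c| \leq k_c$ for every $c$ simultaneously, which costs only an additive $1/\poly(n)$ in the failure probability, and declaring failure otherwise. A final book-keeping remark: the degenerate regime $\Delta+1 = O(\log n)$ is trivial, as one can set $L(v) = [\Delta+1]$ for every $v$, so that $\Gc = G$ has only $O(n\log n)$ edges and can be recovered by a single $\ell_0$-sampler applied to the entire vertex pair-set in $\Ot(n)$ space.
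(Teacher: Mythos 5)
Your proof is correct, and it takes a genuinely different route from the paper. The paper instantiates one $\ell_0$-sampler \emph{per vertex} $v$: it sets $P_v$ to be the edge slots between $v$ and $\bigcup_{c \in L(v)} \CC_c$ and chooses $k = O(\log^2 n)$, relying on part~(3) of Lemma~\ref{lem:meta-algorithm} (the $O(\log^2 n)$ degree bound in $\Gc$) to guarantee the sampler recovers all of $v$'s incident edges in $\Gc$; this gives $n$ samplers at $O(\log^5 n)$ space each. You instead instantiate one sampler \emph{per color class}, with $P_c = \CC_c\times\CC_c$ and $k_c = \Theta(n\log^2 n/\Delta)$, which forces you to prove a bound Lemma~\ref{lem:meta-algorithm} does not supply directly: a uniform bound $|E\cap P_c| = O(n\log^2 n/\Delta)$, which you get by combining part~(2) of Lemma~\ref{lem:meta-algorithm} ($|\CC_c| = O(n\log n/\Delta)$) with a fresh Chernoff argument that each vertex has $O(\log n)$ neighbors in $\CC_c$. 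Both routes land on the same $O(n\log^5 n) = \Ot(n)$ total. The paper's version is slightly more economical since it reuses a bound already proved in Lemma~\ref{lem:meta-algorithm} verbatim, while yours requires the extra concentration step; on the other hand, indexing by color class is a natural dual view and your handling of the $\Delta+1 = O(\log n)$ corner case and the commitment-to-$k_c$-before-the-stream subtlety are both sound and slightly more explicit than the paper's proof.
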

\begin{proof}
    We construct the sets $\CC_1,\ldots,\CC_{\Delta+1}$ and store the sets in $O(n\log{n})$ space. For any vertex $v \in V$, we define the set $P_v$ of all edge slots between $v$ 
    and $\bigcup_{c \in L(v)} \CC_c$, i.e., all vertices that may have an edge to $v$ in $\Ec$, and run the algorithm in Proposition~\ref{prop:ell-0-samplers} with $P = P_v$ and parameter $k = O(\log^2{(n)})$. 

    As we conditioned on the event in Lemma~\ref{lem:meta-algorithm}, the degree of each vertex is at most $k$ in $\Gc$. Hence, by Proposition~\ref{prop:ell-0-samplers}, with high probability,
    we find all neighbors of this vertex in $\Gc$. Taking a union bound over all vertices in $V$, with high probability, we can construct the graph $\Gc$. As all these steps can be implemented in polynomial time, we obtain the final result. 
\end{proof}

Lemma~\ref{lem:streaming-construct} together with Lemma~\ref{lem:meta-algorithm} are already enough to achieve a semi-streaming algorithm with exponential-time (i.e., prove Theorem~\ref{thm:streaming} if we do not want a polynomial time algorithm): after constructing
the conflict-graph $\Gc$, we can simply use an exponential time algorithm to find a proper list-coloring of $\Gc$ which would be a $(\Delta+1)$ coloring of $G$ by Lemma~\ref{lem:meta-algorithm}. 
We now show how to find a proper list-coloring of $\Gc$ in polynomial time. 

\paragraph{A Polynomial Time Streaming Algorithm for $(\Delta+1)$ Coloring.} Recall that the proof of Theorem~\ref{thm:color-sampling} was in fact constructive modulo one part: we assumed the existence of the extended HSS-decomposition and found the $
(\Delta+1)$ coloring using this assumption. However, without no explicit access to the extended HSS-decomposition, it is not clear how to find such a coloring. We remedy this by presenting a streaming algorithm that can compute this decomposition explicitly
in a small space in one pass over the stream. We then use the argument in Theorem~\ref{thm:color-sampling} to color the graph $G$ in polynomial time. 

\begin{lemma}\label{lem:find-decomposition}
    There exists a single-pass dynamic streaming algorithm that can find the extended HSS-decomposition of any given graph $G(V,E)$ with parameter $\eps > 0$ in 
    $\Ot(n/\eps^2)$ space and polynomial time with high probability. 
\end{lemma}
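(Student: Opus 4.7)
My plan is to maintain, for every vertex $v$, a size-$K$ bottom-MinHash sketch $S(v)\subseteq N(v)$ with $K=\Theta(\log{n}/\eps^2)$, using a common random hash function $h:V\to[0,1]$; that is, $S(v)$ consists of the $K$ neighbors of $v$ whose $h$-values are smallest. In dynamic streams each such sketch can be maintained with $\Ot(1/\eps^2)$ space per vertex by combining $h$ with a standard level-based $\ell_0$-sampler / sparse-recovery scheme that recovers the $K$ smallest hashes currently present in $N(v)$ under edge insertions and deletions. Summed over the $n$ vertices the total space is $\Ot(n/\eps^2)$; the degenerate case $\Delta=O(\log{n}/\eps^2)$ is handled by simply storing each $N(v)$ in full within the same budget.

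Post-stream, for each vertex $v$ and each $u\in S(v)$---note $u$ is guaranteed to be a genuine neighbor of $v$ since $S(v)\subseteq N(v)$, so I never need an adjacency oracle---I compute $|S(u)\cap S(v)|$. The standard bottom-$K$ MinHash analysis yields that $|S(u)\cap S(v)|/K$ estimates the Jaccard similarity $J(N(u),N(v))=|N(u)\cap N(v)|/|N(u)\cup N(v)|$ to additive error $O(\eps)$ with high probability. Since $|N(u)|,|N(v)|\le\Delta$, the $\eps$-friend condition $|N(u)\cap N(v)|\ge(1-\eps)\Delta$ corresponds, up to $O(\eps)$ slack, to $J(N(u),N(v))\ge 1-O(\eps)$, so I can classify each sampled pair as friend or non-friend from the MinHash estimate. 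The fraction of $u\in S(v)$ labelled friend is then a Chernoff-concentrated estimate of the true fraction of friend-neighbors of $v$ in $N(v)$, which determines whether $v$ is declared $\eps$-dense. Running the whole algorithm with a slightly tighter parameter $\eps'=\eps/c$ absorbs the additive $O(\eps)$ slack throughout.

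Having labelled each vertex dense or sparse and each edge incident to $S(v)$ friend or non-friend, I construct an auxiliary graph $G'$ on the declared-dense vertices whose edges are the identified friend-pairs, compute its connected components in polynomial time, output the components containing at least one $\eps$-dense vertex as the almost-cliques $C_1,\dots,C_k$, and put every remaining vertex into $\Vsparse_\star$. Properties~(\ref{HSS-p1}),~(\ref{HSS-p2b}), and~(\ref{HSS-p2c}) of Lemma~\ref{lem:extended-HSS-decomposition} then follow directly from the parameter-slack classification.

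The main obstacle I anticipate is showing that the connected components of $G'$ really coincide with the almost-cliques of a valid decomposition---in particular, that each almost-clique remains a single component of $G'$. The ``no spurious merging'' direction is easy because by Property~(\ref{HSS-p2b}) each vertex has at most $7\eps\Delta$ neighbors outside its own almost-clique and none of these cross-component edges are classified as friends. The harder direction is connectivity inside each $C$: for any dense $v\in C$ at least $(1-O(\eps))K=\Omega(\log{n})$ members of $S(v)$ are friends of $v$, and these friend-neighbors form essentially a uniformly random sample from the $\Theta(\Delta)$ dense vertices of $C$, so a standard random-graph argument---each vertex draws $\Omega(\log{n})$ random edges inside a set of size $\Theta(\Delta)$---shows that the induced subgraph on the dense vertices of $C$ is connected with high probability, which together with the size bounds on $C$ yields Property~(\ref{HSS-p2a}) as well.
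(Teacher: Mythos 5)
Your proposal takes a genuinely different route from the paper. The paper's algorithm recovers all edges incident to a small random vertex set $S$ (via $\ell_0$-samplers), estimates $\card{N(u)\cap N(v)}$ by counting common neighbors inside $S$, separately samples $\Ot(n/\eps^2)$ edges i.i.d.\ from the stream, restricts to the declared-dense/declared-friend pairs to form a random subgraph $H_s$, and then appeals to the min-cut preservation result of Alon (Lemma~\ref{lem:cut-connect}) to argue $H_s$ has the same components as the full friend graph $H$. You instead maintain per-vertex bottom-$K$ MinHash sketches, classify friend pairs by Jaccard estimates, and build $G'$ from the sketched edges.

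There is a genuine gap in your connectivity argument. You write that for each dense $v\in C$ the friends in $S(v)$ ``form essentially a uniformly random sample'' and that ``a standard random-graph argument'' (implicitly a random $K$-out model) then gives connectivity of $G'$ restricted to $C$. Marginally each $S(v)$ is a uniform $K$-subset of $N(v)$, but because the \emph{same} hash $h$ is used for every vertex --- which is essential for the Jaccard comparison $\card{S(u)\cap S(v)}$ to mean anything --- the sets $S(v)$ and $S(u)$ are strongly correlated across vertices. The $K$-out connectivity theorems require independent per-vertex samples and do not apply here. (In fact the correlation makes low-hash vertices act as hubs appearing in $S(v)$ for nearly every $v$, which probably \emph{helps} connectivity; but that is a different, hash-structure-specific argument that you do not make, not the ``standard'' one you invoke.) The paper sidesteps this entirely by sampling edges i.i.d.\ and using the min-cut argument, for which independence is clean.

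A secondary, fixable issue: bottom-$K$ MinHash estimates the Jaccard similarity $\card{N(u)\cap N(v)}/\card{N(u)\cup N(v)}$, whereas the friend condition is about the absolute count $\card{N(u)\cap N(v)}\ge(1-\eps)\Delta$. High Jaccard does not by itself imply a large intersection when both degrees are small; you need to combine the Jaccard estimate with the (streamable) degrees to recover the count. Also, the correct bottom-$K$ estimator is $\card{S_K(N(u)\cup N(v))\cap S(u)\cap S(v)}/K$ rather than $\card{S(u)\cap S(v)}/K$; both can be computed from the two sketches, but the distinction matters for the error analysis you cite.
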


The rest of this section is devoted to the proof of Lemma~\ref{lem:find-decomposition}. Let $\eps$ be the parameter in Lemma~\ref{lem:find-decomposition} and define $\delta := \eps/10$ throughout this section. 
We prove Lemma~\ref{lem:find-decomposition} in multiple steps. The first step is to determine for all vertices $u,v \in V$, whether these vertices can be $\delta$-friend or not (as in Definition~\ref{def:eps-friend}).
Formally, we say that $u,v \in V$ are a \emph{$\delta$-potential friend} iff $\card{N(v) \cap N(u)} \geq (1-\delta)\Delta$. Note that the only difference between this definition and definition of $\delta$-friends in Definition~\ref{def:eps-friend} is 
that here $u$ and $v$ may \emph{not} be neighbor to each other. We design a simple data structure to determine (approximately) whether $u,v \in V$ are $\delta$-potential friend or not in a single pass over the dynamic stream. 

\begin{lemma}\label{lem:detect-potential-friend}
    There exists a single-pass dynamic streaming algorithm that constructs a data structure to answer the following queries: Given a pair of vertices $(u,v) \in V \times V$,
    \begin{itemize}
        \item outputs \Yes if $u$ and $v$ are $\delta$-potential friend;
        \item outputs \No if $u$ and $v$ are not $2\delta$-potential friend;
        \item the answer can be arbitrary otherwise. 
    \end{itemize}
    The algorithm requires $\Ot(n/\delta^2)$ space and outputs the correct data structure with high probability. 
\end{lemma}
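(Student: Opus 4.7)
The plan is to publicly sample a small random subset $S \subseteq V$ at the very beginning of the stream, track for every vertex $v \in V$ the intersection $N(v) \cap S$ via $\ell_0$-samplers (Proposition~\ref{prop:ell-0-samplers}), and estimate $|N(u) \cap N(v)|$ from $|N(u) \cap N(v) \cap S|$ rescaled by the sampling probability. Because $S$ is shared across all vertices, the estimators for different pairs use strongly correlated samples, and concentration yields an additive error comfortably below the $\delta\Delta$ gap that separates the \Yes and \No regimes.

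Concretely, independently of the stream and before any updates arrive, include each vertex of $V$ in a public set $S$ with probability $p := C \log n /(\delta^2 \Delta)$ for a sufficiently large constant $C$. For each $v \in V$, instantiate the algorithm of Proposition~\ref{prop:ell-0-samplers} with input set $P_v := \{v\} \times S$ and parameter $k := \Theta(\log n/\delta^2)$; at the end of the stream it returns up to $k$ edges of the final graph with one endpoint $v$ and the other in $S$. Since $\expect{|N(v) \cap S|} \le p\Delta = O(\log n/\delta^2)$, a Chernoff bound followed by a union bound over $v \in V$ shows $|N(v) \cap S| \le k$ for every $v$ with high probability; conditioned on this event, the sampler output equals $N(v) \cap S$ exactly for every $v$. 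Each sampler uses $\Ot(k) = \Ot(1/\delta^2)$ space, so the total space is $\Ot(n/\delta^2)$.

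To answer a query $(u,v)$, decode the two sketches to obtain $N(u) \cap S$ and $N(v) \cap S$, compute $X := |N(u) \cap N(v) \cap S|$, and return \Yes iff $X/p \ge (1 - \tfrac{3}{2}\delta)\Delta$. Since the events $\{w \in S\}_{w \in V}$ are mutually independent, $X$ is a sum of independent Bernoulli variables with mean $p \cdot |N(u) \cap N(v)|$. A standard Chernoff bound with our choice of $p$ gives
\[
\Pr\!\left[\,\bigl|X/p - |N(u) \cap N(v)|\bigr| > \delta\Delta/2\,\right] \le n^{-5},
\]
and a union bound over the at most $n^2$ query pairs makes this additive accuracy simultaneous for every $(u,v)$ with high probability. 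Consequently: if $(u,v)$ is a $\delta$-potential friend then $X/p \ge (1-\delta - \delta/2)\Delta$ and the algorithm outputs \Yes; if $(u,v)$ is not a $2\delta$-potential friend then $X/p < (1-2\delta + \delta/2)\Delta$ and it outputs \No; in between either answer is allowed.

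The main obstacle is the simultaneous tension on the choice of $p$: it must be small enough that $|N(v) \cap S| = \Ot(1/\delta^2)$ uniformly over $v$ (so the samplers succeed within the target space), yet large enough that $p\Delta = \Omega(\log n/\delta^2)$ for Chernoff to deliver a $\delta\Delta/2$ additive error with an inverse-polynomial failure probability that can absorb a union bound over pairs; the setting $p = \Theta(\log n/(\delta^2 \Delta))$ balances both. A minor technicality is that $\Delta$ must be available when $S$ is sampled. If $\Delta$ is not known in advance, we run the scheme in parallel for each of the $O(\log n)$ geometrically-spaced guesses of $\Delta$ and use the correct copy once $\Delta$ is determined at the end of the stream; this introduces only a $\log n$ factor that is absorbed in $\Ot(\cdot)$.
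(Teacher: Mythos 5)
Your proposal is correct and follows essentially the same approach as the paper: a public random vertex sample $S$ of rate $\Theta(\log n/(\delta^2\Delta))$, the estimator $|N(u)\cap N(v)\cap S|/p$ with threshold $(1-1.5\delta)\Delta$, and Chernoff plus a union bound over all $n^2$ pairs. The only (cosmetic) difference is the allocation of $\ell_0$-samplers: the paper keeps one sampler per sampled vertex $w \in S$ with $P$ the full set of edge slots at $w$ and $k = \Delta$, whereas you keep one per vertex $v \in V$ with $P_v = \{v\}\times S$ and $k = \Ot(1/\delta^2)$; both recover exactly the bipartite edge set $E \cap (V \times S)$ in $\Ot(n/\delta^2)$ total space, after which the estimator and analysis coincide.
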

\begin{proof}
    The algorithm is simply as follows: Pick a set $S$ of vertices at the beginning of the stream by choosing each vertex independently and with probability $p := \frac{10\log{n}}{\delta^2\Delta}$. For any chosen vertex in $S$, 
    run the algorithm in Proposition~\ref{prop:ell-0-samplers} with $P$ being the set of all edge slots incident to the vertex and $k = \Delta$. As the degree of every vertex is at most $\Delta$, 
    with high probability we recover all edges incident on $S$ at the end of the stream. In the following, we condition on this event. 

    To answer the query $(u,v)$ we do as follows: let $c_{u,v}$ be the number of common neighbors of $u$ and $v$ in $S$ (as we have all the edges incident on $S$ we can determine this quantity). Now
    output $\Yes$ iff $c_{u,v} \geq (1-1.5\delta)\cdot \Delta \cdot p$ and $\No$ otherwise. Let $N(u,v)$ denote the set of common neighbors of $u$ and $v$. The proof of correctness is as follows: 
    \begin{itemize}
        \item Suppose $u$ and $v$ are $\delta$-potential friend and hence $\card{N(u,v)} \geq (1-\delta)\Delta$. In expectation, $\Ex[c_{u,v}] = \card{S \cap N(u,v)} \geq (1-\delta)\cdot\Delta \cdot p$ and hence by Chernoff bound, with probability
            at least $1-1/n^{5}$, $c_{u,v} \geq (1-1.5\delta)\cdot \Delta \cdot p$. 

        \item Suppose $u$ and $v$ are not $2\delta$-potential friend and hence $\card{N(u,v)} < (1-2\delta)\Delta$. The above argument implies that $c_{u,v} < (1-1.5\delta)\cdot\Delta \cdot p$ with probability $1-1/n^5$.  
    \end{itemize}

    By taking a union bound over all $n^2$ $(u,v)$-pairs, the answer of the data structure is correct with high probability for all pairs. 

    Finally, another direct application of Chernoff bound ensures that the total number of vertices chosen in $S$ is at most $O(n\log{n}/\Delta)$ with high probability. As each vertex requires $O(\Delta \cdot \polylog{(n)})$ space to  
    run the algorithm in Proposition~\ref{prop:ell-0-samplers}, we obtain the bound on the space requirement of the algorithm as well. 
\end{proof}

For the rest of the proof, we pick the data structure in Lemma~\ref{lem:detect-potential-friend} in our algorithm and assume that the high probability event in Lemma~\ref{lem:detect-potential-friend} happens. 
We now use this data structure to determine (approximately) if a vertex is $\delta$-dense or not. 

\begin{lemma}\label{lem:detect-eps-dense}
    There exists a single-pass dynamic streaming algorithm that uses $\Ot(n/\delta^2)$ space with high probability outputs a set $D$ of vertices such that: 
    \begin{itemize}
        \item any vertex in $D$ is at least $2\delta$-dense;
        \item all $(\delta/2)$-dense vertices belong to $D$. 
    \end{itemize}
\end{lemma}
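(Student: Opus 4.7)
The plan is to estimate, for each vertex $v$, the number of its neighbors that appear to be friends, using the data structure from Lemma~\ref{lem:detect-potential-friend} together with a per-vertex bank of $\ell_0$-samplers. During the stream, I would maintain three things in parallel: (i) the potential-friend data structure from Lemma~\ref{lem:detect-potential-friend} instantiated with parameter $\delta/2$, so that after the stream one can query any pair $(u,v)$ and receive a verdict that is YES on every $(\delta/2)$-friend pair and NO on every pair that is not a $\delta$-friend; (ii) for each vertex $v$, a bank of $k = \Theta(\log n / \delta^2)$ independent $\ell_0$-samplers (Proposition~\ref{prop:ell-0-samplers}) that together draw $k$ edges incident to $v$, sampled uniformly at random with replacement from the final neighborhood of $v$; and (iii) the exact degree $d(v)$ of each vertex.

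After the stream, for each $v$, let $Y_v$ denote the number of the $k$ sampled edges on which the data structure returned YES. I declare $v \in D$ if and only if $Y_v \cdot d(v) \geq (1-\delta)\, k \,\Delta$.

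To analyze correctness, condition on the high-probability success of both the $\ell_0$-samplers and the friend-detector data structure, and let $Z_v$ denote the true number of neighbors of $v$ that the data structure would label YES. The two guarantees of Lemma~\ref{lem:detect-potential-friend} then sandwich $Z_v$ as $f_{\delta/2}(v) \leq Z_v \leq f_{2\delta}(v)$, where $f_\gamma(v)$ denotes the number of $\gamma$-friend neighbors of $v$. Since $Y_v \cdot d(v)/k$ is an unbiased estimator of $Z_v$ obtained as an average of $k$ independent random variables each bounded above by $\Delta/k$, Hoeffding's inequality at $k = \Theta(\log n / \delta^2)$ guarantees $|Y_v \cdot d(v)/k - Z_v| \leq \delta \Delta/3$ with probability $1 - 1/\poly(n)$, and a union bound over the $n$ vertices makes this uniform.

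Completeness then follows because a $(\delta/2)$-dense vertex has $Z_v \geq f_{\delta/2}(v) \geq (1-\delta/2)\Delta$, whence $Y_v \cdot d(v)/k \geq (1-5\delta/6)\Delta > (1-\delta)\Delta$, so $v \in D$. Soundness follows because any $v \in D$ satisfies $Z_v \geq (1-\delta)\Delta - \delta \Delta/3 = (1-4\delta/3)\Delta$, hence $f_{2\delta}(v) \geq Z_v \geq (1-2\delta)\Delta$, so $v$ is $(2\delta)$-dense. The total space is $\Ot(n/\delta^2)$ from Lemma~\ref{lem:detect-potential-friend}, plus $n \cdot k \cdot \polylog(n) = \Ot(n/\delta^2)$ from the per-vertex $\ell_0$-samplers, plus $O(n\log n)$ for degrees, all in polynomial time. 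The one delicate point is calibrating the friend-test parameter ($\delta/2$) against the two density thresholds $\delta/2$ and $2\delta$ so that the Hoeffding slack of $\delta\Delta/3$ fits strictly inside the gap between the completeness target $(1-\delta/2)\Delta$ and the soundness bound $(1-2\delta)\Delta$; a coarser friend-detector with a wider two-sided uncertainty would eat into this gap and the thresholding argument would need to be reworked.
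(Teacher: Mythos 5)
Your proof is correct, and it takes a mildly different route from the paper's. The paper instantiates the friend-detector of Lemma~\ref{lem:detect-potential-friend} at parameter $\delta$ (so YES on $\delta$-friends, NO on non-$2\delta$-friends), draws a \emph{single global} pool of $\Theta(n\log n/\delta^2)$ uniformly random edges via Proposition~\ref{prop:ell-0-samplers}, and then thresholds the raw per-vertex count of YES-labeled sampled edges against a fixed cutoff of roughly $(1-\delta)k/n$. You instead instantiate the friend-detector at $\delta/2$ (shifting the YES/NO band to $[\delta/2,\delta]$), draw $\Theta(\log n/\delta^2)$ samples \emph{per vertex} from that vertex's own neighborhood, and rescale by the exact degree $d(v)$ so that $Y_v\,d(v)/k$ is an unbiased estimator of the true YES-neighbor count $Z_v$, which you then threshold at $(1-\delta)\Delta$. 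Both allocate the same $\Ot(n/\delta^2)$ budget overall; your degree-normalized estimator is arguably cleaner because its mean is exactly $Z_v$ independent of $|E|$, whereas the paper's global-sampling estimator has an expectation proportional to $k/|E|$ and the calibration is a bit looser. Your threshold arithmetic is correct: the friend-detector's two-sided uncertainty plus the $\delta\Delta/3$ Hoeffding slack fit strictly inside the gap between the completeness target $(1-\delta/2)\Delta$ and the soundness bound $(1-2\delta)\Delta$, and your soundness argument correctly exploits that a YES-labeled actual neighbor is a genuine $\delta$-friend edge (potential-friendship plus adjacency), so $f_{2\delta}(v)\ge f_\delta(v)\ge Z_v$.

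One small technicality worth flagging: when $d(v) < k$, the $\ell_0$-sampler of Proposition~\ref{prop:ell-0-samplers} returns all $d(v)$ incident edges rather than $k$ draws with replacement, so $Y_v = Z_v$ exactly and $Y_v\,d(v)/k = Z_v\,d(v)/k < Z_v$; your rule then strictly underestimates. This is harmless — it can only keep $v$ out of $D$, and a vertex with $d(v) < k = \Theta(\log n/\delta^2) \ll (1-\delta/2)\Delta$ in the regime of interest cannot be $(\delta/2)$-dense in the first place — but it is worth a sentence so that the estimator's unbiasedness claim is stated only in the regime where it holds.
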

\begin{proof}
    We sample $k:= \frac{100 \cdot n\log{n}}{\delta^2}$ edges $E_s$ from the graph uniformly at random with replacement using Proposition~\ref{prop:ell-0-samplers}. At the end of the stream, 
    for any edge $(u,v) \in E_s$ we use the data structure in Lemma~\ref{lem:detect-potential-friend} on the pair $(u,v)$ and discard this edge iff the answer to the query is \No. Let $E'_s$ be the set of non-discarded edges. 
    We add any vertex $u$ to $D$ iff there are $ \geq (1-\delta) \cdot \frac{k \cdot \Delta}{n \cdot \Delta}$ edges in $E'_s$ incident on $u$. 

    Suppose first that $u$ is a $(\delta/2)$-dense vertex and let $F_{u}$ denote the set of $(\delta/2)$-friend edges incident on $u$. By Definition~\ref{def:eps-dense}, $\card{F_u} \geq (1-(\delta/2))\Delta$. Moreover, by the guarantee of 
    the data structure in Lemma~\ref{lem:detect-potential-friend}, for any edge $(u,v) \in F_u$, the answer to the query $(u,v)$ is \Yes, as $(u,v)$ is a $(\delta/2)$-friend edge (and hence is also $(\delta/2)$-potential friend by definition). 
    Now let $c_{u} := \card{F_u \cap E_s}$ which is also a lower bound on the number of neighbors of $u$ in $E'_s$. By the choice of $E_s$, we have that $\Ex\bracket{c_{u}} \geq (1-(\delta/2)) \cdot \frac{k \cdot \Delta}{n \cdot \Delta}$. 
    An application of Chernoff bound now ensures that $c_{u} \geq (1-\delta) \cdot \frac{k \cdot \Delta}{n \cdot \Delta}$ with probability at least $1-1/n^5$.  Hence, $u$ would be added to $D$ with probability at least $1-1/n^5$. 

    Now suppose $u$ is not a $2\delta$-dense vertex. Let $F_u$ denote the set of $(2\delta)$-friend edges incident on $u$. By Definition~\ref{def:eps-dense}, $\card{F_u} \leq (1-2\delta)\Delta$. By the guarantee of 
    the data structure in Lemma~\ref{lem:detect-potential-friend}, for any edge $(u,v) \notin F_u$, the answer to the query $(u,v)$ in \No as $(u,v)$ in that case would not be a $(2\delta)$-potential friend. Let $c_u := \card{F_u \cap E_s}$
    which is an upper bound on the number of neighbors of $u$ in $E'_s$. By the choice of $E_s$, we have $\Ex\bracket{c_u} \leq (1-2\delta) \frac{k \cdot \Delta}{n \cdot \Delta}$. As such, a Chernoff bound ensures that
    with probability  at least $1-1/n^5$, $u$ would not be added to $D$. 

    Taking a union bound over all $n$ vertices finalizes the proof of correctness. The space needed by the algorithm is also $\Ot(n/\delta^2)$ by Proposition~\ref{prop:ell-0-samplers}.  
\end{proof}

We are now ready to construct the extended HSS-decomposition. Let $D$ be the set of vertices found in Lemma~\ref{lem:detect-eps-dense} and for the rest of the proof, we assume that the high probability event 
in Lemma~\ref{lem:detect-eps-dense} happens. Define the following graph $H$ with the set of vertices $V(H) := D$ and the set of edges:
\begin{align*}
    E(H):= &\{(u,v) \in E \mid \text{data structure in Lemma~\ref{lem:detect-potential-friend}} 
    \text{~returns \Yes to the query $(u,v)$}\}.
\end{align*}
Notice that we cannot construct the graph $H$ explicitly and we only use it in the analysis. However, we can construt a graph $H_s$ which is an edge-sampled subgraph of $H$. We will prove that $H_s$ will have the same decomposition as $H$. 
Before showing how to construct $H_s$, we first recall a result about the connectivity of a subgraph obtained by sampling each edge of an original dense graph. 

\begin{lemma} [\!\!\cite{alon1995note}] \label{lem:cut-connect}
    Let $G_p$ be a subgraph of a graph $G$ chosen by sampling each edge of $G$ with probability $p$. For any constant $b>0$, there exists a constant $c=c(b)$, such that if the minimum cut in $G$ is of size at least $\paren{c \log n/p}$, then
    $G_p$ is connected with probability at least $1-1/n^b$.
\end{lemma}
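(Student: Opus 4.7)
The plan is to derive this result from Karger's cut-counting theorem combined with a union bound over all cuts of $G$. Writing $k_{\min}$ for the minimum cut of $G$, by hypothesis $k_{\min} \ge c\log n / p$, and $G_p$ fails to be connected exactly when some cut $S \subseteq V$ has \emph{all} of its $k_S := |E_G(S,\bar S)|$ edges missing from the sample, which happens with probability $(1-p)^{k_S} \le e^{-p k_S}$.

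I would first invoke Karger's cut-counting theorem: in any graph with minimum cut $k_{\min}$, the number of cuts of size at most $\alpha\,k_{\min}$ is at most $n^{2\alpha}$ for every real $\alpha \ge 1$. Bucketing the cuts of $G$ by the integer $\alpha \ge 1$ with $\alpha\,k_{\min} \le k_S < (\alpha+1)\,k_{\min}$ and applying a union bound yields
\[
\Pr\bigl[G_p \text{ disconnected}\bigr] \;\le\; \sum_{\alpha \ge 1} n^{2(\alpha+1)} \cdot e^{-p\,\alpha\,k_{\min}} \;\le\; \sum_{\alpha \ge 1} n^{2(\alpha+1) - c\alpha},
\]
where the last inequality uses $p\,k_{\min} \ge c\log n$. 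Choosing $c = c(b)$ large enough (any $c \ge b + 5$ suffices) makes this a rapidly decaying geometric series whose sum is at most $n^{-b}$, which is the desired bound.

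The argument is essentially a textbook application of Karger's cut-counting, so the only mild obstacle is choosing $c$ to dominate both the $n^{2\alpha}$ counting factor and an additional $n^2$ slack from the bucketing (the $+1$ in the exponent $2(\alpha+1)$), but since $c$ is allowed to depend on $b$ this is immediate. An alternative route would be via edge-disjoint spanning trees (Nash--Williams/Tutte gives $\lfloor k_{\min}/2 \rfloor$ of them) and showing that with high probability at least one of these trees survives entirely under the edge sampling; however the Karger-based proof is cleaner and gives the polynomial-in-$n$ tail bound directly, so that is the route I would take.
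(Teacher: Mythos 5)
The paper states this lemma only as a citation to~\cite{alon1995note} and gives no proof of its own, so there is no in-paper argument to compare against. Your Karger-based derivation is a correct and standard way to establish it: disconnectedness of $G_p$ is equivalent to some cut being entirely unsampled, Karger's cut-counting bound controls the number of cuts of each scale, and the union bound turns $p\,k_{\min}\ge c\log n$ into a rapidly decaying geometric series once $c$ is chosen as a suitable function of $b$. The bookkeeping is right: a cut in bucket $\alpha$ has size at least $\alpha k_{\min}$, so it is unsampled with probability at most $n^{-c\alpha}$, and there are at most $n^{2(\alpha+1)}$ cuts of size below $(\alpha+1)k_{\min}$, giving $\sum_{\alpha\ge 1} n^{2(\alpha+1)-c\alpha}$, which is at most $n^{-b}$ once $c\ge b+5$, say. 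Your fallback via Nash--Williams/Tutte (survive one of $\lfloor k_{\min}/2\rfloor$ edge-disjoint spanning trees) would also work, though as you note it is less direct. Either route is a legitimate proof of the cited result; nothing is missing.
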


Let $E_s$ be a collection of $\ell:= \frac{10c \cdot n\log{n}}{\delta^2}$ (where $c$ is the constant $c(b)$ in Lemma \ref{lem:cut-connect} with $b=3$) edges chosen uniformly at random without replacement from the graph, which we pick 
in $\Ot(n/\delta^2)$ space using Proposition~\ref{prop:ell-0-samplers}. Define the graph $H_s$ with the set of vertices $V(H_s) := D$ and the set of edges: 
\begin{align*}
    E(H_s) := &\{(u,v) \in E_s \mid \text{data structure in Lemma~\ref{lem:detect-potential-friend}} \text{returns \Yes to the query $(u,v)$}\}.
\end{align*}
We can construct the graph $H_s$ explicitly at the end of the stream. We note that an important property of $E(H_s)$ is that any edge in $E(H)$ which is sampled in $E_s$ also appears in $E(H_s)$, i.e., $E(H) \cap E_s \subseteq E(H_s)$; this 
is by the guarantee of  Lemma~\ref{lem:detect-potential-friend}. 

We first prove that the connected components of $H$ has similar properties to almost-cliques in the extended HSS-decomposition. 
The following lemma is similar to Lemma \ref{lem:HSS-decomposition}.

\begin{lemma} \label{lem:app-HSS-decomposition}
    With high probability for any connected component $C_i$ in $H$: 
    \begin{enumerate}
   	\item $\card{C_i} \leq (1+6\delta)\Delta$;
	\item for any vertex $v$ in $C_i$, we have at most $6\delta\Delta$ non-neighbors (in $G$) inside of $C_i$.
   \end{enumerate}
\end{lemma}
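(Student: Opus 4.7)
The plan is to observe that $H$ is essentially a subgraph of the graph $H_{2\delta}$ that arises in the (original) HSS-decomposition of $G$ with parameter $2\delta$, and then pull back the conclusions of Lemma~\ref{lem:HSS-decomposition} through this subgraph relationship. Throughout, I would condition on the high-probability correctness events of the data structures produced by Lemmas~\ref{lem:detect-potential-friend} and~\ref{lem:detect-eps-dense}; the ``with high probability'' in the lemma statement reduces to the union of these two events.

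First I would observe that every vertex of $H$ lies in $D$, which by Lemma~\ref{lem:detect-eps-dense} consists only of vertices that are at least $2\delta$-dense. Hence $V(H) \subseteq V^{\textnormal{dense}}_{2\delta}$. Next, for an edge $(u,v) \in E(H)$, the data structure of Lemma~\ref{lem:detect-potential-friend} returned Yes on the query $(u,v)$; by the contrapositive of that lemma's No-guarantee, this forces $u$ and $v$ to be $2\delta$-potential friends, i.e., $|N(u) \cap N(v)| \ge (1 - 2\delta)\Delta$. Since edges of $H$ are drawn from $E(G)$, it follows that every such edge is a $2\delta$-friend edge as in Definition~\ref{def:eps-friend}, so $E(H) \subseteq F_{2\delta}$. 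Together these two containments say that $H$ is a (vertex- and edge-) subgraph of the network $H_{2\delta}$ used to build the HSS-decomposition with parameter $2\delta$.

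Now the two claims of the lemma will follow almost immediately. Any connected component $C_i$ of $H$ must lie entirely inside a single connected component $C'_i$ of $H_{2\delta}$ (a connected subgraph of a graph is contained in one of its components). Applying Lemma~\ref{lem:HSS-decomposition} with parameter $\epsilon = 2\delta$ gives $|C'_i| \le (1 + 6\delta)\Delta$ and that every vertex $v \in C'_i$ has at most $6\delta\Delta$ non-neighbors in $G$ inside $C'_i$. Since $C_i \subseteq C'_i$, the first property transfers directly to $|C_i| \le (1+6\delta)\Delta$, and the non-neighbors of any $v \in C_i$ inside $C_i$ form a subset of the non-neighbors of $v$ inside $C'_i$, giving the desired $6\delta\Delta$ bound.

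There is essentially no hard step here; the only thing to be careful about is that both implications we need go in the ``safe direction'' of each data structure (Yes $\Rightarrow$ $2\delta$-potential friend, and $v \in D \Rightarrow v$ is $2\delta$-dense), rather than the direction where the answer can be arbitrary. Both are built into Lemmas~\ref{lem:detect-potential-friend} and~\ref{lem:detect-eps-dense}, so conditioning on their correctness events is enough to conclude.
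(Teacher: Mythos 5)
Your proof is correct and follows essentially the same route as the paper: both observe that (conditioned on the correctness events of Lemmas~\ref{lem:detect-potential-friend} and~\ref{lem:detect-eps-dense}) the graph $H$ is a subgraph of the HSS graph $H_{2\delta}$, so its connected components sit inside components $C_i'$ of $H_{2\delta}$, and the two bounds from Lemma~\ref{lem:HSS-decomposition} (with parameter $2\delta$) then transfer to any $C_i\subseteq C_i'$. Your write-up is a bit more explicit than the paper's about the "safe direction" of each data structure guarantee, but the substance is identical.
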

\begin{proof}
    Let $H_{2\delta}$ be the subgraph of $G$ that only contains $(2\delta)$-dense vertices and $(2\delta)$-friend edges. As we conditioned on the events in Lemma \ref{lem:detect-potential-friend} and Lemma \ref{lem:detect-eps-dense}, by definition, 
    $H \subseteq H_{2\delta}$. As such, any connected component $C_i$ in $H$ is a subset of a connected component $C'_i$ in $H_{2\delta}$. By Lemma \ref{lem:HSS-decomposition}, $C'_i$ has at most $(1+6\delta)$ vertices, and for any vertex $v \in C'_i$, there are at most $6\delta\Delta$ non-neighbors (in $G$) inside of $C'_i$. These two properties are also held by $C_i$ since $C_i$ is a subset of $C'_i$. 
\end{proof}
\noindent
The following lemma is similar to Lemma \ref{lem:extended-HSS-decomposition} 

\begin{lemma} \label{lem:appext-HSS-decomposition}
    With high probability, any connected component $C_i$ in $H$ which has a $(\delta/4)$-dense vertex has at least $(1-\delta/4)\Delta$ vertices.
\end{lemma}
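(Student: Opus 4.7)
The plan is to condition throughout on the high-probability events of Lemma~\ref{lem:detect-potential-friend} and Lemma~\ref{lem:detect-eps-dense} being satisfied, after which the claim will follow deterministically. Fix a connected component $C_i$ of $H$ containing a $(\delta/4)$-dense vertex $v$, and let $S$ denote the set of $(\delta/4)$-friend neighbors of $v$ in $G$. By Definition~\ref{def:eps-dense}, $\card{S} \geq (1-\delta/4)\Delta$. The goal is to show that $\set{v} \cup S$ lies in the same connected component of $H$ as $v$; since all of $S$ will be shown to be directly adjacent to $v$ in $H$, this yields $\card{C_i} \geq 1 + (1-\delta/4)\Delta$, proving the lemma.

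First I would verify that every $u \in S$ belongs to $V(H) = D$. For this, by Lemma~\ref{lem:detect-eps-dense} it suffices to prove that every $u \in S$ is $(\delta/2)$-dense. The argument mirrors the clique-growing step in the proof of Lemma~\ref{lem:extended-HSS-decomposition}: since both $S \subseteq N(v)$ and $N(u) \cap N(v) \subseteq N(v)$ have cardinality at least $(1-\delta/4)\Delta$ and are contained in a set of size at most $\Delta$, by pigeonhole
\[
\card{N(u) \cap S} \;\geq\; 2(1-\delta/4)\Delta - \Delta \;=\; (1-\delta/2)\Delta.
\]
Applying the same pigeonhole argument to any pair $u, u' \in N(u) \cap S$ (both share $\geq (1-\delta/4)\Delta$ common neighbors with $v$) gives $\card{N(u) \cap N(u')} \geq (1-\delta/2)\Delta$, so each such $(u,u')$ is a $(\delta/2)$-friend edge. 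Hence $u$ has at least $(1-\delta/2)\Delta$ $(\delta/2)$-friend neighbors, so $u$ is $(\delta/2)$-dense and therefore $u \in D$ by Lemma~\ref{lem:detect-eps-dense}. Similarly $v \in D$ because $(\delta/4)$-dense implies $(\delta/2)$-dense.

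Second I would verify that $(v,u) \in E(H)$ for every $u \in S$. The edge $(v,u)$ lies in $E(G)$ since $u$ is a friend (in particular a neighbor) of $v$. Moreover, $\card{N(u) \cap N(v)} \geq (1-\delta/4)\Delta \geq (1-\delta)\Delta$, so $(u,v)$ is a $\delta$-potential friend, and by the first guarantee of Lemma~\ref{lem:detect-potential-friend} the data structure returns \Yes on the query $(u,v)$. By the definition of $E(H)$, this yields $(v,u) \in E(H)$.

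Combining the two steps, every vertex of $S$ is a neighbor of $v$ in $H$, so $S \cup \set{v}$ lies entirely within the connected component $C_i$. This gives
\[
\card{C_i} \;\geq\; \card{S} + 1 \;\geq\; (1-\delta/4)\Delta,
\]
as desired. The only place where randomness enters is in the conditioning on Lemmas~\ref{lem:detect-potential-friend} and~\ref{lem:detect-eps-dense}, both of which hold with high probability, so the whole statement holds with high probability. The main (though still mild) obstacle is the density-boosting argument showing that the friends of a $(\delta/4)$-dense vertex are themselves $(\delta/2)$-dense and thus survive into $D$; this is what forces us to work with the two different thresholds $\delta/4$ and $\delta/2$ and is the reason the statement is phrased for $(\delta/4)$-dense vertices rather than for arbitrary $\delta$-dense vertices.
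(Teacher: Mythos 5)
Your proof is correct and takes essentially the same route as the paper: the paper's proof simply cites ``the proof of Lemma~\ref{lem:extended-HSS-decomposition}'' for the fact that a $(\delta/4)$-dense vertex $v$ has at least $(1-\delta/4)\Delta$ neighbors that are $(\delta/2)$-dense and $(\delta/2)$-friend with $v$, whereas you unpack that pigeonhole argument explicitly; the remaining steps (these vertices survive into $D$ by Lemma~\ref{lem:detect-eps-dense}, and the edges to $v$ are $\delta$-potential friends so survive into $E(H)$ by Lemma~\ref{lem:detect-potential-friend}) are identical. One cosmetic slip: ``any pair $u, u' \in N(u) \cap S$'' should read ``any $u' \in N(u) \cap S$'' (the vertex $u$ is not in $N(u)$), but the intended argument is clear and the conclusion is unaffected.
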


\begin{proof}
    By the proof of Lemma \ref{lem:extended-HSS-decomposition}, a $(\delta/4)$-dense vertex $v$ has at least $(1-\delta/4)\Delta$ neighbors which are $(\delta/2)$-dense and $(\delta/2)$-friend with $v$. As we conditioned on the events
    in Lemma \ref{lem:detect-potential-friend} and Lemma \ref{lem:detect-eps-dense}, these vertices as well as their edges to $v$ appear in the graph $H$. This implies that  
    the connected component $C_i$ which contains $v$ has at least $(1-\delta/4)\Delta$ vertices. 
\end{proof}

By Lemma \ref{lem:appext-HSS-decomposition}, we can see that any connected component of $H$ which has a $(\delta/4)$-dense vertex has the same properties as the components of the extended HSS-decomposition in Lemma \ref{lem:extended-HSS-decomposition} (for some $\eps = \Theta(\delta)$). Recall that however our streaming algorithm computes the graph $H_s$ not $H$. In the following, we prove that this is not a problem as the connected components of $H_s$ and $H$ that contain 
a $(\delta/4)$-dense vertex are the same with high probability. 

\begin{lemma} \label{lem:Hs-connect}
    With high probability, any connected component $C_i$ in $H$ which has a $(\delta/4)$-dense vertex is also connected in $H_s$.
\end{lemma}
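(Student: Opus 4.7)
The plan is to apply Lemma~\ref{lem:cut-connect} to $H[C_i]$. Since $E_s$ is a uniform sample of $\ell = \Theta(n\log n/\delta^2)$ edges from $E(G)$ and $|E(G)| \le n\Delta/2$, each edge of $H$ survives into $H_s$ with probability at least $p_0 := \ell/|E(G)| = \Omega(\log n/(\delta^2\Delta))$. By Lemma~\ref{lem:cut-connect} (with $b=3$), it therefore suffices to show that every such $H[C_i]$ has minimum cut at least $c\log n/p_0 = \Omega(\delta^2\Delta)$; a union bound over the at most $n$ components then yields the high-probability statement.

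The structural input comes from Lemmas~\ref{lem:app-HSS-decomposition} and~\ref{lem:appext-HSS-decomposition}: $(1-\delta/4)\Delta \le |C_i| \le (1+6\delta)\Delta$, and each vertex of $C_i$ has at most $6\delta\Delta$ non-neighbors inside $C_i$ in $G$. Fix the $(\delta/4)$-dense vertex $v^* \in C_i$ and let $S_{v^*}$ denote its set of $(\delta/4)$-friend neighbors. The argument inside the proof of Lemma~\ref{lem:appext-HSS-decomposition} gives $|S_{v^*}| \ge (1-\delta/4)\Delta$, $S_{v^*} \subseteq D$, and $K := \{v^*\}\cup S_{v^*} \subseteq C_i$. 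For any $u_1, u_2 \in S_{v^*}$, inclusion-exclusion inside $N(v^*)$ yields $|N(u_1)\cap N(u_2)| \ge (1-\delta/2)\Delta$, so every $G$-edge inside $K$ is a $\delta$-potential friend and, by Lemma~\ref{lem:detect-potential-friend}, lies in $H$. Thus $H[K]$ contains every $G$-edge of $K$; each vertex of $K$ has $H$-degree $\ge (1-\delta/2)\Delta$ inside $K$, and an elementary counting argument gives $H[K]$ minimum cut $\Omega(\Delta)$.

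For the min cut of $H[C_i]$ in general, consider any cut $(A,B)$ with $|A| \le |B|$. If $v^* \in A$, the $H$-edges from $v^*$ into $B$ number at least $|K| - |A\cap K| \ge (1-\delta/4)\Delta - |A| = \Omega(\Delta)$ (using $|A|\le |C_i|/2 \le (1/2 + O(\delta))\Delta$). If $v^* \in B$ but $A\cap S_{v^*}\neq \emptyset$, then any $u\in A\cap S_{v^*}$ contributes at least $(1-\delta/2)\Delta - |A\cap K| = \Omega(\delta\Delta)$ $H$-edges across the cut, still comfortably above $\Omega(\delta^2\Delta)$. The remaining case---$v^* \in B$ and $A\cap K = \emptyset$, so $A\subseteq C_i\setminus K$ with $|A| = O(\delta\Delta)$---is the main obstacle: here one must show that each peripheral $w\in A$ still sends $\Omega(\delta^2\Delta)$ $H$-edges across the cut. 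I expect this to follow by exploiting that $w\in D$ is $2\delta$-dense with at least $(1-O(\delta))\Delta$ $G$-neighbors inside $K$, combined with the strong common-neighborhood structure of $K$ (every $u\in S_{v^*}$ shares at least $(1-\delta/4)\Delta$ of its neighborhood with $v^*$) to force enough of these edges to clear the $\delta$-potential-friend threshold of the data structure. Together with Lemma~\ref{lem:cut-connect} applied to $H[C_i]$ and a union bound over all components, this completes the proof.
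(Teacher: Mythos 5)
Your proposal diverges from the paper's proof in two ways, and the central one is the gap you explicitly acknowledge in the final case; that gap is real, and it is what makes the write-up incomplete.

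The paper's argument is considerably shorter and does not distinguish the $(\delta/4)$-dense vertex $v^*$, its friend set $S_{v^*}$, or a ``core'' $K$ from the rest of $C_i$. It simply observes that by Lemma~\ref{lem:appext-HSS-decomposition} we have $|C_i| \ge (1-\delta/4)\Delta$, and by Lemma~\ref{lem:app-HSS-decomposition} every $v\in C_i$ has at most $6\delta\Delta$ non-neighbors inside $C_i$, hence at least $(1-7\delta)\Delta$ neighbors inside $C_i$. For any cut $(V_1,V_2)$ of $C_i$ with $|V_1|\le |V_2|$, each $v\in V_1$ then contributes at least $(1-7\delta)\Delta-|V_1|$ crossing edges, and since $|V_1|\le (1+6\delta)\Delta/2$ this quantity is positive; minimizing over $|V_1|\ge 1$ gives a minimum cut of at least $(1-7\delta)\Delta-1 \ge 0.5\Delta$. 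That is a single uniform bound over all vertices and all cuts -- no case split on where $v^*$ or $S_{v^*}$ land relative to the cut. You re-derived a dense core $K$ and argue for edges inside $K$ being $\delta$-potential friends, but then you must handle cuts that separate a set $A \subseteq C_i\setminus K$ from the rest, and there you write ``I expect this to follow by\ldots'' without carrying it out. That is exactly the case your approach cannot dodge, so the proof is not complete as written.

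A second, smaller point: in your first paragraph you treat $E_s$ as if each edge of $G$ survives into $E_s$ independently with probability $p_0 = \ell/|E(G)|$. But $E_s$ is a sample of a fixed size $\ell$ without replacement, so edge inclusions are not independent, and Lemma~\ref{lem:cut-connect} is stated for independent sampling. The paper addresses this with a short coupling step: it introduces an auxiliary graph $H'_s$ obtained by sampling each edge of $H$ independently with probability $\Theta(\log n/\Delta)$, applies Lemma~\ref{lem:cut-connect} to $H'_s$, and then couples $H'_s$ with $H_s$ via the event that fewer than $\ell$ edges are sampled. You would need to include an analogous reduction before invoking Lemma~\ref{lem:cut-connect}.
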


\begin{proof} 
    Fix any connected component $C_i$ of $H$ which contains a $(\delta/4)$-dense vertex. In the following, we condition on the events in Lemmas~\ref{lem:app-HSS-decomposition} and~\ref{lem:appext-HSS-decomposition}. 
    By Lemma \ref{lem:appext-HSS-decomposition}, $\card{C_i} \ge (1-\delta/4)\Delta \geq (1-\delta)\Delta$.  Moreover, by Lemma~\ref{lem:app-HSS-decomposition}, for any vertex $v\in C_i$, $v$ has at most $6\delta\Delta$ non-neighbors inside of $C_i$. So any 
    vertex $v\in C_i$ has at least $(1-7\delta)\Delta$ neighbors inside of $C_i$. 
    
    Fix any cut $(V_1,V_2)$ of $C_i$ with $\card{V_1} \le \card{V_2}$. We first have $\card{V_1} \le (1+6\delta)\Delta/2$ by the upper bound on size of $C_i$ in Lemma~\ref{lem:app-HSS-decomposition}. Moreover, any vertex $v \in V_1$ has at least 	
    $(1-7\delta)\Delta-\card{V_1}$ neighbors in $V_2$ by the lower bound on the number of neighbors of $v$ in $C_i$, which means the size of the cut $(V_1,V_2)$ is at least $\card{V_1}((1-7\delta)\Delta-\card{V_1})$. By the choice of $\delta$, we have 
    $(1-7\delta)\Delta-\card{V_1}>0$, so the cut size is at least $(1-7\delta)\Delta-1 \ge 0.5\Delta$ with high probability. 
    
    Let $c$ be the constant in Lemma \ref{lem:cut-connect} when $b$ is 3. Define the subgraph $H'_s$ as the subgraph of $H$ by picking each edge in $H$ with probability $\frac{2 c n \log n}{\Delta}$ in a set $E'_s$ (this is slightly different from $H_s$ as in $H_s$ we sample a fixed number of edges). By the above argument, since the minimum cut in $H$ is at least $0.5\Delta$, by Lemma~\ref{lem:cut-connect}, the vertices in $C_i$ are still connected in $H'_s$ as well. 
    On the other hand, recall that by the construction of $H_s$, $E_s$ is a collection of $\frac{10 c n \log n}{\Delta}$ randomly sampled edges without replacement. We call $H_s$ (resp. $H'_s$) is good if $(i)$ $\card{E_s}$ (resp. $\card{E'_s}$) is at most $\frac{10 c n\log n}{\Delta}$, and $(ii)$ any connected component $C_i$ in $H$ which has a $(\delta/4)$-dense vertex is also connected in $H_s$ (resp. $H'_s$). By a straightforward coupling argument, 
    the probability of $H_s$ is good is at least the probability of $H'_s$ is good. By above argument, any connected component $C_i$ in $H$ which has a $(\delta/4)$-dense vertex is also connected in $H'_s$. By Chernoff bound, the number of edges in $E'_s$ is at most $\frac{10 c n \log n}{\Delta}$ with high probability. So $H'_s$ is good with high probability, which means $H_s$ is also good with high probability. 
\end{proof}

Let $C_1$, $C_2$, $\dots$, $C_k$ be the connected components of $H_s$ which have at least $(1-\delta)\Delta$ vertices, and $\Vsparse_{\star}$ be $V \setminus C_1 \cup C_2 \cup \dots \cup C_k$. The following lemma implies Lemma \ref{lem:find-decomposition}

\begin{lemma} \label{lem:stream-decomposition}
   $\Vsparse_{\star},C_1,C_2,\dots,C_k$ is  with high probability a decomposition of $V$ such that:
    \begin{enumerate}[leftmargin=15pt]
        \item\label{app-HSS-p1} $\Vsparse_\star \subseteq \Vsparse_{\delta/4}$, i.e., any vertex in $\Vsparse_\star$ is $(\delta/4)$-sparse. 
        \item\label{app-HSS-p2} For any $i \in [k]$, $C_i$ has the following properties: 
            \begin{enumerate}
                \item\label{app-HSS-p2a}\label{p1} $(1-\delta)\Delta \leq \card{C_i} \leq (1+6\delta)\Delta$; 
                \item\label{app-HSS-p2b} any $v \in C_i$ has at most $7\delta\Delta$ neighbors outside of $C_i$; and
                \item\label{app-HSS-p2c} any $v \in C_i$ has at most $6\delta\Delta$ non-neighbors inside of $C_i$. 
            \end{enumerate}
    \end{enumerate}
\end{lemma}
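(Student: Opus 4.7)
The plan is to condition on the high-probability events of Lemmas~\ref{lem:detect-potential-friend}, \ref{lem:detect-eps-dense}, \ref{lem:app-HSS-decomposition}, \ref{lem:appext-HSS-decomposition}, and \ref{lem:Hs-connect}, and then to deduce every property by transferring information from the ``ideal'' graph $H$ to the graph $H_s$ that the streaming algorithm actually computes. The linchpin observation is that $V(H_s) = V(H) = D$ and $E(H_s) \subseteq E(H)$, so every connected component $C_i$ of $H_s$ is contained in a unique connected component $C'_i$ of $H$. All three properties of the almost-cliques will follow from this containment together with the structural bounds on $C'_i$ from Lemma~\ref{lem:app-HSS-decomposition}.

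For Property~(\ref{app-HSS-p2}), the lower bound on $|C_i|$ is by construction ($C_i$ is kept only if $|C_i|\ge (1-\delta)\Delta$). The upper bound follows from $|C_i| \le |C'_i| \le (1+6\delta)\Delta$ by Lemma~\ref{lem:app-HSS-decomposition}. Property~(\ref{app-HSS-p2c}) is immediate too: for any $v \in C_i \subseteq C'_i$, the non-neighbors of $v$ inside $C_i$ form a subset of its non-neighbors inside $C'_i$, which Lemma~\ref{lem:app-HSS-decomposition} bounds by $6\delta\Delta$. For Property~(\ref{app-HSS-p2b}), combine the previous two: $v$ has at least $|C_i|-1-6\delta\Delta \ge (1-\delta)\Delta - 1 - 6\delta\Delta$ neighbors inside $C_i$, and since $v$ has degree at most $\Delta$ in $G$, at most $7\delta\Delta + 1 \le 7\delta\Delta$ (for $\Delta$ large enough) of its neighbors lie outside $C_i$.

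Property~(\ref{app-HSS-p1}) is the contrapositive statement that every $(\delta/4)$-dense vertex $v$ lies in some $C_i$. Since $(\delta/4)$-denseness is stronger than $(\delta/2)$-denseness, Lemma~\ref{lem:detect-eps-dense} places $v$ in $D$, so $v$ is a vertex of both $H$ and $H_s$. By Lemma~\ref{lem:appext-HSS-decomposition}, the connected component $C'$ of $H$ containing $v$ satisfies $|C'| \ge (1-\delta/4)\Delta \ge (1-\delta)\Delta$, and by Lemma~\ref{lem:Hs-connect}, the vertices of $C'$ remain connected in $H_s$. Combined with $E(H_s)\subseteq E(H)$, this forces the $H_s$-component of $v$ to equal $C'$, so $v$ lands in a component of $H_s$ of size at least $(1-\delta)\Delta$ and therefore in some $C_i$. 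Hence any vertex of $\Vsparse_\star$ is not $(\delta/4)$-dense, i.e., it is $(\delta/4)$-sparse.

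The main obstacle is not really technical — the proof is essentially bookkeeping once all of the previous lemmas are in place — but one must be careful that the gap between the threshold $(1-\delta)\Delta$ used to define the $C_i$ and the threshold $(1-\delta/4)\Delta$ guaranteed by Lemma~\ref{lem:appext-HSS-decomposition} leaves enough slack that every $(\delta/4)$-dense vertex survives, and that the constants $6\delta\Delta,\,7\delta\Delta$ line up correctly in the neighbor/non-neighbor counts. A final union bound over the $O(1)$ high-probability events from the cited lemmas yields the whole lemma with high probability.
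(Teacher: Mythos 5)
Your proposal follows essentially the same route as the paper: condition on the antecedent lemmas, observe $V(H_s)=V(H)=D$ with $E(H_s)\subseteq E(H)$ so each $H_s$-component sits inside an $H$-component, read off the size bound and the non-neighbor bound from Lemma~\ref{lem:app-HSS-decomposition}, derive the outside-neighbor bound from the two, and handle Property~(\ref{app-HSS-p1}) via Lemmas~\ref{lem:appext-HSS-decomposition} and~\ref{lem:Hs-connect}. The one slip is the claim ``$7\delta\Delta + 1 \le 7\delta\Delta$ for $\Delta$ large enough,'' which is backwards — but the paper's own proof of Lemma~\ref{lem:extended-HSS-decomposition} glosses over the same $+1$, and the slack in $\delta$ makes it harmless.
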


\begin{proof}
    Since each connected component of $H_s$ is a subset of a connected component of $H$, property \ref{app-HSS-p2} follows from Lemma \ref{lem:app-HSS-decomposition} and the fact that $\card{C_i}\ge (1-\delta)\Delta$. On the other hand, by Lemma \ref{lem:appext-HSS-decomposition} and Lemma \ref{lem:Hs-connect}, any connected component of $H_s$ which contains a $(\delta/4)$-dense vertex has at least $(1-\delta)\Delta$ vertices with high probability. This in turn implies that no $(\delta/4)$-dense vertex is in $\Vsparse_\star$ with high probability. 
\end{proof}

With the decomposition, we are now ready to conclude the proof of Theorem~\ref{thm:streaming}. 

\begin{proof} [Proof of Theorem \ref{thm:streaming}]
   By spending $\Ot(n/\eps^2)$ space throughout the stream, we obtain the decomposition given by Lemma \ref{lem:stream-decomposition}. In the following, we condition on this event and use this decomposition to color $G$ using only the sampled colors for vertices. The proof consists of  three phases, which correspond to the three phases in the proof of Theorem \ref{thm:color-sampling}.
    \paragraph{Coloring Sparse Vertices.} We use the process given by the proof of Lemma \ref{lem:sparse-color} to color the sparse vertices in $\Vsparse_{\star}$ given by Lemma \ref{lem:stream-decomposition}. We run $\GC$ algorithm with $O(\log n)$ rounds. In each round, each vertex $v$ in $\Vsparse_{\star}$ which have not been colored picks a color in $L_1(v)$. Then we check if the color is different from all the vertices in $N(v)$. If so, we color $v$ by the chosen color. So we only need to iterate over the edges in $\Gc$, which takes $O(n\log^2 n)$ time. Hence, this phase of the algorithm takes $O(n\log^3 n)$ time.
    \paragraph{Initial Coloring of Almost Cliques.} In this phase, for each almost-clique $C_i$ given by Lemma \ref{lem:stream-decomposition}, we find a colorful matching (as in Definition~\ref{def:colorful}) given by the the second batch of sampled colors $L_2$. We use the process given by the proof of Lemma \ref{lem:colorful}. For each color, if we can find a pair of vertices $u,v$ such that $(u,v)$ is not in $\Gc$, $u$ and $v$ are not in the colorful matching yet, and $L(u)$ and $L(v)$ both contain the same color, 
    then we add $(u,v)$ with this color to the colorful matching. Hence, this phase also takes $\Ot(n)$ time. 
    
    \paragraph{Final Coloring of Almost-Cliques.} In this phase, we color the remaining vertices inside almost-cliques. To color these vertices, we color the almost-cliques one by one. When coloring almost-clique $C_i$, we construct the {palette-graph} (as in 
    Definition~\ref{def:palette-graph}) between the vertices and the colors, and find a maximum matching of this graph. By Lemma \ref{lem:palette-graph-matching} there is a matching that pairs each vertex to an available color. To construct the \emph{palette-graph}, we need to connect each vertex $v$ with all colors in $L_3(v)$. Then we iterate over the edges of $\Gc$, delete the edges between a vertex and an unavailable color. The construction of the palette-graph for one almost-clique
     takes $O(\Delta \log^2 n)$ time. Finding the matching also require $O(\Delta^{3/2})$ time by using the standard Hopcraft-Karp algorithm~\cite{HopcroftK71} for bipartite matching. 
     There are at most $O(n/\Delta)$ near-cliques in $G$, so this phase takes at most $\Ot(n \sqrt{\Delta})$ time with high probability. 
\end{proof}

\subsection{A Sublinear Time Algorithm for $(\Delta+1)$ Coloring}\label{sec:sub-time}

We now show another application of our palette-sparsification theorem to design sublinear algorithms. Consider the following standard query model for sublinear time algorithms on general graphs (see, e.g., Chapter 10 of Goldreich's 
book~\cite{Goldreich17}): The vertex set of the graph is $V := [n]$ and the algorithm can make the following queries: $(i)$ Degree queries: given $v \in V$, outputs degree $d(v)$ of $v$, $(ii)$ Neighbor queries: given $v \in V$ and $i \leq d(v)$, outputs the $i$-th neighbor
of $v$ (the ordering of neighbors are arbitrary), and $(iii)$ Pair queries: given $u,v \in V$, outputs whether the edge $(u,v)$ is in $E$ or not. We give a sublinear time algorithm (in size of the graph) 
for finding a $(\Delta+1)$ coloring in this query model. 

\begin{theorem}\label{thm:sub-time}
    There exists an algorithm that given a query access to a graph $G(V,E)$ with maximum degree $\Delta$, can find a $(\Delta+1)$ coloring of $G$ with high probability in $\Ot(n\sqrt{n})$ time and queries. 
\end{theorem}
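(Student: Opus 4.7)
I would implement the meta-algorithm \CMA in the query model, dispatching on the magnitude of $\Delta$. When $\Delta \leq \sqrt{n}$, I simply use degree and neighbor queries to read the entire graph in $O(n\Delta) = O(n\sqrt{n})$ time and then run the textbook sequential greedy $(\Delta+1)$-coloring, which is already within the budget. All remaining work concerns the regime $\Delta > \sqrt{n}$, where I apply \CMA backed by Theorem~\ref{thm:color-sampling}.

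\textbf{Constructing the conflict graph.} In the large-$\Delta$ regime, I draw each list $L(v) \subseteq [\Delta+1]$ and form the classes $\CC_c$ locally, with no queries. By Lemma~\ref{lem:meta-algorithm}, with high probability $\card{\CC_c} = O(n\log n/\Delta)$ for every color $c$. To discover the edges of $\Gc$ I issue one pair query for every unordered pair contained in some common class $\CC_c$; the total number of queries is
\[
\sum_{c=1}^{\Delta+1} \binom{\card{\CC_c}}{2} \;=\; O\Paren{(\Delta+1) \cdot \frac{n^2\log^2 n}{\Delta^2}} \;=\; \Ot\Paren{\frac{n^2}{\Delta}} \;=\; \Ot(n\sqrt{n}),
\]
using $\Delta > \sqrt{n}$. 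These queries depend only on the random lists $L(\cdot)$ and are therefore issued non-adaptively.

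\textbf{Coloring $\Gc$ via the three-phase procedure.} The existence of a valid list-coloring of $\Gc$ with lists $L(\cdot)$ is guaranteed by Theorem~\ref{thm:color-sampling}, but as in Section~\ref{sec:streaming} I need the constructive procedure behind that proof to avoid exponential time: first color sparse vertices greedily from $L_1(\cdot)$, then build a colorful matching in every almost-clique from $L_2(\cdot)$, and finally close out each almost-clique via a Hopcroft--Karp matching between uncolored vertices and their available sampled colors in $L_3(\cdot)$. Each of the first two phases touches only $\Ot(n)$ edges of $\Gc$, and the last phase costs $\Ot(n\sqrt{\Delta}) = \Ot(n\sqrt{n})$ in total, exactly as in the running-time analysis of Theorem~\ref{thm:streaming}. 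All three phases operate on the already-discovered conflict graph and require no further queries to $G$.

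\textbf{Computing the HSS-decomposition (main obstacle).} The ingredient that does not come for free is producing the extended HSS-decomposition on which the three-phase procedure is based. Mirroring Lemmas~\ref{lem:detect-potential-friend}--\ref{lem:stream-decomposition}, I would sample a set $S \subseteq V$ by including each vertex independently with probability $\Theta(\log n/(\eps^2\Delta))$ and fully query the neighborhood of every $v \in S$; this costs $\Ot(n/\Delta) \cdot \Delta = \Ot(n)$ queries. The number of common neighbors of a pair $(u,v)$ that lie in $S$ then approximates $\card{N(u)\cap N(v)}$ up to a $(1\pm\eps)$ factor, which is enough to classify every vertex as $\eps$-dense or $\eps$-sparse and to identify the almost-clique components via an edge-sampling connectivity argument analogous to Lemma~\ref{lem:Hs-connect} (the few random edges needed there are discovered using $\Ot(n)$ additional neighbor queries from a uniformly random vertex sample). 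The main technical hurdle is controlling error across the $\Theta(n^2)$ implicit friendship queries while keeping the total query budget $\Ot(n)$; as in the streaming proof, a union bound together with the fact that only an approximate dense/sparse classification is required suffices. Summing the three contributions---$\Ot(n)$ for the decomposition, $\Ot(n\sqrt{n})$ for building $\Gc$, and $\Ot(n\sqrt{n})$ for the coloring phases---yields the claimed $\Ot(n\sqrt{n})$ bound in both queries and time.
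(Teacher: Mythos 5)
Your proposal is correct and follows the same overall structure as the paper's proof: dispatch on $\Delta$ (read the whole graph and greedily color when $\Delta \leq \sqrt{n}$); otherwise implement \CMA by constructing $\Gc$ with pair queries at cost $\Ot(n^2/\Delta) = \Ot(n\sqrt n)$, computing an (approximate) extended HSS-decomposition in the query model, and running the three-phase coloring from the proof of Theorem~\ref{thm:streaming}.

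The one place you deviate from the paper is in how you pay for the decomposition. The paper (Lemmas~\ref{lem:time-potential-friend}--\ref{lem:time-eps-dense} and the construction of $H_s$) uses \emph{pair} queries only, at cost $\Ot(n^2/\Delta)$ for each step, so that the whole algorithm remains fully non-adaptive; your version uses degree + neighbor queries on a sampled vertex set $S$ to get all neighborhoods of $S$ at cost $\Ot(n)$, which is cheaper but costs a round of adaptivity. Since the $\Gc$ construction already dominates at $\Ot(n^2/\Delta)$, both versions land on the same $\Ot(n\sqrt n)$ bound.

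One detail you should be careful about if you pursue your cheaper decomposition: the paper's Lemma~\ref{lem:Hs-connect} proves $C_i$ stays connected in $H_s$ by invoking Lemma~\ref{lem:cut-connect}, which requires \emph{iid edge-slot sampling}. If you instead build $H_s$ from neighborhoods of a uniformly sampled vertex set (a union of stars in $H$), Lemma~\ref{lem:cut-connect} does not apply as stated, and you need a different (but easy) argument: w.h.p.\ every vertex of $C_i$ has at least one sampled $H$-neighbor inside $C_i$, and any two sampled centers in $C_i$ share $\Omega(\Delta)$ common $H$-neighbors and hence a common leaf, so the union of stars is connected. Similarly, classifying $\delta$-dense vertices from $N(u)\cap S$ (rather than from a separate iid edge sample as in Lemma~\ref{lem:time-eps-dense}) works, but only because any candidate dense vertex has degree $\Omega(\Delta)$ and therefore $|N(u)\cap S| = \Omega(\log n/\delta^2)$ samples, and because $v\notin N(u)\cap N(v)$ so the membership $v\in S$ is independent of the friend estimate for $(u,v)$; neither observation is needed in the paper's version, so your write-up should state them explicitly. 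As is, "a union bound ... suffices" glosses over this.
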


We prove Theorem~\ref{thm:sub-time} by combining two separate algorithms and picking the best of the two depending on the value of $\Delta$. One is the straightforward (deterministic) greedy algorithm that takes $O(n\Delta)$ time to find
a $(\Delta+1)$ coloring. This algorithm only uses neighbor queries. We use this algorithm when $\Delta \leq \sqrt{n}$. The other one is an implementation of $\CMA$ in the query model which takes
$\Ot(n^2/\Delta)$ time. This algorithm only uses pair queries. By using this algorithm when $\Delta \geq \sqrt{n}$, we achieve an $\Ot(n\sqrt{n})$ time algorithm for any graph (with potentially $\Omega(n^2)$ edges), proving Theorem~\ref{thm:sub-time}. 

To implement $\CMA$, we need to specify $(i)$ how to construct the conflict-graph, and $(ii)$ how to find a list-coloring in this conflict graph using the lists $L(\cdot)$. Throughout the proof, we condition on the high probability event 
in Lemma~\ref{lem:meta-algorithm}. The first part of the argument is quite easy as is shown below. 

\begin{claim}\label{clm:sub-time-construct}
    $\Gc(V,\Ec)$ can be constructed in $O(n^2 \cdot \log^2{n}/\Delta)$ queries and time.
\end{claim}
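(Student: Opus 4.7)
The plan is to directly enumerate all candidate edges of $\Ec$ by using pair queries, one color class at a time. After sampling $L(v)$ for each $v \in V$ (which requires no queries to the graph), we explicitly form the $\Delta+1$ color classes $\CC_1,\ldots,\CC_{\Delta+1}$ from the lists; this is done in $O(n\log n)$ time since each vertex appears in only $\Theta(\log n)$ classes. Recall from Definition of $\Ec$ that an edge $(u,v) \in E$ belongs to $\Ec$ if and only if there exists some color $c$ with $u,v \in \CC_c$. So it suffices to iterate over every color $c \in [\Delta+1]$ and over every pair $(u,v) \in \CC_c \times \CC_c$, issuing a pair query to determine whether $(u,v) \in E$, and adding the edge to $\Ec$ if the answer is positive.

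To bound the query/time cost, I would condition on the high-probability event in Lemma~\ref{lem:meta-algorithm} that $\card{\CC_c} = O(n \log n/\Delta)$ for every color $c$. Then the number of pair queries issued while processing color class $\CC_c$ is at most $\binom{\card{\CC_c}}{2} = O(n^2 \log^2 n/\Delta^2)$. Summing over all $\Delta+1$ colors yields a total of
\[
(\Delta+1) \cdot O\!\Paren{\frac{n^2 \log^2 n}{\Delta^2}} = O\!\Paren{\frac{n^2 \log^2 n}{\Delta}}
\]
pair queries, and the time overhead is within a constant factor of the number of queries (simply maintaining $\Ec$ in an adjacency-list representation and looking up color classes in $O(1)$ time per entry).

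There is no serious obstacle here; the only subtlety is that the same pair $(u,v)$ may be examined more than once if it shares several common sampled colors, but since each vertex only has $O(\log n)$ sampled colors this overhead is absorbed into the same bound (and can be further avoided with a simple hash table of already-queried pairs, if desired). Finally, because the sampling of $L(\cdot)$ is done before any interaction with the graph and the pair queries are chosen solely as a function of the sampled lists, the resulting construction is non-adaptive, which will be useful when Theorem~\ref{thm:sub-time} is assembled from this claim together with the list-coloring step in the next part.
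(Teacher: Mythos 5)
Your proof is correct and matches the paper's argument essentially verbatim: both query all pairs within each color class $\CC_c$ and invoke Lemma~\ref{lem:meta-algorithm} to bound $\card{\CC_c} = O(n\log n/\Delta)$, yielding $(\Delta+1)\cdot O(n^2\log^2 n/\Delta^2) = O(n^2\log^2 n/\Delta)$ queries. The extra remarks about repeated pairs and non-adaptivity are harmless additions but do not change the argument.
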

\begin{proof}
    As the vertices are known, we only need to construct the edges $\Ec$. In order to do this, we simply query all pairs between vertices inside each set $\CC_c$ for $c \in [\Delta+1]$. 
    By Lemma~\ref{lem:meta-algorithm}, size of each $\CC_c$ is $O(n\log{n}/\Delta)$ and so we need $O(n^2\log^{2}{n}/\Delta^2 \cdot (\Delta+1)) = O(n^2\log^2{n}/\Delta)$ queries. 
\end{proof}

Claim~\ref{clm:sub-time-construct} is already sufficient to obtain an $\Ot(n^2/\Delta)$ \emph{query} (but not time) algorithm: by Lemma~\ref{lem:meta-algorithm}, $\CMA$ outputs the correct answer by finding a list-coloring of $\Gc$
and accessing $\Gc$ does not require further queries to $G$. However, finding such a list-coloring problem in general is NP-hard and hence to find this coloring in sublinear time, we need
to design an algorithm which further queries the graph $G$ to obtain additional information for performing the coloring. The idea is as in the previous section by finding an (approximate) extended-HSS decomposition of $G$
using a small number of queries and then constructing the list-coloring of $\Gc$ using this additional information. 

\paragraph{List-Coloring $\Gc$ in Sublinear Time in the Query Model.}
Just as in the case of the streaming model, we first give an algorithm which finds an approximate extended-HSS decomposition of $G$ in $\tilde{O}(n^2/\Delta)$ time. We then use an arguments similar to Theorem \ref{thm:color-sampling} to color graph $G$ also 
in $\tilde{O}(n^2/\Delta)$ time. The parameter $\delta$ below is as defined in Section~\ref{sec:streaming}. 

To find an approximate extended-HSS decomposition, we use similar ideas as in Section~\ref{sec:streaming}. We first give a data structure to 
determine whether a pair of vertices are potential friends (see Section~\ref{sec:streaming} for the definition). 
We then use this data structure to find dense vertices. The arguments are similar to Lemma \ref{lem:detect-potential-friend} and Lemma \ref{lem:detect-eps-dense}, but we now implement them in sublinear time using only pair queries. 
Once we have these two lemmas, we define the graph $H$ on dense vertices as in Section~\ref{sec:streaming} and find a random subgraph $H_s$ of $H$ by random sampling edges, and prove that the subgraph has the same connected components
as $H$, which allows us to recover the decomposition.

The following two lemmas are analogies to Lemma \ref{lem:detect-potential-friend} and Lemma \ref{lem:detect-eps-dense}, but we achieve them in sublinear time in query model.

\begin{lemma} \label{lem:time-potential-friend}
    There exists a data structure that can be created in $\tilde{O}(n^2/\Delta)$ time using pair queries 
    for answering the following queries: Given a pair of vertices $(u,v) \in V \times V$,
    \begin{itemize}
        \item outputs \Yes if $u$ and $v$ are $\delta$-potential friend;
        \item outputs \No if $u$ and $v$ are not $2\delta$-potential friend; and
        \item the answer can be arbitrary otherwise. 
    \end{itemize}
    For each query pair $u, v$, the data structure can be used to recover the correct answer with high probability in $O(\log n)$ time. 
\end{lemma}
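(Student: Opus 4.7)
The plan is to adapt the construction in Lemma~\ref{lem:detect-potential-friend} to the query model, replacing the use of $\ell_0$-samplers with direct pair queries. I will sample a set $S \subseteq V$ by including each vertex independently with probability $p := 10\log n/(\delta^2 \Delta)$; by a Chernoff bound, $|S| = O(n\log n/(\delta^2\Delta))$ with high probability. I then make a pair query on every $(s,v) \in S \times V$, which costs $|S|\cdot n = \Ot(n^2/\Delta)$ queries in total, and from the answers build, for each $v \in V$, the set $N(v) \cap S$ stored in a hash table. Note that these queries are fixed in advance of seeing the graph, so the construction is non-adaptive.

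For a query $(u,v)$, the data structure returns \Yes iff $c_{u,v} := |N(u) \cap N(v) \cap S| \geq (1-1.5\delta)\Delta p$ and \No otherwise. To achieve the $O(\log n)$ query time, I rely on the observation that for every vertex $w$ the expected size of $N(w) \cap S$ is at most $\Delta p = O(\log n/\delta^2)$, so a Chernoff bound together with a union bound over $V$ gives $|N(w) \cap S| = O(\log n/\delta^2)$ for all $w$ simultaneously with high probability. A query can then be answered by iterating over the smaller of $N(u) \cap S$ and $N(v) \cap S$ and checking membership in the other via a hash lookup, taking $O(\log n/\delta^2) = \Ot(1)$ time.

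Correctness of the \Yes/\No output then follows by exactly the same Chernoff argument as in Lemma~\ref{lem:detect-potential-friend}: if $(u,v)$ are $\delta$-potential friends, then $\Ex\bracket{c_{u,v}} \geq (1-\delta)\Delta p = \Omega(\log n/\delta^2)$, so a Chernoff bound gives $c_{u,v} \geq (1-1.5\delta)\Delta p$ with probability at least $1-1/n^5$; symmetrically, if $(u,v)$ are not $2\delta$-potential friends, then $\Ex\bracket{c_{u,v}} < (1-2\delta)\Delta p$ and a Chernoff bound gives $c_{u,v} < (1-1.5\delta)\Delta p$ with the same probability. A union bound over all $n^2$ candidate pairs yields simultaneous correctness with high probability. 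The only real obstacle is ensuring that the two high-probability events---the uniform bound on $|N(w) \cap S|$ for every $w$ (needed for fast query time) and the correct threshold behavior of $c_{u,v}$ for every pair (needed for correctness)---hold simultaneously, but both follow from standard Chernoff concentration at the chosen sampling rate $p$ combined with a final union bound.
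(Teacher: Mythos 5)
Your proposal is correct and follows essentially the same approach as the paper: sample a vertex set $S$ at rate $p = 10\log n/(\delta^2\Delta)$, query all pairs in $S \times V$ to build per-vertex lists of neighbors inside $S$, and answer a query $(u,v)$ by thresholding the estimated common-neighbor count $c_{u,v}$. The only cosmetic difference is that you store $N(v)\cap S$ in a hash table and intersect by hash lookups, whereas the paper uses linked lists; both give $O(\log n)$ query time since $|N(w)\cap S| = O(\log n)$ for all $w$ with high probability.
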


\begin{proof}
    We use the same notation as in Lemma~\ref{lem:detect-potential-friend}. 
    To construct the data structure, pick a set of $S$ of vertices by choosing each vertex with probability $p=\frac{10\log n}{\delta^2\Delta}$. For each vertex $v\in S$, use the pair queries to find all its neighbors in $G$, and construct a linked list for each vertex $u\in V$ to store all the neighbors of $u$ in $S$. Now to answer the query of a pair of vertices $u$ and $v$, use the list of neighbors in the data structure to calculate $c_{u,v}:= \card{N(u) \cap N(v) \cap S}$, i.e., the number of common neighbors of $u$ and $v$ in $S$. Output \Yes if and only if $c_{u,v}\ge (1-1.5\delta)\Delta \cdot p$. 

    Using the proof of Lemma \ref{lem:detect-potential-friend}, the answer of each query is correct with high probability. So we only need to show the running time of the algorithm. By Chernoff bound, the number of vertices chosen in $S$ is at most $O(n \log n/\Delta)$ with high probability. We query all pairs of vertices which contains the vertices in $S$. It takes $O(n^2 \log n/\Delta)$ queries. We can construct the linked lists during the queries. So we only need $O(n^2 \log n/\Delta)$ time to construct the data structure with high probability. For each vertex, there are at most $\Delta$ neighbors in $G$ and each neighbor is chosen in $S$ with probability $p$. Hence, by Chernoff bound, there are at most $O(p\Delta) = O(\log n)$ neighbors in $S$ with high probability. So for each query, we need at most $O(\log n)$ time to calculate $c_{u,v}$ and answer the query with high probability. 
\end{proof}

From now on, we condition on the event in Lemma~\ref{lem:time-potential-friend} for all pair of vertices, which still happens with high probability by a union bound. 
We can now use the above data structure to find $\delta$-dense vertices approximately.
\begin{lemma}\label{lem:time-eps-dense}
    There exists an algorithm that uses $\Ot(n^2/\Delta)$ time and pair queries and with high probability outputs a set $D$ of vertices such that: 
    \begin{itemize}
        \item any vertex in $D$ is at least $2\delta$-dense;
        \item all $(\delta/2)$-dense vertices belong to $D$. 
    \end{itemize}
\end{lemma}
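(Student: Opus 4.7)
The plan is to mimic the streaming proof of Lemma~\ref{lem:detect-eps-dense} but replace the global edge sampling with \emph{local} neighbor sampling at each vertex, using the neighbor/degree queries available in the sublinear-time model together with the data structure of Lemma~\ref{lem:time-potential-friend}. First I would construct the data structure of Lemma~\ref{lem:time-potential-friend} in $\Ot(n^2/\Delta)$ time; this step is already the bottleneck, so everything else should cost at most $\Ot(n/\delta^2)$ additional time.

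Then, for each vertex $v \in V$, I would proceed as follows. Query $d(v)$. If $d(v) < (1-\delta)\Delta$, declare $v$ \emph{not} in $D$ (this is safe since such a vertex can have fewer than $(1-\delta/2)\Delta$ friend edges incident on it and therefore cannot be $(\delta/2)$-dense). Otherwise, sample $k = \Theta(\log n / \delta^2)$ neighbors $u_1,\ldots,u_k$ of $v$ uniformly and independently via neighbor queries, query the data structure of Lemma~\ref{lem:time-potential-friend} on each pair $(u_i,v)$ in $O(\log n)$ time, let $\hat{p}_v$ be the fraction of Yes answers, and include $v$ in $D$ iff $\hat{p}_v \ge 1 - 3\delta/4$. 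Each vertex costs $\Ot(1/\delta^2)$ time, so the total extra cost is $\Ot(n/\delta^2)$, giving the promised $\Ot(n^2/\Delta)$ running time in total.

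For correctness, condition on the (high-probability) event that the data structure answers correctly on every queried pair, which holds via a union bound from Lemma~\ref{lem:time-potential-friend}. Fix $v$ with $d(v) \ge (1-\delta)\Delta$ and let $p^*_v$ be the true probability that the data structure answers Yes on a uniformly random neighbor of $v$. If $v$ is $(\delta/2)$-dense, then at least $(1-\delta/2)\Delta$ of its $\le \Delta$ neighbors are $(\delta/2)$-friend edges of $v$, and each such edge is a $\delta$-potential friend pair, so the data structure returns Yes on it; hence $p^*_v \ge 1 - \delta/2$. If $v$ is not $(2\delta)$-dense, then fewer than $(1-2\delta)\Delta$ of its neighbors correspond to $(2\delta)$-friend edges, so the data structure returns No on at least $d(v) - (1-2\delta)\Delta$ sampled neighbors, giving $p^*_v \le (1-2\delta)\Delta/d(v) \le (1-2\delta)/(1-\delta) < 1-\delta$ (using $d(v) \ge (1-\delta)\Delta$). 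The gap between these two regimes is $\Theta(\delta)$, so a Chernoff bound with $k = \Theta(\log n/\delta^2)$ samples concentrates $\hat{p}_v$ around $p^*_v$ within an additive $\delta/8$ error with probability $1 - 1/n^{3}$; the threshold $1-3\delta/4$ then separates the two cases correctly. A union bound over all $n$ vertices yields the lemma.

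The main (mild) obstacle is simply choosing the right threshold and verifying the inequality $(1-2\delta)/(1-\delta) < 1 - 3\delta/4 < 1 - \delta/2$ that makes Chernoff applicable with a gap of $\Theta(\delta)$; everything else is a straightforward translation of the streaming argument to the query model, where sampling a random neighbor of a vertex replaces sampling a random edge of the graph.
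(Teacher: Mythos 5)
Your proof is correct, but it takes a genuinely different route from the paper's. The paper samples \emph{edge slots} globally: each of the $\binom{n}{2}$ pairs is placed in $E_s$ with probability $p = \frac{10\log n}{\delta^2\Delta}$, the sampled slots are probed with pair queries, the data structure of Lemma~\ref{lem:time-potential-friend} filters $E_s$, and a vertex $v$ enters $D$ if at least $(1-1.5\delta)\Delta p$ surviving edges are incident on it. Your version instead samples \emph{locally}: for each $v$ with $d(v)\ge(1-\delta)\Delta$ it draws $\Theta(\log n/\delta^2)$ random neighbors via neighbor queries and estimates the fraction that pass the potential-friend test. Both approaches yield the required separation between $(\delta/2)$-dense and non-$(2\delta)$-dense vertices with high probability; the paper's thresholding on expected counts $(1-1.5\delta)\Delta p$ is analogous to your thresholding on the fraction $1-3\delta/4$, and your inequality checks ($p^*_v\ge 1-\delta/2$ for $(\delta/2)$-dense vs.\ $p^*_v\le(1-2\delta)/(1-\delta)$ for non-$(2\delta)$-dense) are sound. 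The one real deviation to flag: the lemma, and the surrounding text (``This algorithm only uses pair queries''), make the $\CMA$-based sub-algorithm use pair queries exclusively, which the paper's edge-slot-sampling construction respects; your proof introduces degree and neighbor queries, so it does not literally produce an algorithm that ``uses pair queries'' in the sense of the lemma statement. This is harmless for the final $\Ot(n\sqrt n)$ bound in Theorem~\ref{thm:sub-time}, since the query model supports all three query types, but you should either convert the neighbor sampling into pair-query sampling (as the paper does, by sampling edge slots and probing each with a pair query) or explicitly note that you are relaxing the ``pair queries only'' property.
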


\begin{proof}
	The proof is again similar to the proof of Lemma~\ref{lem:detect-eps-dense}. We add each edge in the graph to a set $E_s$ with probability $p=\frac{10\log n}{\delta^2 \Delta}$ independently. To construct the set $E_s$, we simply need
	to sample every edge-slot in the graph with probability $p$ and then query the sampled edge-slot. With high probability, this can be done in $O(n^2 \cdot p) = O(n^2\log n/ \Delta)$ pair queries (to sample this set, we can sample random variables by
	geometric distribution with parameter $p$ to determine the next pair to query). 

    For each edge $(u,v) \in E_s$, we use the data structure in Lemma~\ref{lem:time-potential-friend} to determine if they are potential friends. We discard $(u,v)$ from $E_s$ if the answer of the query is \No. We can do it in $O(n^2 \log^2 n /\Delta)$ time with high probability since each query takes $O(\log n)$ time (as we conditioned on the event in Lemma~\ref{lem:time-potential-friend}). Add a vertex $v$ to $S$ if there are at least $(1-1.5 \delta)\Delta p$ edges in $E_s$ which is incident on $v$. By the same argument as in the proof of Lemma \ref{lem:detect-eps-dense},  the output $S$ satisfies the requirements in the lemma with high probability.
\end{proof}

For the rest of the proof, we further condition on the event in Lemma~\ref{lem:time-eps-dense}. Recall the definition of the graph $H$ from Section~\ref{sec:streaming}. 
We can now construct a random subgraph $H_s$ of $H$. Let $E_s$ be the edge set by choosing each edge in the graph with probability $p = \frac{4c\log n}{\Delta}$, where $c$ is the constant $c(b)$ in Lemma \ref{lem:Hs-connect} with $b=3$. 
Let $H_s$ be the graph that only contains vertices in $D$ (the output set of Lemma \ref{lem:time-eps-dense}), with the edge set: 
\begin{align*}
E(H_s) = & \{(u,v)\in E_s \mid \text{data structure in Lemma \ref{lem:time-potential-friend}} \text{returns \Yes to query $(u,v)$}\}.
\end{align*}
To construct the graph $H_s$, we need $O(n^2p)=O(n^2\log n/ \Delta)$ time with high probability to construct $E_s$, and $O(n^2\log^2 n/\Delta)$ time to query each edge in $E_s$ using the data structure in Lemma \ref{lem:time-potential-friend}.

Let $C_1$, $C_2$, $\dots$, $C_k$ be the connected components of $H_s$ which have at least $(1-\delta)\Delta$ vertices, and $\Vsparse_{\star}$ be $V \setminus C_1 \cup C_2 \cup \dots \cup C_k$. The following lemma directly follows from the proof of Lemma \ref{lem:stream-decomposition} with Lemma \ref{lem:time-potential-friend} and Lemma \ref{lem:time-eps-dense}.

\begin{lemma} \label{lem:time-decomposition}
    With high probability, we can compute a decomposition of $V=\Vsparse_{\star}\cup C_1 \cup C_2 \cup \dots \cup C_k$ in $\Ot(n^2/\Delta)$ time and pair queries such that:
    \begin{enumerate}[leftmargin=15pt]
        \item\label{app-HSS-p1} $\Vsparse_\star \subseteq \Vsparse_{\delta/4}$, i.e., any vertex in $\Vsparse_\star$ is $(\delta/4)$-sparse. 
        \item\label{app-HSS-p2} For any $i \in [k]$, $C_i$ has the following properties: 
            \begin{enumerate}
                \item\label{time-HSS-p2a}\label{p1} $(1-\delta)\Delta \leq \card{C_i} \leq (1+6\delta)\Delta$; 
                \item\label{time-HSS-p2b} any $v \in C_i$ has at most $7\delta\Delta$ neighbors outside of $C_i$; and
                \item\label{time-HSS-p2c} any $v \in C_i$ has at most $6\delta\Delta$ non-neighbors inside of $C_i$. 
            \end{enumerate}
    \end{enumerate}
\end{lemma}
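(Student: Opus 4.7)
}
The plan is to recycle the three-step analysis used for the streaming decomposition (Lemmas~\ref{lem:app-HSS-decomposition}, \ref{lem:appext-HSS-decomposition}, \ref{lem:Hs-connect}, \ref{lem:stream-decomposition}), substituting the query-model constructions that were already built above the lemma statement. Concretely, I will condition on the high-probability events in Lemmas~\ref{lem:time-potential-friend} and~\ref{lem:time-eps-dense} (so the potential-friend oracle answers every pair correctly, and $D$ contains every $(\delta/2)$-dense vertex and only $(2\delta)$-dense ones), and then argue that the connected components of $H_s$ of size $\ge (1-\delta)\Delta$ faithfully approximate the almost-cliques of the extended HSS-decomposition.

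For the time bound, note that $H_s$ is already explicitly built above the lemma: the edge set $E_s$ is produced by sampling each of the $\binom{n}{2}$ pair-slots independently with probability $p = \Theta(\log n / \Delta)$ (efficiently, by drawing geometric gaps), which uses $O(n^2 p) = \tilde{O}(n^2/\Delta)$ pair queries with high probability; each sampled edge is then filtered through the $O(\log n)$-time potential-friend oracle of Lemma~\ref{lem:time-potential-friend}. Combined with the $\tilde{O}(n^2/\Delta)$ cost of Lemmas~\ref{lem:time-potential-friend} and~\ref{lem:time-eps-dense} themselves, the whole construction uses $\tilde{O}(n^2/\Delta)$ time and pair queries, matching the claimed budget.

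For correctness, I will verify the three properties in turn. Property~(\ref{app-HSS-p1}): any $(\delta/4)$-dense vertex is in $D$ and, by the argument of Lemma~\ref{lem:appext-HSS-decomposition}, sits in a connected component of $H$ of size $\ge (1-\delta/4)\Delta$ that (as argued in the next paragraph) remains connected in $H_s$, hence appears in some $C_i$ and not in $\Vsparse_\star$. Property~(\ref{time-HSS-p2a}): the lower bound is by construction (we only keep components of size $\ge (1-\delta)\Delta$), and the upper bound, together with Property~(\ref{time-HSS-p2c}), follows from Lemma~\ref{lem:app-HSS-decomposition} applied to the supergraph $H \supseteq H_s$ on vertex set $D \subseteq \Vdense_{2\delta}$. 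Property~(\ref{time-HSS-p2b}) then follows exactly as in the proof of Lemma~\ref{lem:extended-HSS-decomposition}: combining the lower bound on $\card{C_i}$ with the non-neighbor bound forces each $v \in C_i$ to have at most $7\delta\Delta$ neighbors outside $C_i$.

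The main obstacle, and where I expect to spend the most care, is the analog of Lemma~\ref{lem:Hs-connect}: showing that every component of $H$ containing a $(\delta/4)$-dense vertex stays connected in $H_s$. The streaming proof used a subtle coupling because edges there were sampled \emph{without} replacement; here the edge-set $E_s$ is obtained by independent $p$-sampling of pair-slots, so each edge of $H$ survives independently with probability $p = \Theta(\log n/\Delta)$, and we may apply Lemma~\ref{lem:cut-connect} directly without the coupling detour. To invoke it, I will reprove the min-cut bound of $(1-7\delta)\Delta - 1 \ge 0.5\Delta$ for any bipartition of such a component, using the lower bound $\card{C_i} \ge (1-\delta/4)\Delta$ from Lemma~\ref{lem:appext-HSS-decomposition} and the at-most-$6\delta\Delta$ non-neighbor count from Lemma~\ref{lem:app-HSS-decomposition}; Lemma~\ref{lem:cut-connect} with $b=3$ and $p = \Theta(\log n/\Delta)$ then gives connectivity with probability $1-n^{-3}$, and a union bound over the at most $n$ relevant components closes the argument.
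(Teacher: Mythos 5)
Your proposal is correct and takes essentially the same route as the paper: the paper's proof is a one-liner stating that the lemma ``directly follows from the proof of Lemma~\ref{lem:stream-decomposition} with Lemma~\ref{lem:time-potential-friend} and Lemma~\ref{lem:time-eps-dense},'' and your plan instantiates exactly that by re-running the four streaming sub-lemmas with the query-model oracles plugged in. Your observation that the coupling step in Lemma~\ref{lem:Hs-connect} is unnecessary here is a genuine (if minor) simplification: in the query model each edge of $H$ survives into $H_s$ independently with probability $p$, so Lemma~\ref{lem:cut-connect} applies directly to the $p$-sampled subgraph, whereas the streaming version had to couple a fixed-size without-replacement sample to an independent Bernoulli sample before invoking that lemma.
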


Proof of Theorem~\ref{thm:sub-time} can now be concluded exactly as in the proof of Theorem~\ref{thm:streaming}, as we show in that proof that starting from the decomposition, one can construct the coloring in $\Ot(n\sqrt{n})$ time.

\subsection{An MPC Algorithm for $(\Delta+1)$ Coloring}\label{sec:mpc}

This section contains yet another application of our palette-sparsification theorem to design sublinear algorithms, namely a massively parallel (MPC) algorithm for $(\Delta+1)$ coloring. 
In the MPC model of~\cite{BeameKS13} (see also~\cite{KarloffSV10,GoodrichSZ11,AndoniNOY14,BeameKS13}), 
the input is partitioned across multiple machines which are inter-connected via a communication network. The computation proceeds in synchronous rounds. 
During a round each machine runs a local algorithm on the data assigned to the machine. No communication between machines is allowed during a round. Between rounds, machines are allowed to communicate so long as each machine sends or receives a 
communication no more than its memory. Any data output from a machine must be computed locally from the data residing on the machine and initially the input data is distributed across machines adversarially. 
The goal is to minimize the total number of rounds subject to a small (sublinear) memory per machine. 

We show that $\CMA$ can be easily implemented in this model also and prove the following theorem.  
 
\begin{theorem}\label{thm:mpc}
	There exists a randomized MPC algorithm that given a graph $G(V,E)$ with maximum degree $\Delta$ can find a $(\Delta+1)$ coloring of $G$ with high probability in $O(1)$ MPC rounds on machines of 
	memory $\Ot(n)$. Furthermore, if the machines have access to \emph{public randomness}, the algorithm only requires \emph{one} MPC round. 
\end{theorem}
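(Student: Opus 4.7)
The plan is to implement the meta-algorithm $\CMA$ in the MPC model, exploiting Lemma~\ref{lem:meta-algorithm}: the conflict graph $\Gc$ has only $O(n\log^2 n) = \Ot(n)$ edges and therefore fits in the memory of a single ``coordinator'' machine. Once $\Gc$, together with the information needed to compute an approximate extended HSS decomposition, is resident on the coordinator, I will have it run the polynomial-time list-coloring procedure developed in the proof of Theorem~\ref{thm:streaming} entirely locally, with no further communication. Thus the MPC cost is governed by how many rounds it takes to gather $\Gc$ and the HSS samples at the coordinator.

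With public randomness the entire computation fits in a single round. I would have every machine use the shared random bits to agree on $L(v)$ for every $v \in V$, so that each machine can locally decide which of its edges $(u,v)$ satisfy $L(u)\cap L(v)\neq\emptyset$ and should be forwarded to the coordinator as edges of $\Ec$. In the same round, each machine also performs the sampling that drives the approximate HSS decomposition---sampling a set of $\Ot(n/\Delta)$ vertices and an edge set of size $\Ot(n)$, exactly as in the streaming construction in Section~\ref{sec:streaming}---and routes the resulting $\Ot(n)$ data to the coordinator. The coordinator then builds $\Gc$, computes the HSS decomposition via the polynomial-time procedure in Lemma~\ref{lem:find-decomposition}, and produces the final $(\Delta+1)$ coloring locally.

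Without public randomness I plan to use one additional round to distribute consistent random bits across machines. The owner machine of each vertex $v$ will generate $L(v)$ and forward it to every machine that holds an edge incident on $v$; the owners of the vertices sampled for the HSS decomposition will analogously broadcast what is needed to inspect their neighborhoods. After this preparatory round every machine knows the random choices that affect its local edges, and the second round proceeds exactly as in the public-randomness case. The total data each machine sends or receives in either round is $\Ot(n)$, since every list has size $O(\log n)$ and each machine stores $\Ot(n)$ edges.

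The main obstacle is verifying that the per-round communication and memory budgets are respected throughout: in particular the coordinator's incoming data must be $\Ot(n)$. This follows from $\card{\Ec}=O(n\log^2 n)$ by Lemma~\ref{lem:meta-algorithm} and from the fact that the HSS sampling requires only $\Ot(n)$ bits as in Section~\ref{sec:streaming}; a standard sort-and-route primitive distributes these messages within the memory bound. Correctness of the produced coloring then follows directly from Lemma~\ref{lem:meta-algorithm} together with the list-coloring procedure analyzed in Theorem~\ref{thm:streaming}.
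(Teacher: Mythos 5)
Your proposal follows essentially the same approach as the paper: route $\Gc$ and the HSS-decomposition samples to a coordinator (relying on Lemma~\ref{lem:meta-algorithm} to bound $|\Ec|$ by $\Ot(n)$), then list-color locally via the polynomial-time procedure from Theorem~\ref{thm:streaming}, using public randomness for a one-round solution and trading it for $O(1)$ extra rounds otherwise. The only place where you are slightly terser than the paper is the no-public-randomness case: the paper first spends a round redistributing all edges incident on $v$ to a dedicated machine $M_v$, so that $M_v$ then knows exactly which machines to forward $L(v)$ to, whereas you invoke a sort-and-route primitive to accomplish the same thing; this is an acceptable (and standard) way to discharge the obligation, but it implicitly costs the same extra redistribution round the paper makes explicit.
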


In the following, we assume that the machines have access to public randomness and show how to solve the problem in only one MPC round (this setting is similar to the result of~\cite{AhnGM12Linear} for the connectivity problem and is of independent 
interest). We then show that by spending $O(1)$ additional rounds, we can remove the assumption of public randomness. 

The proof of this theorem is very similar to that of Theorem~\ref{thm:streaming} and uses the close connection between dynamic streaming algorithms (in particular linear sketching algorithms) and MPC algorithms (see, e.g.~\cite{AhnGM12Linear,AhnG15}). 
As before, if we do not insist on achieving a polynomial time algorithm, proving Theorem~\ref{thm:mpc} from Lemma~\ref{lem:meta-algorithm} is straightforward: we sample the color classes $\CC_1,\ldots,\CC_{\Delta+1}$ using public randomness, 
and every machine sends its edges in $\Gc$ to a central designated machine, called the coordinator. 
As the total number of edges in $\Gc$ is $\Ot(n)$ by Lemma~\ref{lem:meta-algorithm}, we only need $\Ot(n)$ memory on the coordinator. The coordinator machine can then locally find a list-coloring of the graph $\Gc$ and 
find the $(\Delta+1)$ coloring by Lemma~\ref{lem:meta-algorithm}. 

We now briefly show how to make this algorithm polynomial time. The proof of this part is identical to that of Theorem~\ref{thm:streaming}: we can construct a decomposition of the graph $G$ in Lemma~\ref{lem:stream-decomposition}
exactly as in Section~\ref{sec:streaming} on the coordinator machine and then run the polynomial time algorithm in the proof of Theorem~\ref{thm:streaming} on this decomposition to solve the problem. In order to see this, note that
the decomposition in Lemma~\ref{lem:stream-decomposition} relied on vertex sampling (to implement the data structure in Lemma~\ref{lem:detect-potential-friend}) and edge sampling (to implement the data structure in Lemma~\ref{lem:detect-eps-dense}
as well as defining and finding the connected components of the graph $H_s$), both of which can be done trivially in the MPC model using public randomness. 

Finally, we show how to remove the public randomness. We first dedicate one machine $M_v$ to each vertex $v$ of the graph and spend the first round to send all the edges incident on $v$ to the machine $M_v$. This can be done on machines of memory 
$O(\Delta)$. In the next round, each machine $v$ samples the set of colors $L(v)$ for $v$ and sends this information to all the machines $M_u$ where $(u,v)$ is an edge in the graph. This can again be done with $O(\Delta \cdot \polylog{(n)})$ size messages and 
hence on machines of memory $\Ot(n)$. The machines can now send all edges in $\Ec$ to a central coordinator
and the coordinator can construct the graph $\Gc$. Finally, to find the decomposition in Lemma~\ref{lem:stream-decomposition} also, each machine $M_v$ can do the sampling of vertex $v$ locally and send this vertex and all its incident edges (as needed
in the proof of Lemma~\ref{lem:detect-potential-friend}) to the central coordinator; the edge sampling can also be done trivially as each machine holding an edge can decide to sample it with probability $p$ and send it to the coordinator independently (as needed in 
the proof of Lemma~\ref{lem:detect-eps-dense} and also in finding connected components of $H_s$). All in all, we only need $O(1)$ MPC rounds to run the algorithm.

\section{Lower Bounds for Graph Coloring and Related Problems}\label{sec:lower-bounds}

In this section, we prove new lower bounds for maximal independent set (MIS) and maximal matching problems in the considered models to contrast the complexity of the $(\Delta+1)$ coloring problem with these two closely related problems

\begin{theorem}[\textbf{Streaming MIS}] \label{thm:lower-MIS}
	Any single-pass streaming algorithm that is able to output any arbitrary MIS of the input graph in insertion-only streams with sufficiently large constant probability  requires $\Omega(n^2)$ space. 
\end{theorem}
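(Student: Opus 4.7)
The plan is to reduce from the one-way randomized communication complexity of the \textsc{Index} problem. In \textsc{Index}, Alice holds $x \in \{0,1\}^N$, Bob holds $k \in [N]$, and after a single message from Alice, Bob must output $x_k$ with constant success probability; this task is classically known to require $\Omega(N)$ bits of communication. Setting $N = \Theta(n^2)$ and viewing Alice's input as an $n \times n$ bit matrix $x$, with Bob's query being a pair $(i^*, j^*) \in [n]^2$ indexing the bit $x_{i^*, j^*}$, it suffices to exhibit a reduction that turns a single-pass streaming MIS algorithm using $s$ bits of space on $O(n)$-vertex insertion-only streams into an \textsc{Index} protocol of cost $O(s)$, so that $s = \Omega(n^2)$.

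For the reduction, Alice encodes $x$ as a bipartite graph on vertex sets $A = \{a_1,\dots,a_n\}$ and $B = \{b_1,\dots,b_n\}$, inserting the edge $(a_i, b_j)$ into the stream precisely when $x_{ij} = 1$. She simulates the streaming algorithm on this prefix and transmits its $s$-bit memory state to Bob, who then extends the stream with a \emph{probing gadget} whose edges depend on $(i^*, j^*)$. The centerpiece of the gadget is a ``dominator'' vertex $u$ made adjacent to every vertex of $(A \setminus \{a_{i^*}\}) \cup (B \setminus \{b_{j^*}\})$, together with additional local structure (for instance a clique on $\{u\} \cup (A \cup B) \setminus \{a_{i^*}, b_{j^*}\}$ and a pendant hanging off $u$) designed so that every valid MIS of the augmented graph must include $u$. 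Once $u$ is in the MIS, every vertex of $(A \cup B) \setminus \{a_{i^*}, b_{j^*}\}$ is blocked, and by maximality the pair $\{a_{i^*}, b_{j^*}\}$ contributes \emph{both} of its members to the MIS iff these two vertices are non-adjacent in Alice's graph, iff $x_{i^*, j^*} = 0$. Bob reads $x_{i^*, j^*}$ off the algorithm's output by simply counting how many of $\{a_{i^*}, b_{j^*}\}$ appear in the returned MIS, yielding an \textsc{Index} protocol of cost $s$.

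The main obstacle is the forcing step, since maximal independent sets are not unique and the algorithm is free to output any one of them. A naive construction admits ``rogue'' MISs in which $u$ is omitted and blocked instead by some $a_i$ with $i \neq i^*$ or $b_j$ with $j \neq j^*$ lying in the independent set, which would break the correspondence between $x_{i^*, j^*}$ and the output. The key combinatorial task is therefore to choose the local structure around $u$ so that the \emph{only} MIS-compatible way of covering the auxiliary vertices is to place $u$ in the MIS; verifying this requires a case analysis that rules out every alternative MIS configuration (and may include adding small pendants or cliques to the vertices of $(A \cup B) \setminus \{a_{i^*}, b_{j^*}\}$ so that excluding $u$ contradicts maximality at some auxiliary vertex). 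Once this forcing property is established, the reduction goes through unchanged and the $\Omega(n^2)$ space bound for single-pass streaming MIS follows from the $\Omega(n^2)$ one-way communication lower bound for \textsc{Index}.
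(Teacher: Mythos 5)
Your high-level plan matches the paper's: reduce from one-way \textsc{Index} with $N=\Theta(n^2)$, encode $x$ as edges of a bipartite graph between $A$ and $B$, and design Bob's suffix so that the presence/absence of both $a_{i^*}$ and $b_{j^*}$ in the returned MIS reveals $x_{i^*j^*}$. However, your proposed gadget has a genuine gap. You want a local structure around a single dominator vertex $u$ guaranteeing that \emph{every} MIS of the augmented graph contains $u$. This is impossible whenever $u$ has a neighbor: any MIS may dominate $u$ via one of its neighbors instead of including $u$. In your concrete proposal (clique on $\{u\}\cup(A\cup B)\setminus\{a_{i^*},b_{j^*}\}$ plus a pendant on $u$), the MIS may pick some $a_i$ with $i\neq i^*$ from the clique (and then the pendant, by maximality). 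If $x_{i j^*}=1$, then $b_{j^*}$ is dominated by $a_i$ and excluded, while $a_{i^*}$ remains in the MIS; Bob would infer $x_{i^*j^*}=1$ regardless of its actual value. So the correspondence between the output and $x_{i^*j^*}$ breaks, and no small ``pendant or clique'' patch on the non-special vertices of $A\cup B$ fixes this, since pendants never \emph{prevent} a vertex from entering an MIS. You have correctly identified the hard step but have not supplied an argument that closes it.

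The paper sidesteps forcing entirely with a symmetric two-copy construction. Alice builds two disjoint copies $G_1$ (on $A_1\cup B_1$) and $G_2$ (on $A_2\cup B_2$) of the $x$-graph. Bob then makes $(A_1\cup A_2)\setminus\{a^1_{i^*},a^2_{i^*}\}$ a clique, makes $(B_1\cup B_2)\setminus\{b^1_{j^*},b^2_{j^*}\}$ a clique, and adds all cross-edges between $A_1$ and $B_2$ and between $A_2$ and $B_1$, excluding those incident to the four special vertices. No vertex is forced into the MIS. Instead, the key argument (Claim~\ref{clm:one-way-non-trivial}) shows that if $x_{k^*}=0$ and the MIS contains any extraneous vertex $v$ (say $v\in A_1$), then $v$ dominates all of $B_2\setminus\{b^2_{j^*}\}$ and all of $(A_1\cup A_2)\setminus\{a^1_{i^*},a^2_{i^*}\}$, forcing $a^2_{i^*}$ and $b^2_{j^*}$ (the special vertices of the \emph{other} copy) into the MIS by maximality. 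Bob thus succeeds by checking whether \emph{either} of the two special pairs is fully contained in the output. This symmetric redundancy is precisely what your single-copy construction lacks, and is the essential new idea you would need to add to make the reduction sound.
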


Theorem~\ref{thm:lower-MIS} implies that unlike the $(\Delta+1)$ coloring problem, one cannot obtain any non-trivial single-pass streaming algorithm for the maximal independent set even in insertion-only streams. 
We remark that a result very similar to Theorem~\ref{thm:lower-MIS} in our paper was obtained independently and concurrently by Cormode~\etal~\cite{CormodeDK18} (see Theorem~3). 

\begin{theorem}[\textbf{Sublinear-Time Maximal Matching}] \label{thm:lower-matching}
	Any algorithm (possibly randomized) that can output a maximal matching of an input graph with sufficiently large constant probability requires $\Omega(n^2)$ queries to the graph. 
\end{theorem}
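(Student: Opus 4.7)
The plan is to invoke Yao's minimax principle: I would exhibit a distribution $\mathcal{D}$ over $n$-vertex graphs such that any deterministic algorithm making $T = o(n^2)$ queries outputs a valid maximal matching with probability at most some small constant on $G \sim \mathcal{D}$. The leverage point is that any output matching $M$ implicitly certifies that $U := V \setminus V(M)$ is an independent set of $G$; the distribution will be arranged so that certifying this requires inspection of $\Omega(n^2)$ vertex pairs.

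For concreteness I would work on bipartite graphs $G = (L \cup R, E)$ with $|L| = |R| = n/2$ and use a ``planted edge'' construction. Fix a baseline graph $G_0$ whose (essentially unique) maximal matching $M_0$ leaves a set $U$ of $\Theta(n)$ vertices unmatched, with the property that $U$ is independent in $G_0$ and every augmenting path available to cover a vertex of $U$ must use an edge of $U \times U$. The distribution $\mathcal{D}$ is then: with probability $1/2$ output $G_0$, and with probability $1/2$ output $G_0 + e$ for $e$ chosen uniformly from $\binom{U}{2}$ (which has size $\Omega(n^2)$). In the planted case, any maximal matching of $G$ must include $e$, because the only new edge among the vertices of $U$ (which are forced unmatched by $M_0$ up to cosmetic changes) is $e$ itself. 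Hence a correct algorithm on the planted instance must actually discover $e$, and I would argue that a random algorithm distinguishing the planted pair from the $\binom{|U|}{2} = \Omega(n^2)$ alternatives requires $\Omega(n^2)$ queries to succeed with constant probability.

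The argument then splits into showing that each type of query reveals at most $O(1)$ bits about the location of $e$: pair queries directly hit $e$ with probability $O(T/n^2)$; and, by designing $G_0$ to be close to regular on $U$ with a large number of ``dummy'' neighbor slots per vertex, neighbor and degree queries on an endpoint of $e$ return an answer whose distribution is statistically close to the corresponding unplanted query. A union bound and Yao's principle then give the $\Omega(n^2)$ bound.

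The hardest part, and the step I expect to require genuine engineering, is constructing the backbone $G_0$ so that (i) $|U| = \Theta(n)$ is forced for every maximal matching, and simultaneously (ii) the presence of the planted edge $e$ is not betrayed by a cheap neighbor or degree query. Naively, adding a single edge perturbs two vertex degrees by $+1$, which a degree-query-based algorithm could detect with only $O(n)$ queries, so $G_0$ must itself have enough random degree fluctuation on $U$ (for example, by embedding it inside a random bipartite structure or by symmetrizing via a ``swap'' of a second edge) that the planted perturbation is indistinguishable from the background. Once this indistinguishability is in place, the counting argument above, combined with Yao's principle, yields the claimed $\Omega(n^2)$ lower bound.
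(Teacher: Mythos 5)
Your high-level skeleton (Yao's principle plus a planted adversary) matches the paper's, but the construction you propose has two gaps that would each require a genuinely new idea to close, and the paper's actual construction is designed precisely to avoid both of them.

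The first gap is the degree-leakage issue you already flag, and it is not a peripheral nuisance but the crux. The model (and the paper, explicitly) gives degrees for free; even if it did not, $O(n)$ degree queries would do. In your scheme, the two endpoints of the planted edge $e$ have degree $d+1$ while the rest of $U$ has degree $d$, so the algorithm immediately localizes $e$ to a single pair and finishes in $O(n)$ queries. Patching this by ``random degree fluctuations'' on $U$ is a substantial project: you would need the joint distribution of (degree sequence, neighbor-query answers, pair-query answers) to be statistically close in the planted and unplanted cases, which is not a local fix. The paper sidesteps the issue entirely: it plants not one edge but a uniformly random \emph{perfect matching} $M$ between $L_2$ and $R_2$, on top of a fixed complete-bipartite core. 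Every vertex in $L_2 \cup R_2$ then has degree exactly $n/6 + 1$ regardless of which $M$ was chosen, so degree queries carry zero information; all the adversary's entropy sits in the choice of $M$ and the random ordering of each adjacency list.

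The second gap is the claim that every maximal matching of $G_0 + e$ must include $e$. For this you need $U$ to be a set of vertices that are unmatched in \emph{every} maximal matching of $G_0$, and to remain unmatched (except via $e$) in $G_0 + e$. That is a very rigid structural requirement: in natural bipartite templates the set of unmatched vertices varies with the chosen maximal matching, so $U$ is not fixed, and the proposal does not supply a construction that pins it down. The paper uses a different mechanism that makes such rigidity unnecessary: its graph has a perfect matching, and a maximal matching is a $2$-approximation to maximum, so any maximal matching has at least $n/3$ edges, of which at least $n/3$ must come from the planted matching $M$. The algorithm is thus forced to \emph{discover} $\Theta(n)$ edges of a uniformly random perfect matching, and a phase-based counting argument (supported by a lemma bounding by $O(1/N)$ the probability that any fixed pair is in a random perfect matching of a near-complete bipartite graph) shows this requires $\Omega(n^2)$ queries. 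The one respect in which your plan would be simpler, if the construction could be salvaged, is that it would replace this phase analysis with a single needle-in-a-haystack step; but as written, the construction is the blocker, not the counting.
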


Theorem~\ref{thm:lower-matching} implies that unlike the $(\Delta+1)$ coloring problem, one cannot hope to obtain any non-trivial sublinear-time algorithm for the maximal matching problem. 

\medskip

We also prove a lower bound for $(\Delta+1)$ coloring problem in the query model that shows that our algorithm in Theorem~\ref{thm:sub-time} achieves optimal query/time complexity (up to logarithmic factors). This lower bound
holds even for algorithms that are required to output an $O(\Delta)$ coloring not $(\Delta+1)$. 

\begin{theorem}[\textbf{Sublinear Time Vertex Coloring}] \label{thm:lower-color}
	For any fixed constant $c > 1$, any algorithm (possibly randomized) that can output a $(c \cdot \Delta)$ coloring of an input graph with sufficiently large constant probability requires $\Omega(n\sqrt{n})$ queries to the graph. 
\end{theorem}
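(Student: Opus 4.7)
The plan is to apply Yao's minimax principle. I would exhibit a distribution $\mathcal{D}$ over $n$-vertex graphs with maximum degree $\Delta = \lfloor \sqrt{n} \rfloor - 1$ such that no deterministic algorithm making $q = o(n\sqrt{n})$ queries (of any of the three allowed types) outputs a valid $(c\Delta)$-coloring of $G \sim \mathcal{D}$ with probability more than, say, $1/3$.

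The hard distribution is very simple: draw a uniformly random partition of $[n]$ into $k = \sqrt{n}$ parts $V_1, \ldots, V_k$ of size $\sqrt{n}$, and let $G$ be the graph in which each $V_i$ induces a clique. Then every vertex has the same degree $\Delta = \sqrt{n} - 1$, so degree queries leak no information, and a valid $(c\Delta)$-coloring is exactly a function $\chi : V \to [c\Delta]$ whose restriction to each $V_i$ is injective. The key combinatorial observation is that regardless of which $\chi$ the algorithm outputs, because $\chi$ uses at most $c\Delta = O(\sqrt{n})$ colors on $n$ vertices, by convexity
\[
M(\chi) \;:=\; \sum_{j=1}^{c\Delta}\binom{|\chi^{-1}(j)|}{2} \;=\; \Omega\!\paren{n\sqrt{n}/c} \;=\; \Omega(n\sqrt{n}).
\]
In other words, $\chi$ is \emph{forced} to contain $\Omega(n\sqrt{n})$ monochromatic pairs, each of which must be a non-edge for $\chi$ to be a valid coloring.

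The main probabilistic step is then to show that for a typical transcript produced by an algorithm making $q = o(n\sqrt{n})$ queries, almost all of these $\Omega(n\sqrt{n})$ monochromatic pairs are ``unexplored'' (no query has directly determined their edge-status), and each such pair is, under the posterior over $G$ induced by the transcript, an edge with probability $\Theta(1/\sqrt{n})$ and almost independently of other unexplored pairs. Since each of the $q$ queries touches at most one new pair, at most $q = o(n\sqrt{n})$ of the $M(\chi) = \Omega(n\sqrt{n})$ monochromatic pairs are touched, so a constant fraction remain unexplored. Summing the per-pair probabilities gives an expected count of $\Omega(n)$ monochromatic edges, and a Chebyshev second-moment calculation using the near-independence shows this count is at least $1$ with probability $1 - o(1)$. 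Hence the output coloring $\chi$ is invalid, contradicting correctness of the algorithm.

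The main obstacle is controlling the posterior correlations introduced by \emph{positive} query responses: a positive response forces both endpoints into the same $V_i$, and several positive responses can chain through shared endpoints into a large clique fragment, which simultaneously shifts the posterior for many pairs. I would address this by showing that for $q = o(n\sqrt{n})$ queries the expected number of revealed positive-edge components of size $t$ in $G$ is at most $O\paren{(q^2/n\sqrt{n})^{t-1}}$, so with high probability every such component has constant size; conditioned on this event, the posterior edge-probability of an unexplored pair $\{u,v\}$ stays in $[1/(2\sqrt{n}), 2/\sqrt{n}]$ and distinct unexplored pairs are only negatively correlated with covariance $O(1/n\sqrt{n})$, which suffices for the Chebyshev bound to go through.
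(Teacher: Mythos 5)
There is a genuine gap that kills the argument: your hard distribution is simply too revealing under $Q_1$ (neighbor) queries, and the component-size bound you propose to prove is false against an adaptive algorithm. Concretely, in a disjoint union of $\sqrt{n}$ cliques of size $\sqrt{n}$, every entry of a vertex's adjacency list is a clique-mate, so an algorithm can deterministically recover the entire graph with fewer than $n$ queries: pick any vertex $v_1$, make the $\sqrt{n}-1$ queries $Q_1(v_1,1),\ldots,Q_1(v_1,\sqrt{n}-1)$ to read off its full adjacency list and hence its entire clique, then repeat with a vertex not yet seen, and so on for $\sqrt{n}$ rounds. After $O(n)$ queries the algorithm knows the partition exactly and can output a proper $\Delta+1$ coloring, so $\Omega(n\sqrt n)$ is simply false for this distribution. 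This also exposes the false step in your sketch: you assume ``each of the $q$ queries touches at most one new pair,'' which is only true of $Q_2$ queries; a block of $\sqrt{n}-1$ $Q_1$ queries on a single vertex fixes $\Theta(n)$ pairs, i.e.\ $\Theta(\sqrt n)$ per query amortized. Likewise, your proposed bound that revealed positive-edge components have size $O(1)$ whenever $q=o(n\sqrt n)$ is not a probabilistic fact the algorithm must suffer --- the algorithm can intentionally build a size-$\sqrt n$ component by repeatedly querying the same vertex, so the $O((q^2/n\sqrt n)^{t-1})$ estimate (which implicitly treats queries as non-adaptive) does not hold.

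What your construction is missing, and what the paper's construction supplies, is a way to make $Q_1$ queries unproductive. The paper hides a random perfect matching $M$ inside a set $V_0$ of $n$ vertices, but in addition attaches to every $v\in V_0$ a block of $\sqrt n$ \emph{known} neighbors (the complete bipartite gadget $V_i\times V'_i$), and then randomly permutes the adjacency list. Now each vertex has $\sqrt n+1$ list entries of which exactly one is informative, so a useful $Q_1$ query finds the matching partner with probability only $O(1/\sqrt n)$; combined with a ``bad vertex'' accounting scheme that charges $\Theta(\sqrt n)$ queries per vertex before giving away its match for free, this forces $\Omega(n\sqrt n)$ queries before the posterior on $M$ collapses. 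Your pigeonhole observation that any $O(\sqrt n)$-coloring has $\Omega(n\sqrt n)$ monochromatic pairs is correct and is also the paper's second step, and your Chebyshev plan for the unexplored pairs mirrors the paper's Lemma analogous to \ref{lem:color-intra} --- but without a gadget that dilutes $Q_1$ queries, the first step (keeping the hidden structure hidden after $o(n\sqrt n)$ queries) cannot work, so the proof as proposed does not go through.
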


We prove Theorems~\ref{thm:lower-MIS}, \ref{thm:lower-matching}, and~\ref{thm:lower-color}, in Sections~\ref{sec:lower-MIS},~\ref{sec:lower-matching}, and~\ref{sec:lower-color}, respectively.

\newcommand{\MIS}{\ensuremath{\textnormal{\textsf{MIS}}}\xspace}

\newcommand{\Index}{\ensuremath{\textnormal{\textsf{Index}}}\xspace}

\newcommand{\kstar}{\ensuremath{k^{*}}}

\newcommand{\istar}{\ensuremath{i^{*}}}

\renewcommand{\jstar}{\ensuremath{j^{*}}}

\newcommand{\MM}{\ensuremath{\mathcal{M}}}

\subsection{A Lower Bound for Single-Pass Streaming MIS}\label{sec:lower-MIS}

We prove Theorem~\ref{thm:lower-MIS} in this section. Throughout this section, we define $\MIS$ as the two-player communication problem in which the edge-set $E$ of a graph $G(V,E)$ is partitioned between Alice and Bob, and their goal is to find \emph{any} MIS of the graph $G$. To prove
Theorem~\ref{thm:lower-MIS}, it suffices to lower bound the \emph{one-way} communication complexity of \MIS. We do this using a reduction from the \Index problem. In \Index, Alice is given a bit-string $x \in \set{0,1}^{N}$, Bob is given an index $\kstar \in [N]$, 
and the goal is for Alice to send a message to Bob so that Bob outputs $x_{\kstar}$. It is well-known that one-way communication complexity of \Index is $\Omega(N)$ bits~\cite{Ablayev93,KremerNR95}. 

To continue, we need a simple definition. Let $n := \sqrt{N}$ and throughout the proof, fix any arbitrary bijection $\sigma : [N]\rightarrow [n] \times [n]$. For any bit-string $x \in \set{0,1}^{N}$, and sets $A := \set{a_1,\ldots,a_n}$ and $B:= \set{b_1,\ldots,b_n}$, we 
define the \emph{$x$-graph} $G$ on vertices $A \cup B$ as a bipartite graph consisting of all edges $(a_i,b_j)$ where $x_k = 1$ and $\sigma(k) = (i,j)$. 

 \medskip
 
 \paragraph{Reduction.} Given an instance $(x,\kstar)$ of \Index:

\begin{enumerate} 
\item Alice and Bob construct the following graph $G(V,E)$ with \emph{no} communication: 

\begin{itemize}
	\item The vertex set $V$ of the graph $G$ is the union of the following four sets of vertices: 
		\begin{align*}
		&A_1 := \set{a^1_1,\ldots,a^1_n}, B_1:= \set{b^1_1,\ldots,b^1_n}, \\
		&A_2:=\set{a^2_1,\ldots,a^2_n}, B_2:=\set{b^2_1,\ldots,b^2_n}.
		\end{align*}
	\item  
	
	Alice forms two $x$-graphs $G_1$ and $G_2$ on $A_1 \cup B_1$ and $A_2 \cup B_2$, respectively, using her bit-string $x$ and adds these edges to $E$. 
	
	\smallskip
	
	\item Let $(\istar,\jstar) = \sigma(\kstar)$ where $\kstar$ is the input index of Bob. Bob adds the following three sets of edges to $E$: 
	\begin{align*}
		&\set{(a^{1}_i,a^{1}_{j}), (a^{2}_i,a^{2}_{j}), (a^{1}_i,a^{2}_j)} \tag{for all $i \neq \istar \in [n]$ and $j \neq \istar \in [n]$}, \\
		&\set{(b^{1}_i,b^{1}_{j}), (b^{2}_i,b^{2}_{j}), (b^{1}_i,b^{2}_j)} \tag{for all $i \neq \jstar \in [n]$ and $j \neq \jstar \in [n]$}, \\
		&\set{(a^{1}_i,b^{2}_{j}), (a^{2}_i,b^{1}_{j})} \tag{for all $i \neq \jstar \in [n]$ and $j \neq \jstar \in [n]$}.
	\end{align*}

\end{itemize}
	
\item Alice and Bob then compute any MIS $\MM$ of the graph $G$ using the best protocol for \MIS. Bob outputs $x_{\kstar} = 0$ if either both $a^{1}_{\istar},b^{1}_{\jstar} \in \MM$ or both $a^{2}_{\istar},b^{2}_{\jstar} \in \MM$.

\end{enumerate}

This finalizes the description of the reduction. It is immediate to verify that the communication cost of this protocol is at most as large as the communication complexity of \MIS. In the following two claims, we prove the correctness of this protocol. 

\begin{claim}\label{clm:one-way-trivial}
	Let $G$ be the graph constructed by Alice and Bob, and $\MM$ be any MIS of $G$. Then, if $x_{\kstar}=1$, $\MM$ cannot contain both $a^{1}_{\istar},b^{1}_{\jstar}$, neither both $a^{2}_{\istar},b^{2}_{\jstar}$.
\end{claim}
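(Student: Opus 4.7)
The plan is to observe that this claim is essentially immediate from the definition of the $x$-graph and the fact that $\MM$ is an independent set. First I would unpack the definition: recall that in the $x$-graph on vertex sets $A \cup B$, the edge $(a_i, b_j)$ is present precisely when $x_k = 1$ where $\sigma(k) = (i,j)$. Alice constructs two such $x$-graphs, $G_1$ on $A_1 \cup B_1$ and $G_2$ on $A_2 \cup B_2$, using the \emph{same} bit-string $x$ and the \emph{same} bijection $\sigma$.

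Next, fix the hypothesis $x_{\kstar} = 1$ and recall $\sigma(\kstar) = (\istar, \jstar)$. By the definition of the $x$-graph, Alice's edge set therefore contains both $(a^1_{\istar}, b^1_{\jstar}) \in E(G_1)$ and $(a^2_{\istar}, b^2_{\jstar}) \in E(G_2)$. Since these are edges of $G$, and $\MM$ is an independent set of $G$, it cannot contain both endpoints of either edge. This immediately gives both conclusions of the claim.

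There is no real obstacle here; the claim is merely extracting the direct consequence of how the edges of $G_1$ and $G_2$ encode $x$. The substantive content lies in the companion claim (presumably handling the case $x_{\kstar} = 0$, where one must argue that Bob's gadget edges force the MIS to contain one of the two ``diagonal'' pairs), together with the reduction structure that relies on Bob's ability to read off $x_{\kstar}$ from $\MM$. The present claim just records the easy direction, and so the proof will be a two-sentence observation once the $x$-graph definition is invoked.
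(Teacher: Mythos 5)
Your proof is correct and takes exactly the same approach as the paper's: both observe that $x_{\kstar}=1$ forces the edges $(a^1_{\istar},b^1_{\jstar})\in G_1$ and $(a^2_{\istar},b^2_{\jstar})\in G_2$ by construction of the $x$-graph, whence independence of $\MM$ gives both conclusions. Nothing to add.
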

\begin{proof}
	If $x_{\kstar}=1$, then $(a^{1}_{\istar},b^{1}_{\jstar}) \in G_1$ and $(a^{2}_{\istar},b^{2}_{\jstar}) \in G_2$ by construction of the $x$-graph. 
\end{proof}

\begin{claim}\label{clm:one-way-non-trivial}
	Let $G$ be the graph constructed by Alice and Bob. Assuming $x_{\kstar}=0$, \emph{any} MIS $\MM$ of $G$ contains either both $a^{1}_{\istar},b^{1}_{\jstar}$ or both $a^{2}_{\istar},b^{2}_{\jstar}$. 
\end{claim}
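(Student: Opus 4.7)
The plan is to use the rich clique and cross-edge structure added by Bob to confine the non-special part of any MIS to a single ``world,'' and then to extract the two desired special vertices from maximality.

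First I would abbreviate $A_1' := A_1 \setminus \{a^1_{\istar}\}$, and define $A_2', B_1', B_2'$ analogously, so that the special vertices are $a^1_{\istar}, a^2_{\istar}, b^1_{\jstar}, b^2_{\jstar}$. I would then read off from Bob's construction that $A_1' \cup A_2'$ and $B_1' \cup B_2'$ are cliques, that every $A_1'$--$B_2'$ pair is an edge, and every $A_2'$--$B_1'$ pair is an edge. Consequently, for the set $T := \MM \cap (A_1' \cup A_2' \cup B_1' \cup B_2')$, a short case analysis on which of the four groups $T$ meets (using that having an $A_1'$-vertex forbids any $A_2'$-vertex and any $B_2'$-vertex, and symmetrically) forces the dichotomy
\[
T \subseteq A_1' \cup B_1' \qquad \text{or} \qquad T \subseteq A_2' \cup B_2'.
\]

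Next, I would argue by symmetry and assume the first alternative $T \subseteq A_1' \cup B_1'$; the proof for the other alternative is identical with the roles of worlds $1$ and $2$ swapped and will yield $\{a^1_{\istar}, b^1_{\jstar}\} \subseteq \MM$ instead. The key observation is that the special vertex $a^2_{\istar}$ has neighbors only in $B_2$ (its only edges come from the $x$-graph $G_2$, since Bob excluded index $\istar$ from all of his $A$-side cliques and cross edges). Under the current case, $\MM \cap B_2 \subseteq \{b^2_{\jstar}\}$, and crucially the hypothesis $x_{\kstar} = 0$ means the edge $(a^2_{\istar}, b^2_{\jstar})$ is absent from $G_2$, so $b^2_{\jstar}$ is not a neighbor of $a^2_{\istar}$. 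Thus $a^2_{\istar}$ has no neighbor at all in $\MM$, and maximality of $\MM$ forces $a^2_{\istar} \in \MM$. An exactly symmetric argument applied to $b^2_{\jstar}$ (whose neighbors lie only in $A_2$ via $G_2$, and whose only candidate neighbor $a^2_{\istar}$ is non-adjacent by $x_{\kstar} = 0$) yields $b^2_{\jstar} \in \MM$.

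I expect the only subtle step to be the dichotomy in the first paragraph; once it is established, the maximality argument in the second paragraph is immediate. Putting the two cases together gives that either $\{a^1_{\istar}, b^1_{\jstar}\} \subseteq \MM$ or $\{a^2_{\istar}, b^2_{\jstar}\} \subseteq \MM$, which is exactly the claim.
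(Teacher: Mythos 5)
Your proof is correct and follows essentially the same reasoning as the paper's: Bob's cliques and cross-edges confine the non-special part of $\MM$ to a single world, and maximality together with $x_{\kstar}=0$ then forces the two excluded special vertices of the other world into $\MM$. The only difference is organizational---you argue directly via the $T$-dichotomy rather than by contradiction (which lets you skip the paper's preliminary observation that some non-special vertex must lie in $\MM$), but the key structural observations are identical.
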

\begin{proof}
	Suppose towards a contradiction that $\MM$ does not satisfy the assertion of the claim statement. 
	It is easy to see that no collection of two vertices from $a^{1}_{\istar},b^{1}_{\jstar},a^{2}_{\istar},b^{2}_{\jstar}$ can be a dominating set in $G$ and an MIS (any MIS of a graph is a dominating set). This implies that $\MM$
	necessarily contains a vertex $v \in G \setminus \set{a^1_{\istar},b^1_{\jstar},a^2_{\istar},b^2_{\jstar}}$. For simplicity, let us assume that $v \in A_1$; the other cases hold 
	by symmetry. We know that Bob has connected $v$ to all vertices in $A_1 \cup A_2 \setminus \set{a^1_{\istar},a^2_{\istar}}$ as well as vertices in $B_2 \setminus b^2_{\jstar}$. As such, none of these vertices can be part of $\MM$. 
	
	Now consider vertices $a^2_{\istar},b^2_{\jstar}$.  All ``potential'' neighbors of $a^2_{\istar}$ in $G$ are vertices in $B_2$ and hence except for $b^2_{\jstar}$ none of them can be in $\MM$. Similarly, all potential neighbors of 
	$b^{2}_{\jstar}$ in $G$ are vertices in $A_2$ and again except for $a^2_{\istar}$ none of them can be in $\MM$. Maximality of $\MIS$ plus the fact that there are no edges between $a^2_{\istar}$ and $b^{2}_{\jstar}$ (as $x_{\kstar}=0$), 
	implies that both $a^2_{\istar}$ and $b^{2}_{\jstar}$ are in $\MM$, a contradiction. 
\end{proof}
\noindent
We are now ready to prove Theorem~\ref{thm:lower-MIS}.
\begin{proof}[Proof of Theorem~\ref{thm:lower-MIS}]
	Let $\prot$ be any $1/3$-error one-way protocol for \MIS. By Claims~\ref{clm:one-way-trivial} and~\ref{clm:one-way-non-trivial}, we obtain a protocol for \Index which errs with probability at most $1/3$ and has communication
	cost at most equal to cost of $\prot$. By the $\Omega(N)$ lower bound on the one-way communication complexity of \Index, we obtain that communication cost of $\prot$ must be $\Omega(N) = \Omega(n^2)$. As the total number
	of vertices in the graph is $O(n)$, we obtain an $\Omega(n^2)$ lower bound on the communication complexity of \MIS. Theorem~\ref{thm:lower-MIS} now follows from this as one-way communication complexity lower bounds the space complexity
	of single pass streaming algorithms. 
\end{proof}

\subsection{A Lower Bound for Sublinear Time Maximal Matching}\label{sec:lower-matching}
We prove Theorem~\ref{thm:lower-matching} in this section. Recall the definition of the query model from Section~\ref{sec:sub-time}. 
We will show that there exists a family of $n$-vertex graphs such that any randomized algorithm for finding a maximal matching requires $\Omega(n^2)$ queries in expectation and use this to prove Theorem~\ref{thm:lower-matching}. 
We will show that this lower bound holds even for bipartite graphs.
 
We will denote our input graph by $G(V,E)$ where $V$ is bipartitioned into sets $L$ and $R$. We will assume that the algorithm is given degrees of all the vertices for free.
As such, we simply consider the following two types of queries:

\begin{itemize}
\item ${\mathbf Q_1(u,i)}$ (neighbor query) : given any vertex $u \in V$ and an integer $i \in [1 .. \deg(u)]$, this query returns the $i$-th neighbor of $u$ in the adjacency list of $u$ (the ordering of the list is arbitrary).

\item 
${\mathbf Q_2(u,v)}$ (pair query): given a pair of vertices $u,v \in V$, this query returns $0/1$ to indicate absence or presence of the edge $(u,v)$.

\end{itemize}

By Yao's minimax principle~\cite{Yao87}, it suffices to create a distribution over $n$-vertex graphs such that any deterministic algorithm requires $\Omega(n^2)$ queries to find a maximal matching on this distribution.
A graph from our distribution is generated as follows.
The set $L$ is partitioned into sets $L_1$ and $L_2$, and the set $R$ is partitioned into sets $R_1$ and $R_2$ with $|L_1| = |R_1| = n/6$, and $|L_2| = |R_2| = 5n/6$. Each vertex $u$ in $L$ is connected to every vertex in $R_1$, and each vertex $v$ in $R$ is connected to each vertex in $L_1$.
In addition, we add to $G$ a random perfect matching $M$ between $L_2$ and $R_2$. Finally, the adjacency list of each vertex is created by taking a random permutation of its neighbors. 
This completes the description of the process that generates our input graph $G(V,E)$.

Note that the final graph $G$ contains a perfect matching of size $n$. Thus any maximal matching in $G$ includes at least $n/3$ edges from $M$, and any maximal matching algorithm needs to discover at least $n/3$ edges in $M$. The heart of our lower bound argument is to show that this requires $\Omega(n^2)$ queries in expectation.

We assume that the algorithm is explicitly given the partition of vertices into sets $L_1, L_2, R_1,$ and $R_2$. This means that the algorithm already knows adjacency list of each vertex $u \in L_1$ and each vertex $v \in R_1$, as well as the answer to each query $Q_2(u,v)$ whenever $u \in L$ and $v \in R_1$ or whenever $u \in L_1$ and $v \in R$. 
Thus the only potentially useful queries for any algorithm are of the form $Q_1(u,i)$ for some $u \in L_2 \cup R_2$ and $i \in [n/6 +1]$, and $Q_2(u,v)$ for $u \in L_2$ and $v \in R_2$.

 We say that an edge $(u,v) \in M$ has been {\em discovered} if the set of queries performed thus far uniquely identify the edge $(u, v)$ to be in $M$. 
For any discovered edge $(u,v) \in M$, we will say the vertices $u$ and $v$ have been {\em discovered};
any vertex in $L_2 \cup R_2$ that has not been discovered is called an {\em undiscovered} vertex. 
We are now ready to prove the main result below.

\begin{lemma}
\label{lem:matching_queries}
Given a graph $G$ generated by the distribution described above, any deterministic algorithm requires $\Omega(n^2)$ queries in expectation to discover $n/3$ edges in $M$.
\end{lemma}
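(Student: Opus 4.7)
The plan is to apply Yao's minimax principle, reducing to the case of a deterministic algorithm on the random input distribution (uniformly random perfect matching $M$ between $L_2$ and $R_2$, with independent uniform adjacency permutations). I will show that for such an algorithm, the expected number of \emph{determined} vertices $u \in L_2$---those for which $\pi(u)$ is uniquely forced by the query answers, so that any correct output must include $(u, \pi(u))$---is $O(T/n)$ after $T$ queries. Markov's inequality then gives $\mathbb{E}[T] = \Omega(n^2)$ to reach $n/3$ determined vertices.

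I will partition the determined set $N$ into \emph{Type A} (a query directly witnessed the edge $(u, \pi(u))$, via one of $Q_2(u, \pi(u)) = 1$, $Q_1(u, i) = \pi(u)$, or $Q_1(\pi(u), j) = u$) and \emph{Type B} (determined only because every other candidate for $\pi(u)$ has been explicitly ruled out). For Type B, a clean deterministic counting argument works: while $|N| \leq n/3$, each Type B vertex must have at least $5n/6 - 1 - (n/3 - 1) = n/2$ of its alternatives eliminated by pair queries $Q_2(u, \cdot) = 0$ on $u$ itself, since the other determined vertices, whose $\pi$-values occupy distinct $R_2$ vertices, can account for at most $n/3 - 1$ rule-outs. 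Summing over Type B vertices gives $|N_B| \cdot (n/2) \leq T_2$, hence $|N_B| \leq 2T_2/n$.

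For Type A, the neighbor-query contribution is controlled by a uniform-position argument: since $\pi(u)$ occupies a uniformly random slot in $u$'s adjacency list of $n/6 + 1$ entries, any $k$ queries $Q_1(u, \cdot)$ succeed with probability exactly $k/(n/6 + 1)$ irrespective of the algorithm's adaptive choices; summing over $u$ (and symmetrically for queries on the $R_2$ side) bounds the expected contribution by $O(T_1/n)$. For the pair-query contribution, I will bound $\sum_{t \text{ pair}} \Pr[\pi(u_t) = v_t \mid \text{history}]$ by splitting based on whether the conditional success probability $P_t$ at the moment of the query is at most $3/n$ or exceeds it. The low-probability queries sum to $O(T_2/n)$ in aggregate, while any high-probability query must have been preceded by $\Omega(n)$ pair queries targeting the same vertex $u_t$---since reducing $u_t$'s candidate set to size at most $n/3$ demands $\Omega(n)$ rule-outs beyond those supplied by the $\leq n/3 - 1$ other determined vertices---so charging these high-probability hits against their prior pair queries yields at most $O(T_2/n)$ in that category as well.

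Putting everything together gives $\mathbb{E}[|N|] = O(T/n)$ while $|N| \leq n/3$, so running for $T = c n^2$ queries with a small constant $c$ keeps the expectation below $n/6$, and Markov's inequality bounds $\Pr[|N| \geq n/3]$ away from one, yielding the claimed $\Omega(n^2)$ lower bound on expected stopping time. The main obstacle is the pair-query side of Type A: the naive pointwise claim that $\Pr[\pi(u) = v \mid \text{history}] = O(1/n)$ is false in general, because the algorithm can use targeted prior queries to concentrate the conditional distribution of $\pi(u)$. The case-split approach overcomes this by structurally certifying that every high-probability pair query comes with a pre-paid cost of $\Omega(n)$ previously spent pair queries on the same vertex, and making this precise---in particular, showing that $P_t > 3/n$ really forces the candidate set $\mathrm{Pos}_{u_t}$ to be small enough (robustly to non-uniformities in the posterior over consistent matchings) to trigger the Type B bookkeeping---is the most delicate ingredient of the argument.
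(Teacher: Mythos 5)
Your high-level plan --- bound $\mathbb{E}[|N_t|] = O(t/n)$ and apply Markov --- is a legitimate route to the lemma, and your treatment of the $Q_1$-query contribution (uniform position in an $(n/6+1)$-long list) is fine. But the two pair-query components both have genuine gaps. First, the Type~B ledger assumes that for a determined vertex $u$ with no direct witness, every $R_2$-candidate other than $\pi(u)$ was ruled out either by a $Q_2(u,\cdot)=0$ answer or by being the image of some \emph{determined} vertex. This misses Hall-type eliminations: if $u_1,\ldots,u_k$ have had their candidate sets pruned (by pair queries on \emph{them}, not on $u$) down to a common set $T$ of size $k$, then every $v\in T$ is ruled out as a candidate for $u$ even though none of $u_1,\ldots,u_k$ is determined and no query touched $u$ or any $v\in T$. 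Your fixed bound of ``$\geq n/2$ pair queries on $u$ itself'' does not survive this, and the repair (arguing that large Hall structures are themselves expensive, and amortizing rule-outs across vertices) is nontrivial and is not sketched.

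Second, the step you flag as delicate is in fact false as stated: $P_t > 3/n$ does \emph{not} force $\mathrm{Pos}_{u_t}$ to be small. If $v_t$'s candidate set (on the $R_2$ side) has been cut to $\{u_t, u'\}$ --- which costs pair queries on many \emph{other} $L_2$-vertices, not on $u_t$ --- then $P_t\approx 1/2$ while $u_t$ still has its full candidate set, so there is nothing on $u_t$'s side to charge against. Saving the argument requires a structural fact of the type the paper isolates as Lemma~\ref{lem:random_matching} (if all remaining degrees are $\geq 2N/3$ then every edge has posterior probability $\leq 3/N$), together with a mechanism to keep \emph{all} degrees uniformly large rather than reasoning about a single vertex. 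The paper supplies exactly this: it packages the low-degree failure mode into the notion of a \emph{bad vertex}, runs in phases that end at each discovery, hands the algorithm a free edge whenever a bad vertex appears (which sidesteps all Hall-type reasoning at once), and then applies the random-matching lemma only inside good phases. Your amortized-charging route could plausibly be made to work, but it would need both the random-matching lemma and a global degree invariant (or a free-information device like the paper's), neither of which appears in the proposal.
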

\begin{proof}
After $t$ queries have been made by the algorithm, let $L_U(t) \subseteq L_2$ and $R_U(t) \subseteq R_2$ denote the set of undiscovered vertices in $L_2$ and $R_2$ respectively. Let $E(t) \subseteq L_U(t) \times R_U(t)$ denote the set of edge slots that have not yet been queried/discovered. Note that by our process for generating $G$, the undiscovered edges in $M$ correspond to a random perfect matching between $L_U(t)$ and $R_U(t)$ that is entirely supported on $E(t)$. 

We will analyze the performance of any algorithm by partitioning the queries into phases. The first query by the algorithm starts the first phase, and a phase ends as soon as an edge in $M$ has been discovered. 
Let $Z_i$ be a random variable that denotes the number of queries performed in phase $i$ of the algorithm. Thus we wish to analyze the $E[ \sum_{i=1}^{n/3} Z_i]$. The following lemma is crucial to our analysis. 

\begin{lemma}
\label{lem:random_matching}
Let $G'(L' \cup R',E')$ be an arbitrary bipartite graph such that $|L'| = |R'| = N$, and each vertex in $G'$ has degree at least $2N/3$. Then for any edge $e = (u,v) \in E'$, the probability that $e$ is contained in a uniformly at random chosen perfect matching in $G'$ is at most $3/N$. 
\end{lemma}
\begin{proof}
It is easy to verify that such a $G'$ always contains a perfect matching. Let us fix any perfect matching $M$ that contains $e = (u,v)$. We will show that we can transform $M$ into perfect matchings $M_1, M_2, ..., M_p$ for $p \ge N/3$ such that each $M_i$ identical to $M$ except that the edge $(u,v)$ and some edge $(a_i,b_i)$ in $M$ are replaced with edges $(u, b_i)$ and $(v, a_i)$ in $M_i$. Furthermore, for any other perfect matching $M'$ containing the edge $e = (u,v)$, the set of perfect matchings $M'_1, M'_2, ..., M'_q$ created by above-mentioned transformation are pairwise distinct from $M_1, M_2, ..., M_p$. It then follows that for each perfect matching containing edge $e$, we can exhibit a unique set of $N/3$ perfect matchings that do not contain edge $e$. Hence the probability that a random perfect matching in $G$ contains the edge $e$ is at most $3/N$ as claimed.

We now describe the transformation. Since each vertex in $G$ has degree at least $2N/3$, it follows that there are at least $p \ge N/3$ edges in $M$, say, $(a_1,b_1), (a_2, b_2), ..., (a_p, b_p)$ such that $u$ is adjacent to each of $b_1, b_2, ..., b_p$ and $v$ is adjacent to each of $a_1, a_2, ..., a_p$. We can then create matching $M_i$ by replacing the edge $(u,v)$ and some edge $(a_i,b_i)$ in $M$ with edges $(u, b_i)$ and $(v, a_i)$ in $M_i$. All other edges remain the same. 

Finally, we show that for any other perfect matching $M'$ containing the edge $e = (u,v)$, the set of perfect matchings $M'_1, M'_2, ..., M'_q$ created by above-mentioned transformation are pairwise distinct from $M_1, M_2, ..., M_p$. To see this, consider any edge $(x,y) \in M \setminus M'$. None of the matchings $M'_1, M'_2, ..., M'_q$ contain the edge $(x,y)$. On the other hand, with at most one possible exception, each of the matchings $M_1, M_2, ..., M_p$ contains the edge $(x,y)$. The only possible exception occurs when the edge $(x,y)$ corresponds to an edge $(a_i, b_i)$ for some matching $M_i$. In that case, $M_i$ contains edges $(u, y)$ and $(x,v)$, neither of which can be present in any of the matchings $M'_1, M'_2, ..., M'_q$.

This completes the proof of the lemma.
\end{proof}

For any vertex $w \in (L_U(t) \cup R_U(t))$, we say that {\em $Q_1$-uncertainty} of $w$ is $d$ if there are at least $d$ entries in the adjacency list of $w$ that have not yet been probed. Similarly, for any vertex $w \in (L_U(t) \cup R_U(t))$, we say that {\em $Q_2$-uncertainty} of $w$ is $d$ if there are at least $d$ edges in $E(t)$ that are incident on $w$.

At time $t$, we say a vertex $w \in (L_U(t) \cup R_U(t))$ is {\em bad} if either $Q_1$-uncertainty of $w$ is less than $n/12$ or $Q_2$-uncertainty of $w$ is less than $3n/4$. 
Note that if at some time $t$ none of the vertices in $(L_U(t) \cup R_U(t))$ are bad then
in the next $n/24$ time steps, any single $Q_1$ query succeeds in discovering an edge in $M$ with probability at most $24/n$. 
Also, for the next $n/24$ time steps, the degree of any vertex in $L_U(t) \cup R_U(t)$ in $E(t)$ remains above $2n/3$. Thus by Lemma~\ref{lem:random_matching}, the probability that any single $Q_2$ query made during the first $n/12$ queries in the phase succeeds (in discovery of a new edge in $M$) is at most $3/n$. 

We say a phase is {\em good} if at the \underline{start} of the phase, there are no bad vertices, and the phase is {\em bad} otherwise.

\begin{proposition}
The expected length of a good phase is at least $n/96$.
\end{proposition}
\begin{proof}
If at the start of the phase $i$, no vertex is bad, then for the next $n/24$ time steps, the probability of success for any $Q_1/Q_2$ query is at most $24/n$. Thus the expected number of successes (discovery of a new edge in $M$) in the first $n/48$ time steps in a phase is at most $1/2$. By Markov's inequality, it then follows that with probability at least $1/2$, there are no successes among the first $n/48$ queries in a phase. Thus the expected length of the phase is  $\geq n/96$. 
\end{proof}

Note that if all phases were good, then it immediately follows that the expected number of queries to discover $n/3$ edges is $\Omega(n^2)$. To complete the proof, it remains to show that most phases are good. For ease of analysis, we will give the algorithm additional information for free and show that it still needs $\Omega(n^2)$ queries in expectation even to discover the first $n/24$ edges in $M$.

Whenever algorithm starts a bad phase, we immediately reveal to the algorithm an undiscovered edge in $M$ that is incident on an arbitrarily chosen bad vertex. Thus each bad phase is guaranteed to consume a bad vertex (i.e., make the bad vertex discovered and hence remove it from further consideration). On the other hand, to create a bad vertex $w$, one of the following two events needs to occur:
(a) either we have done at least $(n/12)$ $Q_1$ queries incident on $w$, or 
(b) the number of discovered edges in $M$ plus the number of $Q_2$ queries is at least $5n/6 - 3n/4 = n/12$.

Since we are restricting ourselves to analyzing the discovery of first $n/24$ edges in $M$, any vertex $w$ that becomes bad due to (b) above requires at least $n/24$ $Q_2$ queries incident on it. Thus to create $K$ bad vertices in the first $n/24$ phases, we 
need to perform at least $(K \cdot (n/24))/2$ queries; here the division by $2$ accounts for the fact that each $Q_2$ query reduces uncertainity for two vertices. It now follows that if the algorithm encounters at least $n/48$ bad phases among the first $n/24$ 
phases, then $K \ge n/48$ and hence it must have already performed $n^2/(48)^2$ queries. Otherwise, at least $n/48$ phases among the first $n/24$ phases are good, implying that the expected number of queries is at least $(n/48) \cdot (n/96)$. This completes 
the proof of Lemma~\ref{lem:matching_queries}. 
\end{proof}

\begin{proof}[Proof of Theorem~\ref{thm:lower-matching}]
Suppose towards a contradiction that $\alg$ is a query algorithm that uses $o(n^2)$ queries and find a maximal matching of the graph with probability $1-o(1)$. 
Consider the distribution of instances introduced in this section and notice that any maximal matching necessarily needs to find $n/3$ edges from $M$ (as any maximal matching is a $2$-approximate matching). 
We run $\alg$ on the distribution of this section and if it did not discover $n/3$ edges from $M$, we simply query all edges of the graph. As such, the expected query complexity of this algorithm before finding $n/3$ edges from 
$M$ is $o(n^2) \cdot (1-o(1))$ (if it finds the maximal matching) plus $n^2 \cdot o(1)$ (if it did not find the maximal matching and we queried all graph) which is $o(n^2)$. By fixing the randomness of this algorithm against this distribution, 
i.e., using (the easy direction of) Yao's minimax principle, we obtain a deterministic algorithm that makes $o(n^2)$ queries in expectation and finds $n/3$ edges from $M$. This contradicts Lemma~\ref{lem:matching_queries}, which finalizes the proof. 
\end{proof}

\subsection{A Lower Bound for $(\Delta+1)$ Vertex Coloring}\label{sec:lower-color}

In this section, we prove Theorem~\ref{thm:lower-color} by showing that there exists a family of $n$-vertex graphs such that for any constant $c > 1$, any randomized algorithm that outputs a valid $(c \cdot \Delta)$ coloring on this family with probability at least $1 - o(1)$, requires $\Omega(n\sqrt{n})$ queries. 
For ease of notation, we will work with graphs with $2n$ vertices and focus on proving that finding a $(1.99 \Delta)$ coloring requires $\Omega(n\sqrt{n})$ queries; essentially the same proof argument also implies an indetical lower bound for a $(c \cdot \Delta)$ coloring.

The maximum degree $\Delta$ of each graph in our family will be $\sqrt{n}+1$. 
Since for large enough $n$, $1.99(\sqrt{n}+1) < 2\sqrt{n}$, it suffices to show that any randomized algorithm for finding a $(2\sqrt{n})$ coloring with large constant probability requires $\Omega(n\sqrt{n})$ queries. We use the same query model defined in Section \ref{sec:lower-matching}.

By Yao's minimax principle~\cite{Yao87}, it suffices to create a distribution over graphs with $2n$ vertices such that any deterministic algorithm requires $\Omega(n\sqrt{n})$ queries to find a valid coloring with probability at least $1 - o(1)$. A graph from our distribution is generated as follows. 
The vertex set is divided into $\sqrt{n}+1$ sets $V_0,V_1,\dots,V_{\sqrt{n}}$ where $V_0$ has $n$ vertices, and each set $V_i$, for $1 \le i \le n$, has exactly $\sqrt{n}$ vertices. Furthermore, the vertices in $V_0$ are partitioned into $\sqrt{n}$ sets $V'_1,V'_2,\dots,V'_{\sqrt{n}}$ where each set has $\sqrt{n}$ vertices. For any $1 \le i \le \sqrt{n}$, each vertex in $V_i$ is connected to every vertex in $V'_i$. Finally, we pick a \emph{random perfect matching} $M$ inside $V_0$. The adjacency list of each vertex in the graph is a random permutation of its neighbor set. This completes the description of how a graph in this family is generated and presented to the algorithm.

The algorithm is given upfront the following information: $(i)$ the partition of the vertices into sets $V_0,V_1,\dots,V_{\sqrt{n}},V'_1,V'_2,\dots,V'_{\sqrt{n}}$, $(ii)$ degrees of all the vertices, and $(iii)$ all edges in the graph except the edges in the matching $M$. Thus the only task that remains for the algorithm is to discover enough information about the random matching $M$ so as to output a valid $(2\sqrt{n})$ coloring with probability at least $1 - o(1)$. This is the task we use to prove our lower bound.
The high level strategy is as follows. We first use an argument similar to the one in Section~\ref{sec:lower-matching} to argue that by making $o(n\sqrt{n})$, the algorithm is not able to find more than $o(n)$ edges of the matching $M$ (with constant probability). 
The algorithm now has
made all its queries and hence needs to commit to a coloring of the graph. We then show that no matter what coloring the algorithm chooses at this point, there is a non-trivial probability that one of the edges of $M$ not queried by the algorithm appears inside
one color class (i.e., becomes monochromatic), hence invalidating the output of the algorithm.

\begin{lemma} \label{lem:color-lb}
    Any algorithm does at most $n\sqrt{n}/400000$ queries on graphs generated by the distribution above, outputs a valid $(2\sqrt{n})$ coloring with probability at most $3/4$.
\end{lemma}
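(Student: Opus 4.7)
The plan is to combine (i) a bound showing that the algorithm can discover only few edges of the random matching $M$ with (ii) a second-moment argument showing that the output coloring is almost surely invalid on the residual randomness. By Yao's minimax principle, it suffices to fix a deterministic algorithm $\alg$ making at most $q := n\sqrt{n}/400000$ queries. Call an edge of $M$ \emph{discovered} if the current view of $\alg$ forces it to lie in $M$; a $Q_1(u,i)$ query reveals $M(u)$ with probability at most $1/(\sqrt{n}+1)$, since $M(u)$ occupies a uniformly random position in $u$'s adjacency list, and a $Q_2(u,v)$ query reveals an edge only when $(u,v) \in M$, which (conditional on at most $0.01n$ edges having been discovered thus far) has probability $O(1/n)$ by the symmetry of the uniform random matching. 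Defining the stopping time $\tau$ as the first time more than $0.01n$ edges have been discovered, the capped process $\tilde D_t := |D_{\min(t,\tau)}|$ grows by at most $1/(\sqrt{n}+1)$ in expectation per step, so $\Ex[\tilde D_q] \leq n/400000$, and Markov's inequality yields $\Pr[\tau \leq q] \leq 1/4000$.

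On the complementary event $\tau > q$, the set $U \subseteq V_0$ of undiscovered vertices satisfies $m := |U| \geq 0.98 n$. A symmetry argument---two matchings $M_1, M_2$ on $U$ both consistent with the view yield identical adjacency-list permutation factors, because every observed $Q_1$ response lies in a fixed $V_j$-neighborhood regardless of which $M_i$ is active---shows that $M|_U$ conditional on the view is uniformly distributed over perfect matchings of $U$ avoiding the set $F$ of pairs $(u,v)$ for which a $Q_2$ query confirmed $(u,v)\notin M$. Since $\alg$ is deterministic, the output coloring $\chi$ is a function of the view and thus fixed once we condition on it. Using at most $2\sqrt{n}$ colors on $m$ vertices, Cauchy--Schwarz gives that the number of same-color pairs satisfies $T := \sum_c \binom{n_c}{2} \geq m^2/(4\sqrt{n}) - m/2 = \Omega(n\sqrt{n})$.

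Let $X$ count the monochromatic matching edges in $M|_U$, analyzed first under the \emph{unconditional} uniform distribution on perfect matchings of $U$. Using $\Pr[e \in M] = 1/(m-1)$ and $\Pr[e,e' \in M] = 1/((m-1)(m-3))$ for disjoint pairs $e,e'$, we get $\Ex[X] = T/(m-1) = \Omega(\sqrt{n})$ and the direct calculation
\[
\text{Var}(X) \;\leq\; \Ex[X] \;+\; \frac{2\,\Ex[X]^2}{m-3},
\]
whence Chebyshev yields $\Pr[X=0] \leq 1/\Ex[X] + 2/(m-3) = O(1/\sqrt{n})$. The actual conditional distribution differs from the unconditional one only by the $F$-avoidance constraint; since $|F|\leq q$, a union bound gives $\Pr[M\cap F \neq \emptyset] \leq |F|/(m-1) = O(1/\sqrt{n})$ in the relevant range of $n$, so the two distributions agree up to an $O(1/\sqrt{n})$ additive shift on any event, yielding $\Pr[X=0 \mid M|_U \text{ avoids } F] = O(1/\sqrt{n})$. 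Combining, $\Pr[\chi \text{ is valid}] \leq \Pr[\tau\leq q] + \Pr[X=0 \mid \tau>q] \leq 1/4000 + O(1/\sqrt{n}) \leq 3/4$ for all sufficiently large $n$.

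The principal technical subtlety is the symmetry claim that $M|_U$ conditioned on the view is uniform over $F$-avoiding matchings---verifying that the adjacency-list permutation randomness does not leak additional information about $M|_U$ beyond what is encoded in the discovered edges $D$ and forbidden pairs $F$---which requires carefully tracing how each observed $Q_1$ response depends on $M$. Once this is in hand, the remaining ingredients are routine second-moment and stopping-time estimates, and the constant $400000$ is chosen so that the two sources of failure probability, $1/4000$ and $O(1/\sqrt{n})$, sum to well under $3/4$ in the asymptotic regime.
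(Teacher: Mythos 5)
Your high-level plan (bound the number of discovered matching edges, then a second-moment argument on the residual randomness) mirrors the paper's two-lemma structure, but the step that transfers the second-moment calculation from the unconditional matching distribution to the conditional one is wrong, and it is wrong in a way that cannot be patched without importing essentially the paper's machinery. You write ``since $|F| \le q$, a union bound gives $\Pr[M \cap F \neq \emptyset] \le |F|/(m-1) = O(1/\sqrt{n})$'', but $|F|$ can be as large as $q = n\sqrt{n}/400000$, so $|F|/(m-1) = \Theta(\sqrt{n})$ --- a vacuous bound, not $O(1/\sqrt{n})$. Indeed, $\Pr[M \cap F = \emptyset]$ can be exponentially small when $|F| = \Theta(n\sqrt{n})$, so the unconditional and $F$-avoiding distributions are genuinely far apart, and no additive-shift argument can bridge them. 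The paper instead works \emph{directly} with the conditional ($F$-avoiding) matching: it defines a vertex to be problematic when too many forbidden pairs (equivalently, too few $Q_2$-uncertain slots) are incident on it, shows via its ``unsettled'' notion (Lemma~\ref{lem:color-query}) that with probability $0.9$ one can restrict to a sub-universe $U'$ of $4n/5$ vertices each with at most $\sqrt{n}/5000$ forbidden partners, and then proves Lemmas~\ref{lem:color-matching} and~\ref{lem:color-pairs} for the $F$-avoiding matching on a graph of minimum degree $|U'| - \sqrt{n}/4000$. That per-vertex degree control is what makes $\Pr[e \in M]$ and $\Pr[e,e' \in M]$ behave like $\approx 1/m$ and $\approx 1/m^2$; without it the second moment is uncontrolled.

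A secondary, related issue is in your stopping-time step. The stopping time $\tau$ tracks only the count of discovered edges, not how concentrated the queries are on individual vertices, yet your per-step growth bounds rely implicitly on per-vertex uncertainty staying high. For $Q_1$: once most positions of a vertex $u$ have been probed, the conditional discovery probability per probe is far above $1/(\sqrt{n}+1)$, so ``grows by at most $1/(\sqrt{n}+1)$ in expectation per step'' is not a pathwise bound; a small adaptive example (probe $u$ until found, then move on) shows the expected number of discoveries can exceed $q/(\sqrt{n}+1)$ by a constant factor. For $Q_2$: your claim that the conditional probability is $O(1/n)$ given only that at most $0.01n$ edges have been discovered is false --- concentrating $\approx n$ $Q_2$ queries on a single vertex $u$ drives $\Pr[(u,v) \in M \mid \text{history}]$ to a constant. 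Both issues are resolved by the paper's ``bad vertex'' bookkeeping: declaring any vertex queried more than $\sqrt{n}/5000$ times as bad (and revealing its matched edge for free) caps the per-step conditional probability at $2/\sqrt{n}$, counts bad vertices as consuming $\Omega(\sqrt{n})$ queries each, and simultaneously guarantees the residual-degree lower bound needed for the second-moment step. So the missing idea is precisely this per-vertex accounting of forbidden/probed slots; the stopping time on discovered-edge count alone does not give you either the $Q_1/Q_2$ discovery rate or the residual min-degree needed for Lemmas~\ref{lem:color-matching}--\ref{lem:color-pairs}.
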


We adapt a few definitions from Section \ref{sec:lower-matching}. We say that an edge $(u,v) \in M$ has been {\em discovered} if the set of queries performed thus far uniquely identify the edge $(u, v)$ to be in $M$. 
For any discovered edge $(u,v) \in M$, we will say the vertices $u$ and $v$ have been {\em discovered};
any vertex in $V_0$ that has not been discovered is called an {\em undiscovered} vertex. 

After $t$ queries have been made by the algorithm, let $U(t) \subseteq V_0$ denote the set of undiscovered vertices in $V_0$. Let $E(t) \subseteq U(t) \times U(t)$ denote the set of edge slots that have not yet been queried/discovered. For any vertex $w \in V_0$, we say that {\em $Q_2$-uncertainty} of $w$ is $d$ if there are at least $d$ edges in $E(t)$ that are incident on $w$.

Additionally, we say that the state of the algorithm is {\em unsettled} after $t$ queries if there are at least $\frac{9n}{10}$ vertices in $U(t)$ whose $Q_2$-uncertainty is at least $\card{U(t)}-\frac{\sqrt{n}}{5000}$;  we will say that the state of the algorithm is {\em settled} otherwise.
The proof of Lemma \ref{lem:color-lb} has two parts. In the first part, we will show that if the state of the algorithm is unsettled after all the queries have been made, then any $(2\sqrt{n})$ coloring of the graph is invalid with some constant probability. In the second part, we will prove that to make the state of the algorithm settled, the algorithm needs $\Omega(n\sqrt{n})$ queries with a large constant probability.

The proof of the following lemma is analogous to the proof of Lemma \ref{lem:random_matching}. 

\begin{lemma} \label{lem:color-matching}
    Suppose we are given a graph $G$ on $n$ vertices such that each vertex in $G$ has at least $n-\sqrt{n}/4000$ neighbors. Then if we pick a perfect matching uniformly at random in $G$, for any edge $e$, the probability that $e$ is contained in the perfect matching is at most $\frac{1}{n-\sqrt{n}/1000}$. 
\end{lemma}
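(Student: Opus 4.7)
The plan is to extend the swap-and-count argument of Lemma~\ref{lem:random_matching} to the non-bipartite setting through a double counting scheme. Let $A$ denote the set of perfect matchings of $G$ that contain $e=(u,v)$ and $B$ the set of perfect matchings that do not contain $e$. The goal is to show that $(n - O(\sqrt{n}))\,|A| \le |B|$, which gives $|A|/(|A|+|B|) \le 1/(n - O(\sqrt{n}))$ and yields the stated probability bound with the correct constants.

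For each $M \in A$ and each ordered pair $(a,b)$ whose underlying edge $\{a,b\}$ lies in $M \setminus \{e\}$ (each undirected edge of $M \setminus \{e\}$ gives two orderings), I define the swap matching $M_{(a,b)} := (M \setminus \{e,\{a,b\}\}) \cup \{\{u,a\},\{v,b\}\}$. This is a valid perfect matching of $G$ lying in $B$ precisely when $u \sim a$ and $v \sim b$. There are $n-2$ such ordered pairs; since every vertex has at most $\sqrt{n}/4000$ non-neighbors in $G$ and each non-neighbor of $u$ (respectively, of $v$) participates in at most one edge of $M$, at most $\sqrt{n}/4000$ of the ordered pairs fail $u \sim a$ and at most $\sqrt{n}/4000$ fail $v \sim b$. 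Hence at least $n - 2 - \sqrt{n}/2000$ ordered pairs yield valid matchings in $B$, and these matchings are pairwise distinct because from $M_{(a,b)}$ one can recover $(a,b)$ by reading off the partners of $u$ and $v$.

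To close the double count I argue that the reverse map has multiplicity at most one: given $M' \in B$, letting $u'$ and $v'$ be the partners of $u$ and $v$ in $M'$, the only candidate $M \in A$ satisfying $M_{(a,b)} = M'$ must have $a = u'$ and $b = v'$, and hence $M = (M' \setminus \{\{u,u'\},\{v,v'\}\}) \cup \{\{u,v\},\{u',v'\}\}$, which defines a matching of $G$ only when $\{u',v'\} \in E(G)$. Counting pairs $(M, M')$ with $M' = M_{(a,b)}$ for some valid ordered pair thus yields $(n - 2 - \sqrt{n}/2000)\,|A| \le |B|$, and rearranging gives $|A|/(|A|+|B|) \le 1/(n - 1 - \sqrt{n}/2000) \le 1/(n - \sqrt{n}/1000)$ in the regime where the degree hypothesis is non-vacuous (which requires $\sqrt{n}/4000 \ge 1$, making the $-1$ easily absorbable into the $\sqrt{n}/1000$ slack).

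The main obstacle is simply keeping the non-neighbor counts and swap orientations straight so that the error term $\sqrt{n}/1000$ in the denominator emerges cleanly; there is no conceptual ingredient beyond porting the bipartite swap idea of Lemma~\ref{lem:random_matching} to matchings in an arbitrary graph.
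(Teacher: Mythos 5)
Your proof is correct and is essentially the paper's proof: the same swap-and-count double counting that, from a matching $M$ containing $e=\{u,v\}$, produces matchings avoiding $e$ by replacing $\{u,v\}$ and one other matching edge $\{a,b\}$ with $\{u,a\}$ and $\{v,b\}$, with disjointness across different $M$ established by reading off the partners of $u$ and $v$ in the resulting matching. Your ordered-pair bookkeeping, the slightly weaker per-swap requirement ($u\sim a$ and $v\sim b$ rather than $a,b\in N(u)\cap N(v)$), and the cleaner statement of injectivity of the reverse map are minor cosmetic refinements of the paper's argument, which it in turn refers back to its Lemma~\ref{lem:random_matching}.
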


\begin{proof}
    For any edge $(u,v)$ in the graph, $u$ and $v$ have at least $n-\sqrt{n}/2000$ common neighbors. 
    Consider any perfect matching $M$ that contains the edge $(u,v)$. By the assumption on the degree of vertices in $G$, there are at least $n/2-\sqrt{n}/2000$ edges in the perfect matching $M$ such that both end-points of these edges are neighbors of $u$ and $v$. For each pair of such vertices $(a,b)$, we can then obtain two perfect matching by replacing $(u,v)$ and $(a,b)$ with $(u,a)$ and $(v,b)$ or $(u,b)$ and $(v,a)$. Thus for every matching $M$ containing the edge $(u,v)$, we can generate a unique set of $n-\sqrt{n}/1000$ perfect matchings that do not contain the edge $(u,v)$ (this is similar to the argument of Lemma \ref{lem:random_matching}). It then follows that the probability that a random perfect matching contains the edge $(u,v)$ is at most $\frac{1}{n-\sqrt{n}/1000}$
\end{proof}

We now prove that if the state of the algorithm is unsettled after it finishes the queries, then the coloring output by the algorithm is invalid with constant probability.

\begin{lemma} \label{lem:color-intra}
    If the state of the algorithm is unsettled after it finishes the queries, then any $(2\sqrt{n})$ coloring output by the algorithm is invalid with probability at least $3/8-o(1)$.
\end{lemma}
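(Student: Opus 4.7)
The plan is to exhibit a large set $F$ of vertex-disjoint monochromatic candidate pairs in $E(t)$ whose endpoints all have high $Q_2$-uncertainty, and then argue that the uniform random perfect matching on $(U(t), E(t))$ contains at least one edge of $F$ with probability at least $3/8 - o(1)$.

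First, I will extract $F$. The unsettled assumption furnishes a set $H \subseteq U(t)$ with $|H| \ge 9n/10$ such that every $v \in H$ has at most $\sqrt{n}/5000$ non-candidate partners in $U(t)$. Grouping $H$ by color yields $2\sqrt{n}$ classes of sizes $N_1,\ldots,N_{2\sqrt n}$ with $\sum_c N_c = |H|$. I will discard the ``small'' classes with $N_c < \sqrt{n}/100$, which jointly hold at most $2\sqrt n \cdot \sqrt n/100 = n/50$ vertices of $H$. Inside each surviving ``large'' class $C$, the graph whose edges are $\{(u,v)\in E(t): u,v\in C\}$ has non-edge degree at most $\sqrt n/5000 < |C|/2$, so by a Dirac-type argument it admits a matching of size $\lfloor N_c/2\rfloor$. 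Unioning these intra-class matchings produces $F$, whose edges are pairwise vertex-disjoint, monochromatic, and in $E(t)$, with
\[
|F| \ \ge\ \tfrac{1}{2}\bigl(|H| - n/50\bigr) - 2\sqrt n \ \ge\ (2/5 - o(1))\,n.
\]

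Next I will bound $\Pr[M \cap F = \emptyset]$. The unrevealed part of $M$ is uniform over perfect matchings of $(U(t), E(t))$, a graph that by the high-uncertainty condition differs from $K_{|U(t)|}$ only in missing at most $\sqrt n/5000$ edges at every $H$-vertex. In $K_{|U(t)|}$ itself, the probability a uniform perfect matching avoids any specified set of $|F|$ disjoint edges is exactly
\[
\prod_{i=0}^{|F|-1}\!\Bigl(1-\tfrac{1}{|U(t)|-1-2i}\Bigr) \ =\ (1+o(1))\sqrt{1-\tfrac{2|F|}{|U(t)|}} \ \le\ (1+o(1))/\sqrt 5,
\]
since $|U(t)|\le n$ and $|F|\ge (2/5-o(1))n$. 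Using a switching argument analogous to the proof of Lemma~\ref{lem:color-matching}, the fact that each endpoint of every edge of $F$ is in $H$ (and so has all but $o(\sqrt n)$ of $U(t)$ as candidate partners) will allow me to transfer this estimate to the actual host graph with only a $(1+o(1))$ multiplicative loss, giving $\Pr[M\cap F=\emptyset]\le 1/\sqrt 5+o(1)<5/8+o(1)$, and hence $\Pr[M\cap F\neq\emptyset]\ge 3/8-o(1)$, which translates directly into the stated probability that the output coloring is invalid.

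The main obstacle I expect is this last transfer step, since $(U(t),E(t))$ is only near-complete on $H$ and may behave arbitrarily on $U(t)\setminus H$. My plan is to reveal $M$ vertex-by-vertex along an arbitrary ordering of $V(F)$: at step $i$, conditioned on prior reveals, the partner of the next exposed unmatched $H$-vertex should be essentially uniform over the $|U(t)|-1-2i$ remaining unmatched vertices (the $o(\sqrt n)$ ruled-out candidates absorb into a $1\pm o(1)$ factor), so the conditional probability that this partner is its $F$-mate is $(1+o(1))/(|U(t)|-1-2i)$. Multiplying the complementary probabilities along this reveal order reproduces the $K_{|U(t)|}$ bound up to a $(1+o(1))$ factor, at which point the calculation above closes the argument.
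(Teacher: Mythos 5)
Your construction of a vertex-disjoint monochromatic matching $F$ inside $U(t)$ is a reasonable first move, but the probability estimate at its heart is incorrect, and even after repairing it the approach falls short of the target bound $3/8$.

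First, the claimed identity
\[
\Pr\bigl[M\cap F=\emptyset\bigr]\;=\;\prod_{i=0}^{|F|-1}\Bigl(1-\tfrac{1}{|U(t)|-1-2i}\Bigr)
\]
for a uniform perfect matching of $K_{|U(t)|}$ avoiding $|F|$ fixed disjoint edges is false. Take $|U(t)|=4$ and let $F$ be one of the three perfect matchings of $K_4$: the product evaluates to $(1-\tfrac13)(1-\tfrac11)=0$, while the true probability that a uniform perfect matching of $K_4$ avoids that $F$ is $2/3$. The sequential reveal sketch does not yield the product because the conditional failure probability at step $i$ is only $1/(|U(t)|-2i+1)$ when the partner $v_i$ is still unmatched; when $v_i$ has already been matched in an earlier reveal, the step contributes nothing. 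Thus the product is at best a \emph{lower} bound on $\Pr[M\cap F=\emptyset]$, which points the wrong way. The correct asymptotic for avoiding $k$ disjoint edges in $K_{2m}$ is $\approx e^{-k/(2m)}$ (a Poisson-type approximation on the count of $F$-edges hit), not $\sqrt{1-k/m}$; these two expressions agree only in the regime $k=o(m)$ and diverge badly when $k/m$ is near $1$.

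Second, even substituting the correct estimate does not rescue the argument. To conclude $\Pr[M\cap F=\emptyset]\le 5/8$ one needs $|F|/|U(t)|\ge\ln(8/5)\approx 0.47$, but your construction yields $|F|\le 0.44n$ (roughly half of the $\ge 0.88n$ vertices surviving the small-class discard) while $|U(t)|$ can be as large as $n$; so $|F|/|U(t)|$ can be as small as $\approx 0.44$ and the avoidance probability can be $\ge e^{-0.44}\approx 0.64>5/8$. In short, restricting to a disjoint matching discards a factor of about $\sqrt{n}$ of the available evidence: the expected number of $F$-edges in $M$ is only $\Theta(1)$. The paper avoids this loss entirely. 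It first reveals enough of $M$ to normalize the undiscovered set to exactly $U'$ with $|U'|=4n/5$, then takes $S$ to be \emph{all} same-colored pairs in $U'\times U'$ (so $|S|\ge 0.15 n\sqrt n$), not merely a disjoint subset, uses Lemma~\ref{lem:color-pairs} to lower bound the per-pair hitting probability by $\tfrac{99}{100n}$ for all but $\sqrt n/9$ partners of each vertex, and then controls the covariance of the indicator variables to apply Chebyshev; this yields $\Ex[X]=\Omega(\sqrt n)$ and $\Pr[X=0]\le 5/8+o(1)$. To make your approach go through you would have to abandon the disjointness restriction and reproduce essentially the paper's second-moment argument.
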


We start by the following key lemma. 

\begin{lemma} \label{lem:color-pairs}
    Given a graph $G$ on $n$ vertices such that each vertex in $G$ has at least $n-\sqrt{n}/4000$ neighbors. If we randomly pick a perfect matching uniformly, then for each vertex $v$, there are at most $\sqrt{n}/9$ vertices $u$ such that the probability that $(u,v)$ is contained in the perfect matching is less than $\frac{99}{100n}$. 
\end{lemma}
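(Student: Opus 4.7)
}

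The plan is to use a simple double-counting argument based on the fact that $v$ is matched to exactly one other vertex in any perfect matching, together with the per-edge upper bound already established in Lemma~\ref{lem:color-matching}. Fix a vertex $v$. Since we are picking a uniformly random perfect matching, exactly one edge of $M$ is incident on $v$, and hence
\[
\sum_{u \neq v} \Pr[(u,v) \in M] \;=\; 1.
\]
Let $A$ denote the set of $u \neq v$ with $\Pr[(u,v) \in M] < \tfrac{99}{100n}$ (this includes all non-neighbors of $v$, for which the probability is $0$), and let $B$ denote the remaining vertices. Our goal is to show $|A| \leq \sqrt{n}/9$.

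The key ingredient is Lemma~\ref{lem:color-matching}, which gives $\Pr[(u,v) \in M] \leq \tfrac{1}{n - \sqrt{n}/1000}$ for \emph{every} $u$ (since non-edges contribute $0$ anyway). Splitting the sum above between $A$ and $B$ and applying the trivial bound $< \tfrac{99}{100n}$ on $A$ and the Lemma~\ref{lem:color-matching} bound on $B$, and using $|B| \leq n - 1 - |A|$, yields
\[
1 \;\leq\; |A| \cdot \frac{99}{100n} \;+\; (n - 1 - |A|) \cdot \frac{1}{n - \sqrt{n}/1000}.
\]

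The remaining step is a short calculation that the right-hand side is strictly less than $1$ whenever $|A| > \sqrt{n}/9$, yielding a contradiction. Writing $\beta := \sqrt{n}/1000$ and rearranging, the inequality above is equivalent to
\[
\frac{99|A|}{100n} \;\geq\; \frac{1 + |A| - \beta}{n - \beta},
\]
which after clearing denominators reduces to $100\beta \geq 100 + |A| + \tfrac{99 |A| \beta}{n}$, i.e., $\tfrac{\sqrt{n}}{10} \geq 100 + |A|(1 + o(1))$. For $|A| > \sqrt{n}/9$, the right-hand side exceeds $\sqrt{n}/10$ for all sufficiently large $n$ (since $\tfrac{1}{9} > \tfrac{1}{10}$), producing the contradiction.

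The main obstacle is simply choosing the partition threshold $\tfrac{99}{100n}$ and the gap between the per-edge upper bound $\tfrac{1}{n - \sqrt{n}/1000}$ carefully so that the arithmetic produces exactly the $\sqrt{n}/9$ bound; the argument is otherwise entirely elementary and uses no new probabilistic machinery beyond Lemma~\ref{lem:color-matching}. No delicate concentration or matching-theoretic reasoning is needed here since the global constraint that $v$ is matched exactly once already provides a tight linear constraint on the probabilities.
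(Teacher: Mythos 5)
Your proof is correct and takes essentially the same approach as the paper: both exploit the global constraint $\sum_u P_v(u)=1$ together with the per-edge upper bound from Lemma~\ref{lem:color-matching}. The paper packages this as a bound on the $\ell_1$-distance between $P_v$ and the uniform distribution (observing that each $P_v(u)$ exceeds $1/n$ by at most $O(1/(n\sqrt n))$, so the positive part of the $\ell_1$ mass is $O(1/\sqrt n)$, which caps how many vertices can sit more than $1/(100n)$ below uniform), whereas you carry out the same averaging directly without the distance-to-uniform abstraction; the two presentations yield the same arithmetic and the same $\sqrt n/9$ bound.
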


\begin{proof}
    Fix a vertex $v$, let $P_v(u)$ be the probability that the edge $(u,v)$ is contained in a random perfect matching, it is also the probability density function of the distribution of $v$'s neighbor in a random perfect matching. For any vertex $u$, by Lemma \ref{lem:color-matching}, $P_v(u) \le \frac{1}{n-\sqrt{n}/1000} < 1/n + \frac{1}{900n\sqrt{n}}$. Let $\mathcal{U}$ be the uniform distribution over the vertices of $G$, then the $\ell_1$-distance between the  two distributions satisfies $\norm{P_v-\mathcal{U}}_1 < \frac{1}{900\sqrt{n}}$. So there are $\leq \frac{\sqrt{n}}{9}$ vertices $u$ with $P_v(u)\le \frac{99}{100n}$.
\end{proof}

\begin{proof}[Proof of Lemma \ref{lem:color-intra}]
    Since the state of the algorithm is still unsettled, there are at most $\card{U(t)}-9n/10$ vertices whose $Q_2$-uncertainty is less than $\card{U(t)}-\sqrt{n}/5000$. We give the algorithm all the edges in $M$ incident on these vertices as well as a few additional edges in $M$ if needed  so that the number of undiscovered vertices becomes exactly $4n/5$. Let $U'$ be the set of these undiscovered vertices. After this step, these undiscovered vertices have $Q_2$-uncertainty at least $4n/5-\sqrt{n}/5000$. 

    Fix the output $2\sqrt{n}$ coloring of the graph by the algorithm. Let $S \subseteq U' \times U'$ be the set of pairs of vertices in $U'$ that have the same color. It is not hard to verify that $\card{S} \ge \frac{1}{2}\cdot \frac{4n}{5}\cdot (\frac{4n}{5}\cdot \frac{1}{2\sqrt{n}} - 1) \ge 0.15 n\sqrt{n}$. For any pair of vertices $(u,v)\in S$, let $X_{u,v}$ be the $0/1$ indicator variable that indicates $(u,v)$ is in $M$. Let $X=\sum_{(u,v)\in S} X_{u,v}$; then $\prob{X>0}$ is the probability that the coloring output by the algorithm is invalid. We will use Chebyshev's inequality to prove that this probability is a large enough constant.

    By Lemma \ref{lem:color-pairs}, for any vertex $v\in U'$, there are at most $\sqrt{n}/9$ vertices $u$ such that $\prob{X_{u,v}} < \frac{99}{100n}$. So 
    $$\expect{X} \ge (\card{S}-\card{U'}\cdot (\sqrt{n}/9))\cdot (99/100n) \ge (\card{S}-0.1n\sqrt{n})\cdot (99/100n) \ge 0.2\card{S}/n .$$
     
    Now consider any two pairs of vertices $u, v$ and $u', v'$ in $S$. If these two pairs share a vertex, then at least one of the pairs does not appear as an edge in the matching $M$, which means that the covariance of $X_{u,v}$ and $X_{u',v'}$ is negative. If they do not share a vertex, then $\prob{X_{u,v}\cdot X_{u',v'}=1}$ is the probability that both pairs are in $M$. If we pick a random matching conditioned on the event that $(u,v)$ is in the matching, using the same argument as in the proof of Lemma \ref{lem:color-matching}, the probability that $(u',v')$ is picked is at most $\frac{801}{800\card{U'}}$. So $\prob{X_{u',v'}=1 | X_{u,v}=1} \le \frac{801}{800\card{U'}}$, which means the $\cov{X_{u,v},X_{u',v'}} \leq \prob{X_{u,v}=1}\cdot (\frac{801}{800\card{U'}} - \prob{X_{u',v'}=1}) \le 0.015 /\card{U'}^2 \le 0.025/n^2$. So
    $$\var{X} \le \expect{X} + \sum_{(u,v),(u',v')\in S} 0.02/n^2 \le \expect{X} + 0.02\card{S}^2/n^2 \le \expect{X} + 5\expect{X}^2/8.$$
    By Chebyshev's inequality, $\prob{X=0} \le \frac{\var{X}}{\expect{X}^2} \le 5/8 + o(1)$.
\end{proof}

We next prove that to make the state of the algorithm settled, the algorithm needs $\Omega(n\sqrt{n})$ queries with large constant probability. 

\begin{lemma} \label{lem:color-query}
    If the algorithm only makes $n\sqrt{n}/400000$ queries, then with probability $0.9$, the state of the algorithm is unsettled.
\end{lemma}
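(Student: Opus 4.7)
The plan is to reduce the ``settled'' event to the inequality $L + D > n/10$, where $D$ counts discovered vertices and $L$ counts undiscovered vertices with $Q_2$-uncertainty strictly below $|U(t)| - \sqrt{n}/5000$; a direct count (together with the observation that $|U(t)|-L = $ high-uncertainty vertices and $|V_0|-|U(t)| = D$) shows the state is settled iff this inequality holds. Given the query budget $q^{\star} := n\sqrt{n}/400000$, I will bound $L \le n/40$ deterministically and $E[D] \le O(n/200000)$, and then apply Markov's inequality to conclude.

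For $L$: for $v \in U(t)$, the $Q_2$-uncertainty of $v$ equals $|U(t)|-1$ minus the number of $u \in U(t)$ such that $(v,u)$ has already been queried. A $Q_1$ query that does not discover an edge returns a neighbor in some $V_i$ with $i \ge 1$ and so leaves $E(t) \subseteq U(t)\times U(t)$ unchanged; hence only $Q_2$ queries can lower uncertainty. Each $Q_2$ query lowers uncertainty of at most two vertices by $1$ each, so the cumulative uncertainty-deficit is at most $2q_2 \le 2q^{\star}$, and at most $10000 q_2/\sqrt{n} \le n/40$ vertices can have deficit $\ge \sqrt{n}/5000$. This gives $L \le n/40$ deterministically.

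For $E[D]$: I bound the expected number of successful queries (each reveals one matching edge and contributes two vertices to $D$). For a $Q_1(v,i)$ query, the principle of deferred decisions applied to the uniformly random adjacency permutation of $v$ shows that the matching partner of $v$ is hit by the $K_v^{(1)}$ queried positions with probability exactly $K_v^{(1)}/(\sqrt{n}+1)$; summing over $v$ gives at most $q_1/(\sqrt{n}+1) \le q^{\star}/\sqrt{n}$ expected $Q_1$-successes. For $Q_2$ queries I split cases based on the posterior candidate-partner sets of the endpoints. If both endpoints still have residual candidate sets of size at least $\sqrt{n}$, then a variant of Lemma~\ref{lem:color-matching} (or the coupling argument behind Lemma~\ref{lem:random_matching}) gives conditional success probability $O(1/\sqrt{n})$, so these contribute $O(q_2/\sqrt{n})$. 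The remaining ``informed'' queries require some endpoint to have absorbed at least $n - \sqrt{n}$ prior failed $Q_2$ queries; the budget allows at most $O(q^{\star}/n)$ such vertices, each supporting at most $\sqrt{n}$ further queries, contributing another $O(q^{\star}/\sqrt{n})$. Thus the expected number of successful queries is $O(q^{\star}/\sqrt{n}) = O(n/400000)$, so $E[D] = O(n/200000)$. By Markov, $\Pr[D \ge 3n/40] = O(1/5000) < 0.1$, and therefore $L + D \le n/40 + 3n/40 = n/10$ with probability at least $0.9$, so the state is unsettled.

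The main technical obstacle is controlling the conditional $Q_2$-success probability for pairs where both endpoints still have large candidate sets. I would argue that, conditional on any history $H$, the undiscovered part of $M$ is uniform over perfect matchings of $U(t)$ consistent with $H$, and then apply a Lemma~\ref{lem:color-matching}-style averaging on the residual candidate graph to show that any specific pair lies in $M$ with probability at most $O(1/\min(|C_u|,|C_v|))$, where $C_u, C_v$ denote the residual candidate sets. The amortization argument above then confines all ``high-probability'' $Q_2$ queries to the small set of informed vertices whose total impact on $E[D]$ remains $O(q^{\star}/\sqrt{n})$, preserving the overall bound.
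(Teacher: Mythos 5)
The reduction to the inequality $L + D > n/10$ is correct, and the deterministic bound $L \le n/40$ via a $Q_2$-deficit potential argument is sound (and slightly sharper than the paper's count, which also counts matching partners of heavily-queried vertices). The $Q_1$ contribution to $E[D]$ via deferred decisions is also fine. The gap is in the $Q_2$ bound: you claim that whenever both endpoints of a $Q_2(u,v)$ query still have residual candidate sets of size at least $\sqrt{n}$, the conditional probability that $(u,v) \in M$ is $O(1/\sqrt{n})$, and more generally you posit $\Pr[(u,v)\in M] = O\bigl(1/\min(|C_u|,|C_v|)\bigr)$. This is false. The exchange argument behind Lemma~\ref{lem:color-matching} (and Lemma~\ref{lem:random_matching}) requires $u$ and $v$ to have \emph{many common} residual candidates, which forces their degrees to be close to $|U(t)|$, not merely $\sqrt{n}$. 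Concretely, take a history in which the only remaining candidates for both $u$ and $v$ are $v$ (resp.\ $u$) together with a common set $S$ of $\sqrt{n}-1$ other vertices, while $U(t)\setminus\{u,v\}$ is still a near-complete candidate graph. A matching count then gives $\Pr[(u,v)\in M] \approx \frac{|U(t)|}{|U(t)| + |S|^2} = \Theta(1)$, not $O(1/\sqrt{n})$; and such a history is reachable within the $q^\star$ budget (with small but non-negligible probability).

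The paper avoids this pitfall by not reasoning about candidate-set sizes at all. Instead it declares a vertex \emph{bad} after $\sqrt{n}/5000$ queries, additionally declares its (revealed) matching partner bad, and simply \emph{concedes} all bad vertices and all useless (post-bad) queries. Because of the $\sqrt{n}/5000$ cap, every \emph{useful} $Q_2$ query has both endpoints with candidate sets of size at least $|U(t)| - \sqrt{n}/5000$, which is within Lemma~\ref{lem:color-matching}'s regime (where $u,v$ retain $\ge |U(t)| - \sqrt{|U(t)|}/2000$ common candidates, so the swap argument applies). The expected useful successes are then bounded per query, and the bad vertices are handled by a deterministic count $\le n/20$. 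If you want to salvage a direct $E[D]$ bound, you would need to raise your ``informed'' threshold from $\sqrt{n}$ to something like $|U(t)| - c\sqrt{n}$, and then either (a) concede informed vertices à la the paper, or (b) argue via an aggregate (rather than per-query) bound for informed vertices; your current amortization (``each supporting at most $\sqrt{n}$ further queries'') does not bound the success probability on queries just above the $\sqrt{n}$ threshold, which is where the per-query bound breaks.
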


\begin{proof}
    If the algorithm does a $Q_1$ query, say $Q_1(v,k)$, or does a $Q_2$ query, say $Q_2(u,v)$, then we say that the vertex $v$ has been queried. During the execution of the algorithm, once a vertex $v$ is queried $\sqrt{n}/5000$ times, we declare both the vertex $v$ and the vertex $u$ to which it is matched in $M$ as {\em bad} vertices, and any further queries on $v$ or $u$ are called {\em useless} queries. A query which is not useless is called a {\em useful} query. After all the queries, if an undiscovered vertex $v$ has not been queried $\sqrt{n}/5000$ times, then vertex $v$ has $Q_2$-uncertainty at least $U(t)-\sqrt{n}/5000$. So to prove the lemma, we need to prove that the total number of bad vertices and discovered vertices is at most $n/10$. The number of bad vertices is at most $n/20$ since at most two vertices are affected by a single query. On the other hand, any discovered vertex which is not bad must be discovered by a useful query. For a useful $Q_1$ query, say $Q_1(v,k)$, the probability that $v$ is discovered in this query is at most $\frac{1}{\sqrt{n}+1-\sqrt{n}/5000} \le \frac{2}{\sqrt{n}}$. For a useful $Q_2$ query, if the number of discovered vertices is less than $n/10$, then by Lemma \ref{lem:color-matching}, the probability that this query discover an edge in $M$ is at most $\frac{901\cdot 10}{900 \cdot 9 n} < \frac{1.2}{n}$. So for any useful query, the probability that it discovers an edge in $M$ is at most $\frac{2}{\sqrt{n}}$. As such, the expected number of useful queries which discover an edge in $M$ is at most $n/200000$.
 By Markov, this implies that the number of useful queries that discover an edge $M$ is with probability $0.9$ at most $n/20000$.  
 So the number of discovered vertices which are not bad is at most $n/10000$. This in turn implies that the total number of bad vertices and discovered vertices is less than $n/10$ with probability $0.9$.
\end{proof}

\begin{proof} [Proof of Lemma \ref{lem:color-lb}]
    By Lemma \ref{lem:color-intra}, if the algorithm only makes $n\sqrt{n}/400000$ queries, then with probability $0.9$ the state of the algorithm is unsettled. Conditioned on this event, by Lemma \ref{lem:color-query}, the output of the algorithm is an invalid coloring with probability at least $3/8-o(1)$. So with probability $3/8-0.1-o(1) \geq 1/4$, the output of any algorithm that makes less than $n\sqrt{n}/400000$ queries is an invalid coloring.
\end{proof}

Theorem \ref{thm:lower-color} is directly implied by Lemma \ref{lem:color-lb} (by modifying constants to allow for $c\sqrt{n}$ coloring as opposed to $2\sqrt{n}$).

\subsection*{Acknowledgements}

The first author would like to thank Merav Parter for helpful comments and the pointer to~\cite{ParterS18}, and T.S. Jayram and Saeed Seddighin for valuable discussions regarding the applicability of our palette-sparsification theorem to $c$-coloring 
problem for other choices of $c$ that prompted us to include Appendix~\ref{app:sparsification-general}. We are also grateful to anonymous reviewers of SODA'19 for many valuable comments and suggestions.

\bibliographystyle{abbrv}
\bibliography{general}

\clearpage
\appendix

\section{Proof of Lemma~\ref{lem:sparse-color}: Coloring Sparse Vertices}\label{app:sparse}

Recall the definition of sparse vertices $\Vsparse_\star$ in the extended HSS-decomposition from Section~\ref{sec:prelim}. We now show that we can color all sparse vertices using only the colors in the first batch. 

\begin{lemma*}[Restatement of Lemma~\ref{lem:sparse-color}]
    With high probability, there exists a partial coloring function $\col_1:V \rightarrow [\Delta+1] \cup \set{\perp}$ such that for all vertices $v \in \Vsparse_\star$, $\col_1(v) \in L_1(v)$.  In other words, 
    one can assign a valid color to all sparse vertices using only the colors in the first batch. 
\end{lemma*}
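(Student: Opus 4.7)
The plan is to establish the lemma by simulating a two-phase distributed coloring procedure, analogous to the \textsf{OneShotColoring} step and its greedy extension developed in \cite{ElkinPS15, HarrisSS16, ChangLP18}, but using only the non-adaptively sampled colors in $L_1(\cdot)$. Recall that $L_1(v)$ is obtained by including each color of $[\Delta+1]$ independently with probability $p = \Theta(\log n/(\eps^2 \Delta))$; in particular, conditioned on $L_1(v) \neq \emptyset$, its first element is uniformly distributed in $[\Delta+1]$.

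\textbf{Phase 1 (One-Shot Coloring).} For each $v \in \Vsparse_\star$, let $c(v)$ be the first color sampled in $L_1(v)$, and tentatively assign $c(v)$ to $v$. Set $\col_1(v) := c(v)$ if no neighbor of $v$ in $\Vsparse_\star$ picked the same tentative color; otherwise leave $v$ uncolored for now. The central claim I will establish is the \emph{excess property}: with high probability, every sparse vertex $v$ that remains uncolored after Phase 1 has at least $\Omega(\eps^2 \Delta)$ more available colors in $[\Delta+1]$ than it has uncolored neighbors in $G$. To prove this, I will invoke Proposition~\ref{prop:sparse-vertices}: $v$ has $\Omega(\eps^2 \Delta^2)$ non-edges inside $N(v)$, and each such non-edge $(u,w)$ has probability $\Omega(1/\Delta)$ of being \emph{annihilating} in the sense that $u$ and $w$ tentatively pick the same color (one that is also not picked by $v$). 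An annihilating non-edge wastes a color on two of $v$'s neighbors without removing it from $v$'s palette, which is exactly what produces the excess. A second-moment / Chebyshev argument (handling dependencies between overlapping pairs of non-edges) plus a union bound over the $n$ sparse vertices yields the high-probability guarantee.

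\textbf{Phase 2 (Greedy Completion).} Process the still-uncolored sparse vertices in an arbitrary order $v_1, v_2, \ldots$; for each such $v$, assign to it any color from $L_1(v)$ that is not currently taken by a neighbor. Correctness hinges on the following observation: by the Phase-1 excess property, the number of colors forbidden by already-colored neighbors of $v$ is at most $(\Delta+1) - \Omega(\eps^2 \Delta)$, \emph{even in the worst case} where every previously uncolored neighbor eventually receives a distinct color. Hence at least $\Omega(\eps^2 \Delta)$ colors are available to $v$ at the moment it is processed. Since the colors in $L_1(v)$ are sampled independently of the coin flips that determine $L_1(u)$ for other $u$, and each color lies in $L_1(v)$ with probability $p = \Theta(\log n / (\eps^2 \Delta))$, the expected number of available colors intersecting $L_1(v)$ is $\Omega(\log n)$. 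A Chernoff bound gives that at least one such color exists with probability $1 - 1/\poly(n)$, and a union bound over all sparse vertices concludes.

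\textbf{The main obstacle} is the interaction between the two phases: the excess property, as stated above, is deliberately made strong (it holds regardless of what Phase 2 later does), precisely so that the Phase-2 Chernoff argument can exploit the independence of $L_1(v)$ from Phase-1 randomness at other vertices. Setting up this decoupling cleanly—so that the randomness used to certify "available colors" in Phase 2 is disjoint from the randomness used to argue the excess property in Phase 1—is the technical heart of the argument. The strategy is standard from \cite{ElkinPS15, HarrisSS16, ChangLP18}, and I will adapt it with only minor modifications to accommodate the non-adaptive sampling of $L_1(\cdot)$ (in prior work the palette is grown adaptively across rounds, whereas here it is fixed up-front, but the bound $|L_1(v)| = \Theta(\log n)$ with high probability is large enough to absorb this difference).
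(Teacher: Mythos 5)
Your two-phase structure (one-shot coloring to create $\Omega(\eps^2\Delta)$ excess colors, then greedy completion from the remaining sampled colors) matches the paper's appendix proof, and your sequential greedy in Phase~2 is a valid substitute for the paper's round-based $\GC$ procedure. The genuine gap is in Phase~1's concentration step. You claim a second-moment / Chebyshev argument plus a union bound over the $n$ sparse vertices yields the excess property with high probability. Let $Y$ count the annihilating non-edges inside $N(v)$; then $\Ex[Y]=\Omega(\eps^2\Delta)$, and even granting $\var{Y}=O(\Ex[Y])$, Chebyshev gives only $\Pr\bracket{Y<\frac{1}{2}\Ex[Y]}=O(1/(\eps^2\Delta))$. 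In the regime where Theorem~\ref{thm:color-sampling} is nontrivial, $\Delta$ can be as small as $\Theta(\log n/\eps^2)$, so the per-vertex failure probability can be $\Theta(1/\log n)$, and a union bound over $n$ vertices is vacuous. What is needed --- and what the paper invokes via the martingale / bounded-differences argument of~\cite{ElkinPS15,HarrisSS16,ChangLP18} --- is the exponential tail $\Pr\bracket{Y\le e^{-1}\Ex[Y]}\le\exp(-\Omega(\eps^2\Delta))$; Chebyshev is categorically too weak to supply an exponentially small failure probability.

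A smaller caveat for Phase~2: restrict the fresh palette to $L_1(v)\setminus\{c_1(v)\}$. The event that $v$ survives Phase~1 is determined by $c_1(v)$, so retaining $c_1(v)$ in the ``fresh'' palette would break the independence you rely on when arguing that fresh colors hit $A_v$; revealing the first element in the inclusion-probability model also biases the remainder of $L_1(v)$. The paper sidesteps both issues by modelling $L_1(v)$ as $K$ i.i.d.\ uniform colors $c_1(v),\ldots,c_K(v)$, spending $c_1(v)$ in Phase~1 and $c_2(v),\ldots,c_K(v)$ in Phase~2. With that fix, your observation that $\card{A_v}\ge\Omega(\eps^2\Delta)$ holds deterministically at the moment $v$ is processed (even after all $\drem{v}$ uncolored neighbors take distinct Phase-2 colors), combined with a Chernoff bound on the $\Omega(\log n/\eps^2)$ fresh colors hitting $A_v$, is sound.
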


Firstly, since every vertex samples each color with probability $p := \Paren{\frac{\alpha \cdot \log{n}}{3\eps^2 \cdot (\Delta+1)}}$, we have that with high probability (by Chernoff bound), the number 
of sampled colors per vertex in $L_1(v)$ is at least $K := (\alpha/6) \cdot \log{n}/\eps^2$. In the following, we condition on this high probability event and simply ignore the remaining (if any) colors of vertices. 
This allows us to assume that each vertex samples $K$ colors uniformly at random from $[\Delta+1]$ in $L_1(v)$. To be more formal, we let $L_1(v):= \set{c_1(v),\ldots,c_K(v)}$ where each $c_i(v)$ is chosen
uniformly at random from $[\Delta+1]$.

We prove this lemma by showing that one can simulate a natural greedy algorithm (similar but not identical to the algorithm of~\cite{ElkinPS15}) for coloring sparse vertices using only the colors in the first batch.  
The first step is to use the \emph{first} color in $L_1(v)$, i.e., $c_1(v)$ chosen uniformly at random from $[\Delta+1]$, 
for all vertices $v \in \Vsparse_\star$ to color a large fraction of vertices in $\Vsparse_\star$; the main property of this coloring is that any uncolored 
$\eps$-sparse vertex has $\Omega(\eps^2\Delta)$ ``excess'' colors compared to the number of edges it has to other \emph{uncolored} $\eps$-sparse vertices. 
This step follows from the proof of the \textsf{OneShotColoring} procedure in~\cite{ElkinPS15,HarrisSS16,ChangLP18} and we simply present a proof sketch for intuition.  
We then use the remaining colors in $L_1(\cdot)$ for each uncolored vertex and color them greedily, using the fact that the number of available colors is sufficiently larger than the number of 
neighbors of each uncolored vertex in every step. This part is also similar to the algorithm in~\cite{ElkinPS15} (see also~\cite{HarrisSS16,ChangLP18}) 
but uses a slightly different argument as here we \emph{cannot} sample the colors for each vertex \emph{adaptively} (as the colors in
$L_1(\cdot)$ are  sampled ``at the beginning'' of the greedy algorithm).

\subsubsection*{Creating Excess Colors for Vertices}

Consider the following process: For each vertex $v \in \Vsparse_\star$, let $x(v)$ denote the \emph{first} color in $L_1(v)$, i.e., $x(v) = c_1(v)$ (which is chosen uniformly at random from $[\Delta+1]$); 
we set $\col_1(v) = x(v)$ iff for every vertex $u \in N(v)$, $x(v) \neq x(u)$. 
Let $\Vr$ be the remaining set of uncolored $\eps$-sparse vertices and $\Gr$ be the \emph{induced} subgraph of $G$ on $\Vr$. Finally, define $\drem{v}$ as the degree of $v$ in the graph $\Gr$. 
We have the following key lemma (recall that $A_\col(\cdot)$ denotes the set of available colors under the partial coloring $\col$). 
 
\begin{lemma}[cf.~\cite{ElkinPS15,HarrisSS16,ChangLP18}]\label{lem:one-shot}
	 For any vertex $v \in \Vr$, $\card{A_{\col_1}(v)} \geq \drem{v} + e^{-6} \cdot \eps^2\cdot\Delta$ with probability at least $1-\exp\paren{-\Omega(\eps^2 \cdot \Delta)}$. 
\end{lemma}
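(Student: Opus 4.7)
The plan is to lower-bound the slack at any sparse vertex $v$ by accounting for random color coincidences in its neighborhood. I first decompose $\card{A_{\col_1}(v)} - \drem{v} = (\Delta + 1 - \deg(v)) + D_v + S$, where $D_v$ counts the dense neighbors of $v$ (which receive no color in this phase) and $S := C_s - D_{\text{color}}$, with $C_s$ the number of colored sparse neighbors of $v$ and $D_{\text{color}}$ the number of distinct colors they carry. The first two terms are non-negative, so it suffices to establish $S \geq e^{-6}\eps^2\Delta$ with probability $1-\exp(-\Omega(\eps^2\Delta))$. Intuitively $S$ counts the ``saved'' colors arising when two or more sparse neighbors of $v$ happen to share a color; this can only occur for pairs that are mutually non-adjacent, which is precisely where the sparsity hypothesis on $v$ will enter.

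To lower-bound $\Ex[S]$, I would use Proposition~\ref{prop:sparse-vertices}: the neighborhood of an $\eps$-sparse $v$ contains at least $\eps^2\Delta^2/4$ non-edges $(u_1,u_2)$. For each such non-edge and each color $c$, the events needed for both $u_1,u_2$ to be colored with $c$, namely $x(u_1)=x(u_2)=c$ and $x(w)\neq c$ for all $w \in N(u_1)\cup N(u_2)$, are jointly feasible because $u_1\notin N(u_2)$ and vice versa, and using $\card{N(u_1)\cup N(u_2)}\leq 2\Delta$ together with $(1-1/(\Delta+1))^{2\Delta}\geq e^{-2}$ shows that the joint probability is at least $e^{-2}/(\Delta+1)^2$. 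Summing over $c$ gives $\Omega(1/\Delta)$ per non-edge, and linearity of expectation yields $\Ex[X] = \Omega(\eps^2\Delta)$, where $X$ counts pairs of non-adjacent neighbors of $v$ receiving the same color. Since the mean per-color occupancy $\Ex[N_c]$ is at most $1/e$, a Poisson-tail (or Bonferroni) estimate shows that the number of color classes with at least two colored vertices, $\sum_c \mathbf{1}[N_c\geq 2]$, is a constant fraction of $X$ in expectation; as each such class contributes at least $1$ to $S$, this yields $\Ex[S] = \Omega(\eps^2\Delta)$ with room in the constants to absorb the $e^{-6}$ in the lemma.

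For the high-probability statement, $S$ is a function of the color choices $x(u)$ for $u$ in the two-hop neighborhood of $v$, and re-randomizing a single $x(u)$ toggles at most two color classes and so changes $S$ by $O(1)$. Naive McDiarmid over the $O(\Delta^2)$ relevant variables only yields the weaker tail $\exp(-\Omega(\eps^4\Delta))$, and recovering the claimed $\exp(-\Omega(\eps^2\Delta))$ is the main obstacle. I would address this via a Talagrand-style certificate argument: each event $\{N_c\geq 2\}$ is witnessed by just two color choices, dropping the effective dimension from $\Delta^2$ to $\Delta$; alternatively, since occupancy events for distinct colors are only weakly dependent, one can apply a Chernoff bound directly to the $\Delta+1$ color-class indicators $\mathbf{1}[N_c\geq 2]$, which is the route taken in~\cite{ElkinPS15,HarrisSS16,ChangLP18}. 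Either route delivers $\Pr[S < \Ex[S]/2] \leq \exp(-\Omega(\eps^2\Delta))$, which combined with the slack decomposition above finishes the proof of Lemma~\ref{lem:one-shot}.
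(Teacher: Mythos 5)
Your proof takes essentially the same route as the paper's sketch: Proposition~\ref{prop:sparse-vertices} supplies $\Omega(\eps^2\Delta^2)$ non-edges in $N(v)$, each produces a same-color collision among colored endpoints with probability $\Omega(1/\Delta)$, giving $\Omega(\eps^2\Delta)$ expected excess colors, and the $\exp(-\Omega(\eps^2\Delta))$ concentration is deferred to~\cite{ElkinPS15,HarrisSS16,ChangLP18} exactly as the paper does. Your extra step of converting the pair count $X$ into the number of collision color classes $\sum_c \mathbf{1}[N_c\ge 2]$ is actually a refinement over the paper's sketch, which adds $\card{S_v}$ (the raw good-pair count) directly and thereby over-credits a color shared by three or more pairwise non-adjacent colored neighbors --- though the ``Poisson-tail / Bonferroni'' argument you invoke for that conversion is left at the same informal level as the rest of the sketch. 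One small inaccuracy worth fixing: the references obtain the concentration bound via a bounded-differences martingale / Talagrand-type inequality (as the paper itself notes), not a direct Chernoff bound on the dependent color-class indicators as your second alternative suggests.
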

\begin{proof}[Proof Sketch]
	Fix any vertex $v \in \Vsparse_\star$. We assume that degree of $v$ is exactly $\Delta$; if not, we can simply assign arbitrary vertices as neighbors for $v$ until its degree reaches $\Delta$ and require that $v$ does not share the same color
	with these vertices (it would be evident from the proof below that this is without loss of generality). Let $N(v) := \set{u_1,\ldots,u_\Delta}$ be the neighborhood of $v$ in $G$. Recall that in the extended HSS-decomposition, 
	by Lemma~\ref{lem:extended-HSS-decomposition}, $v$ is $\eps$-sparse and hence by Proposition~\ref{prop:sparse-vertices}, there are at least $t:= \eps^2 \cdot {{\Delta}\choose{2}}$ \emph{non-edges} in $N(v)$. 
	Let $f_1,\ldots,f_t$ be these non-edges. 
	
	Consider an arbitrary ordering of vertices. 	We say that a non-edge $f_i = (u_i,v_i)$ for $i \in [t]$ is \emph{good} iff $(i)$ $x(u_i) = x(v_i)$, and $(ii)$ no neighbor $w$ of $u_i$ or $v_i$ has $x(w) = x(u_i) (=x(v_i))$. Let $S_v$ be the set of 
	all good non-edges. Notice that both vertices $u_i$ and $v_i$ for a good edge $(u_i,v_i)$ would be assigned the \emph{same} color $x(u_i)=x(v_i)$ by $\col_1$. As such, the number of available colors for $v$ under $c_1$ is at 
	least $A_{\col_1}(v) \geq (\Delta+1) - (\Delta - \drem{v}) + \card{S_v} = \drem{v} + \card{S_v} + 1$. It thus remains to bound the size of $S_v$. Define indicator random variables $Y_i$ for $i \in [t]$, where
	$Y_i = 1$  iff the non-edge $f_i$ is good. Let $Y:=\sum_{i}Y_i$. We thus have $\card{S_v} = Y$. We bound the expectation of $Y$ below.
	
	\begin{claim}\label{clm:S_v-exp}
		$\Ex\bracket{Y} \geq e^{-5} \cdot \eps^2 \Delta$. 
	\end{claim}
	\begin{proof}
	 We have, 
		\begin{align*}
			\Pr\bracket{\text{$Y_i$ is good}} &= \Pr\Bracket{\paren{x(u_i) = x(v_i) }\wedge \paren{\forall w \in N(u_i) \cup N(v_i): x(w) \neq x(u_i)}} \\
			&\geq \frac{1}{\Delta+1} \cdot \paren{1-\frac{1}{\Delta+1}}^{\card{N(u_i) \cup N(v_i)}} \geq \frac{1}{\Delta+1} \cdot \paren{1-\frac{1}{\Delta+1}}^{2\Delta} \geq \frac{e^{-4}}{\Delta}.  \tag{as $1-x \leq e^{-x} \leq 1-\frac{x}{2}$ for $x \in (0,1)$}
		\end{align*} 
		By linearity of expectation, $\Ex\bracket{Y} \geq t \cdot \frac{e^{-4}}{\Delta} \geq e^{-5} \cdot \eps^2 \Delta$, using the bound on the value of $t$. 
	\end{proof}
	
	The next step is to prove a concentration bound for $Y$. The challenge here is that variables $Y_1,\ldots,Y_{t}$ are not independent of each other and one cannot readily use Chernoff bound. Still, one can 
	define an appropriate martingale sequence to prove that: 
	\begin{align*}
		\Pr\bracket{Y \leq e^{-1} \cdot \Ex\bracket{Y}} \leq \exp\paren{-\Omega(\eps^2\cdot\Delta)}.
	\end{align*}
	As this is not the contribution of our paper, we omit the details and instead refer the reader to the proofs in~\cite{ElkinPS15,HarrisSS16,ChangLP18} (see Lemma~3.1 in~\cite{ElkinPS15}, Lemma~5.5 in~\cite{HarrisSS16}, or Lemma~5 in~\cite{ChangLP18}).
	
	To conclude, with probability $1-\exp\paren{-\Omega(\eps^2 \cdot \Delta)}$, $\card{S_v} \geq e^{-6}\cdot\eps^2\Delta$. As argued before, this event implies that $\card{A_{\col_1}(v)} \geq \drem{v} + \card{S_v} \geq \drem{v} + \eps^{-6} \cdot \eps^2 \cdot \Delta$, 
	finalizing the proof.
\end{proof}

\subsubsection*{Coloring the Remaining Sparse Vertices}

We now color the remaining vertices, i.e., the vertices in $\Gr$, using the remaining $K-1$ colors in batch one of every vertex $v$. Recall that by the lower bound of $\Omega(\log{n})$ on the value of $\Delta$, 
and by Lemma~\ref{lem:one-shot}, for any vertex in $\Gr$, we have that $\card{A_{c_1}(v)} \geq \drem{v} + e^{-6} \cdot \eps^2\cdot\Delta$ with high probability. We take a union bound over all vertices and in the following assume that
this property holds for every vertex. Consider the following greedy process: 

\begin{tbox}
\GC: The greedy procedure for coloring $\Gr$ using the first batch of colors. 
\begin{enumerate}
	\item For $i=2$ to $K$: 
	\begin{enumerate}
		\item For any vertex $v \in \Vr$, let $x_i(v)$ be the $i$-th color in $L_1(v)$, i.e., $x_i(v) = c_i(v)$. 
		\item If $x_i(v) \in A_{\col_1}(v)$ and no vertex $u$ in $N_{\Gr}(v)$ has also $x_i(u) = x_i(v)$, let $\col_1(v) = x_i(v)$. 
	\end{enumerate}
\end{enumerate}
\end{tbox}

It is clear that $\GC$ only uses the colors in the first batch to color the vertices. We argue that after running $\GC$, all vertices in $\Gr$ are colored by $\col_1$ with high probability. 

\begin{lemma}\label{lem:gc-colors}
	After running $\GC$, any vertex $v \in \Vr$ receives a valid color by $\col_1$ with high probability. 
\end{lemma}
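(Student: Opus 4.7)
The plan is to prove Lemma~\ref{lem:gc-colors} by exhibiting an invariant that guarantees each uncolored vertex always has a healthy ``excess'' of available colors over uncolored neighbors, using this invariant to establish a constant-conditional per-round success probability of $\Omega(\eps^2)$, and then amplifying over the $K-1 = \Theta(\log n / \eps^2)$ rounds of $\GC$ before taking a union bound over all vertices in $\Vr$.

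First I would fix, for any $v \in \Vr$ and any moment during the execution of $\GC$, the quantities $a(v) := |A_{\col_1}(v)|$ (current available colors) and $d(v)$ (current number of uncolored neighbors of $v$ in $\Gr$). The key invariant I would prove is
\[
a(v) \;\geq\; d(v) + e^{-6}\,\eps^2\,\Delta \qquad \text{for every uncolored } v \in \Vr,
\]
maintained throughout the execution of $\GC$. It holds at the start by Lemma~\ref{lem:one-shot} (after conditioning on the high-probability event $|L_1(v)| \geq K$ for every $v$, which lets us treat $L_1(v)$ as $K$ iid uniform draws from $[\Delta+1]$). For preservation, whenever a neighbor $u$ of $v$ is freshly colored in some round with color $c$, two cases arise: if $c \in A_{\col_1}(v)$ beforehand, then $a(v)$ drops by one and $d(v)$ drops by one; if $c \notin A_{\col_1}(v)$, then $a(v)$ is unchanged while $d(v)$ still drops by one. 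Either way the gap $a(v) - d(v)$ is non-decreasing, proving the invariant.

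Next I would lower bound the per-round success probability. Let $\mathcal{F}_{i-1}$ be the $\sigma$-algebra generated by the colors $\{c_j(w) : w \in V, 2 \leq j \leq i-1\}$, so that the state at the start of round $i$ (including $a(v),d(v)$ and whether $v$ is still uncolored) is $\mathcal{F}_{i-1}$-measurable while the round-$i$ colors $\{c_i(w)\}_{w\in V}$ are independent and uniform on $[\Delta+1]$ conditional on $\mathcal{F}_{i-1}$. Thus, conditional on $v$ being uncolored at round $i$,
\[
\Pr\bracket{v \text{ colored in round } i \mid \mathcal{F}_{i-1}} \;\geq\; \frac{a(v)}{\Delta+1}\cdot\Paren{1-\frac{1}{\Delta+1}}^{d(v)} \;\geq\; \frac{e^{-6}\eps^2\Delta}{\Delta+1}\cdot\frac{1}{e} \;=\; \Omega(\eps^2),
\]
where the first factor comes from $c_i(v)\in A_{\col_1}(v)$ and the second factor from requiring $c_i(u)\neq c_i(v)$ for every uncolored neighbor $u$; the invariant supplies the lower bound on $a(v)$, and $(1-1/(\Delta+1))^{d(v)} \geq (1-1/(\Delta+1))^{\Delta} \geq 1/e$.

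Finally, writing $E_i$ for the event that $v$ is uncolored after round $i$ and iterating the telescoping identity $\Pr[E_K] = \prod_{i=2}^{K}\Pr[E_i \mid E_{i-1},\ldots,E_2]$, the previous bound gives $\Pr[E_K] \leq (1-\Omega(\eps^2))^{K-1} \leq \exp(-\Omega(\eps^2)(K-1)) \leq n^{-10}$, by choosing $\alpha$ in $K=(\alpha/6)\log n/\eps^2$ sufficiently large. A union bound over all $v\in\Vr$ then yields the high-probability claim. I do not anticipate any serious obstacle here: the only delicate point is verifying that the invariant survives the case where a neighbor gets colored with a shade that is already unavailable to $v$, but this is exactly the case that strictly helps the gap $a(v)-d(v)$.
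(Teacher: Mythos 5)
Your proposal is correct and follows essentially the same route as the paper's own proof: condition on the one-shot excess from Lemma~\ref{lem:one-shot}, establish a constant per-round success probability, amplify over the $K-1 = \Theta(\log n / \eps^2)$ rounds, and union bound over $\Vr$. The only cosmetic difference is that you make the invariant $|A_{\col_1}(v)| \ge (\text{uncolored degree}) + e^{-6}\eps^2\Delta$ explicit and track it round by round, whereas the paper reaches the same lower bound on the per-round favorable set by directly counting the blocking set $C$ (which has size at most the initial $\drem{v}$) against the initial available set; these are equivalent. Two trivial nitpicks that do not affect the argument: your $\mathcal{F}_{i-1}$ should also include the one-shot colors $c_1(\cdot)$ so that $\Vr$, $\Gr$, and the initial $a(v),d(v)$ are measurable, and the collision check in $\GC$ is against all of $N_{\Gr}(v)$, so the exponent in $(1-1/(\Delta+1))^{\cdot}$ should really be $\drem{v}$ rather than the current $d(v)$ — but both are at most $\Delta$, so the $1/e$ bound you use is fine either way.
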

\begin{proof}
	Fix any vertex $v \in \Vr$. We argue that at every iteration of the for-loop in $\GC$, there is a constant probability that we assign a valid color to $v$.  As we never change the color of a vertex once it is assigned a valid color by $\col_1$, this means that
	after the $K-1 = \Omega(\log{n})$ iterations, $v$ is going to receive a valid color with high probability. Taking a union bound over all vertices then finalizes the proof.
	
	We now prove the bound on the probability of obtaining a valid coloring in each iteration $i \in [2:K]$. Recall that at the beginning of running $\GC$, $\card{A_{\col_1}(v)} \geq \drem{v} + e^{-6} \cdot \eps^2\cdot\Delta$. 
	Consider iteration $i$ of the for-loop and let $C$ denote the set of colors in $\bigcup_{u \in N_{\Gr}(v)} \col_1(u) \cup x_i(u)$ (here $N_{\Gr}(v)$ is the set of neighbors of $v$ in $\Gr$). 
	Note that $\card{C} \leq \drem{v}$. As such, $A_{\col_1}(v) \setminus C$ has 
	size at least $e^{-6} \cdot \eps^2\cdot\Delta$. If $x_i(v)$ belongs to this set, then $\col_1$ is going to assign this color permanently to $v$. Moreover, recall that $x_i(v)$ is chosen uniformly at random from $[\Delta+1]$.
	As such, 
	\begin{align*}
		\Pr\Bracket{\textnormal{$\col_1$ assigns a valid color to $v$ at iteration $i$}} = \Pr\Bracket{x_i(v) \in A_{\col_1}(v) \setminus C} \geq e^{-6}\cdot\eps^2 \cdot (\frac{\Delta}{\Delta+1}),
	\end{align*}
	which is $\Omega(1)$. 
	After repeating this process $K-1$ times independently (as the choice of colors in $L_1(v)$ are independent), the probability that $v$ is not colored by $\col_1$ is at most
	\begin{align*}
		\paren{1-e^{-6}\cdot\eps^2 \cdot (\frac{\Delta}{\Delta+1})}^{K-1} \leq \exp\Paren{e^{-6.1}\cdot\eps^2\cdot(\alpha/6)\log{n}/\eps^2} \leq \exp\Paren{e^{-6.1}\cdot(\alpha/6)\log{n}} \leq \frac{1}{n^{50}} \tag{by the choice of $\alpha$}.
	\end{align*}
	Taking a union bound over $n$ vertices now finalizes the proof.
\end{proof}

\noindent
Lemma~\ref{lem:sparse-color} now follows immediately from Lemmas~\ref{lem:one-shot} and~\ref{lem:gc-colors}.

\section{Proof of Claim~\ref{lem:sumvar}}\label{app:sumvar}
\begin{claim*}[Restatement of Claim~\ref{lem:sumvar}]
    Suppose $0<a<1$ is a constant. Consider two random variables $X:=\sum_{i=1}^{n}X_i$ and $Y:=\sum_{i=1}^{n}Y_i$ where for all $i \in [n]$, $X_i$ and $Y_i$ are independent indicator random variables and $\prob{X_i=1}=a^{k_i}$ and $\prob{Y_i=1}=a^{\ell_i}$. Suppose $k_i$ and $\ell_i$ are non-negative integers that are indexed in decreasing order and have the following two properties:
    \begin{itemize}
        \item for any $j \le n$, $\sum_{i=1}^j k_i \le \sum_{i=1}^j \ell_i$
        \item $\sum_{i=1}^n k_i = \sum_{i=1}^n \ell_i$
    \end{itemize}
    then for any integer $M$, $\prob{X\ge M}\le \prob{Y\ge M}$.
\end{claim*}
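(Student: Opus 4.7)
The plan is to recognize the two hypotheses on $(k_i)$ and $(\ell_i)$ as the statement that the integer vector $\ell$ \emph{majorizes} $k$ in the classical sense, and then to reduce the inequality to a single two-coordinate swap via the discrete Hardy--Littlewood--P\'olya chain. Since the distributions of $X$ and $Y$ depend only on the multisets $\{k_i\}$ and $\{\ell_i\}$, the decreasing-order hypothesis is used only to invoke majorization and can otherwise be dropped.

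The first step is to construct an integer chain $\ell = m^{(0)}, m^{(1)}, \dots, m^{(T)} = k$ in which each step $m^{(r)} \to m^{(r+1)}$ is a \emph{unit Robin Hood transfer}: there exist indices $s, t$ with $m^{(r)}_s > m^{(r)}_t$, and $m^{(r+1)}$ is obtained by decrementing the $s$-th coordinate by one and incrementing the $t$-th coordinate by one. The construction I would use is greedy: while $m^{(r)} \neq k$, let $s$ be the smallest index with $m^{(r)}_s \neq k_s$ (one checks $m^{(r)}_s > k_s$ from the partial-sum inequalities), and let $t > s$ be the smallest index with $m^{(r)}_t < k_t$. The decreasing ordering of $m^{(r)}$ together with $k_s \geq k_t$ yields $m^{(r)}_s > k_s \geq k_t > m^{(r)}_t$, so the transfer is valid; a routine check using the partial-sum hypotheses shows that $m^{(r+1)}$ still majorizes $k$ and that $\sum_i (m^{(r)}_i - k_i)^+$ strictly decreases, so the procedure terminates.

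The key lemma reduces the whole claim to a single step: if $m \to m'$ is a unit Robin Hood transfer and $Z, Z'$ denote the corresponding sums of independent indicators with success probabilities $a^{m_s}$ and $a^{m'_s}$, then $\Pr[Z \geq M] \geq \Pr[Z' \geq M]$ for every $M$. I would couple by keeping the contributions of the unchanged coordinates identical: write $Z = Z_s + Z_t + S$ and $Z' = Z'_s + Z'_t + S$ where $S$ is the common independent sum over indices other than $s$ and $t$. Setting $b_s = a^{m_s}$ and $b_t = a^{m_t}$, the transferred pair satisfies $\Pr[Z'_s = 1] = b_s/a$ and $\Pr[Z'_t = 1] = a b_t$, so $(b_s/a)(a b_t) = b_s b_t$ forces $\Pr[Z_s + Z_t = 2] = \Pr[Z'_s + Z'_t = 2]$. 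A short expansion then yields
\[
\Pr[Z_s + Z_t \geq 1] - \Pr[Z'_s + Z'_t \geq 1] \;=\; (1-a)\bigl(b_t - b_s/a\bigr),
\]
and the inequality $m_s \geq m_t + 1$ gives $b_s/a = a^{m_s - 1} \leq a^{m_t} = b_t$, so the right-hand side is nonnegative. Therefore $Z_s + Z_t$ stochastically dominates $Z'_s + Z'_t$, and adding the common independent $S$ to both preserves dominance, yielding $\Pr[Z \geq M] \geq \Pr[Z' \geq M]$.

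Chaining the one-step inequality along $r = 0, \ldots, T-1$ gives $\Pr[Y \geq M] \geq \Pr[X \geq M]$, which is the conclusion of the claim. The only place where real care is needed is the chain construction in the second step --- verifying that the greedy swap preserves majorization over $k$ and that the selected pair of coordinates satisfies the strict inequality $m_s > m_t$ demanded by the two-coordinate lemma --- but both facts reduce to short index-based manipulations of the partial-sum hypotheses in the statement.
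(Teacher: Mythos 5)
Your argument is correct and matches the paper's proof in all essential respects: you reduce the claim to a single two-coordinate transfer (checking $\Pr[\geq 1]$ and $\Pr[\geq 2]$ separately, where the $\geq 2$ terms cancel and the $\geq 1$ inequality comes out to $(1-a)(b_t - b_s/a)\geq 0$), and you chain such transfers along a path between the two exponent vectors. The only cosmetic differences are that you run the chain from $\ell$ to $k$ rather than from $k$ to $\ell$, and that you phrase the chain construction explicitly in the Hardy--Littlewood--P\'olya / Robin-Hood-transfer language; the computational content is identical to the paper's.
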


\begin{proof}
    We construct a series of $N$ (to be determined later) random variables $X=Z^0,Z^1,\dots,Z^N=Y$ as follows: Each $Z^i$ is the sum of $n$ independent indicator random variable $Z^i_1,Z^i_2,\dots,Z^i_n$ where
     $\prob{Z^i_j=1}=a^{k^i_j}$ (for $k^i_j$ determined later). When constructing $Z^{i+1}$, let $j_1$ be the first index such that $k^i_{j_1}<\ell_{j_1}$, $j_2$ be the  first index such that $k^i_{j_2}>\ell_{j_2}$. Then $k^{i+1}_{j_1}=k^i_{j_1}+1$, $k^{i+1}_{j_2}=k^i_{j_2}-1$, and $k^{i+1}_j=k^i_j$ for all other $j$.

    Since $k_i$ and $\ell_i$ are integers, and $\sum_{i=1}^n k_i = \sum_{i=1}^n \ell_i$, the construction will terminate. On the other hand, since for any $j\le n$, $\sum_{i=1}^j k_i \le \sum_{i=1}^j \ell_i$, each time when we construct the next random variable, $j_1<j_2$, in other word, $k^i_{j_1}\ge k^i_{j_2}$ for any $0 \le i < N$. We will prove that for each $0\le i < N$, $\prob{Z^i>M} \le \prob{Z^{i+1}>M}$, which implies the lemma.

    Fix an $i$. For any $j$ which is not $j_1$ and $j_2$, random variables $Z^i_j$ and $Z^{i+1}_j$ have the same distribution. Thus by coupling, we only need to prove that for any integer $M'$, $\prob{Z^i_{j_1}+Z^i_{j_2} \ge M'} \le \prob{Z^{i+1}_{j_1}+Z^{i+1}_{j_2} \ge M'}$. Since the sum of two indicator random variables is always non-negative and is at most $2$, we only need to consider the cases when $M'$ is $1$ or $2$. When $M'=2$, by the fact that $k^i_{j_1}+k^i_{j_2}=k^{i+1}_{j_1}+k^{i+1}_{j_2}$,
    $$\prob{Z^i_{j_1}+Z^i_{j_2} \ge 2} = a^{k^i_{j_1}+k^i_{j_2}} = a^{k^{i+1}_{j_1}+k^{i+1}_{j_2}} = \prob{Z^{i+1}_{j_1}+Z^{i+1}_{j_2} \ge 2}.$$
    When $M'=1$, by the fact that $k^i_{j_1} \ge k^i_{j_2}$,
    $$\prob{Z^i_{j_1}+Z^i_{j_2} \ge 1} = a^{k^i_{j_1}} + a^{k^i_{j_2}} - a^{k^i_{j_1}+k^i_{j_2}} \le a^{k^{i+1}_{j_1}} + a^{k^{i+1}_{j_2}} - a^{k^{i+1}_{j_1}+k^{i+1}_{j_2}} = \prob{Z^{i+1}_{j_1}+Z^{i+1}_{j_2} \ge 1}. $$
    This concludes the proof.
\end{proof}

\section{Further Remarks on the Palette-Sparsification Theorem}\label{app:sparsification} 

We present some further basic observations on our palette-sparsification theorem (Theorem~\ref{thm:color-sampling}) in this section. In the following, we first show that the size of $O(\log{n})$ colors is optimal 
for this theorem even when $\Delta$ is much smaller, and then show that this theorem is inherently suitable only for $(\Delta+1)$ coloring problem and not even $\Delta$-coloring or any smaller number of colors. 

\subsection{Palette-Sparsification with Smaller List Size}\label{app:logn-optimal}

We prove that the choice of $O(\log{n})$ colors in Theorem~\ref{thm:color-sampling} is essentially optimal. That is, if we instead sample $o(\frac{\log{n}}{\log\log{n}})$ colors per each vertex, then
there exists graphs such that the resulting list-coloring instance has no valid solution. 

\begin{proposition}\label{prop:logn-optimal}
  There exists an $n$-vertex graph $G$ with maximum degree $\Delta$ such that if for each vertex $v \in V$,
    we independently pick a set $L(v)$ of colors with size $o(\frac{\log{n}}{\log\log{n}})$ uniformly at random from $[\Delta+1]$, then, 
    with probability $1-o(1)$, there exists no proper coloring $\col: V \rightarrow [\Delta+1]$ of $G$ such that for all vertices $v \in V$, $\col(v) \in L(v)$. 
\end{proposition}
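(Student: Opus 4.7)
The plan is to exhibit $G = K_n$, the complete graph on $n$ vertices (so $\Delta = n-1$ and the palette has size $n$), and invoke a straightforward second-moment argument. Since every pair of vertices in $K_n$ is adjacent, any list-coloring from the lists $L(\cdot)$ must assign \emph{distinct} colors to all vertices; equivalently, the bipartite ``availability'' graph $H$ with parts $V$ and $[n]$ and edges $\set{(v,c) : c \in L(v)}$ must contain a perfect matching. Because both sides of $H$ have exactly $n$ vertices, such a matching requires in particular that every color $c \in [n]$ have at least one neighbor in $V$, i.e., be sampled by some vertex. I will show that when $k := \card{L(v)} = o(\log n/\log\log n)$, with probability $1-o(1)$ some color is sampled by no vertex, which rules out any valid list-coloring.

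Let $X$ count these ``empty'' colors: $X = \card{\set{c \in [n] : c \notin L(v) \text{ for every } v \in V}}$. For a fixed $c$, since $L(v)$ is a uniform $k$-subset of $[n]$, we have $\prob{c \notin L(v)} = 1 - k/n$, and by independence across vertices $\prob{c \text{ empty}} = (1-k/n)^n = \exp(-k(1+o(1)))$ as long as $k = o(\sqrt{n})$. Thus $\expect{X} = n\cdot(1-k/n)^n$ satisfies $\log\expect{X} = \log n - k(1+o(1))$, which tends to $+\infty$ because $k = o(\log n/\log\log n) \subseteq o(\log n)$.

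For the second moment, a short calculation shows $\prob{c \text{ and } c' \text{ both empty}} = \paren{(n-k)(n-k-1)/(n(n-1))}^n$, which is at most $((n-k)/n)^{2n} = \prob{c \text{ empty}}^2$; i.e., the events $\set{c \text{ empty}}$ are pairwise negatively correlated. This immediately yields $\expect{X^2} \leq \expect{X} + \expect{X}^2$ and hence $\var{X} \leq \expect{X}$. Chebyshev's inequality then gives $\prob{X=0} \leq \var{X}/\expect{X}^2 \leq 1/\expect{X} = o(1)$, so with probability $1-o(1)$ there is an isolated color on the right side of $H$, no perfect matching exists, and no proper list-coloring of $K_n$ from the $L(\cdot)$'s exists.

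There is no substantive obstacle: once $G$ is fixed to be $K_n$, the problem reduces to the existence of a perfect matching in a structured random bipartite graph, and the standard second-moment method applied to the number of isolated colors suffices. In fact the same argument goes through for any $k = o(\log n)$, so the $\log\log n$ factor in the statement leaves a small slack and suggests the true sampling threshold may be closer to $\log n$ rather than $\log n/\log\log n$.
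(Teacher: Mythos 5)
Your proof is correct, and it takes a genuinely different route from the paper. The paper takes $G$ to be a disjoint union of $(K+2)$-cliques, where $K$ is the list size, so that $\Delta = K+1$ is tiny; it then union-bounds over cliques and argues that some clique fails to sample one of the $K+2$ available colors, which is where the extra $\log\log n$ loss appears (the per-clique failure probability is roughly $(K+2)^{-(K+2)}$, forcing $K\log K = o(\log n)$). You instead take $G = K_n$ with $\Delta = n-1$, reduce to a perfect matching question in the vertex-color bipartite availability graph, and use a second-moment argument on the number of isolated colors on the color side. Your route is arguably cleaner and, more importantly, sharper: it gives a contradiction already for list size $o(\log n)$, closing the $\log\log n$ gap and matching the $O(\log n)$ upper bound from Theorem~\ref{thm:color-sampling} up to constants. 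It also demonstrates the lower bound in the regime $\Delta = \Theta(n)$, where the palette-sparsification theorem is actually nontrivial, whereas the paper's construction lives in the degenerate regime $\Delta = o(\log n)$ where the theorem would anyway sample essentially the whole palette. Your final speculation is correct as a consequence of your own argument: the true threshold for list size is $\Theta(\log n)$, and the $\log n/\log\log n$ in the stated proposition is simply slack from the paper's choice of construction. One arithmetic footnote (not affecting either argument): in the paper's computation, $\binom{K+1}{K}/\binom{K+2}{K} = 2/(K+2)$, not $1/(K+2)$; the asymptotics are unchanged.
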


\begin{proof}
	Let $K = o(\frac{\log{n}}{\log\log{n}})$ be the number of sampled colors in $L(v)$ for each $v \in V$. Consider a graph $G$ which is a collection of $(K+2)$-cliques $C_1,\ldots,C_k$ for $k = n/(K+1)$. As such, maximum degree
	of this graph is $\Delta = K+1$. 	For $G$ to be list-colorable with lists $L(v)$, in each clique $C_i$, we need to sample all the $(\Delta+1)$ colors; in particular, color $1$ needs to be sampled by at least one vertex in $C_i$. 	
	We say that the clique $C_i$ is \emph{missing $1$} iff this color is not sampled by any vertex in $C_i$. We have,
	\begin{align*}
		\Pr\bracket{\textnormal{$C_i$ is missing $1$}} = \Pr\bracket{\forall v \in C_i: 1 \notin L(v)} = \prod_{v \in C_i} \Pr\bracket{1 \notin L(v)} = \Paren{\frac{{{K+1}\choose{K}}}{{{K+2}\choose{K}}}}^{K+2} = \Paren{\frac{1}{K+2}}^{K+2}.
	\end{align*}
	On the other hand: 
	\begin{align*}
		\Pr\bracket{\exists i \in [k]: \textnormal{$C_i$ is missing $1$}} = 1-\Paren{1-\paren{\frac{1}{K+2}}^{K+2}}^{k} \geq 1-\exp\Paren{-\frac{n}{K+1} \cdot \paren{\frac{1}{K+2}}^{K+2}}.
	\end{align*}
	As $K = o(\frac{\log{n}}{\log\log{n}})$, $\frac{n}{K+1} = \omega\Paren{\paren{K+2}^{K+2}}$, and hence we have, $\Pr\bracket{\exists i \in [k]: \textnormal{$C_i$ is missing $1$}} = 1-o(1)$. As such, with sufficiently large constant probability,
	list-coloring of $G$ using $L(\cdot)$ is not possible. 
\end{proof}

\subsection{Palette-Sparsification for Smaller Number of Available Colors}\label{app:sparsification-general}

In light of our palette-sparsification result for $(\Delta+1)$ coloring in Theorem~\ref{thm:color-sampling}, it is natural to wonder if 
a qualitatively similar statement also holds for the $c$-coloring problem for values of $c$ strictly smaller than $(\Delta+1)$.  
The following basic proposition suggests that a ``useful analogue'' of our palette-sparsification (in the context of sublinear algorithms and sparsifying the input graph) does not exist for $c < \Delta+1$ colors. 

\begin{proposition}\label{prop:sparsification-general}
	Let $k \geq 0$ be an integer. There exists some $(\Delta+1 - k)$-colorable graph $G(V,E)$ with maximum degree at most $\Delta$ such that 
	if for every vertex $v \in V$, we independently sample a set $L(v)$ of colors of size $o(\Delta^{1-1/(k+1)})$ uniformly at random from $[\Delta+1-k]$, then with probability $1-o(1)$, 
	there exists \emph{no} proper coloring function $\col: V \rightarrow [\Delta+1-k]$ of $G$ such that for all vertices $v \in V$, $\col(v) \in L(v)$. 
\end{proposition}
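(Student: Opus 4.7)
The plan is to construct $G$ as a disjoint union of identical ``forcing'' gadgets $G_0$, each of which admits essentially only one $(\Delta+1-k)$-coloring pattern on a designated set of $k+1$ vertices, and then show that this pattern is unlikely to be consistent with the randomly sampled lists.

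Define the gadget $G_0$ as follows: start from $K_{\Delta+1}$, pick $k+1$ distinguished vertices $u_0,\ldots,u_k$, and delete the $\binom{k+1}{2}$ edges among them so that $\{u_0,\ldots,u_k\}$ becomes an independent set in $G_0$. The remaining $\Delta-k$ vertices $W$ still form a $K_{\Delta-k}$ in which every $w\in W$ is adjacent to every $u_i$. Each $w\in W$ has degree $\Delta$, each $u_i$ has degree $\Delta-k$, so $\Delta(G_0)=\Delta$. A proper $(\Delta+1-k)$-coloring is obtained by using $\Delta-k$ distinct colors on $W$ and a single remaining color on all of $u_0,\ldots,u_k$. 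Let $G$ be the disjoint union of $\lfloor n/(\Delta+1)\rfloor$ copies of $G_0$, padded with isolated vertices to reach $n$ vertices; $G$ inherits max degree $\Delta$ and $(\Delta+1-k)$-colorability.

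The key structural observation I would prove is: in \emph{every} proper $(\Delta+1-k)$-coloring of $G_0$, the vertices $u_0,\ldots,u_k$ all receive the same color. Indeed, the clique $W$ must use $|W|=\Delta-k$ distinct colors, leaving exactly one color $c^\star$ unused on $W$; since each $u_i$ is adjacent to every vertex of $W$, each $u_i$ must take $c^\star$. Consequently, for the random lists to admit a proper coloring restricted to a given copy of $G_0$, we need $\bigcap_{i=0}^{k} L(u_i) \neq \emptyset$.

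For the probabilistic step, let $t=o(\Delta^{1-1/(k+1)})$ and write $q:=\Delta+1-k$. By a union bound over colors,
\[
\Pr\!\left[\,\bigcap_{i=0}^{k} L(u_i) \neq \emptyset\,\right] \;\leq\; q\cdot\left(\tfrac{t}{q}\right)^{k+1} \;=\; \frac{t^{k+1}}{q^{k}},
\]
which is $o(1)$ since $t^{k+1}=o(\Delta^k)=o(q^k)$. Because the lists on different copies of $G_0$ are independent, the probability that \emph{every} copy has non-empty intersection is at most $\bigl(t^{k+1}/q^{k}\bigr)^{\lfloor n/(\Delta+1)\rfloor}$, which tends to $0$ in the natural regime $n\gg\Delta$ (e.g., $n\geq (\Delta+1)^2$). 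Hence with probability $1-o(1)$ some gadget's intersection is empty, so by the structural observation, no proper list-coloring of $G$ exists. The only subtle point is the exponent bookkeeping between $\Delta$ and $q=\Delta+1-k$, which is benign as long as $k$ is not too close to $\Delta$; this is implicit in the statement since otherwise $\Delta^{1-1/(k+1)}$ already exceeds $q$. I do not anticipate any real obstacle beyond writing out these estimates carefully.
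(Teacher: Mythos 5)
Your proposal is correct and follows essentially the same approach as the paper: both use the gadget obtained from $K_{\Delta+1}$ by deleting all edges among $k+1$ distinguished vertices, observe that any proper $(\Delta+1-k)$-coloring forces those vertices to share a color, and then bound the probability that $\bigcap_i L(u_i)\neq\emptyset$ by a union bound. You have simply filled in the details the paper leaves as ``a simple calculation'' and, unnecessarily but harmlessly, taken a disjoint union of gadgets to pad out to $n$ vertices (the paper is content with the single $(\Delta+1)$-vertex gadget, which already gives failure probability $o(1)$).
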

\begin{proof}
	Let $G(V,E)$ be the graph on $(\Delta+1)$ vertices obtained from a $(\Delta+1)$-clique by removing all edges between $k+1$ arbitrary vertices $v_1,\ldots,v_{k+1}$. Clearly, $G$ is $(\Delta+1-k)$-colorable since we can color
	$v_1,\ldots,v_{k+1}$ all with the same color, and use the remaining $\Delta-k$ colors to color the remaining $(\Delta+1-(k+1)) = \Delta-k$ vertices with unique colors.   
	However, note that for us to be able to list-color $G$ with colors from lists $L(\cdot)$, at least one color $c \in [\Delta+1-k]$ should appear in $L(v_1) \cap L(v_2) \cap \ldots \cap L(v_{k+1})$. A simple calculation implies that probability of 
	this event is only $o(1)$ when each vertex samples only $o(\Delta^{1-1/(k+1)})$ colors, finalizing the proof. 
\end{proof}

Proposition~\ref{prop:sparsification-general} implies that already for an analogue of palette-sparsification for $\Delta$-coloring, we need lists of size $\Omega(\sqrt{\Delta})$ size and the lists become even larger once we move
to $c$-coloring for $c < \Delta$. On the other hand, even for $\Delta$-coloring, if we end up sampling $\Omega(\sqrt{\Delta})$ colors per vertex, then every edge $(u,v)$ of the graph has a constant probability of having $L(u) \cap L(v) \neq \emptyset$. 
This in turns implies that in terms of ``sparsification'', we can only hope to remove a constant fraction of the edges of the input graph in \emph{the best case scenario} (as we are \emph{not} claiming a matching upper
bound for Proposition~\ref{prop:sparsification-general}). This is in sharp contrast to our results for $(\Delta+1)$-coloring that allows
 us sparsify a graph with $\Omega(n^2)$ edges down to only $\Ot(n)$ edges.

\end{document}